\spnewtheorem{thm}[theorem]{Theorem}{\bfseries}{\itshape}
\spnewtheorem{cor}[theorem]{Corollary}{\bfseries}{\itshape}
\spnewtheorem{lem}[theorem]{Lemma}{\bfseries}{\itshape}
\spnewtheorem{prop}[theorem]{Proposition}{\bfseries}{\itshape}
\spnewtheorem{obs}[theorem]{Observation}{\bfseries}{\itshape}
\spnewtheorem{dfn}[theorem]{Definition}{\bfseries}{\rmfamily}
\spnewtheorem{rem}[theorem]{Remark}{\bfseries}{\rmfamily}
\spnewtheorem{exm}[theorem]{Example}{\bfseries}{\rmfamily}
\spnewtheorem{cnj}[theorem]{Conjecture}{\bfseries}{\rmfamily}
\crefname{dfn}{Definition}{Definitions}
\crefname{thm}{Theorem}{Theorems}
\crefname{cor}{Corollary}{Corollaries}
\crefname{lem}{Lemma}{Lemmata}
\crefname{prop}{Proposition}{Propositions}
\crefname{rem}{Remark}{Remarks}
\crefname{algorithm}{Algorithm}{Algorithms}
\crefname{algocf}{Algorithm}{Algorithms}
\crefname{obs}{Observation}{Observations}
\newcommand{\Oh}{\mathcal{O}} 
\newcommand{\SETH}{SETH\xspace}
\newcommand{\CNFSETH}{CNF-SETH\xspace}
\newcommand{\FF}{\mathbb{F}}
\newcommand{\ZZ}{\mathbb{Z}}
\newcommand{\NN}{\mathbb{N}}
\newcommand{\TT}{\mathcal{T}} 
\newcommand{\bag}{\mathbb{B}}
\newcommand{\eps}{\varepsilon}
\newcommand{\DS}{\textsc{Dominating Set}\xspace}
\newcommand{\OCT}{\textsc{Odd Cycle Transversal}\xspace}
\newcommand{\SAT}{\textsc{Satisfiability}\xspace}
\DeclareMathOperator{\lcw}{lin-cw} 
\DeclareMathOperator{\cw}{cw} 
\DeclareMathOperator{\tw}{tw} 
\DeclareMathOperator{\pw}{pw} 
\DeclareMathOperator{\poly}{poly} 
\DeclareMathOperator{\tctw}{tc-tw} 
\DeclareMathOperator{\tcpw}{tc-pw} 
\DeclareMathOperator{\mw}{mw} 
\DeclareMathOperator{\modtw}{mod-tw} 
\DeclareMathOperator{\modpw}{mod-pw} 
\DeclareMathOperator{\ms}{ms} 
\newcommand{\sep}{:}
\newcommand{\cexpr}{\mu}
\newcommand{\tree}{T}
\newcommand{\intro}[1]{{#1}}
\newcommand{\union}{\oplus}
\newcommand{\relab}[2]{\rho_{#1 \rightarrow #2}}
\newcommand{\join}[2]{\eta_{#1,#2}}
\newcommand{\lfct}{\mathtt{lab}}
\newcommand{\formula}{\sigma}
\newcommand{\tassign}{\tau}
\newcommand{\grpsize}{p}
\newcommand{\vgrpsize}{\beta}
\newcommand{\ngrps}{t}
\newcommand{\nvars}{n}
\newcommand{\nclss}{m}
\newcommand{\clss}{q}
\newcommand{\embedding}{\kappa}
\newcommand{\statemap}{\mathbf{state}}
\newcommand{\states}{\mathbf{states}}
\newcommand{\atoms}{\mathbf{atoms}}
\newcommand{\packing}{\mathcal{P}}
\newcommand{\cost}{\mathrm{cost}}
\newcommand{\rvertex}{\hat{r}}
\newcommand{\state}{\mathbf{s}}
\newcommand{\bolda}{\mathbf{a}}
\newcommand{\zero}{\mathbf{0}}
\newcommand{\one}{\mathbf{1}}
\newcommand{\two}{\mathbf{2}}
\newcommand{\all}{\mathbf{A}}
\newcommand{\conn}{\mathtt{conn}}
\newcommand{\sol}{\mathtt{sol}}
\newcommand{\powerset}[1]{\mathcal{P}(#1)}
\newcommand{\algo}{\mathcal{A}}
\newcommand{\tcpartition}{\Pi_{tc}}
\newcommand{\ST}{\textsc{Steiner Tree}\xspace}
\newcommand{\CVC}{\textsc{Connected Vertex Cover}\xspace}
\newcommand{\CDS}{\textsc{Connected Dominating Set}\xspace}
\newcommand{\FVS}{\textsc{Feedback Vertex Set}\xspace}
\newcommand{\IF}{\textsc{Induced Forest}\xspace}
\newcommand{\VC}{\textsc{Vertex Cover}\xspace}
\newcommand{\IS}{\textsc{Independent Set}\xspace}
\newcommand{\cfct}{\mathbf{c}}
\newcommand{\ctarget}{{\overline{c}}}
\newcommand{\budget}{\overline{b}}
\newcommand{\modtarget}{{\overline{m}}}
\newcommand{\terminals}{K}
\newcommand{\terminal}{v}
\newcommand{\wfct}{\mathbf{w}}
\newcommand{\wtarget}{{\overline{w}}}
\newcommand{\family}{\mathcal{F}}
\newcommand{\PP}{\mathbb{P}}
\newcommand{\rsols}{\mathcal{R}}
\newcommand{\sols}{\mathcal{S}}
\newcommand{\csols}{\mathcal{Q}}
\newcommand{\cuts}{\mathcal{C}}
\newcommand{\homcuts}[1]{{\mathcal{C}^{hom}_{#1}}}
\newcommand{\cc}{\mathtt{cc}}
\newcommand{\dpsols}{\mathcal{A}}
\newcommand{\dppoly}{A}
\DeclareMathOperator{\cons}{cons}
\DeclareMathOperator{\newedge}{edge}
\newcommand{\restrict}[1]{{\big|_{#1}}}
\newcommand{\module}{M}
\newcommand{\modpartition}{\Pi_{mod}}
\newcommand{\modprojection}{\pi}
\newcommand{\modfamily}{\mathcal{M}}
\newcommand{\smodfamily}{{\mathcal{M}_s}}
\newcommand{\modtree}{{\mathcal{M}_{\mathrm{tree}}}}
\newcommand{\modint}{{\mathcal{M}_{\mathrm{tree}}^*}}
\newcommand{\primefamily}{{\mathcal{H}_p}}
\newcommand{\children}{\mathtt{children}}
\newcommand{\nsib}{\mathcal{N}_{\mathtt{sib}}} 
\newcommand{\nall}{\mathcal{N}_{\mathtt{all}}}
\newcommand{\qvertex}{{v^q_\module}}
\newcommand{\pmodule}{{\module^{\uparrow}}}
\newcommand{\qgraph}[1]{{G^q_{#1}}}
\newcommand{\pquotient}{{\qgraph{\pmodule}}}
\newcommand{\pproj}{{\modprojection_\pmodule}}
\newcommand{\pprojinv}{{\modprojection^{-1}_\pmodule}}
\newcommand{\DP}{A}
\tikzstyle{filled}=[fill=black, draw=black, shape=circle, inner sep=1pt, minimum size=5]
\tikzstyle{big empty}=[fill=white, draw=black, shape=rectangle, inner sep=2pt, minimum size=20pt]
\tikzstyle{small empty}=[fill=white, draw=black, shape=circle, inner sep=0pt, minimum size=20pt, line width=0.75pt]
\tikzstyle{rect}=[fill=black, draw=black, shape=rectangle, inner sep=1pt, minimum size=5pt]
\tikzstyle{big empty red}=[fill=white, draw={rgb,255: red,171; green,0; blue,60}, shape=rectangle, inner sep=2pt, minimum size=20pt]
\tikzstyle{mid empty}=[fill=white, draw=black, shape=rectangle, inner sep=2pt, minimum size=20pt]
\tikzstyle{mid empty red}=[fill=white, draw={rgb,255: red,171; green,0; blue,60}, shape=rectangle, inner sep=2pt, minimum size=20pt]
\tikzstyle{small empty red}=[fill={rgb,255: red,227; green,159; blue,178}, draw={rgb,255: red,171; green,0; blue,60}, shape=circle, inner sep=0pt, minimum size=20pt, line width=0.75pt]
\tikzstyle{red rect}=[fill={rgb,255: red,171; green,0; blue,60}, draw={rgb,255: red,171; green,0; blue,60}, shape=rectangle, minimum size=5pt]
\tikzstyle{big circle}=[fill=white, draw=black, shape=circle, line width=1pt, minimum size=25pt]
\tikzstyle{filled red}=[fill={rgb,255: red,171; green,0; blue,60}, draw={rgb,255: red,171; green,0; blue,60}, shape=circle, inner sep=1pt, minimum size=5pt]
\tikzstyle{state 0}=[fill=white, draw=black, shape=circle, line width=0.75pt, minimum size=10pt]
\tikzstyle{state L}=[fill={rgb,255: red,25; green,61; blue,182}, draw=black, shape=circle, line width=0.75pt, minimum size=10pt]
\tikzstyle{state R}=[fill={rgb,255: red,191; green,0; blue,64}, draw=black, shape=circle, line width=0.75pt, minimum size=10pt]
\tikzstyle{state unknown}=[fill=black, draw=black, shape=circle, line width=0.75pt, minimum size=10pt]
\tikzstyle{tiny empty}=[fill=white, draw=black, shape=circle, inner sep=0pt, minimum size=12pt, line width=0.75pt]
\tikzstyle{tiny filled}=[fill=black, draw=black, shape=circle, inner sep=0pt, minimum size=12pt, line width=0.75pt]
\tikzstyle{wide rectangle}=[fill=white, draw=black, shape=rectangle, inner sep=2 pt, minimum height=20pt, minimum width=30pt, line width=0.75pt]
\tikzstyle{empty 30pt}=[fill=white, draw=black, shape=circle, inner sep=0pt, minimum size=30pt, line width=0.75pt]
\tikzstyle{empty 20pt}=[fill=white, draw=black, shape=circle, inner sep=0pt, minimum size=20pt, line width=0.75pt]
\tikzstyle{filled 10pt}=[fill=black, draw=black, shape=circle, minimum size=10pt]
\tikzstyle{tiny rectangle}=[fill=white, draw=black, shape=rectangle, minimum width=5pt, minimum height=5pt, inner sep=0pt]
\tikzstyle{dash circle}=[fill=white, draw=black, shape=circle, dashed, inner sep=0.5pt]
\tikzstyle{tall}=[fill=white, draw=black, shape=rectangle, minimum width=1cm, minimum height=1.25cm, inner sep=0pt, line width=0.5pt]
\tikzstyle{filled R}=[fill={rgb,255: red,191; green,0; blue,64}, draw=black, shape=circle, minimum size=5pt, inner sep=1pt]
\tikzstyle{filled B}=[fill={rgb,255: red,0; green,0; blue,208}, draw=black, shape=circle, minimum size=5pt, inner sep=1pt]
\tikzstyle{filled G}=[fill={rgb,255: red,0; green,186; blue,0}, draw=black, shape=circle, minimum size=5pt, inner sep=1pt]
\tikzstyle{filled W}=[fill=white, draw=black, shape=circle, minimum size=5pt, inner sep=1pt]
\tikzstyle{big none}=[fill=none, draw=none, minimum size=5pt, inner sep=1pt]
\tikzstyle{state G}=[fill={rgb,255: red,0; green,186; blue,0}, draw=black, shape=circle, line  width=0.75pt, minimum size=10pt]
\tikzstyle{nano}=[fill=black, draw=black, shape=circle, minimum size=3pt, inner sep=0pt]
\tikzstyle{big empty}=[fill=white, draw=black, shape=circle, inner sep=0pt, minimum size=30pt]
\tikzstyle{thick}=[-, line width=1pt]
\tikzstyle{thick arrow}=[->, line width=1pt]
\tikzstyle{packing}=[-, line width=1pt, draw={rgb,255: red,171; green,0; blue,60}]
\tikzstyle{dotted}=[-, line width=1pt, dashed]
\tikzstyle{dashed cyan thick}=[-, dashed, line width=1pt, draw={rgb,255: red,2; green,200; blue,200}]
\tikzstyle{white}=[-, draw=white]
\tikzstyle{del}=[-, double, line width=1pt]
\tikzstyle{thin}=[-, line width=0.5pt]
\tikzstyle{red dashed}=[-, line width=1pt, draw={rgb,255: red,191; green,0; blue,64}, dashed]
\tikzstyle{dark green}=[-, line width=1pt, draw={rgb,255: red,44; green,127; blue,35}]
\tikzstyle{thin green}=[-, line width=0.5pt, draw={rgb,255: red,27; green,125; blue,52}]
\tikzstyle{thin red}=[-, line width=0.5pt, draw={rgb,255: red,181; green,0; blue,3}]
\tikzstyle{thin dashed}=[-, line width=0.5pt, dashed]
\tikzstyle{green fill}=[-, line width=0.5pt, draw={rgb,255: red,27; green,125; blue,52}, fill={rgb,255: red,162; green,255; blue,155}]
\tikzstyle{blue fill}=[-, line width=0.5pt, draw={rgb,255: red,18; green,0; blue,157}, fill={rgb,255: red,116; green,123; blue,255}]
\tikzstyle{thin arrow}=[->]
\tikzstyle{double arrow}=[{-latex}]
\begin{document}
\title{Tight Algorithms for Connectivity Problems Parameterized by Modular-Treewidth\thanks{The first author was partially supported by DFG Emmy Noether-grant (KR 4286/1).}}

\titlerunning{Connectivity Problems Parameterized by Modular-Treewidth}
%
\author{Falko Hegerfeld\inst{1}\orcidID{0000-0003-2125-5048} \and
Stefan Kratsch\inst{1}\orcidID{0000-0002-0193-7239}}
\authorrunning{F. Hegerfeld \and S. Kratsch}
%
\institute{Institut für Informatik, Humboldt-Universität zu Berlin, Germany
\email{\{hegerfeld,kratsch\}@informatik.hu-berlin.de}}
\maketitle              

\begin{abstract}
We study connectivity problems from a fine-grained parameterized perspective. Cygan et al.~(TALG 2022) first obtained algorithms with single-exponential running time $\alpha^{\tw} n^{\Oh(1)}$ for connectivity problems parameterized by treewidth ($\tw$) by introducing the cut-and-count-technique, which reduces the connectivity problems to locally checkable counting problems. In addition, the obtained bases $\alpha$ were proven to be optimal assuming the Strong Exponential-Time Hypothesis (SETH).

As only sparse graphs may admit small treewidth, these results are not applicable to graphs with dense structure. A well-known tool to capture dense structure is the \emph{modular decomposition}, which recursively partitions the graph into \emph{modules} whose members have the same neighborhood outside of the module. Contracting the modules, we obtain a \emph{quotient graph} describing the adjacencies between modules. Measuring the treewidth of the quotient graph yields the parameter \emph{modular-treewidth}, a natural intermediate step between treewidth and clique-width.
While less general than clique-width, modular-treewidth has the advantage that it can be computed as easily as treewidth. 

We obtain the first tight running times for connectivity problems parameterized by modular-treewidth. For some problems the obtained bounds are the same as relative to treewidth, showing that we can deal with a greater generality in input structure at no cost in complexity. We obtain the following randomized algorithms for graphs of modular-treewidth $k$, given an appropriate decomposition:
\begin{itemize}
	\item \ST can be solved in time $3^k n^{\Oh(1)}$,
	\item \CDS can be solved in time $4^k n^{\Oh(1)}$,
	\item \CVC can be solved in time $5^k n^{\Oh(1)}$,
	\item \FVS can be solved in time $5^k n^{\Oh(1)}$.
\end{itemize}
The first two algorithms are tight due to known results and the last two algorithms are complemented by new tight lower bounds under SETH.

\keywords{connectivity  \and modular-treewidth \and tight algorithms}
\end{abstract}

\section{Introduction}

Connectivity constraints are a very natural form of global constraints in the realm of graph problems. We study connectivity problems from a fine-grained parameterized perspective. The starting point is an influential paper of Cygan et al.~\cite{CyganNPPRW22} introducing the cut-and-count-technique which yields randomized algorithms with running time $\Oh^*(\alpha^{\tw})$\footnote{The $\Oh^*$-notation hides polynomial factors in the input size.}, for some constant \emph{base} $\alpha > 1$, for connectivity problems parameterized by \emph{treewidth} ($\tw$). The obtained bases $\alpha$ were proven to be optimal assuming the Strong Exponential-Time Hypothesis\footnote{The hypothesis that for every $\delta < 1$, there is some $\clss$ such that $\clss$-\SAT cannot be solved in time $\Oh(2^{\delta n})$, where $n$ is the number of variables.} (SETH) \cite{CyganNPPRW11arxiv}.

Since dense graphs cannot have small treewidth, the results for treewidth do not help for graphs with dense structure. A well-known tool to capture dense structure is the \emph{modular decomposition} of a graph, which recursively partitions the graph into \emph{modules} whose members have the same neighborhood outside of the module. Contracting these modules, we obtain a \emph{quotient graph} describing the adjacencies between the modules. Having isolated the dense part to the modules, measuring the complexity of the quotient graph by standard graph parameters such as treewidth yields e.g.\ the parameter \emph{modular-treewidth} ($\modtw$), a natural intermediate step between treewidth and clique-width. While modular-treewidth is not as general as clique-width, the algorithms for computing treewidth transfer to modular-treewidth, yielding e.g.\ reasonable constant-factor approximations for modular-treewidth in single-exponential time, whereas for clique-width we are currently only able to obtain approximations with exponential error.

We obtain the first tight running times for connectivity problems parameterized by modular-treewidth. To do so, we lift the algorithms using the cut-and-count-technique from treewidth to modular-treewidth. A crucial observation is that all vertices inside a module will be connected by choosing a single vertex from a neighboring module. In some cases, this observation is strong enough to lift the treewidth-based algorithms to modular-treewidth for free, i.e., the base $\alpha$ of the running time does not increase, showing that we can deal with a greater generality in input structure at no cost in complexity for these problems. 

\begin{thm}[informal]\label{thm:intro_reductions}
There are one-sided error Monte-Carlo algorithms that, given a decomposition witnessing modular-treewidth $k$, can solve
\begin{itemize}
	\item \ST in time $\Oh^*(3^k)$,
	\item \CDS in time $\Oh^*(4^k)$.
\end{itemize}
\end{thm}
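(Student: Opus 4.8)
The plan is to lift the cut-and-count algorithms of Cygan et al.\ from treewidth to modular-treewidth, exploiting the observation that a single vertex of a neighbouring module connects an entire module, so that each module behaves like a single vertex for the dynamic program. Fix the given decomposition and, at its top, the quotient graph $G^q$, whose treewidth is at most $k$ (a complete-join or disjoint-union quotient node is handled directly and is not the interesting case); let $\module_1,\dots,\module_r$ be the contracted modules and write $N_{\mathrm{out}}(\module)$ for the common external neighbourhood of a module $\module$. Recall the cut-and-count setup on a tree decomposition: one fixes a marked vertex $v_1$ that must lie in the solution and pins it to the left side of the cut, then -- over a random vertex weighting, via the isolation lemma -- computes, for each total weight and modulo $2$, the number of pairs (solution $X$, consistent cut of $G[X]$); a connected $X$ yields exactly one such pair and a disconnected one an even number, so the parity of the count decides the instance. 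For \ST\ a vertex has states $\{\text{out},\text{left},\text{right}\}$ and for \CDS\ states $\{\hat 0,\hat 1,\text{left},\text{right}\}$, where $\hat 0/\hat 1$ record whether a non-solution vertex is already dominated, giving the treewidth bases $3$ and $4$.

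The heart of the proof is a dichotomy. If the solution $X$ is connected, meets a module $\module$, but is \emph{not} contained in $\module$, then $X$ leaves $\module$ along an edge and hence contains a vertex $w\in N_{\mathrm{out}}(\module)$; as $w$ is adjacent to all of $\module$, it forces every vertex of $X\cap\module$ onto $w$'s side of any consistent cut, and (for \CDS) it dominates all of $\module$. Thus in every such ``global'' solution a module can only be in the same states as a single vertex, and moreover the number of ways to realise a module in a fixed state is just a product of generating functions over its vertices (for the non-empty left/right states, $\prod_{v\in\module}(1+z^{\wfct(v)})-1$ after forcing in the mandatory terminals), so these contributions are computed directly, without recursion. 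The solutions escaping the dichotomy are precisely those contained in a single module, and since $v_1$ lies in a unique module $\module^*$ these are exactly the solutions contained in $\module^*$; we account for them by one recursive call on $G[\module^*]$ (whose modular-treewidth is again at most $k$) plus the trivial option of adding a single vertex of $N_{\mathrm{out}}(\module^*)$, which by the module property already connects everything (and for \CDS\ is only relevant when $\module^*$ is adjacent to all other modules so that the outside is dominated for free). This recursion follows a single root-to-node path of the decomposition tree and so costs only a polynomial factor.

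Assembling the pieces, we take a width-$k$ tree decomposition of $G^q$ and run the cut-and-count dynamic program over it with one super-state per module from the $3$ (resp.\ $4$) states above: introduce multiplies in the module contribution polynomial and, for \CDS, switches the domination flag of the module's quotient-neighbours once it enters a solution state; forget and join are the standard cut-and-count operations; the internal cut of a module is always monochromatic in an accepted configuration, so it contributes nothing extra. Correctness is the usual cut-and-count/isolation argument, but applied to $G$ rather than to $G^q$: the dichotomy guarantees that the program counts exactly the (global solution, consistent cut) pairs with $v_1$ pinned, the remaining module-confined pairs being delivered by the recursive call, and the isolation lemma removes spurious cancellations with high probability. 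The running time is $\Oh^*(3^k)$ for \ST\ and $\Oh^*(4^k)$ for \CDS. I expect the main obstacle to be the \CDS\ bookkeeping: one must check that a module in a solution state never needs an extra ``internally dominated'' sub-state -- i.e.\ whenever $X\cap\module$ dominates $G[\module]$ with no help from $N_{\mathrm{out}}(\module)$ the solution is $\module$-confined and hence deferred to the recursion -- so that the state count stays $4$; the parallel check for \ST\ (that a module with solution vertices on both sides of the cut can only occur in a confined solution) is exactly what keeps its base at $3$.
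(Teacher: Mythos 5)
Your overall strategy is sound and would yield the claimed running times, but it takes a genuinely different and substantially heavier route than the paper. The paper does not design a new dynamic program for these two problems at all: it proves that an optimum solution intersects each module in a \emph{canonical} set --- for \ST, exactly the terminals of the module (or a single cheapest vertex of a terminal-free module), and for \CDS, at most one cheapest vertex per module (\cref{thm:cds_mod_structure}) --- and then builds an equivalent instance of the \emph{same} problem on the quotient graph with modified costs (\cref{thm:st_mod_reduction} and \cref{thm:cds_modtw_reduction}), which is handed to the known treewidth algorithm as a black box. Your proposal instead runs a bespoke cut-and-count DP over the quotient tree decomposition with one super-state per module, counting all possible intersections $X\cap\module$ via generating functions and restricting to module-homogeneous cuts; this is essentially the technique the paper reserves for \CVC and \FVS, where no canonical-representative reduction is available. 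What the paper's route buys is brevity and a clean separation of concerns (for \ST all randomness lives inside one black-box call, and the only delicate point for \CDS is transferring isolation to the quotient instance); what your route buys is uniformity with the harder problems, at the price of considerably more bookkeeping.

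Two steps of your argument are asserted rather than proven and must be closed. First, for \CDS your claim that no ``internally dominated'' sub-state is needed is false for \emph{relaxed} (possibly disconnected) candidates: a dominating set may contain a component confined to a module $\module$ that dominates $G[\module]$ internally while other components dominate the rest, and such candidates are not $\module$-confined. The repair is that any candidate meeting at least two modules whose quotient projection has an isolated vertex is disconnected in the quotient, hence contributes an even number of homogeneous consistent cuts and cancels, while every \emph{connected} candidate meeting at least two modules automatically has a met neighbouring module next to each met module, so quotient-domination coincides with real domination there; without this argument the four-state count is unjustified. Second, your recursion for confined solutions is anchored to ``the module containing $v_1$'', but \CDS has no prescribed vertex; confined solutions can live in any module whose quotient vertex is universal, which by \cref{thm:universal_implies_series} happens only at series nodes, where one must either recurse into \emph{all} child modules or observe, as the paper does, that a two-vertex solution spanning two modules is always available there. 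Neither issue is fatal, but both need to be spelled out.
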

These bases are optimal under SETH, by known results of Cygan et al.~\cite{CyganNPPRW11arxiv}.

However, in other cases the interplay of the connectivity constraint and the remaining problem constraints does increase the complexity for modular-treewidth compared to treewidth. In these cases, we provide new algorithms adapting the cut-and-count-technique to this more intricate setting. 

\begin{thm}[informal]\label{thm:intro_algos}
There are one-sided error Monte-Carlo algorithms that, given a decomposition witnessing modular-treewidth $k$, can solve 
\begin{itemize}
 \item \CVC in time $\Oh^*(5^k)$,
 \item \FVS in time $\Oh^*(5^k)$.
\end{itemize}
\end{thm}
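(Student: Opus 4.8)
\medskip
\noindent\textbf{Proof plan.} The strategy is to lift the cut-and-count algorithms of Cygan et al.~\cite{CyganNPPRW22} from treewidth to modular-treewidth: the dynamic program now runs along a tree decomposition of the quotient graph, and each ``vertex'' of the quotient, being a module $\module$, is summarized by a constant number of states that are themselves computed recursively. I describe \CVC; \FVS is handled analogously with the standard cut-and-count encoding of ``induced forest''. For the cut-and-count scaffolding I proceed as usual: fix a vertex $v_1$ that may be assumed to lie in the solution (for \CVC, an endpoint of an arbitrary edge, branching over the two choices), draw uniformly random weights $\wfct\colon V(G)\to\{1,\dots,2|V(G)|\}$, and for each target weight $\wtarget$ compute the parity of the number of pairs $(X,(X_L,X_R))$ where $X$ is a vertex cover of weight $\wtarget$ with $v_1\in X_L$ and $(X_L,X_R)$ is a consistent cut of $G[X]$. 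As such a vertex cover $X$ contributes $2^{\cc(G[X])-1}$ pairs, this parity is odd iff a connected vertex cover of weight $\wtarget$ exists; by the Isolation Lemma the minimum such $\wtarget$ yields, with probability at least $1/2$, a minimum connected vertex cover, and $\Oh(|V(G)|)$ repetitions boost the success probability.

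The heart of the argument is to identify the module states. Using that every vertex outside a module $\module$ with a neighbor in $\module$ is in fact adjacent to \emph{all} of $\module$, two facts emerge: (i)~for any two quotient-adjacent modules at least one must be entirely contained in $X$ (else an edge between them is uncovered), so the modules not fully contained in $X$ form an independent set of the quotient graph; and (ii)~a single selected vertex in a quotient-neighbor of $\module$ is adjacent to, hence simultaneously connects and forces onto its side of the cut, every selected vertex of $\module$. Consequently the behaviour of $\module$ that the quotient-level dynamic program must remember collapses to one of five states: $\zero$ ($X\cap\module=\emptyset$), $\one_L$ and $\one_R$ ($\module$ fully selected, with its selected part placed entirely on the left resp.\ right side of the cut), and $\mathtt{p}_L$ and $\mathtt{p}_R$ ($\emptyset\neq X\cap\module\subsetneq\module$, selected part on the left resp.\ right side). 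One checks that this collection is closed under the introduce/forget/join operations of the quotient's tree decomposition, that the $\cc(G[X])-1$ bookkeeping stays correct when whole components of $G[X]$ are absorbed inside a module and connected to the rest only through one external vertex, and that the degenerate configurations (a fully selected, internally disconnected module with no selected quotient-neighbor) contribute with even parity. With five states per bag vertex and an auxiliary counter for the accumulated weight, one pass of the quotient dynamic program costs $5^{k}n^{\Oh(1)}$.

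It remains to supply, for each module $\module$, each of its five states $\state$, and each weight, the parity of the number of ways to realize $\state$ strictly inside $\module$ (with a valid internal consistent cut, internal edge-cover, and, for the $\one$ and $\mathtt{p}$ states, internal connectivity once a single external connector is assumed). This is exactly a smaller instance of the same counting problem, solved by recursing on $\module$ along its own subtree of the given modular decomposition, the base case being a single vertex. Since the modules attached at each node partition its vertex set, the decomposition tree has polynomial size, and every quotient along it has treewidth at most $k$, the recursion telescopes and the total running time is $5^{k}n^{\Oh(1)}$. For \FVS the same plan applies with the cut-and-count encoding of induced forests (tracking forest vertices versus forest edges), the relevant structural observation now being that as soon as a module contributes two or more vertices to the forest it may have at most one forest vertex in its entire external neighbourhood --- which again pins the number of module states to five.

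The main obstacle is the second step: pinning down the exact five-state description so that it is simultaneously (a)~closed under the dynamic-programming transitions on the quotient, (b)~faithful to the global connectivity bookkeeping across module boundaries, and (c)~genuinely of size five rather than larger. In particular, controlling the interplay between the ``partially selected'' states and the consistent cut, and verifying that the parity contributions of disconnected or otherwise degenerate module configurations cancel correctly, is the delicate part; for \FVS one must additionally check that the vertex- and edge-counters of the treewidth algorithm still compose correctly through the module recursion.
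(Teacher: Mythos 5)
Your \CVC outline matches the paper's architecture: the same five states per module (empty; fully selected, left/right; properly partially selected, left/right), a single DP pass over a tree decomposition of the topmost quotient graph, and the observation that a partially selected module forces all its quotient-neighbours to be fully selected, which both covers the cross edges and trivialises the module's internal connectivity. The paper simplifies your ``recurse to count internal realisations'' step: since internal connectivity is irrelevant for a partial module, it fixes one canonical optimum vertex cover $Y_\module$ of $G[\module]$ per module (computable deterministically in time $\Oh^*(2^k)$) and restricts attention to \emph{nice} solutions with $X\cap\module\in\{\emptyset,Y_\module,\module\}$, so no inner counting or inner randomness is needed. The one point you flag but do not resolve is real: with ordinary (non-homogeneous) cuts counted mod $2$, a fully selected module whose quotient-neighbours are all empty lets its internal components choose sides freely, and during the DP (edges of the quotient introduced one at a time) a state ``fully selected, left'' cannot yet know whether it will end up isolated. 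The paper sidesteps this by proving that one may count only \emph{homogeneous} cuts (each module entirely on one side) and work modulo $4$, which is exactly the lemma your bookkeeping needs.

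For \FVS the gap is more serious. The structural fact you cite (a module contributing two or more forest vertices sees at most one forest vertex among its adjacent sibling modules, and none if it contributes an internal edge) is correct, but it does not by itself ``pin the number of module states to five.'' The natural state space distinguishes, for a module with at least two selected vertices, whether the internal part is an independent set or contains an edge, \emph{and} its cut side, which already gives six or seven states and a join too slow for $\Oh^*(5^k)$. The paper gets to five via a \emph{promotion} mechanism: the $\two$-states record the module's current degree in the quotient forest ($\two_0$ or $\two_1$) rather than a cut side; an isolated module is upgraded from a maximum independent set to a maximum induced forest only at its forget node, where the factor $2$ for its cut side is also paid; and the join is evaluated by splitting the bag into ``equality'' states and $\{\two_0,\two_1\}$ states and using a subset-convolution, yielding $(3+2)^k$. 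None of this is in your sketch. Moreover, unlike \CVC, the inner subproblem for \FVS genuinely requires a maximum \emph{induced forest} inside each module, so the randomized cut-and-count recurses through the modular decomposition; sampling fresh weights at each level would degrade the success probability exponentially in the depth of the decomposition tree. The paper instead samples one global weight function and proves that max-isolation of the global optimum transfers to exactly those submodule instances that matter, so that errors in ``unimportant'' subproblems cannot propagate. Your ``the recursion telescopes'' does not address this, and without it the claimed one-sided error bound of $1/2$ does not follow.
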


Both problems can be solved in time $\Oh^*(3^k)$ parameterized by treewidth~\cite{CyganNPPRW22}. In contrast, \VC (without the connectivity constraint) has complexity $\Oh^*(2^k)$ with respect to treewidth~\cite{LokshtanovMS18} and modular-treewidth simultaneously.

For these latter two problems, we provide new lower bounds to show that the bases are optimal under SETH. However, we do not need the full power of the modular decomposition to prove the lower bounds. The modular decomposition allows for \emph{recursive} partitioning, when instead allowing for only a single level of partitioning and limited complexity inside the modules, we obtain parameters called \emph{twinclass-pathwidth} ($\tcpw$) and \emph{twinclass-treewidth}. 

\begin{thm}\label{thm:intro_lower_bounds}
Unless SETH fails, the following statements hold for any $\eps > 0$:
\begin{itemize}
 \item \CVC cannot be solved in time $\Oh^*((5-\eps)^{\tcpw})$.
 \item \FVS cannot be solved in time $\Oh^*((5-\eps)^{\tcpw})$.
\end{itemize}
\end{thm}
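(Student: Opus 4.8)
The plan is to give SETH-based lower bounds by reducing from $q$-\SAT on $n$ variables, producing in each case an instance of \CVC (resp.\ \FVS) whose twinclass-pathwidth is bounded by roughly $(n/\log_2 5)(1+o(1))$, so that an $\Oh^*((5-\eps)^{\tcpw})$ algorithm would refute \SETH. The key mechanism is the standard ``group-and-encode'' framework used for such tight bounds: we partition the $n$ variables into $\ngrps \approx n/\log_5(\cdot)$ groups of $\grpsize$ variables each, and for each group we build a gadget that, along a path decomposition, carries a single ``register'' taking one of $5$ states, where the $5$ states of the register encode a block of assignments to the $\grpsize$ variables of that group (we need $5^{\grpsize}$ to be at least $2^{\grpsize}$, which is trivially true; the real accounting is that we use $\ngrps$ registers and each contributes $\log_2 5$ to the exponent, so $\ngrps \log_2 5 \approx n$). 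The register is realized so that the connectivity constraint of \CVC (resp.\ the acyclicity constraint of \FVS) forces exactly one of $5$ ``behaviors'' on a small separator of twinclasses, hence the factor $5$ per unit of $\tcpw$. We then add, for each clause of the formula, a clause-checking gadget that is attached to the path in the region corresponding to the groups whose variables occur in that clause, and is satisfiable into a small connected vertex cover (resp.\ small feedback vertex set) if and only if the current assignment of those groups satisfies the clause. Summing the budget over all clauses and all group gadgets gives the target value $\budget$ (resp.\ $\modtarget$), and a routine but careful verification shows the constructed graph has a connected vertex cover of size $\le \budget$ (resp.\ feedback vertex set of size $\le \modtarget$) iff $\formula$ is satisfiable.

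The first step I would carry out in detail is the design of the group gadget and the identification of the five ``states.'' For \CVC the natural source of a base-$5$ alphabet is that, relative to treewidth, \CVC already has complexity $3^{\tw}$ coming from the three states $\{\text{out (forbidden)},\,\text{in, connected},\,\text{in, not-yet-connected}\}$ on each forgotten vertex; moving to twinclasses we get to additionally exploit, within a twinclass (a group of true twins), how many of its members are taken, and the interaction of "how many taken'' with "is the class already connected to the partial solution.'' Concretely, for a twinclass that is an independent set of false twins of size $2$, the relevant summary on the separator is an element of a set of size $5$ (e.g.\ encoding the pair ``(number of the two vertices selected into the cover) $\times$ (connectivity status), pruned to the $5$ jointly-realizable and jointly-relevant combinations''); this is exactly the gap between the $3^{\tw}$ treewidth bound and the $5^{\tcpw}$ bound. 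I would make this precise by first stating and proving the upper-bound direction of the gadget (any assignment yields a valid partial solution of the claimed cost), then the lower-bound direction (any partial solution respecting the budget restricts to one of the $5$ canonical states on the separator, which we decode to a partial assignment).

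For \FVS the five states come similarly from refining the treewidth states of the cut-and-count DP for \FVS: relative to treewidth one uses $3$ states (roughly ``deleted'', ``kept and in the current tree-side $L$'', ``kept and in tree-side $R$'') — the same $3^{\tw}$ that appears in the excerpt — and within a twinclass we gain extra states from the multiplicity of deleted versus kept members interacting with the two-sided cut labeling, again pruned to the $5$ that are simultaneously feasible and relevant; the clause gadget is then a small cycle-closing gadget forced to be broken exactly when the incident groups' registers are in a satisfying configuration. I would prove both the completeness (satisfying assignment $\Rightarrow$ small fvs) and soundness (small fvs $\Rightarrow$ satisfying assignment) directions, using the standard trick of padding the budget so that every ``correct'' local choice is tight and any deviation is strictly worse.

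The main obstacle I expect is \emph{bounding $\tcpw$ of the construction tightly}, i.e.\ exhibiting an explicit path decomposition of width $\ngrps(1+o(1))$ in units of twinclasses: one must order the group gadgets and interleave the clause gadgets so that at any point only $O(1)$ group registers (plus the boundary of one clause gadget) are ``active,'' while simultaneously verifying that each module/twinclass used really is a set of true or false twins in the final graph (adding clause gadgets can accidentally break twin relations if one is not careful, so the twinclass structure must be protected, e.g.\ by routing all clause-gadget edges through designated ``portal'' vertices that are shared by an entire twinclass). A secondary technical point is getting the constant in $\log_2 5$ exactly right — one needs $\grpsize \to \infty$ and $5^{\grpsize - o(\grpsize)} \ge 2^{\grpsize}$ to hold with the right slack so that $\ngrps \log_2 5 \le n + o(n)$, which forces the $(5-\eps)$ rather than $(5-\eps')$ for smaller $\eps'$; this is the usual ``$\eps$ vs.\ $\delta$'' bookkeeping and I would handle it exactly as in Cygan et al.~\cite{CyganNPPRW11arxiv}. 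The high-pathwidth-to-SETH wrap-up is then standard: a hypothetical $\Oh^*((5-\eps)^{\tcpw})$ algorithm composed with the reduction solves $q$-\SAT in time $\Oh^*((5-\eps)^{n/\log_2 5 + o(n)}) = \Oh^*(2^{(1-\delta)n})$ for some $\delta>0$ depending on $\eps$, uniformly over $q$, contradicting \SETH.
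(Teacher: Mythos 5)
Your overall plan matches the paper's: partition the variables into groups, encode each group's assignment as a sequence over a $5$-letter alphabet carried by long path gadgets that each contribute one unit of twinclass-pathwidth, attach decoding and clause gadgets column by column, and finish with the standard $\eps$-versus-$\delta$ bookkeeping. Your identification of where the base $5$ comes from is also essentially the paper's: for \CVC the interface of each path gadget is a false twinclass $\{u_1,u_2\}$ of size two, and the five realizable summaries are (none taken), (one taken, root-connected or not), (both taken, root-connected or not); for \FVS it is again a size-two false twinclass, with the five states given by the number of deleted twins refined by connectivity information (the paper uses root-connectivity of the survivor, resp.\ connectivity between the two survivors, rather than a literal $L/R$ cut label, but the count is the same).

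The proposal stops, however, exactly where the work begins, and it omits one idea that is essential for soundness rather than merely technical. When path gadgets are concatenated, nothing a priori prevents a budget-respecting solution from switching its state between consecutive columns; the ``register'' read off in the column of a given clause then need not agree with the one read off elsewhere, different columns would decode to different partial assignments, and the reduction would be unsound. The paper handles this by (i) proving that the only transitions compatible with a tight budget are monotone with respect to a fixed linear order on the five states, so at most four ``cheats'' occur per path, and (ii) repeating the clause sequence $4tp+1$ times so that by pigeonhole some region of $\nclss$ consecutive columns is cheat-free, from which the assignment is extracted. Your proposal never mentions this transition-ordering constraint or the column repetition, and without them the soundness direction fails. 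Relatedly, the gadget engineering itself --- forcing exactly five canonical states under a tight matching/packing lower bound while reading the state out through the twinclass \emph{as a whole} (one cannot attach distinct gadgets to the individual twins without destroying the twin relation) --- is precisely what the paper identifies as its main technical contribution, and it is left unspecified here. A minor further point: a single $5$-state register encodes only $\log_2 5$ bits, so a group of $\vgrpsize$ variables needs $\grpsize$ registers with $5^{\grpsize}\ge 2^{\vgrpsize}$, not one register per group as written; your final exponent accounting is nevertheless the correct one.
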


The obtained results on connectivity problems parameterized by modular-treewidth are situated in the larger context of a research program aimed at determining the optimal running times for connectivity problems relative to width-parameters of differing generality, thus quantifying the price of generality in this setting. The known results are summarized in \cref{table:conn_time_overview}. Beyond the results for treewidth by Cygan et al.~\cite{CyganNPPRW11arxiv,CyganNPPRW22}, Bojikian et al.~\cite{BojikianCHK23} obtain tight results for the more restrictive \emph{cutwidth} by either providing faster algorithms resulting from combining cut-and-count with the rank-based approach or by showing that the same lower bounds already hold for cutwidth. Hegerfeld and Kratsch~\cite{HegerfeldK23cw} consider \emph{clique-width} and obtain tight results for \CVC and \CDS. Their algorithms combine cut-and-count with several nontrivial techniques to speed up dynamic programming on clique-expressions, where the interaction between cut-and-count and clique-width can yield more involved states compared to modular-treewidth, as clique-width is more general. These algorithms are complemented by new lower bound constructions following similar high-level principles as for modular-treewidth, but allow for more flexibility in the gadget design due to the mentioned generality. However, the techniques of Hegerfeld and Kratsch~\cite{HegerfeldK23cw} for clique-width yield tight results for fewer problems compared to the present work; in particular, the optimal bases for \ST and \FVS parameterized by clique-width are currently not known.

	\newcommand{\sexp}[1]{$\Oh^*(#1^k)$}%
	\begin{table}%
		\centering
		\begin{tabular}{l|cccc}%
			Parameters & cutwidth & treewidth & modular-tw & clique-width\\%
			\hline%
			\CVC & \sexp{2} & \sexp{3} & \sexp{5} & \sexp{6} \\%
			\CDS & \sexp{3} & \sexp{4}	& \sexp{4} & \sexp{5} \\%
			\ST  & \sexp{3} & \sexp{3} & \sexp{3} & ? \\%
			\FVS & \sexp{2} & \sexp{3} & \sexp{5} & ? \\%
			\hline%
			References & \cite{BojikianCHK23} & \cite{CyganNPPRW11arxiv,CyganNPPRW22} & here & \cite{HegerfeldK23cw}%
		\end{tabular}%
		\caption{Optimal running times of connectivity problems with respect to various width-parameters listed in increasing generality. The results in the penultimate column are obtained in this paper. The ``?'' denotes cases, where an algorithm with single-exponential running time is known by Bergougnoux and Kant\'e~\cite{BergougnouxK19a}, but a gap between the lower bound and algorithm remains.}\label{table:conn_time_overview}%
		\vspace*{-1cm}
	\end{table}%

\subsubsection*{Related work.}
We survey some more of the literature on parameterized algorithms for connectivity problems relative to dense width-parameters. Bergougnoux~\cite{Bergougnoux19} has applied cut-and-count to several width-parameters based on structured neighborhoods such as clique-width, rank-width, or mim-width. Building upon the rank-based approach of Bodlaender et al.~\cite{BodlaenderCKN15}, Bergougnoux and Kant\'e~\cite{BergougnouxK19a} obtain single-exponential running times $\Oh^*(\alpha^{\cw})$ for a large class of connectivity problems parameterized by clique-width ($\cw$). The same authors~\cite{BergougnouxK21} also generalize this approach to other dense width-parameters via structured neighborhoods. All these works deal with general \textsc{Connected $(\sigma, \rho)$-Dominating Set} problems capturing a wide range of problems; this generality of problems (and parameters) comes at the cost of yielding running times that are far from optimal for specific problem-parameter-combinations, e.g., the first article~\cite{Bergougnoux19} is the most optimized for clique-width and obtains the running time $\Oh^*((2^{4 + \omega})^{\cw}) \geq \Oh^*(64^{\cw})$, where $\omega$ is the matrix multiplication exponent~\cite{AlmanW21}, for \CDS. Bergougnoux et al.~\cite{BergougnouxDJ23} obtain XP algorithms parameterized by mim-width for problems expressible in a logic that can also capture connectivity constraints. Beyond dense width-parameters, cut-and-count has also been applied to the parameters branchwidth~\cite{PinoBR16} and treedepth~\cite{HegerfeldK20,NederlofPSW20}.

Our version of modular-treewidth was first used by Bodlaender and Jansen for \textsc{Maximum Cut}~\cite{BodlaenderJ00}. Several papers~\cite{Lampis20,Mengel16,PaulusmaSS16} also use the name modular-treewidth, but use it to refer to what we call \emph{twinclass-treewidth}. In particular, Lampis~\cite{Lampis20} obtains tight results under SETH for \textsc{$q$-Coloring} with respect to twinclass-treewidth and clique-width. Hegerfeld and Kratsch~\cite{HegerfeldK22} obtain tight results for \OCT parameterized by twinclass-pathwidth and clique-width and for \DS parameterized by twinclass-cutwidth. Kratsch and Nelles~\cite{KratschN22} combine modular decompositions with tree-depth in various ways and obtain parameterized algorithms for various efficiently solvable problems.

\subsubsection*{Organization.}
In \cref{sec:modtw_prelims} we discuss the general preliminaries and \cref{sec:modtw_cutandcount} the cut-and-count-technique. We prove \cref{thm:intro_reductions} in \cref{sec:modtw_reduction}. \Cref{sec:modtw_cvc_algo} contains the \CVC algorithm of \cref{thm:intro_algos} and \cref{sec:modtw_fvs_algo} contains the \FVS algorithm. \Cref{sec:modtw_cvc_lb} contains the \CVC lower bound of \cref{thm:intro_reductions} and \cref{sec:modtw_fvs_lb} the \FVS lower bound. \Cref{sec:modtw_vc_algo} contains an algorithm for \VC used as a subroutine. The problem definitions can be found in \cref{sec:problems}.

\section{Preliminaries}
\label{sec:modtw_prelims}

For two integers $a,b$ we write $a \equiv_c b$ to indicate equality modulo $c \in \NN$. We use Iverson's bracket notation: for a boolean predicate $p$, we have that $[p]$ is $1$ if $p$ is true and $0$ otherwise. For a function $f$ we denote by $f[v \mapsto \alpha]$ the function $(f \setminus \{(v, f(v))\}) \cup \{(v, \alpha)\}$, viewing $f$ as a set. By $\FF_2$ we denote the field of two elements. For $n_1, n_2 \in \ZZ$, we write $[n_1, n_2] = \{x \in \ZZ \sep n_1 \leq x \leq n_2\}$ and $[n_2] = [1, n_2]$. For a function $f\colon V \rightarrow \ZZ$ and a subset $W \subseteq V$, we write $f(W) = \sum_{v \in W} f(v)$. Note that for functions $g\colon A \rightarrow B$, where $B \not\subseteq \ZZ$, and a subset $A' \subseteq B$, we still denote the \emph{image of $A'$ under $g$} by $g(A') = \{g(v) \sep v \in A'\}$. If $f \colon A \rightarrow B$ is a function and $A' \subseteq A$, then $f\big|_{A'}$ denotes the \emph{restriction} of $f$ to $A'$ and for a subset $B' \subseteq B$, we denote the \emph{preimage of $B'$ under $f$} by $f^{-1}(B') = \{a \in A \sep f(a) \in B'\}$. The \emph{power set} of a set $A$ is denoted by $\powerset{A}$.

\subsubsection*{Graph Notation.}
We use common graph-theoretic notation and the essentials of parameterized complexity. Let $G = (V, E)$ be an undirected graph. For $X \subseteq V$, we denote by $G[X]$ the subgraph of $G$ induced by $X$. The \emph{open neighborhood} of $v \in V$ is given by $N_G(v) = \{u \in V \sep \{u,v\} \in E\}$, whereas the \emph{closed neighborhood} is given by $N_G[v] = N_G(v) \cup \{v\}$. For $X \subseteq V$, we define $N_G[X] = \bigcup_{v \in X} N_G[v]$ and $N_G(X) = N_G[X] \setminus X$. The degree of $v \in V$ is denoted $\deg_G(v) = |N_G(v)|$. For two disjoint vertex subsets $A, B \subseteq V$, we define $E_G(A,B) = \{\{a,b\} \in E(G) \sep a \in A, b \in B\}$ and adding a \emph{join} between $A$ and $B$ means adding an edge between every vertex in $A$ and every vertex in $B$. We denote the \emph{number of connected components} of $G$ by $\cc(G)$. A \emph{cut} of $G$ is a partition $V = V_L \cup V_R$, $V_L \cap V_R = \emptyset$, of its vertices into two parts.

\subsubsection*{Tree Decompositions.} A \emph{path/tree decomposition} of a graph $G=(V,E)$ is a pair $(\TT, (\bag_t)_{t \in V(\TT)})$, where $\TT$ is a path/tree and every \emph{bag} $\bag_t \subseteq V$, $t \in V(\TT)$, is a set of vertices such that the following properties are satisfied:
  \begin{itemize}
   \item every vertex $v \in V$ is contained in some bag $v \in \bag_t$, 
   \item every edge $\{v,w\} \in E$ is contained in some bag $\{u,v\} \subseteq \bag_t$,
   \item for every vertex $v$, the set $\{t \in V(\TT) \sep v \in \bag_t\}$ is connected in $\TT$.
  \end{itemize}
  The \emph{width} of a path/tree decomposition $(\TT, (\bag_t)_{t \in
    V(\TT)})$ is $\max_{t \in V(\TT)} |\bag_t| - 1$. The \emph{pathwidth/treewidth} of a graph $G$, denoted $\pw(G)$ or $\tw(G)$ respectively, is the minimum width of a path/tree decomposition of $G$. For dynamic programming algorithms on tree decompositions, it is convenient to use \emph{very nice tree decompositions}~\cite{CyganNPPRW22}, further refining the \emph{nice tree decompositions} of Kloks~\cite{Kloks94}.

\begin{dfn}
  A tree decomposition $(\TT, (\bag_t)_{t \in V(\TT)})$ is \emph{very nice} if it is rooted at the \emph{root node} $\rvertex \in V(\TT)$ with $\bag_{\rvertex} = \emptyset$ and every bag $\bag_t$ has one of the following types:
  \begin{itemize}
    \item \textbf{Leaf bag:} $t$ has no children and $\bag_t = \emptyset$.
    \item \textbf{Introduce vertex $v$ bag:} $t$ has exactly one child $t'$ and $\bag_t = \bag_{t'} \cup \{v\}$ with $v \notin \bag_{t'}$.
    \item \textbf{Forget vertex $v$ bag:} $t$ has one child $t'$ and $\bag_t = \bag_{t'} \setminus \{v\}$ with $v \in \bag_{t'}$.
    \item \textbf{Introduce edge $\{v,w\}$ bag:} $t$ is labeled with an edge $\{v,w\} \in E$ and $t$ has one child $t'$ which satisfies $\{v,w\} \subseteq \bag_t = \bag_{t'}$.
    \item \textbf{Join bag:} $t$ has exactly two children $t_1$ and $t_2$ with $\bag_t = \bag_{t_1} = \bag_{t_2}$.
  \end{itemize}
  Furthermore, we require that every edge in $E$ is introduced exactly once.
\end{dfn}

\begin{lem}[\cite{CyganNPPRW22}]\label{thm:very_nice_tree_decomposition}
  Any tree decomposition of $G$ can be converted into a very nice tree decomposition of $G$ with the same width in polynomial time.
\end{lem}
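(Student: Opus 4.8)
The plan is to turn a given tree decomposition $(\TT,(\bag_t)_{t\in V(\TT)})$ of some width $w$ into a very nice one in two phases, each preserving $w$ and increasing the number of nodes only polynomially. In the first phase I would produce a \emph{nice} tree decomposition in the classical sense of Kloks (every node a leaf, introduce-vertex, forget-vertex, or join node) that additionally has an empty bag at the root and at every leaf; in the second phase I would splice in the introduce-edge nodes.

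For the first phase, root $\TT$ at an arbitrary node and apply local surgery that never creates a bag larger than $w+1$. Whenever a node has $d\ge 2$ children, replace it by a binary tree of $d-1$ join nodes all carrying its bag and hang the $d$ original subtrees below the $d$ bottom slots. Whenever a parent--child edge $tt'$ satisfies $|\bag_t \symdiff \bag_{t'}| \ge 2$ (this includes the connections just created), subdivide it by a path of single-vertex introduce- and forget-nodes that first shrinks $\bag_t$ to $\bag_t\cap\bag_{t'}$ one vertex at a time and then grows it to $\bag_{t'}$, so every intermediate bag lies inside $\bag_t$ or inside $\bag_{t'}$; contract any remaining $1$-child node whose bag equals its child's. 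Finally, append above the current root a path of forget-nodes ending in the empty bag, and below every leaf with nonempty bag a path of introduce-nodes starting from the empty bag. Verifying the three tree-decomposition axioms is routine: the new chains only reinsert a vertex locally within one former subtree, so every occurrence set stays connected. This phase is bookkeeping-heavy but entirely standard.

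For the second phase, I would use that in a nice tree decomposition every vertex $v$ is forgotten exactly once, at a unique forget-vertex node $t_v$ whose child $t_v'$ still contains $v$. Between $t_v$ and $t_v'$ splice a path of introduce-edge nodes, one for each edge $\{u,v\}\in E$ with $u\in\bag_{t_v'}$; this changes no bag, keeps $t_v$ a valid forget-node, and adds no other edge labels. To see that every edge is introduced exactly once, fix $e=\{u,v\}\in E$ and let $S_e$ be the set of nodes whose bag contains both $u$ and $v$; it is nonempty by the axioms and connected, being an intersection of two subtrees, hence it has a unique topmost node $s$. As the root bag is empty, $s$ has a parent $p\notin S_e$, and a case check over the four node types of the step $s\to p$ shows it must forget exactly one of $u,v$, say $v$; then $p=t_v$, $s=t_v'$, $u\in\bag_s$, so $e$ is introduced at $t_v$. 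It is not introduced at $t_u$: otherwise $v\in\bag_{t_u'}$, so $t_u'\in S_e$, and since $s$ is the top of $S_e$ we have $t_u'=s$ or $t_u'$ a proper descendant of $s$; in the first case $t_u=p$ would forget $v\neq u$, a contradiction, and in the second case $t_u$ lies on the path from $t_u'$ up to $s$, hence $t_u\in S_e$ by connectivity, contradicting that $t_u$ forgets $u$. Thus $e$ is introduced exactly once.

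Finally I would tally the cost: every new bag is a subset of an original bag or of a bag built up along a chain, so the width stays $w$; the first phase creates $\Oh(w)$ nodes per original node, per leaf, and at the root, and the second creates one node per edge, for total size $\Oh(w\cdot|V(\TT)| + w\cdot|V(G)| + |E(G)|)$, and all steps run in polynomial time. The only genuinely delicate point is the exactly-once accounting for edges, i.e.\ the connectivity argument of the previous paragraph; the rest is the routine surgery found in textbook treatments of nice tree decompositions.
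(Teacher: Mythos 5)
The paper does not prove this lemma itself but imports it from Cygan et al., and your construction is exactly the standard argument behind that citation: first build a nice tree decomposition with empty root and leaf bags by binarizing high-degree nodes and subdividing edges with introduce/forget chains, then splice the introduce-edge nodes just below each forget node, with the ``topmost node of $S_e$'' connectivity argument giving the exactly-once property. Your proof is correct (the only loose end is the routine insertion of an extra node when a join node's child has a differing bag, which you rightly flag as standard bookkeeping), so nothing further is needed.
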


\subsubsection*{Quotients and Twins.} 
Let $\Pi$ be a partition of $V(G)$. The \emph{quotient graph} $G / \Pi$ is given by $V(G / \Pi) = \Pi$ and $E(G / \Pi) = \{\{B_1, B_2\} \subseteq \Pi \sep B_1 \neq B_2, \exists u \in B_1, v \in B_2 \colon \{u,v\} \in E(G)\}$. We say that two vertices $u, v$ are \emph{twins} if $N(u) \setminus \{v\} = N(v) \setminus \{u\}$. The equivalence classes of this relation are called \emph{twinclasses} and we let $\tcpartition(G)$ denote the partition of $V(G)$ into twinclasses. If $N(u) = N(v)$, then $u$ and $v$ are \emph{false twins} and if $N[u] = N[v]$, then $u$ and $v$ are \emph{true twins}. Every twinclass of size at least 2 consists of only false twins or only true twins. A false twinclass induces an independent set and a true twinclass induces a clique. 

\subsubsection*{Lifting to Twinclasses.} 
The \emph{twinclass-treewidth} and \emph{twinclass-pathwidth} of $G$ are defined by $\tctw(G) = \tw(G/\tcpartition(G))$ and $\tcpw(G) = \pw(G/\tcpartition(G))$, respectively. The parameters twinclass-treewidth and twinclass-pathwidth have been considered before under the name modular treewidth and modular pathwidth~\cite{Lampis20,Mengel16,PaulusmaSS16}. We use the prefix \emph{twinclass} instead of \emph{modular} to distinguish from the quotient graph arising from a \emph{modular partition} of $G$. 

\subsubsection*{Modular Decomposition.} 
A vertex set $\module \subseteq V(G)$ is a \emph{module} of $G$ if $N(v) \setminus \module = N(w) \setminus \module$ for every pair $v, w \in \module$ of vertices in $\module$. Equivalently, for every $u \in V(G) \setminus \module$ it holds that $\module \subseteq N(u)$ or $\module \cap N(u) = \emptyset$. In particular, every twinclass is a module. We let $\modfamily(G)$ denote the set of all modules of $G$. The modules $\emptyset$, $V(G)$, and all singletons are called \emph{trivial}. A graph that only admits trivial modules is called \emph{prime}. If $\module \neq V(G)$, then we say that $\module$ is \emph{proper}. For two disjoint modules $\module_1, \module_2 \in \modfamily(G)$, either $\{\{v,w\} \sep v \in \module_1, w \in \module_2\} \subseteq E(G)$ or $\{\{v,w\} \sep v \in \module_1, w \in \module_2\} \cap E(G) = \emptyset$; in the first case, $\module_1$ and $\module_2$ are \emph{adjacent} and in the second case, they are \emph{nonadjacent}.

A module $\module$ is \emph{strong} if for every module $\module' \in \modfamily(G)$ we have that $\module \cap \module' = \emptyset$, $\module \subseteq \module'$, or $\module' \subseteq \module$, so strong modules intersect other modules only in a trivial way. Let $\smodfamily(G)$ denote the set of all strong modules of $G$. The defining property of strong modules implies that $\smodfamily(G)$ is a \emph{laminar set family}. Hence, if we consider $\modtree(G) = \smodfamily(G) \setminus \{\emptyset\}$ with the inclusion-relation, the associated Hasse diagram, i.e., there is an edge from $\module_1 \in \modtree(G)$ to $\module_2 \in \modtree(G)$ if $\module_1 \subsetneq \module_2$ and there is no $\module_3 \in \modtree(G)$ with $\module_1 \subsetneq \module_3 \subsetneq \module_2$, is a rooted tree, called the \emph{modular decomposition (tree)} of $G$. We freely switch between viewing $\modtree(G)$ as a set family or as the modular decomposition tree of $G$. In the latter case, we usually speak of \emph{nodes} of the modular decomposition tree.

Every graph $G$ with at least two vertices can be uniquely partitioned into a set of inclusion-maximal non-trivial strong modules $\modpartition(G) = \{\module_1, \ldots, \module_\ell\}$, with $\ell \geq 2$, called \emph{canonical modular partition}. For $\module \in \modtree(G)$ with $|\module| \geq 2$, let $\children(\module) = \modpartition(G[\module])$ as the sets in $\modpartition(G[\module])$ are precisely the children of $\module$ in the modular decomposition tree; if $|\module| = 1$, then $\children(\module) = \emptyset$. We write $\modint(G) = \modtree(G) \setminus \{\{v\} \sep v \in V\}$. Forming the \emph{quotient graph $\qgraph{\module} = G[\module] / \modpartition(G[\module])$ at $\module \in \modint(G)$}, there are three cases: 
\begin{thm}[\cite{Gallai67}]\label{thm:gallai_modular}
	For $\module \in \modint(G)$, exactly one of the following holds:
	\begin{itemize}
		\item \textbf{Parallel node}: $G[\module]$ is not connected and $\qgraph{\module}$ is an independent set,
		\item \textbf{Series node}: the complement $\overline{G[\module]}$ is not connected and $\qgraph{\module}$ is a clique,
		\item \textbf{Prime node}: $\modpartition(G[\module])$ consists of the inclusion-maximal proper modules of $G[\module]$ and $\qgraph{\module}$ is prime. 
	\end{itemize}
\end{thm}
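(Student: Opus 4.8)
The plan is to derive all three cases from the connectivity of $G[\module]$ and of its complement, using a few closure properties of the family of modules; note that although $\module$ is a strong module of $G$, only $G[\module]$ as a standalone graph matters here, so the strongness of $\module$ in $G$ plays no role. Two easy preliminary facts do most of the bookkeeping. First, for any graph $H$ on at least two vertices, at most one of $H$ and $\overline{H}$ is disconnected, since a disconnected graph has connected complement (of diameter at most two); this makes the parallel, series, and prime cases mutually exclusive and exhaustive. Second, a set is a module of $H$ if and only if it is a module of $\overline{H}$, and quotienting commutes with complementation ($\overline{H}/\Pi = \overline{H/\Pi}$, since the adjacency between two modules is all-or-nothing); hence the series case is precisely the parallel case applied to $\overline{G[\module]}$, so that the quotient, being the complement of an edgeless graph, is a clique, while $\modpartition(G[\module]) = \modpartition(\overline{G[\module]})$ consists of the co-components of $G[\module]$. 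It therefore suffices to treat the parallel case and the prime case.

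For the parallel case ($G[\module]$ disconnected) I would first show that a module of a disconnected graph $H$ is either contained in one connected component or is a union of components: if a module $A$ meets distinct components $C, C'$ and some $w \in C \setminus A$, take along a $C$-path from $A \cap C$ to $w$ an edge $\{x,y\}$ with $x \in A$ and $y \notin A$; then $y$ is adjacent to $x \in A$ but to no vertex of $A \cap C'$, contradicting that $A$ is a module, so $C \subseteq A$. Consequently the connected components are exactly the inclusion-maximal proper strong modules of $H$ (a proper module strictly containing a component is a union of at least two components, and such a union is not strong), so they form the canonical modular partition $\modpartition(G[\module])$; there are at least two of them and $\qgraph{\module}$ has no edges.

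For the prime case (both $G[\module]$ and $\overline{G[\module]}$ connected) the workhorse is that if two modules $A, B$ \emph{overlap}, i.e.\ meet with neither containing the other, then $A \cup B$, $A \cap B$, and $A \setminus B$ are all modules --- routine case checks on the all-or-nothing adjacency property. Using it: if two distinct inclusion-maximal proper modules overlapped, their union would be a module strictly larger than one of them, hence all of $V(G[\module])$; then the complement of that module would also be a module, and the bipartite graph between a module and its complement is complete or empty, so $G[\module]$ or $\overline{G[\module]}$ would be disconnected --- contradiction. Hence no module overlaps an inclusion-maximal proper module, which at once shows these modules are pairwise disjoint (they cannot be nested, by maximality) and strong; as every singleton lies in some inclusion-maximal proper module, they partition $V(G[\module])$ into at least two parts, and this partition is $\modpartition(G[\module])$. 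Finally $\qgraph{\module}$ is prime: a module of the quotient lifts to a module of $G[\module]$ (the union of the parts it comprises), so a non-trivial module $S$ of $\qgraph{\module}$ would yield a proper module of $G[\module]$ strictly containing each of the $\geq 2$ parts in $S$, contradicting inclusion-maximality.

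The main obstacle is the prime case, specifically the two module-arithmetic facts: closure of overlapping modules under union, intersection, and difference, and the lifting of a module of the quotient to a module of $G[\module]$. Each is just a case analysis on the all-or-nothing adjacency of modules, but together they are exactly what makes $\modpartition(G[\module])$ a well-defined partition and forces $\qgraph{\module}$ to be prime; the connectivity dichotomy, complementation-invariance, and the extension of any module to an inclusion-maximal proper one in a finite graph are all immediate.
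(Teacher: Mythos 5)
The paper gives no proof of this theorem: it is imported from Gallai's 1967 paper and used as a black box, so there is nothing internal to compare against. Your argument is a correct, self-contained derivation along the standard lines. The connectivity dichotomy for $H$ versus $\overline{H}$ cleanly separates the three cases; the parallel case follows from your (correct) observation that a module of a disconnected graph is contained in a component or is a union of components, whence the components are exactly the inclusion-maximal proper strong modules and the quotient is edgeless; the series case reduces to it by self-complementarity of modules, strong modules, and quotients; and the prime case rests on the partitive closure properties (union, intersection, difference of overlapping modules), which correctly force that a module overlapping a maximal proper module would produce a module together with its complement, hence a complete or empty bipartition and a disconnection of $G[\module]$ or $\overline{G[\module]}$. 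The two facts you defer as routine (overlap closure, and lifting a quotient module to the union of its parts) are indeed standard, and your use of them is sound. Two small points. First, you run the no-overlap argument for \emph{two maximal} proper modules but then conclude that \emph{no} module overlaps a maximal proper one; the argument does yield this stronger statement since it only uses maximality of one of the two sets, but you should say so, because the stronger form is what you need for strongness. Second, under the paper's definition of prime (only trivial modules), a two-vertex quotient is vacuously prime, so for a disconnected $G[\module]$ with exactly two components the parallel and prime bullets both hold literally and the ``exactly one'' is violated; this is an artifact of the statement rather than of your proof, and in your prime case the quotient genuinely has at least three parts, since a two-part canonical partition would force a complete or empty join and disconnect $G[\module]$ or its complement, exactly as you argue.
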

We collect the graphs that appear as prime quotient graphs in the modular decomposition of $G$ in the family $\primefamily(G) = \{\qgraph{\module} \sep \module \in \modint(G), \text{$\qgraph{\module}$ is prime}\}$. The modular decomposition tree can be computed in time $\Oh(n+m)$, see e.g.\ Tedder et al.~\cite{TedderCHP08} or the survey by Habib and Paul~\cite{HabibP10}.

Let $\module \in \modtree(G) \setminus \{V\}$ and $\pmodule \in \modtree(G)$ be its \emph{parent module}. We have that $\module \in \modpartition(G[\pmodule])$, hence $\module$ appears as a vertex of the quotient graph $\pquotient$; we will also denote this vertex by $\qvertex$. Note that $\pquotient$ is the only quotient graph in the modular decomposition of $G$ where $\module$ appears as a vertex. So, we implicitly know that $\qvertex \in V(\pquotient)$ without having to specify $\pmodule$. To each quotient graph $\qgraph{\pmodule} = G[\pmodule] / \modpartition(G[\pmodule])$, $\pmodule \in \modint(G)$, appearing in the modular decomposition, we also associate a \emph{canonical projection} $\pproj \colon \pmodule \rightarrow V(\qgraph{\pmodule})$ with $\pproj(v) = \qvertex$ whenever $v \in \module \in \modpartition(G[\pmodule])$.

\subsubsection*{Lifting to Modules.}
Many graph problems can be solved by working only on $\primefamily(G)$. Hence, we consider the values of standard graph parameters on $\primefamily(G)$. We define the \emph{modular-width} of $G$ by $\mw(G) = \max(2, \max_{H \in \primefamily(G)} |V(H)|)$, the \emph{modular-pathwidth} by $\modpw(G) = \max(2, \max_{H \in \primefamily(G)} \pw(H))$, and the \emph{modular-treewidth} by $\modtw(G) = \max(2, \max_{H \in \primefamily(G)} \tw(H))$. By combining an algorithm to compute the modular decomposition tree with an algorithm to compute treewidth, we obtain the following. 

\begin{thm}\label{thm:compute_modtw_transfer}
  If $\algo_{\tw}$ is an algorithm that given an $n$-vertex graph $G$ and an integer $k$, in time $\Oh(f(k)n^c)$, $c \geq 1$, either outputs a tree decomposition of width at most $g(k)$ or determines that $\tw(G) > k$, then there is an algorithm $\algo_{\modtw}$ that given an $n$-vertex $m$-edge graph $G$ and an integer $k$, in time $\Oh(f(k)n^c + m)$ either outputs a tree decomposition of width at most $g(k)$ for every prime quotient graph $\qgraph{\module} \in \primefamily(G)$ or determines that $\modtw(G) > k$.
\end{thm}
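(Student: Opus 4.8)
The plan is to reduce the computation to one treewidth computation per prime quotient graph occurring in the modular decomposition of $G$, and then to verify that the resulting calls do not aggregate to more than the claimed running time. Since $\modtw(G) \ge 2$ by definition, I would first dispose of the case $k < 2$ by reporting $\modtw(G) > k$ outright; from now on assume $k \ge 2$.

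First I would have $\algo_{\modtw}$ compute the modular decomposition tree $\modtree(G)$ in time $\Oh(n+m)$ via Tedder et al.~\cite{TedderCHP08} (see also~\cite{HabibP10}) and extract, for every prime node $\module \in \modint(G)$, the prime quotient graph $\qgraph{\module}$ with an explicit edge list. The observation keeping this step cheap is that every edge $\{u,v\} \in E(G)$ contributes to exactly one quotient graph, namely $\qgraph{\module}$ where $\module$ is the least common ancestor of $\{u\}$ and $\{v\}$ in $\modtree(G)$, the contributed edge joining the two distinct children of $\module$ containing $u$ and $v$; conversely, every edge of a quotient graph arises from at least one edge of $G$. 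Hence $\sum_{\module \in \modint(G)} |E(\qgraph{\module})| \le m$, and, using standard linear-preprocessing tree data structures (least-common-ancestor and level-ancestor queries) on $\modtree(G)$, all prime quotient graphs are produced in time $\Oh(n+m)$.

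Next, $\algo_{\modtw}$ would invoke $\algo_{\tw}$ with parameter $k$ on $\qgraph{\module}$ for every prime node $\module$. If some invocation reports $\tw(\qgraph{\module}) > k$, then $\modtw(G) \ge \tw(\qgraph{\module}) > k$ and $\algo_{\modtw}$ reports $\modtw(G) > k$; otherwise every invocation returns a tree decomposition of $\qgraph{\module}$ of width at most $g(k)$, and $\algo_{\modtw}$ outputs this family. For correctness I would appeal to the identity $\modtw(G) = \max(2, \max_{H \in \primefamily(G)} \tw(H))$: the first branch is immediate, and for the second I note that when $\modtw(G) \le k$ this identity together with $k \ge 2$ forces every prime quotient graph to have treewidth at most $k$, so no invocation can wrongly report $\tw(\qgraph{\module}) > k$ and the algorithm never falsely claims $\modtw(G) > k$.

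The only step requiring care, and the main obstacle, is the total running time of the $\algo_{\tw}$-calls: naively summing $\Oh(f(k)\,n_{\module}^c)$ over the up to $\Oh(n)$ prime nodes — with $n_{\module} = |V(\qgraph{\module})| = |\children(\module)|$ — would cost $\Oh(f(k)\,n^{c+1})$, which exceeds the target. To avoid this I would use that $\modtree(G)$ has at most $2n-1$ nodes (its $n$ leaves are the singletons and every internal node has at least two children) and that every non-root node is a child of exactly one node, whence $\sum_{\module \in \modint(G)} n_{\module} = |\modtree(G)| - 1 \le 2n$. Since $c \ge 1$, the map $x \mapsto x^c$ is superadditive on the nonnegative reals, so $\sum_{\module} n_{\module}^c \le \bigl(\sum_{\module} n_{\module}\bigr)^c \le (2n)^c = \Oh(n^c)$; hence the $\algo_{\tw}$-calls together take time $\Oh(f(k)\,n^c)$. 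Adding the $\Oh(n+m)$ for computing the modular decomposition and extracting the prime quotient graphs — and absorbing the additive $\Oh(n)$ using $f(k), c \ge 1$ — yields the claimed bound $\Oh(f(k)\,n^c + m)$.
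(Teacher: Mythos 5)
Your proposal is correct and follows essentially the same route as the paper: compute the modular decomposition in $\Oh(n+m)$, run $\algo_{\tw}$ on each prime quotient graph, and bound the total cost via $\sum_{\module} |V(\qgraph{\module})| \leq 2n$ (the paper uses $\sum |V(H)|^c \leq n^{c-1}\sum|V(H)|$ where you use superadditivity of $x^c$, but these are interchangeable). Your additional care about extracting the quotient graphs in linear time and the explicit correctness argument via $\modtw(G) = \max(2,\max_H \tw(H))$ are details the paper leaves implicit.
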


\begin{proof}
  The algorithm $\algo_{\modtw}$ works as follows. We first compute the modular decomposition tree of $G$ in time $\Oh(n + m)$ with, e.g., the algorithm of Tedder et al.~\cite{TedderCHP08} and obtain the family of prime quotient graphs $\primefamily(G)$. Since the modular decomposition tree has $n$ leaves and every internal node has at least two children, we obtain that $|\modtree(G)| \leq 2n$. This also implies that $\sum_{H \in \primefamily(G)} |V(H)| \leq 2n$, since the vertices of the quotient graph $G^q_\module$ at $\module \in \modint(G)$ are precisely the children of $\module$ in the modular decomposition tree. We run $\algo_{\tw}$ on every $H \in \primefamily(G)$ and bound the running time, neglecting the constant term, of this step as follows:
  \begin{equation*}
    \sum_{H \in \primefamily(G)} f(k) |V(H)|^c \leq f(k) n^{c-1} \sum_{\mathclap{H \in \primefamily(G)}} |V(H)| \leq 2 f(k) |V(H)|^{c}
  \end{equation*} 
  The algorithm is clearly correct, so this concludes the proof. \qed
\end{proof}

\begin{cor}
  There is an algorithm, that given an $n$-vertex graph $G$ and an integer $k$, in time $2^{\Oh(k)}n + m$ either outputs a tree decomposition of width at most $2k+1$ for every prime quotient graph $\qgraph{\module} \in \primefamily(G)$ or determines that $\modtw(G) > k$.
\end{cor}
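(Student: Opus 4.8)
The plan is to obtain the corollary directly from \cref{thm:compute_modtw_transfer} by instantiating the treewidth subroutine $\algo_{\tw}$ with a known single-exponential-time constant-factor approximation algorithm for treewidth. Concretely, I would invoke Korhonen's algorithm, which, given an $n$-vertex graph $H$ and an integer $k$, runs in time $2^{\Oh(k)}|V(H)|$ and either outputs a tree decomposition of $H$ of width at most $2k+1$ or correctly reports that $\tw(H) > k$. In the notation of \cref{thm:compute_modtw_transfer}, this amounts to taking $f(k) = 2^{\Oh(k)}$, $g(k) = 2k+1$, and $c = 1$.

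Substituting these parameters into \cref{thm:compute_modtw_transfer}, I get an algorithm $\algo_{\modtw}$ that, given an $n$-vertex $m$-edge graph $G$ and an integer $k$, runs in time $\Oh(f(k) n^c + m) = 2^{\Oh(k)} n + m$ and either outputs, for every prime quotient graph $\qgraph{\module} \in \primefamily(G)$, a tree decomposition of width at most $g(k) = 2k+1$, or determines that $\modtw(G) > k$. This is exactly the claimed statement, so no further work is required beyond the substitution.

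There is essentially no obstacle here; the only point requiring a little care is the choice of treewidth approximation. One must select an algorithm whose output-width guarantee is $2k+1$ \emph{and} whose running time is simultaneously single-exponential in $k$ and linear in the number of vertices (so that $c = 1$ and the $m$-term in \cref{thm:compute_modtw_transfer} stays additive). For instance, the classical $5$-approximation of Bodlaender et al.\ also runs in time $2^{\Oh(k)} n$ but only guarantees width $5k+4$; using the sharper $2$-approximation instead yields the stated bound $2k+1$, while leaving the overall running time $2^{\Oh(k)} n + m$ unchanged.
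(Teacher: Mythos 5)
Your proposal is correct and matches the paper's proof exactly: both instantiate \cref{thm:compute_modtw_transfer} with Korhonen's $2$-approximation for treewidth, giving $f(k) = 2^{\Oh(k)}$, $g(k) = 2k+1$, and $c = 1$.
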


\begin{proof}
  We apply \cref{thm:compute_modtw_transfer} with the algorithm of Korhonen~\cite{Korhonen21} that satisfies $f(k) = 2^{\Oh(k)}$ and $g(k) = 2k+1$. \qed
\end{proof}

\subsubsection*{Associated Subgraphs for Modular-Treewidth.} Given a very nice tree decomposition $(\TT^q_{\pmodule}, (\bag^q_t)_{t \in V(\TT^q_{\pmodule})})$ of the quotient graph $\qgraph{\pmodule}$, we associate to every node $t \in V(\TT^q_{\pmodule})$ a subgraph $G^q_t =(V^q_t, E^q_t)$ of $\qgraph{\pmodule}$ as follows: 
\begin{itemize}
 \item $V^q_t$ contains all $\qvertex \in V(\qgraph{\pmodule})$ such that there is a descendant $t'$ of $t$ in $\TT^q_{\pmodule}$ with $\qvertex \in \bag^q_{t'}$,
 \item $E^q_t$ contains all $\{v^q_{\module_1}, v^q_{\module_2}\} \in E(\qgraph{\pmodule})$ that were introduced by a descendant of $t$ in $\TT^q_{\pmodule}$.
\end{itemize}
Based on the vertex subsets of the quotient graph $\qgraph{\pmodule}$, we define vertex subsets of the original graph $G[\pmodule]$ as follows: $\bag_t = \pprojinv(\bag^q_t) = \bigcup_{v^q_\module \in \bag^q_t} \module$ and $V_t = \pprojinv(V^q_t) = \bigcup_{v^q_\module \in V^q_t} \module$. We also transfer the edge set as follows
\begin{equation*}
	E_t = \bigcup_{\qvertex \in V^q_t} E(G[\module]) \cup \smashoperator[r]{\bigcup_{\{v^q_{\module_1}, v^q_{\module_2}\} \in E^q_t}} \{\{u_1, u_2\} \sep u_1 \in \module_1 \wedge u_2 \in \module_2\},
\end{equation*} 
allowing us to define the graph $G_t = (V_t, E_t)$ associated to any node $t \in V(\TT^q_\pmodule)$.

\subsubsection*{Clique-Expressions and Clique-Width.}
A \emph{labeled graph} is a graph $G = (V,E)$ together with a \emph{label function} $\lfct \colon V \rightarrow \NN = \{1, 2, 3, \ldots\}$; we usually omit mentioning $\lfct$ explicitly. A labeled graph is \emph{$k$-labeled} if $\lfct(v) \leq k$ for all $v \in V$. We consider the following four operations on labeled graphs: 
\begin{itemize}
  \item the \emph{introduce}-operation $\intro{\ell}(v)$ which constructs a single-vertex graph whose unique vertex $v$ has label $\ell$,
  \item the \emph{union}-operation $G_1 \union G_2$ which constructs the disjoint union of two labeled graphs $G_1$ and $G_2$,
  \item the \emph{relabel}-operation $\relab{i}{j}(G)$ changes the label of all vertices in $G$ with label $i$ to label $j$, 
  \item the \emph{join}-operation $\join{i}{j}(G)$, $i \neq j$, which adds an edge between every vertex in $G$ with label $i$ and every vertex in $G$ with label $j$. 
\end{itemize}
A valid expression that only consists of introduce-, union-, relabel-, and join-operations is called a \emph{clique-expression}. The graph constructed by a clique-expression $\cexpr$ is denoted $G_\cexpr$ and the label function is denoted $\lfct_\cexpr \colon V(G_\cexpr) \rightarrow \NN$. We associate to a clique-expression $\cexpr$ the syntax tree $\tree_\cexpr$ in the natural way and to each node $t \in V(\tree_\cexpr)$ the corresponding operation. For any node $t \in V(\tree_\cexpr)$ the subtree rooted at $t$ induces a \emph{subexpression} $\cexpr_t$. When a clique-expression $\cexpr$ is fixed, we define $G_t = G_{\cexpr_t}$ and $\lfct_t = \lfct_{\cexpr_t}$ for any $v \in V(\tree_\cexpr)$. We say that a clique-expression $\cexpr$ is a \emph{$k$-clique-expression} or just \emph{$k$-expression} if $(G_t, \lfct_t)$ is $k$-labeled for all $t \in V(\tree_\cexpr)$. The \emph{clique-width} of a graph $G$, denoted by $\cw(G)$, is the minimum $k$ such that there exists a $k$-expression $\cexpr$ with $G = G_\cexpr$. A clique-expression $\cexpr$ is \emph{linear} if in every union-operation the second graph consists only of a single vertex. Accordingly, we also define the \emph{linear-clique-width} of a graph $G$, denoted $\lcw(G)$, by only considering linear clique-expressions.

\subsubsection*{Strong Exponential-Time Hypothesis.}
The \emph{Strong Exponential-Time Hypothesis} (\SETH) \cite{CalabroIP09,ImpagliazzoPZ01} concerns the complexity of $\clss$-\SAT, i.e., \SAT where every clause contains at most $\clss$ literals. We define $c_\clss = \inf \{\delta \sep \clss\text{-\SAT can be solved in time } \Oh(2^{\delta \nvars}) \}$ for all $\clss \geq 3$. The \emph{Exponential-Time} \emph{Hypothesis} (ETH) of Impagliazzo and Paturi~\cite{ImpagliazzoP01} posits that $c_3 > 0$, whereas the Strong Exponential-Time Hypothesis states that $\lim_{\clss \rightarrow \infty} c_\clss = 1$. 
Or equivalently, for every $\delta < 1$, there is some $\clss$ such that $\clss$-\SAT cannot be solved in time $\Oh(2^{\delta \nvars})$.
For one of our lower bounds, the following weaker variant of \SETH, also called \CNFSETH, is sufficient.

\begin{cnj}[\CNFSETH]
  \label{conj:cnfseth}
  For every $\eps > 0$, there is no algorithm solving \SAT with $n$ variables and $m$ clauses in time $\Oh(\poly(m)(2-\eps)^n)$.
\end{cnj}

\subsection{Parameter Relationships}

\begin{lem}\label{thm:cw_modpw}
  For any graph $G$, we have $\cw(G) \leq \modpw(G) + 2$. An appropriate clique-expression can be computed in polynomial time given optimal path decompositions of the graphs in $\primefamily(G)$.
\end{lem}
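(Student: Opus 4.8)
The plan is a structural induction on the modular decomposition tree $\modtree(G)$: I will construct, for every node $\module \in \modtree(G)$, a clique-expression $\cexpr_\module$ with $G_{\cexpr_\module} = G[\module]$ that uses at most $\modpw(G)+2$ labels and in which every vertex of $G[\module]$ carries label $1$; taking $\module = V(G)$ then yields $\cw(G) \le \modpw(G)+2$. The base case is a singleton $\module = \{v\}$, handled by $\cexpr_\module = \intro{1}(v)$. For an internal node $\module \in \modint(G)$ with children $\children(\module) = \modpartition(G[\module]) = \{\module_1, \dots, \module_\ell\}$ and quotient graph $\qgraph{\module}$ on the vertices $v^q_{\module_1}, \dots, v^q_{\module_\ell}$, I first build a clique-expression $\nu_\module$ of $\qgraph{\module}$ using at most $\modpw(G)+2$ labels. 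By \cref{thm:gallai_modular} there are three cases: if $\module$ is a parallel node then $\qgraph{\module}$ is edgeless and one label suffices; if $\module$ is a series node then $\qgraph{\module}$ is a clique, which one builds on two labels by introducing its vertices one at a time in label $2$ and, after each, applying $\join{1}{2}$ followed by $\relab{2}{1}$; if $\module$ is a prime node then $\qgraph{\module} \in \primefamily(G)$ has $\pw(\qgraph{\module}) \le \modpw(G)$, and from the supplied path decomposition one computes in polynomial time a (linear) clique-expression on at most $\pw(\qgraph{\module})+2 \le \modpw(G)+2$ labels, by sweeping the bags from one end to the other, keeping the vertices currently in the bag on pairwise distinct labels and all forgotten vertices on one shared ``dead'' label, and inserting each edge by a join as soon as its later endpoint is introduced.

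Next I turn $\nu_\module$ into $\cexpr_\module$ by \emph{substitution}. In $\nu_\module$ each $v^q_{\module_i}$ is introduced exactly once, say by $\intro{\ell_i}(v^q_{\module_i})$; replace this single occurrence by $\relab{1}{\ell_i}(\cexpr_{\module_i})$, that is, paste in the recursively built expression for $G[\module_i]$ (all of whose vertices carry label $1$) and recolor all of $G[\module_i]$ to label $\ell_i$. Finally apply $\relab{2}{1}, \relab{3}{1}, \dots, \relab{\modpw(G)+2}{1}$ on top to bring every vertex back to label $1$. The result is a valid clique-expression, and no node of its syntax tree uses more than $\modpw(G)+2$ labels: within each pasted $\cexpr_{\module_i}$ this holds by the induction hypothesis (the extra $\relab{1}{\ell_i}$ adds no new label), while at a node inherited from $\nu_\module$ the labels present are exactly the labels present at the corresponding node of $\nu_\module$, hence lie in $[\modpw(G)+2]$, by the structural invariant below.

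Correctness of the substitution rests on the invariant that, throughout the evaluation, all vertices of an already-introduced child module $\module_i$ share the current label of $v^q_{\module_i}$ in the corresponding run of $\nu_\module$; equivalently, the labeled graph produced at each node inherited from $\nu_\module$ is the ``blow-up'' of the labeled graph produced by $\nu_\module$ at that node, replacing every $v^q_{\module_j}$ present there by $G[\module_j]$ entirely in $v^q_{\module_j}$'s label. This holds right after the pasted-in $\relab{1}{\ell_i}(\cexpr_{\module_i})$ and is preserved by union- and join-operations (which do not alter labels) and by relabel-operations (a $\relab{a}{b}$ moves $v^q_{\module_i}$ and the monochromatic set $\module_i$ together). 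Two consequences follow: a join never inserts an edge inside any $\module_i$, since its two labels are distinct while $\module_i$ is monochromatic, so $G[\module_i]$ is exactly the graph built by $\cexpr_{\module_i}$ and is never modified afterwards; and a join realizing an edge $\{v^q_{\module_i}, v^q_{\module_j}\}$ of $\qgraph{\module}$ inserts precisely all edges between $\module_i$ and $\module_j$, which is $E_G(\module_i, \module_j)$ because adjacency of $v^q_{\module_i}$ and $v^q_{\module_j}$ in $\qgraph{\module}$ means $\module_i$ and $\module_j$ are adjacent modules. Hence at the root of $\nu_\module$ we obtain exactly $G[\module]$, completing the induction; applying it to $\module = V(G)$ proves the bound.

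For the running-time claim, the modular decomposition tree is computed in time $\Oh(n+m)$, converting each given path decomposition into a clique-expression is polynomial, and the bottom-up assembly does polynomial work at each of the $\Oh(n)$ nodes of $\modtree(G)$, so the whole clique-expression is produced in polynomial time. I expect the prime case to be the main obstacle: one must simultaneously keep the substituted expression within $\modpw(G)+2$ labels and verify, via the monochromaticity invariant, that the joins of $\nu_\module$ create exactly the module-to-module bicliques of $G[\module]$ and none of the intra-module edges; the parallel and series nodes and the base case are comparatively routine.
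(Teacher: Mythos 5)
Your proposal is correct and follows essentially the same route as the paper: a bottom-up induction over the modular decomposition tree with singletons as the base case, unions for parallel nodes, a two-label join/relabel scheme for series nodes, and, for prime nodes, a sweep of the (very nice) path decomposition of the quotient graph that keeps bag modules on private labels among $[\modpw(G)+1]$ and parks forgotten modules on a shared label $\modpw(G)+2$. Your framing as ``build a clique-expression of the quotient graph, then substitute the child expressions for its introduce-operations, justified by a monochromaticity invariant'' is just a uniform repackaging of the paper's inline construction (which performs the substitution at introduce-vertex nodes via $\cexpr'_s \union \relab{1}{\ell_i}(\cexpr_i)$), so no substantive difference.
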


\begin{proof}
  We construct a clique-expression $\cexpr$ for $G$ using at most $\modpw(G) + 2$ labels by working bottom-up along the modular decomposition tree. More precisely, we inductively construct $(\modpw(G)+2)$-expressions $\cexpr_\module$ for every $G[\module]$, $\module \in \modtree(G)$. 
  
  As the base case, we consider the leaves of the modular decomposition tree which correspond to singleton modules $\{v\}$, $v \in V$, and therefore each $\cexpr_{\{v\}}$ simply consists of a single introduce-operation. For any internal node $\module$ of the modular decomposition tree with $\modpartition(G[\module]) = \{\module_1, \ldots, \module_\ell\}$, we inductively assume that the clique-expressions $\cexpr_i := \cexpr_{\module_i}$ for $G[\module_i]$, $i \in [\ell]$, have already been constructed. Furthermore, we assume without loss of generality that every $\cexpr_i$ relabels all vertices to label $1$ at the end. We now distinguish between the node type of $\module$ in the modular decomposition tree. If $\module$ is a parallel node, then we obtain $\cexpr_\module$ by successively taking the union of all $\cexpr_i$, $i \in [\ell]$. 
  
  If $\module$ is a series node, then we set $\cexpr'_{1} := \cexpr_1$ and $\cexpr'_{i+1} := \relab{2}{1}(\join{1}{2}(\cexpr'_{i} \union \relab{1}{2}(\cexpr_{i+1})))$ for all $i \in [\ell-1]$ and $\cexpr_\module = \cexpr'_\ell$. So, we add one child module after the other and add all edges to the previous child modules using two labels.
  
  If $\module$ is a prime node, then we consider an optimal path decomposition $(\TT^q, (\bag^q_t)_{t \in V(\TT^q)})$ of the quotient graph $\qgraph{\module} = G[\module] / \modpartition(G[\module])$. By \cref{thm:very_nice_tree_decomposition}, we can assume that it is a very nice path decomposition. We inductively construct clique-expressions $\cexpr'_t$ for every $t \in V(\TT^q)$ such that every module in the current bag has a private label and all forgotten modules get label $\ell_{max} := \modpw(G) + 2$. Since every bag contains at most $\modpw(G) + 1$ modules, all smaller labels may be used as private labels. If $\rvertex$ denotes the root node of $\TT^q$, then we set $\cexpr_\module = \cexpr'_{\rvertex}$. The base case is given by the leaf node with $\bag^q_t = \emptyset$, where $\cexpr'_t$ is simply the empty expression.
  
  For an introduce vertex node $t$ introducing vertex $v^q_{\module_i}$, with child $s$, let $\ell_i$ denote the smallest empty label at the end of $\cexpr'_{s}$ and set $\cexpr'_t = \cexpr'_{s} \union \relab{1}{\ell_i}(\cexpr_i)$.
  
  For an introduce edge node $t$ introducing edge $\{v^q_{\module_i}, v^q_{\module_j}\}$, with child $s$, let $\ell_i$ and $\ell_j$ denote the labels of $\module_i$ and $\module_j$ respectively in $\cexpr'_{s}$ and set $\cexpr'_t = \join{\ell_i}{\ell_j}(\cexpr'_{s})$.
  
  For a forget vertex node $t$, which forgets vertex $v^q_{\module_i}$, with child $s$, we let $\ell_i$ denote the label of $\module_i$ in $\cexpr'_{s}$ and set $\cexpr'_t = \relab{\ell_i}{\ell_{max}}(\cexpr'_{s})$. \qed
\end{proof}

Note that \cref{thm:cw_modpw} can only hold for modular-pathwidth and not modular-treewidth, as already for treewidth, Corneil and Rotics~\cite{CorneilR05} show that for every $k$ there exists a graph $G_k$ with treewidth $k$ and clique-width exponential in $k$.

\begin{lem}
  For any graph $G$, we have $\modpw(G) \leq \max(2, \tcpw(G))$ and $\modtw(G) \leq \max(2, \tctw(G))$.
\end{lem}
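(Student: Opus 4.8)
The plan is to show that every prime quotient graph $H\in\primefamily(G)$ occurs, up to isomorphism, as an induced subgraph of the twinclass quotient $G/\tcpartition(G)$. Granting this, both inequalities follow at once: pathwidth and treewidth are monotone under taking induced subgraphs, so $\pw(H)\le\pw(G/\tcpartition(G))=\tcpw(G)$ and $\tw(H)\le\tw(G/\tcpartition(G))=\tctw(G)$ for every $H\in\primefamily(G)$, and since $\modpw(G)=\max(2,\max_{H\in\primefamily(G)}\pw(H))$ and $\modtw(G)=\max(2,\max_{H\in\primefamily(G)}\tw(H))$, the result drops out; the case $\primefamily(G)=\emptyset$ is immediate since then $\modpw(G)=\modtw(G)=2$. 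So fix a prime node $\module\in\modint(G)$ with $\modpartition(G[\module])=\{\module_1,\ldots,\module_\ell\}$, so that $H=\qgraph{\module}$, and recall that a prime node has $\ell\ge 3$ children (indeed $\ell\ge 4$, as the smallest prime graph is $P_4$).

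Next I would pick an arbitrary representative $v_i\in\module_i$ for each $i\in[\ell]$. Since any two distinct child modules are either completely adjacent or completely nonadjacent in $G[\module]$, sending $v^q_{\module_i}\mapsto v_i$ identifies $H$ with the induced subgraph $G[\{v_1,\ldots,v_\ell\}]$. The crux is the claim that $v_1,\ldots,v_\ell$ lie in pairwise distinct twinclasses of $G$. For this, observe that since $\module$ is a module of $G$, all representatives agree on their neighbourhood outside $\module$, hence $v_i$ and $v_j$ are twins in $G$ if and only if they are twins in $G[\module]$. But if $v_i,v_j$ were twins in $G[\module]$, then for each $k\notin\{i,j\}$ the module $\module_k$ is adjacent to $\module_i$ exactly when it is adjacent to $\module_j$, and consequently $\module_i\cup\module_j$ is again a module of $G[\module]$; it is proper (as $\ell\ge 3$) and strictly contains the inclusion-maximal proper module $\module_i$, a contradiction. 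So the representatives sit in distinct twinclasses.

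Writing $[v]$ for the twinclass of $v$, it then remains to check that $v^q_{\module_i}\mapsto[v_i]$ is an isomorphism from $H$ onto the subgraph of $G/\tcpartition(G)$ induced by $\{[v_1],\ldots,[v_\ell]\}$: injectivity is exactly the claim just shown, and I still must verify that adjacency is preserved. One direction is trivial ($v_iv_j\in E(G)$ yields an edge between $[v_i]$ and $[v_j]$ in the quotient). For the converse, suppose $ab\in E(G)$ with $a\in[v_i]$, $b\in[v_j]$, $i\ne j$; then $b\ne v_i$ and $a\ne v_j$, so $b\in N_G(a)\setminus\{v_i\}=N_G(v_i)\setminus\{a\}$ gives $v_ib\in E(G)$, and then $v_i\in N_G(b)\setminus\{v_j\}=N_G(v_j)\setminus\{b\}$ gives $v_iv_j\in E(G)$, which by the module property is equivalent to $v^q_{\module_i}v^q_{\module_j}\in E(H)$. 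Hence $H$ embeds as an induced subgraph of $G/\tcpartition(G)$, as needed.

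I expect the main obstacle to be the distinct-twinclasses claim, i.e., genuinely exploiting primality of the node; this is also precisely where the restriction of $\modpw$ and $\modtw$ to prime quotient graphs is essential, since for a series or parallel node with two singleton children the two representatives are true resp.\ false twins in $G$ and the displayed argument collapses.
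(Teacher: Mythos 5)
Your proof is correct and takes essentially the same approach as the paper: both reduce to the prime nodes and exhibit $\qgraph{\module}$ as an induced subgraph of $G/\tcpartition(G)$, using that $\module$ being a module makes twin-ness in $G$ and in $G[\module]$ coincide. Your representative-based step (twins $v_i,v_j$ would make $\module_i\cup\module_j$ a module strictly containing an inclusion-maximal proper module) is just the contrapositive of the paper's observation that every proper twinclass of $G[\module]$ is a module and hence, by \cref{thm:gallai_modular}, $\tcpartition(G[\module])$ refines $\modpartition(G[\module])$.
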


\begin{proof}
  Since parallel and series nodes do not affect $\modpw(G)$ or $\modtw(G)$, it is sufficient to consider the prime nodes. Let $G[\module]$, $\module \in \modint(G)$, be some internal prime node in the modular decomposition tree of $G$. We want to show that $\pw(\qgraph{\module}) = \pw(G[\module] / \modpartition(G[\module])) \leq \pw(G / \tcpartition(G)) = \tcpw(G)$ and similarly for the treewidth. We claim that $\qgraph{\module}$ is a subgraph of $G / \tcpartition(G)$ which implies the desired inequalities.
  
  Since $\module$ is a module, we see that the twinclasses of $G[\module]$ have the form $C \cap \module$, where $C$ is a twinclass of $G$. Therefore, the graph $G[\module] / \tcpartition(G[\module])$ is an induced subgraph of $G / \tcpartition(G)$. Furthermore, every proper twinclass of $G[\module]$ is also a proper module of $G[\module]$. By \cref{thm:gallai_modular}, $\modpartition(G[\module])$ must consist of all inclusion-maximal proper modules of $G[\module]$. Thus, $\tcpartition(G[\module])$ is a finer partition than $\modpartition(G[\module])$ and $\qgraph{\module} = G[\module] / \modpartition(G[\module])$ is an induced subgraph of $G[\module] / \tcpartition(G[\module])$ which shows our claim. \qed
\end{proof}

\begin{thm}[\cite{HegerfeldK22}]
  \label{thm:hierarchy_cliquewidth}
  For a graph $G$, we have $\cw(G) \leq \lcw(G) \leq \tcpw(G) + 4 \leq \pw(G) + 4$.
\end{thm}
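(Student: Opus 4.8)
The plan is to handle the three inequalities $\cw(G) \le \lcw(G) \le \tcpw(G) + 4 \le \pw(G) + 4$ in turn; only the middle one requires real work. The bound $\cw(G) \le \lcw(G)$ is immediate, since every linear clique-expression is in particular a clique-expression. For $\tcpw(G) \le \pw(G)$ (which then yields $\tcpw(G) + 4 \le \pw(G) + 4$), I would show that $H := G / \tcpartition(G)$ is isomorphic to an induced subgraph of $G$: pick one representative $r_C \in C$ for each twinclass $C$ and take the subgraph of $G$ induced by $\{r_C \sep C \in \tcpartition(G)\}$. Every twinclass is a module, and two disjoint modules are either completely adjacent or completely nonadjacent in $G$, so for distinct twinclasses $C, C'$ the vertices $r_C$ and $r_{C'}$ are adjacent in $G$ exactly when $v_C$ and $v_{C'}$ are adjacent in $H$; hence $r_C \mapsto v_C$ is the desired isomorphism. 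Restricting the bags of any path decomposition of $G$ to the chosen representatives then gives $\pw(H) \le \pw(G)$, i.e.\ $\tcpw(G) \le \pw(G)$.

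For the main inequality $\lcw(G) \le \tcpw(G) + 4$, the plan is to imitate the prime-node case of the proof of \cref{thm:cw_modpw}, but applied to the single quotient $H = G / \tcpartition(G)$ and with a \emph{linear} decomposition. Fix a very nice path decomposition of $H$ of width $k := \tcpw(G)$ (available via \cref{thm:very_nice_tree_decomposition}) and process its nodes from the leaf upward, building a linear clique-expression $\mu$ for $G$. The invariant maintained at a node $t$ is: the labeled graph built so far is the subgraph of $G$ spanned by the twinclasses whose $H$-vertex is introduced at or below $t$; every twinclass currently in the bag of $H$ carries a private label from a pool of $k+1$ labels; all already-forgotten twinclasses carry one common ``junk'' label $j$; and one further ``working'' label $w$ is currently free. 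The only nontrivial step is an introduce-vertex node for $v_C$: pick a free label $\ell_C$ from the pool (one exists, since the bag then contains at most $k+1$ twinclasses) and add the vertices of $C$ one at a time via unions with single-vertex subexpressions. If $C$ is a singleton or a false twinclass (an independent set), introduce each vertex directly with label $\ell_C$. If $C$ is a true twinclass (a clique), introduce each new vertex $u$ with label $w$ and then apply $\join{\ell_C}{w}$ followed by $\relab{w}{\ell_C}$, so that afterwards $C$ is a clique all carrying label $\ell_C$ and $w$ is free again; crucially, each union used here has a single introduced vertex as its second operand, so linearity is preserved. An introduce-edge node for $\{v_C, v_{C'}\}$ of $H$ becomes $\join{\ell_C}{\ell_{C'}}$, which adds exactly the complete bipartite edge set between $C$ and $C'$ in $G$ — again using the module property together with the fact that $\ell_C, \ell_{C'}$ are private labels. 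A forget-vertex node for $v_C$ becomes $\relab{\ell_C}{j}$, freeing $\ell_C$ for reuse, and the leaf node corresponds to the empty expression. Counting labels, we use $k+1$ private labels, one junk label, and one working label, i.e.\ $k+3 \le \tcpw(G) + 4$ in total.

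I expect the main obstacle to be bookkeeping rather than any single hard idea. One has to verify (i) that the ``all-or-nothing'' adjacency between distinct twinclasses makes the label-level joins inherited from $H$ produce precisely the correct vertex-level edges in $G$ — this is exactly what lets the construction cope with dense $G$; (ii) that the clique-building gadget respects linearity (each union with a single introduced vertex) and restores the working label $w$ to an empty state after each twinclass is processed, so that no stray edges appear later; and (iii) that the pool of $k+1$ private labels never overflows, which follows from the width bound on the path decomposition of $H$ since a very nice decomposition enlarges a bag by at most one at an introduce node. As an alternative one could instead first invoke the folklore bound $\lcw(H) \le \pw(H) + 2$ and then ``blow up'' each vertex $v_C$ of an optimal linear clique-expression of $H$ into its twinclass using two spare labels (introducing the vertices of $C$ one by one and, in the clique case, using the two spare labels to build it up before relabelling into the label of $v_C$), which gives $\lcw(G) \le \lcw(H) + 2 \le \pw(H) + 4 = \tcpw(G) + 4$; this variant is perhaps cleaner to write and yields the same bound.
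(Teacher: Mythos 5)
Your proof is correct. The paper does not actually prove \cref{thm:hierarchy_cliquewidth} here (it is imported from \cite{HegerfeldK22}), but your argument is a sound self-contained derivation that follows essentially the same construction the paper uses for the prime-node case of \cref{thm:cw_modpw}: private labels for the twinclasses in the current bag, a junk label for forgotten twinclasses, and a spare working label to build the clique inside a true twinclass, with the all-or-nothing adjacency between distinct twinclasses ensuring that the label-level joins inherited from the quotient produce exactly the right vertex-level edges, and with the very-nice decomposition guaranteeing each edge is introduced once and only while both endpoints still hold private labels. Your count in fact yields the marginally stronger bound $\lcw(G) \leq \tcpw(G) + 3$, which is consistent with the stated inequality.
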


\subsubsection*{Mixed-search.} To prove that the graphs in our lower bound constructions have small pathwidth, it is easier to use a \emph{search game} characterization instead of directly constructing a path decomposition. The search game corresponding to pathwidth is the \emph{mixed-search-game}. In such a game, the graph $G$ represents a system of tunnels where all edges are contaminated by a gas. The objective is to clear all edges of this gas. An edge can be cleared by either placing searchers at both of its endpoints or by moving a searcher along the edge. If there is a path from an uncleared edge to a cleared edge without any searchers on the vertices or edges of the path, then the cleared edge is recontaminated. A \emph{search strategy} is a sequence of operations of the following types: a searcher can be placed on or removed from a vertex, and a searcher on a vertex can be moved along an incident edge and placed on the other endpoint. We say that a search strategy is \emph{winning} if after its termination all edges are cleared. The \emph{mixed-search-number} of a graph $G$, denoted $\ms(G)$, is the minimum number of searchers required for a winning strategy of the mixed-search-game on $G$.

\begin{lem}[\cite{TakahashiUK95}]
  \label{thm:mixed_search}
  We have that $\pw(G) \leq \ms(G) \leq \pw(G) + 1$.
\end{lem}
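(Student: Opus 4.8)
The plan is to prove the two inequalities separately: $\ms(G) \le \pw(G)+1$ by turning a path decomposition into a winning search strategy, and $\pw(G) \le \ms(G)$ by turning a winning search strategy into a path decomposition, where for the second direction I would route through the \emph{node-search number} $\mathrm{ns}(G)$ (the variant in which an edge may be cleared only by simultaneously occupying both of its endpoints). For the upper bound I would start from an optimal path decomposition of $G$, normalize it into the standard form with only introduce- and forget-nodes (so every bag has size at most $\pw(G)+1$), and sweep it from one end to the other: when a vertex $v$ is introduced, place a searcher on $v$; when $v$ is forgotten, remove the searcher from $v$. By induction the set of occupied vertices always equals the current bag, so at most $\pw(G)+1$ searchers are ever used and no searcher is ever slid along an edge; every edge $\{u,v\}$ is cleared at the moment both $u$ and $v$ become occupied, and it is never recontaminated because in a path decomposition the current bag separates the part of the graph already swept from the part not yet swept, so every path from a contaminated edge to a cleared edge meets an occupied vertex. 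This gives $\ms(G) \le \pw(G)+1$.

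For $\pw(G) \le \ms(G)$ I would compare mixed search with node search in both directions. On one hand, a node-search strategy is a special case of a mixed-search strategy, so $\ms(G) \le \mathrm{ns}(G)$. On the other hand, a mixed-search strategy with $k$ searchers can be simulated by a node-search strategy with $k+1$ searchers: keep all place- and remove-moves, and replace each slide of a searcher from $u$ to $v$ along an edge by first placing the spare searcher on $v$ --- which clears $\{u,v\}$ since $u$ is still occupied --- and then lifting the searcher off $u$; at all times at most $k+1$ searchers are present, the board position at each checkpoint coincides with that of the mixed strategy (with one extra occupied vertex during the two-move replacement), and hence the simulation is winning. This yields $\mathrm{ns}(G) \le \ms(G)+1$. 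Invoking the classical identity $\mathrm{ns}(G) = \pw(G)+1$, both inequalities drop out at once: $\pw(G) = \mathrm{ns}(G) - 1 \le \ms(G) \le \mathrm{ns}(G) = \pw(G)+1$.

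Alternatively, the lower bound can be obtained directly: pass to a recontamination-free (monotone) winning mixed-search strategy, list its moves $m_1, \dots, m_\ell$, and let $Z_i$ and $C_i$ be the sets of occupied vertices and cleared edges after $m_i$, so $|Z_i| \le \ms(G)$ and $C_0 = \emptyset \subseteq \cdots \subseteq C_\ell = E$; the no-recontamination property forces every vertex incident to both a cleared and an uncleared edge to lie in $Z_i$, which is exactly the separator condition needed for a path decomposition. Creating one bag per move --- namely $Z_i$, augmented in the case of a slide by the single extra endpoint involved, hence of size at most $\ms(G)+1$ --- and normalizing the strategy so that each vertex is occupied during a contiguous interval of moves then produces a path decomposition of width at most $\ms(G)$. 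Either way, the genuinely nontrivial ingredient is the same, and it is where I expect the difficulty to concentrate: the monotonicity of (mixed) graph searching, equivalently the hard direction $\pw(G)+1 \le \mathrm{ns}(G)$; the rest is a routine dictionary translating between search strategies and path decompositions, and I would cite the monotonicity and node-search results rather than reprove them.
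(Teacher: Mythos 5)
The paper does not actually prove this lemma---it is imported verbatim from Takahashi, Ueno, and Kajitani \cite{TakahashiUK95}---so there is no in-paper argument to compare against; what you have written is a correct reconstruction of the standard proof. Your upper bound (sweep a nice path decomposition, occupied set equals current bag, the bag separates swept from unswept so nothing is recontaminated) is exactly right, and your main route for the lower bound, sandwiching $\ms(G)$ between $\mathrm{ns}(G)-1$ and $\mathrm{ns}(G)$ via the simulation of a slide by place-then-remove with one spare searcher and then invoking $\mathrm{ns}(G)=\pw(G)+1$, is clean and correctly localizes all the real difficulty in the cited identity. Two small remarks on your alternative direct argument: first, it genuinely needs monotonicity of \emph{mixed} search (Bienstock--Seymour), not just of node search, and you are right to cite rather than reprove it; second, the ``contiguous interval'' normalization is the one step that is not purely a dictionary translation---a monotone strategy may in principle vacate and later reoccupy a vertex, and one must argue (e.g., that reoccupation of a vertex all of whose incident edges are already cleared is removable, and that reoccupation of a vertex with both cleared and uncleared incident edges contradicts monotonicity) before the bags satisfy the connectivity condition of a path decomposition. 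Since you flag both points and would cite the hard ingredients, the proposal is sound.
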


\section{Cut and Count for Modular-Treewidth}
\label{sec:modtw_cutandcount}

\subsection{General Approach}

In this section, we give an overview of the cut-and-count-technique and adapt it to parameterization by modular-treewidth. If we solve a problem on a graph $G = (V,E)$ involving connectivity constraints, we can make the following general definitions. We let $\sols \subseteq \powerset{U}$ denote the set of \emph{solutions}, living over some \emph{universe} $U$, and we have to determine whether $\sols$ is empty or not. The cut-and-count-technique does so in two parts:
\begin{itemize}
	\item \textbf{Cut part:} By \emph{relaxing} the connectivity constraints, we obtain a set $\sols \subseteq \rsols \subseteq \powerset{U}$ of possibly connected solutions. The set $\csols$ will contain pairs $(X, C)$ consisting of a candidate solution $X \in \rsols$ and a consistent cut $C$ of $X$, which is defined in \cref{dfn:cons_cut}. 
	\item \textbf{Count part:} We compute $|\csols|$ modulo some power of $2$ such that all non-connected solutions $X \in \rsols \setminus \sols$ cancel, because they are consistent with too many cuts. Hence, only connected candidates $X \in \sols$ remain.
\end{itemize}
The main definition and property for the cut-and-count-technique are as follows.

\begin{dfn}[\cite{CyganNPPRW22}]
  \label{dfn:cons_cut}
  A cut $(V_L, V_R)$ of an undirected graph $G = (V, E)$ is \emph{consistent} if $u \in V_L$ and $v \in V_R$ implies $\{u,v\} \notin E$. A \emph{consistently cut subgraph} of $G$ is a pair $(X, (X_L, X_R))$ such that $X \subseteq V$ and $(X_L, X_R)$ is a consistent cut of $G[X]$. We denote the set of consistently cut subgraphs of $G$ by $\cuts(G)$.
\end{dfn}

\begin{lem}[\cite{CyganNPPRW22}]
  \label{thm:cons_cut}
  Let $X$ be a subset of vertices. The number of consistently cut subgraphs $(X, (X_L, X_R))$ is equal to $2^{\cc(G[X])}$.
\end{lem}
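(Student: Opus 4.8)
The plan is to set up a bijection between consistently cut subgraphs of $G$ with first coordinate $X$ and the set of functions from the connected components of $G[X]$ to the two-element set $\{L, R\}$; since $G[X]$ has $\cc(G[X])$ connected components, the latter set has size $2^{\cc(G[X])}$, giving the claim. Fix $X \subseteq V$ throughout; a consistently cut subgraph with first coordinate $X$ is a pair $(X, (X_L, X_R))$ where $(X_L, X_R)$ is an ordered partition of $X$ (so $X_L \cup X_R = X$, $X_L \cap X_R = \emptyset$) that is consistent in $G[X]$, i.e.\ no edge of $G[X]$ has one endpoint in $X_L$ and one in $X_R$.

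First I would argue that consistency forces each connected component of $G[X]$ to lie entirely in $X_L$ or entirely in $X_R$. Indeed, if some component $D$ met both $X_L$ and $X_R$, then by connectivity of $G[X][D] = G[D]$ there would be a path in $G[D]$ from a vertex of $X_L \cap D$ to a vertex of $X_R \cap D$, and along this path some edge would cross from $X_L$ to $X_R$, contradicting consistency. Conversely, given any assignment $c$ of each connected component of $G[X]$ to $\{L, R\}$, define $X_L = \bigcup \{D : c(D) = L\}$ and $X_R = \bigcup \{D : c(D) = R\}$; since every edge of $G[X]$ lies within a single component and hence within a single side, $(X_L, X_R)$ is a consistent cut of $G[X]$, and clearly $X_L \cup X_R = X$, $X_L \cap X_R = \emptyset$.

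Then I would note these two operations are mutually inverse: sending a consistent cut $(X_L, X_R)$ to the assignment $c$ with $c(D) = L$ if $D \subseteq X_L$ and $c(D) = R$ if $D \subseteq X_R$ (well-defined by the first paragraph) and sending an assignment back to a cut as above. Composing in either order returns the original object, so we have a bijection, and therefore the number of consistently cut subgraphs of $G$ with first coordinate $X$ equals the number of such assignments, namely $2^{\cc(G[X])}$.

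There is essentially no hard step here; the only point requiring a little care is the equivalence "consistent cut $\Longleftrightarrow$ union of whole components on each side", i.e.\ checking both directions of this claim cleanly, after which the counting is immediate. In particular this handles the degenerate case $X = \emptyset$ correctly, where $\cc(G[X]) = 0$ and the unique consistently cut subgraph is $(\emptyset, (\emptyset, \emptyset))$.
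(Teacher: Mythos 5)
Your proof is correct and follows the same approach as the paper's: observe that consistency forces each connected component of $G[X]$ entirely into one side, so cuts correspond bijectively to assignments of components to $\{L,R\}$. You spell out both directions of the bijection more explicitly than the paper does, but the argument is identical in substance.
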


\begin{proof}
  By the definition of a consistently cut subgraph $(X, (X_L, X_R))$ we have for every connected component $C$ of $G[X]$ that either $C \subseteq X_L$ or $C \subseteq X_R$. Hence, there are two choices for every connected component and we obtain $2^{\cc(G[X])}$ different consistently cut subgraphs $(X, (X_L, X_R))$. \qed
\end{proof}

The cut-and-count-approach can fail if $|\sols|$ is divisible by the considered power of $2$, as then even the connected solutions would cancel each other out. The isolation lemma, \cref{thm:isolation}, allows us to avoid this problem at the cost of randomization: We sample a weight function $\wfct \colon U \rightarrow [N]$ and instead count pairs with a fixed weight, then the isolation lemma tells us that it is likely that there exists a weight with a unique solution, which therefore cannot cancel.

\begin{dfn}
  A function $\wfct \colon U \rightarrow \ZZ$ \emph{isolates} a set family $\family \subseteq \powerset{U}$ if there is a unique $S' \in \family$ with $\wfct(S') = \min_{S \in \family} \wfct(S)$, where for subsets $X$ of $U$ we define $\wfct(X) = \sum_{u \in X} \wfct(u)$.
\end{dfn}

\begin{lem}[Isolation Lemma, \cite{MulmuleyVV87}]
  \label{thm:isolation}
  Let $\emptyset \neq \family \subseteq \powerset{U}$ be a set family over a universe $U$. Let $N \in \NN$ and for each $u \in U$ choose a weight $\wfct(u) \in [N]$ uniformly and independently at random. Then
  $\PP[\wfct \text{ isolates } \family] \geq 1 - |U|/N$.
\end{lem}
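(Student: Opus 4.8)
The plan is to run the classical union-bound argument over elements of $U$. For a fixed outcome of the random weights $\wfct$, call an element $u \in U$ \emph{ambiguous} if among the minimum-weight members of $\family$ there is at least one set containing $u$ and at least one set avoiding $u$. The first step is the easy observation that if \emph{no} element of $U$ is ambiguous, then the minimum-weight set in $\family$ is unique: were $S_1 \neq S_2$ two distinct minimizers, any $u \in S_1 \symdiff S_2$ would be ambiguous. Hence $\PP[\wfct \text{ isolates } \family] \geq 1 - \PP[\exists u \in U : u \text{ ambiguous}] \geq 1 - \sum_{u \in U} \PP[u \text{ ambiguous}]$ by the union bound, and it remains to show $\PP[u \text{ ambiguous}] \leq 1/N$ for each fixed $u$.

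For the per-element estimate I would fix $u$ and condition on all the weights $\wfct(v)$ with $v \neq u$. These determine $a := \min\{\wfct(S) \sep S \in \family, u \notin S\}$ and $b := \min\{\wfct(S \setminus \{u\}) \sep S \in \family, u \in S\}$, with the convention that $a = +\infty$ or $b = +\infty$ if the relevant subfamily is empty. Every $S \in \family$ with $u \in S$ satisfies $\wfct(S) = \wfct(S \setminus \{u\}) + \wfct(u) \geq b + \wfct(u)$, with equality attained, so $b + \wfct(u)$ is the least weight among sets containing $u$. Comparing with $a$: if $b + \wfct(u) < a$ then no minimizer contains $u$; if $b + \wfct(u) > a$ then every minimizer contains $u$; hence $u$ can be ambiguous only when $\wfct(u) = a - b$. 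Since $\wfct(u)$ is drawn uniformly from $[N]$ and is independent of the weights $\wfct(v)$, $v \neq u$, which fix $a$ and $b$, this event has conditional probability at most $1/N$; therefore $\PP[u \text{ ambiguous}] \leq 1/N$ unconditionally. Plugging this into the union bound yields $\PP[\wfct \text{ isolates } \family] \geq 1 - |U|/N$.

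There is no genuine obstacle here; the proof is short and self-contained. The only point that needs a little care is the bookkeeping of the degenerate cases in which one of the two subfamilies $\{S \in \family : u \in S\}$, $\{S \in \family : u \notin S\}$ is empty, so that $a$ or $b$ equals $+\infty$: in each such case all minimizers agree on whether they contain $u$, hence $u$ is trivially non-ambiguous and the bound $\PP[u \text{ ambiguous}] \leq 1/N$ holds vacuously.
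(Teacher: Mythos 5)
The paper states this lemma as a known result with a citation to Mulmuley--Vazirani--Vazirani and gives no proof, so there is nothing in the paper to compare against; your argument is the standard proof of the Isolation Lemma and is correct. One small slip: your two cases are transposed --- since $b + \wfct(u)$ is the least weight among sets containing $u$ and $a$ the least weight among sets avoiding $u$, the inequality $b + \wfct(u) < a$ forces every minimizer to \emph{contain} $u$, and $b + \wfct(u) > a$ forces every minimizer to avoid $u$ --- but this is immaterial, as the only fact you use is that ambiguity requires $\wfct(u) = a - b$, an event of probability at most $1/N$ by independence.
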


\cref{thm:cons_cut} distinguishes disconnected candidates from connected candidates via the number of consistent cuts for the respective candidate. We determine this number not for a single relaxed solution, but for all of them with a fixed weight. 

To apply the cut-and-count-technique for modular-treewidth, we first study how connectivity interacts with the modular structure. Typically, we consider vertex sets $X$ contained in some module $\pmodule \in \modint(G)$ that intersect at least two child modules of $\pmodule$, i.e., $|\modprojection_\pmodule(X)| \geq 2$. When $|\modprojection_\pmodule(X)| = 1$, we can recurse in the modular decomposition tree until at least two child modules are intersected or we arrive at an easily solvable special case. The following exchange argument shows that the connectivity of $G[X]$ is not affected by the precise intersection $X \cap \module$, $\module \in \children(\pmodule)$, but only whether $X \cap \module$ is empty or not.

\begin{lem}
  \label{thm:module_exchange_connected}
 Let $\pmodule \in \modint(G)$ and $X \subseteq \pmodule$ be a subset with $|\modprojection_\pmodule(X)| \geq 2$ and such that $G[X]$ is connected. For any module $\module \in \children(\pmodule)$ with $X \cap \module \neq \emptyset$ and $\emptyset \neq Y \subseteq \module$, the graph $G[(X \setminus \module) \cup Y]$ is connected.
\end{lem}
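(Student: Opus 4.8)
The plan is to argue directly about connectivity via a walk/path argument, exploiting the defining property of modules: every vertex outside $\module$ is either adjacent to \emph{all} of $\module$ or to \emph{none} of it. Write $X' = (X \setminus \module) \cup Y$. First I would observe that since $|\modprojection_\pmodule(X)| \geq 2$, the set $X \setminus \module$ is nonempty; pick any vertex $w \in X \setminus \module$. Let $\module' \in \children(\pmodule)$ be the child module containing $w$; since $\module' \neq \module$, the modules $\module$ and $\module'$ are either adjacent or nonadjacent in $\qgraph{\pmodule}$.

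\textbf{Step 1: Every vertex of $Y$ can reach $X \setminus \module$ in $G[X']$.} Because $G[X]$ is connected and $X \cap \module \neq \emptyset$, there is an edge in $G[X]$ between some vertex of $X \cap \module$ and some vertex $u \in X \setminus \module$ (otherwise $X \cap \module$ would be a union of connected components of $G[X]$, contradicting connectivity together with $|\modprojection_\pmodule(X)| \geq 2$). Let $\module_u \in \children(\pmodule)$ be the child module with $u \in \module_u$; then $\module$ and $\module_u$ are adjacent, so \emph{every} vertex of $\module$ is adjacent to \emph{every} vertex of $\module_u$. In particular, since $u \in X' $ (as $u \in X \setminus \module$) and every vertex of $Y \subseteq \module$ lies in $X'$, each vertex of $Y$ is adjacent in $G[X']$ to $u$. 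Hence every vertex of $Y$ has a neighbor in $X \setminus \module$ within $G[X']$.

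\textbf{Step 2: $G[X \setminus \module]$ together with $Y$ is connected.} It remains to show $G[X \setminus \module]$ is connected — but this need not hold on its own, so instead I would argue that any two vertices of $X \setminus \module$ are connected in $G[X']$ by converting an $X$-path into an $X'$-path. Take a path $P$ in $G[X]$ between two vertices $a, b \in X \setminus \module$. Whenever $P$ passes through a vertex $m \in X \cap \module$, its two $P$-neighbors $p, p'$ lie in modules adjacent to $\module$ (since the edges $pm$, $mp'$ exist and $p, p' \notin \module$), hence $p$ and $p'$ are themselves adjacent in $G$, so we may short-circuit $m$ out of $P$. Wait — $p$ or $p'$ could also lie in $\module$ if $P$ stays inside $\module$ for a stretch; but a maximal subpath of $P$ inside $\module$ has its two endpoints (on $P$) adjacent to $\module$ from outside, so the whole detour through $\module$ can be replaced by a single edge between those two outside-neighbors. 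This yields a path from $a$ to $b$ using only vertices of $X \setminus \module$, so $G[X \setminus \module]$ is in fact connected. Combining with Step 1, every vertex of $X' = (X\setminus\module) \cup Y$ reaches the connected set $X \setminus \module$, so $G[X']$ is connected.

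\textbf{Main obstacle.} The delicate point is Step 2: one cannot simply claim $G[X \setminus \module]$ is connected as a black box — it must be derived by the path-surgery argument, carefully handling the case where the original path $P$ runs through several consecutive vertices of $\module$. The module property is exactly what makes the short-circuiting legal: the first and last vertices of any maximal $\module$-stretch of $P$ have neighbors outside $\module$ (namely the vertices of $P$ just before and just after the stretch), those outside neighbors lie in modules adjacent to $\module$, and hence adjacent to each other, giving the replacement edge. Once this surgery is in place, the rest is immediate.
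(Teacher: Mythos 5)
Your Step~1 is correct, but Step~2 contains a genuine error. You claim that if $m \in X \cap \module$ has $P$-neighbors $p, p' \notin \module$, then ``$p$ and $p'$ are themselves adjacent in $G$'', and more generally that the two outside endpoints of a maximal $\module$-stretch of $P$ are adjacent, so that every detour through $\module$ can be replaced by a direct edge. This inference is invalid: $p$ and $p'$ each lying in a module adjacent to $\module$ does not make their modules adjacent to \emph{each other}. Concretely, take $G = C_4$ on $a, m_1, b, m_2$ with edges $am_1, m_1b, bm_2, m_2a$; here $\module = \{m_1, m_2\}$ is a child module, $X = \{a, m_1, b\}$ is connected and meets two modules, yet $a$ and $b$ are both adjacent to $m_1 \in \module$ while $\{a,b\} \notin E(G)$. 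The same example refutes your concluding sentence of Step~2: $G[X \setminus \module] = G[\{a,b\}]$ is \emph{not} connected, so the statement ``$G[X \setminus \module]$ is in fact connected'' — which you lean on to finish — is simply false in general. (You even flag at the start of Step~2 that this need not hold, but your surgery then proves exactly that false statement.)

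The repair is small and brings you in line with the paper's argument: instead of short-circuiting a $\module$-stretch by a (possibly nonexistent) edge $pp'$, replace the entire stretch by a single vertex $y \in Y$. Since $p$ and $p'$ are each adjacent to some vertex of $\module$ and $\module$ is a module, both are adjacent to \emph{every} vertex of $\module$, in particular to $y$; hence $p\,y\,p'$ is a valid detour inside $G[X']$ (not inside $G[X \setminus \module]$). Combined with your Step~1, this shows every pair of vertices of $X \setminus \module$ is connected in $G[X']$ and every vertex of $Y$ reaches $X \setminus \module$, which is all the lemma needs. This is essentially the paper's proof, which replaces the unique module vertex on a (shortest) connecting path by an element of $Y$ rather than deleting it.
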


\begin{proof}
  Since $G[X]$ is connected and intersects at least two modules, there has to be a module $\module' \in \children(\pmodule)$ adjacent to $\module$ such that $X \cap \module' \neq \emptyset$. The edges between $Y$ and $X \cap \module'$ induce a biclique and hence all incident vertices must be connected to each other. Fix a vertex $u \in X \cap \module$ and consider any $w \in X \setminus \module$, then $G[X]$ contains an $u,w$-path $P$ such that the vertex $v$ after $u$ on $P$ is in $X \setminus \module$. For any $y \in Y$, we obtain an $y,w$-path $P_y$ in $G[(X \setminus \module) \cup Y]$ by replacing $u$ with $y$ in $P$. Finally, consider two vertices $u,w \in X \setminus \module$, then there is an $u,w$-path $P$ in $G[X]$. If $P$ does not intersect $\module$, then $P$ is also a path in $G[(X \setminus \module) \cup Y]$. Otherwise, we can assume that $P$ contains exactly one vertex $v$ of $\module$ and simply replace $v$ with some $y \in Y$ to obtain a $u,w$-path $P'$ in $G[(X \setminus \module) \cup Y]$. Hence, $G[(X \setminus \module) \cup Y]$ is connected as claimed. \qed
\end{proof}

Building upon \cref{thm:module_exchange_connected} allows us to reduce checking the connectivity of $G[X]$ to the quotient graph at $\pmodule$, as $\pquotient$ is isomorphic to the induced subgraph of $G$ obtained by picking one vertex from each child module of $\pmodule$.

\begin{lem}
  \label{thm:quotient_connected}
  Let $\pmodule \in \modint(G)$ and $X \subseteq \pmodule$ with $|\pproj(X)| \geq 2$, i.e., $X$ intersects at least two modules in $\children(\pmodule)$. It holds that $G[X]$ is connected if and only if $\pquotient[\pproj(X)]$ is connected.
\end{lem}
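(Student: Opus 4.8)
The plan is to prove both directions at once by collapsing each child module of $\pmodule$ that $X$ meets to a single representative vertex, and then recognizing the resulting induced subgraph of $G$ as exactly $\pquotient[\pproj(X)]$. Concretely, for every child module $\module \in \children(\pmodule)$ with $X \cap \module \neq \emptyset$ I would fix an arbitrary representative $r_\module \in X \cap \module$ and set $R = \{r_\module : \module \in \children(\pmodule),\ X \cap \module \neq \emptyset\} \subseteq X$. By construction $R$ contains precisely one vertex of each child module met by $X$, so $\pproj(R) = \pproj(X)$ and hence $|\pproj(R)| \geq 2$.

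The first main step is to show that $G[X]$ is connected if and only if $G[R]$ is connected, using \cref{thm:module_exchange_connected}. Enumerate the child modules met by $X$ as $\module_1, \dots, \module_p$ with $p \geq 2$. For the ``only if'' direction, shrink one module at a time: set $X_0 = X$ and $X_i = (X_{i-1} \setminus \module_i) \cup \{r_{\module_i}\}$, so that $X_p = R$. At each step $X_{i-1}$ still meets all of $\module_1, \dots, \module_p$ (the already-processed ones via their representative, the rest via $X$ itself, since the $\module_j$ are pairwise disjoint), so $|\pproj(X_{i-1})| = p \geq 2$, and \cref{thm:module_exchange_connected} applied with $Y = \{r_{\module_i}\}$ yields that $G[X_i]$ stays connected. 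For the ``if'' direction I would run the same chain backwards: starting from $R$, replace $\{r_{\module_i}\}$ by $X \cap \module_i$ one module at a time; this is again an instance of \cref{thm:module_exchange_connected} (now with the smaller set in the role of ``$X$'' and $Y = X \cap \module_i \ni r_{\module_i}$ in the role of ``$Y$''), and after $p$ steps we recover $G[X]$.

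The second main step is purely definitional: the map $r_\module \mapsto v^q_\module$ is an isomorphism from $G[R]$ onto $\pquotient[\pproj(X)]$. It is clearly a bijection onto $\pproj(X)$, and for distinct $\module, \module'$ the pair $\{r_\module, r_{\module'}\}$ is an edge of $G$ exactly when $\module$ and $\module'$ are adjacent modules (disjoint modules are completely joined or completely nonadjacent), which by definition of the quotient graph is exactly when $\{v^q_\module, v^q_{\module'}\} \in E(\pquotient)$; as both endpoints lie in $\pproj(X)$ this is the same as being an edge of $\pquotient[\pproj(X)]$. Chaining the two steps gives: $G[X]$ connected $\iff$ $G[R]$ connected $\iff$ $\pquotient[\pproj(X)]$ connected.

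I expect no serious obstacle; the only points needing care are verifying that the hypothesis $|\pproj(\cdot)| \geq 2$ of \cref{thm:module_exchange_connected} is maintained throughout the shrink/blow-up chain (it is, since no met module is ever emptied) and observing that \cref{thm:module_exchange_connected}, although stated in one direction, applies symmetrically enough to cover the blow-up step. If one prefers to bypass \cref{thm:module_exchange_connected}, the equivalence can also be argued directly: for ``$\Rightarrow$'' project a $u,v$-path in $G[X]$ to a walk in $\pquotient[\pproj(X)]$ after deleting consecutive repetitions of modules; for ``$\Leftarrow$'', note that in the connected graph $\pquotient[\pproj(X)]$ (which has at least two vertices) the vertex $v^q_{\module_x}$ of any $x \in X$ has a neighbor $v^q_N$, so $x$ is adjacent to every vertex of $N$ and in particular to $r_N$, while $G[R]$ is connected by the same projection argument applied to the representatives.
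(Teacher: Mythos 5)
Your proof is correct and follows essentially the same route as the paper: pick one representative from each module met by $X$, observe that the induced subgraph on the representatives is isomorphic to $\pquotient[\pproj(X)]$, and transfer connectivity via \cref{thm:module_exchange_connected}. The only (harmless) difference is in the converse direction, where the paper argues directly that every $v \in X \setminus R$ is adjacent to some representative, while you re-apply the exchange lemma in the blow-up direction — which is indeed a legitimate instance of it, since that lemma allows replacing $X \cap \module$ by an arbitrary nonempty $Y \subseteq \module$.
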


\begin{proof}
  For every module $\module \in \children(\pmodule)$ with $X \cap \module \neq \emptyset$, pick a vertex $v_\module \in X \cap \module$ and define $X' = \{v_\module \sep X \cap \module \neq \emptyset, \module \in \children(\pmodule)\} \subseteq X$. Note that $G[X']$ is isomorphic to $\pquotient[\pproj(X)]$. Hence, we are done if we can show that $G[X]$ is connected if and only if $G[X']$ is connected. If $G[X]$ is connected, then so is $G[X']$ by repeatedly applying \cref{thm:module_exchange_connected}. 
  
  For the converse, suppose that $G[X']$ is connected. We argue that every $v \in X \setminus X'$ is adjacent to some $w \in X'$ and then it follows that $G[X]$ is connected as well. There is some $\module \in \children(\pmodule)$ with $v \in \module$ and $v_\module \in X'$ by definition of $X'$. Since $|X'| \geq 2$ and $G[X']$ is connected, there is a neighbor $w \in X'$ of $v_\module$ in $G[X']$ and $w = v_{\module'}$ for some $\module' \in \children(\pmodule) \setminus \{\module\}$. The vertex $w$ has to be a neighbor of $v$ because $\module$ is a module and $w \notin \module$. \qed
\end{proof}

\cref{thm:quotient_connected} tells us that we do not need to consider \emph{heterogeneous} cuts, i.e., $(X, (X_L, X_R)) \in \cuts(G)$ with $X_L \cap \module \neq \emptyset$ and $X_R \cap \module \neq \emptyset$ for some module $\module \in \modpartition(G)$, because checking connectivity can be reduced to a set that contains at most one vertex per module. 

\begin{dfn}
  Let $\pmodule \in \modint(G)$. We say that a cut $(X_L, X_R)$, with $X_L \cup X_R \subseteq \pmodule$, is \emph{$\pmodule$-homogeneous} if $X_L \cap \module = \emptyset$ or $X_R \cap \module = \emptyset$ for every $\module \in \children(\pmodule)$. We may just say that $(X_L, X_R)$ is \emph{homogeneous} when $\pmodule$ is clear from the context. We define for every subgraph $G'$ of $G$ the set $\homcuts{\pmodule}(G') = \{(X, (X_L, X_R)) \in \cuts(G') \sep (X_L, X_R) \text{ is $\pmodule$-homogeneous}\}.$
\end{dfn}

Combining \cref{thm:cons_cut} with \cref{thm:quotient_connected}, the connectivity of $G[X]$ can be determined by counting $\pmodule$-homogeneous consistent cuts of $G[X]$ modulo 4.

\begin{lem}
  \label{thm:hom_cut}
  Let $\pmodule \in \modint(G)$ and $X \subseteq \pmodule$ with $|\pproj(X)| \geq 2$. It holds that $|\{(X_L, X_R) \sep (X, (X_L, X_R)) \in \homcuts{\pmodule}(G)\}| = 2^{\cc(\pquotient[\pproj(X)])}$ and $G[X]$ is connected if and only if $|\{(X_L, X_R) \sep (X, (X_L, X_R)) \in \homcuts{\pmodule}(G)\}| \neq 0 \mod 4$. 
\end{lem}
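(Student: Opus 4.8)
The plan is to combine the two lemmas that were just established: \cref{thm:cons_cut}, which counts all consistent cuts of $G[X]$ as $2^{\cc(G[X])}$, and \cref{thm:quotient_connected}, which says that for $X$ intersecting at least two child modules the connectivity of $G[X]$ is governed by the quotient graph. The bridge between them is the observation that a $\pmodule$-homogeneous consistent cut of $G[X]$ corresponds exactly to a consistent cut of the quotient-level subgraph $\pquotient[\pproj(X)]$.

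First I would prove the counting identity $|\{(X_L,X_R) : (X,(X_L,X_R)) \in \homcuts{\pmodule}(G)\}| = 2^{\cc(\pquotient[\pproj(X)])}$. The key is a bijection. Given a $\pmodule$-homogeneous consistent cut $(X_L,X_R)$ of $G[X]$, each child module $\module$ with $X\cap\module\neq\emptyset$ lies entirely in $X_L$ or entirely in $X_R$ (by homogeneity), so we may define a cut of $\pquotient[\pproj(X)]$ by sending $\qvertex$ to the side containing $X\cap\module$. This cut is consistent: if $\qvertex \in$ one side and $v^q_{\module'} \in$ the other and the two are adjacent in the quotient graph, then every vertex of $X\cap\module$ is adjacent in $G$ to every vertex of $X\cap\module'$ (adjacent modules form a biclique), contradicting consistency of $(X_L,X_R)$ in $G[X]$. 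Conversely, any consistent cut of $\pquotient[\pproj(X)]$ pulls back via $\pproj$ to a $\pmodule$-homogeneous consistent cut of $G[X]$ — consistency is preserved because edges of $G[X]$ either lie inside a single child module (both endpoints on the same side) or connect two adjacent child modules (which map to an edge of the quotient graph, hence to vertices on the same side). These two maps are mutually inverse, so the counts agree, and by \cref{thm:cons_cut} applied to $\pquotient[\pproj(X)]$ the count is $2^{\cc(\pquotient[\pproj(X)])}$.

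For the second statement, I would invoke \cref{thm:quotient_connected}: since $|\pproj(X)| \geq 2$, $G[X]$ is connected if and only if $\pquotient[\pproj(X)]$ is connected, i.e.\ iff $\cc(\pquotient[\pproj(X)]) = 1$. Combined with the counting identity, the number of homogeneous consistent cuts is $2^{\cc(\pquotient[\pproj(X)])}$, which is $\equiv 2 \pmod 4$ when $\cc = 1$ and $\equiv 0 \pmod 4$ when $\cc \geq 2$ (note $\cc \geq 1$ always, as $X \neq \emptyset$ because $|\pproj(X)|\geq 2$). Hence $G[X]$ is connected iff this count is not $\equiv 0 \pmod 4$.

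The only mildly delicate point — the main obstacle, such as it is — is making the bijection airtight: one must check that homogeneity genuinely forces each $X\cap\module$ onto a single side (it does, since being a cut means $X_L\cap\module = \emptyset$ or $X_R\cap\module = \emptyset$, so the nonempty set $X\cap\module$ sits in exactly one part), and that the pullback of a quotient-level consistent cut does not accidentally create a heterogeneous or inconsistent cut — both follow from the module/biclique structure. Everything else is direct application of the preceding lemmas.
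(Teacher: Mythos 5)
Your proposal is correct and follows essentially the same route as the paper: both establish the bijection between $\pmodule$-homogeneous consistent cuts of $G[X]$ and consistent cuts of $\pquotient[\pproj(X)]$, then apply \cref{thm:cons_cut} to the quotient and conclude via \cref{thm:quotient_connected} together with the observation that $2^{\cc}$ is nonzero modulo $4$ exactly when $\cc = 1$. The details you flag as delicate (homogeneity forcing each $X\cap\module$ onto one side, and the biclique structure preserving consistency in both directions) are precisely the points the paper verifies.
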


\begin{proof}
 Fix $\pmodule \in \modint(G)$ and $X \subseteq \pmodule$ with $|\pproj(X)| \geq 2$. For any set $S \subseteq \pmodule$, we write $S^q = \pproj(S)$ in this proof. We will argue that the map $(X_L, X_R) \mapsto (X^q_L, X^q_R)$ is a bijection between $\{(X_L, X_R) \sep (X, (X_L, X_R)) \in \homcuts{\pmodule}(G)\}$ and $\{(Y_L, Y_R) \sep (X^q, (Y_L, Y_R)) \in \cuts(\pquotient)\}$. First of all, notice that $(X^q_L, X^q_R)$ is a cut of $\pquotient[X^q]$ because $(X_L, X_R)$ is homogeneous. Furthermore, $(X^q_L, X^q_R)$ is a consistent cut, since any edge $\{v^q_{\module_1}, v^q_{\module_2}\}$ crossing $(X^q_L, X^q_R)$ would give rise to an edge $\{u_1, u_2\}$, $u_i \in \module_i$, $i \in [2]$, crossing $(X_L, X_R)$ which contradicts the assumption that $(X_L, X_R)$ is a consistent cut.
 
 For injectivity, consider $(X, (X_L, X_R))$, $(X, (Z_L, Z_R)) \in \homcuts{\pmodule}(G)$ such that $(X^q_L, X^q_R) = (Z^q_L, Z^q_R)$. Since they are homogeneous cuts, we can compute 
 $$X_L = \bigcup_{\qvertex \in X^q_L} X \cap \module = \bigcup_{\qvertex \in Z^q_L} X \cap \module = Z_L$$
 and similarly for $X_R = Z_R$. For surjectivity, note that every $(Y_L, Y_R)$ with $(X^q, (Y_L, Y_R)) \in \cuts(\pquotient)$ is hit by the following homogeneous cut $(X, (\bigcup_{v^q_\module \in Y_L} X \cap \module, \bigcup_{v^q_\module \in Y_R} X \cap \module))$. 
 
 Finally, we can apply \cref{thm:cons_cut} to $X^q \subseteq V(\pquotient)$ to obtain, via the bijection, that $|\{(X, (X_L, X_R)) \in \homcuts{\pmodule}(G)\}| = 2^{\cc(\pquotient[X^q])}$. Hence, $\pquotient[X^q]$ is connected if and only if $|\{(X, (X_L, X_R)) \in \homcuts{\pmodule}(G)\}| \neq 0 \mod 4$. The statement then follows by \cref{thm:quotient_connected}. \qed
\end{proof}

\section{Reductions}\label{sec:modtw_reduction}
\subsection{Steiner Tree}
\label{sec:modtw_st_reduction}

In the \textsc{(Node)} \ST problem, we are given a graph $G = (V,E)$, a set of \emph{terminals} $\terminals \subseteq V$, a cost function $\cfct \colon V \rightarrow \NN \setminus \{0\}$, and an integer $\budget$ and we have to decide whether there exists a subset of vertices $X \subseteq V$ such that $\terminals \subseteq X$, $G[X]$ is connected, and $\cfct(X) \leq \budget$.

We assume that $G$ is a connected graph, otherwise the answer is trivially no if the terminals are distributed across several connected components, or we can just look at the connected component containing all terminals. We also assume that $G[\terminals]$ is not connected, as otherwise $X = \terminals$ is trivially an optimal solution. Furthermore, we assume that the costs $\cfct(v)$, $v \in V$, are at most polynomial in $|V|$.

For \ST, it is sufficient to consider the topmost quotient graph $G^q := G^q_V = G / \modpartition(G)$, unless there is a single module $\module \in \modpartition(G) = \children(V)$ containing all terminals. In this edge case, we find a solution of size $|\terminals| + 1$, by taking a vertex in a module adjacent to $\module$, or we consider the graph $G[\module]$, allowing us to recurse into the module $\module$.

We first consider the case that all terminals are contained in a single module $\module \in \modpartition(G)$. The next lemma shows that we can either find a solution of size $|\terminals| + 1$, which can be computed in polynomial time, or it suffices to consider the graph $G[\module]$. 

\begin{lem}
  \label{thm:st_all_terminals_in_module}
  If there is a module $\module \in \modpartition(G)$ of $G$ such that $\terminals \subseteq \module$, then there is an optimum Steiner tree $X$ satisfying $X \subseteq \module$, or there is an optimum Steiner tree $X$ satisfying $|X| = |\terminals| + 1$.
\end{lem}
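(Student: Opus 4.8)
The plan is to take an arbitrary optimum Steiner tree $X$ and argue: if $X \not\subseteq \module$, then we can exhibit another optimum Steiner tree of size exactly $|\terminals| + 1$. So assume $X \not\subseteq \module$ and fix a vertex $v \in X \setminus \module$. Since $\terminals \neq \emptyset$ (recall the standing assumption that $G[\terminals]$ is disconnected, so in particular $|\terminals| \geq 2$), pick a terminal $k \in \terminals \subseteq \module$. As $G[X]$ is connected, there is a $v$--$k$ path $P$ in $G[X]$; traversing $P$ from $v$ towards $k$, let $b$ be the first vertex of $P$ that lies in $\module$ --- this exists because $k \in \module$ --- and let $a$ be the vertex of $P$ immediately preceding $b$. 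Then $a, b \in X$, $b \in \module$, $a \notin \module$ by the choice of $b$, and $\{a,b\} \in E(G)$.

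The crucial observation is that, since $\module$ is a module and $a \in V(G) \setminus \module$ has a neighbor in $\module$, the definition of a module forces $\module \subseteq N(a)$; in particular $a$ is adjacent to every terminal because $\terminals \subseteq \module$. Consequently $Y := \terminals \cup \{a\}$ induces a connected subgraph of $G$ (the vertex $a$ is adjacent to all of $\terminals$), contains all terminals, and has $|Y| = |\terminals| + 1$ since $a \notin \terminals$ (as $a \notin \module \supseteq \terminals$). Moreover $Y \subseteq X$ because $\terminals \subseteq X$ and $a \in X$, so positivity of $\cfct$ gives $\cfct(Y) \leq \cfct(X)$. As $X$ is optimum, $\cfct(Y) = \cfct(X)$, so $Y$ is an optimum Steiner tree with $|Y| = |\terminals| + 1$, and the second alternative of the lemma holds. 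If instead $X \subseteq \module$, the first alternative holds directly.

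I do not anticipate a real obstacle; the one point requiring care is to extract the ``boundary'' vertex $a$ from $X$ itself rather than picking an arbitrary vertex of some module adjacent to $\module$: the latter would still give a connected set of size $|\terminals|+1$, but would not let us bound its cost by $\cfct(X)$, which is precisely what makes the resulting solution optimum. The degenerate situations ($|\terminals| \le 1$, or $G$ disconnected) are ruled out by the assumptions stated just before the lemma, so the path $P$, the terminal $k$, and the vertex $a$ all exist.
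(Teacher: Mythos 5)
Your proof is correct and follows essentially the same exchange argument as the paper: locate a vertex of $X$ outside $\module$ that is adjacent to $\module$ (the paper asserts its existence via adjacent modules, you extract it explicitly from a path in $G[X]$), observe it dominates all of $\terminals$ by the module property, and replace $X$ by $\terminals$ plus that single vertex, using $Y \subseteq X$ and positivity of costs to preserve optimality. Your remark about taking the boundary vertex from $X$ itself (rather than an arbitrary vertex of an adjacent module) is exactly the point that makes the cost comparison work, and matches the paper's choice.
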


\begin{proof}
  Consider a Steiner tree $X$ such that $X \not\subseteq \module$, then $X$ has to contain at least one vertex $v$ inside a module $\module' \in \modpartition(G)$ adjacent to $\module$. We claim that $X' = \terminals \cup \{v\}$ is a Steiner tree with $\cfct(X') \leq \cfct(X)$. Clearly, $X' \subseteq X$, and since the costs are positive we have that $\cfct(X') \leq \cfct(X)$. Since $\terminals \subseteq \module$, the vertex $v$ is adjacent to all terminals $\terminals$ and $G[X']$ is connected, hence $X'$ is a Steiner tree.
  
  If there is no optimum Steiner tree $X$ satisfying $X \subseteq \module$, then by applying the previous argument to an optimum Steiner tree, we obtain an optimum Steiner tree $X$ satisfying $|X| = |\terminals| + 1$. \qed
\end{proof}

After recursing until no module $\module \in \modpartition(G)$ contains all terminals (and updating $G$ accordingly), we can apply the following reduction to solve the problem if the quotient graph is prime. Let $(G, \terminals, \cfct, \budget)$ be a \ST instance such that $|\modprojection_V(\terminals)| \geq 2$ and $G^q = G / \modpartition(G)$ is prime. We consider the \ST instance $(G^q, \terminals^q, \cfct^q, \budget^q)$ where $\terminals^q = \modprojection_{V}(\terminals)$, $\cfct^q(v^q_\module) = \cfct(\terminals \cap \module) = \sum_{v \in \terminals \cap \module} \cfct(v)$ if $\terminals \cap \module \neq \emptyset$ and $\cfct^q(v^q_\module) = \min_{v \in \module} \cfct(v)$ otherwise, and $\budget^q = \budget$.

\begin{lem}\label{thm:st_mod_reduction}
  Suppose that $(G, \terminals, \cfct, \budget)$ is a \ST instance such that no module $\module \in \modpartition(G)$ contains all terminals $\terminals$ and $G^q$ is prime. 
  
  Then, the answer to the \ST instance $(G, \terminals, \cfct, \budget)$ is positive if and only if the answer to the \ST instance $(G^q, \terminals^q, \cfct^q, \budget^q)$ is positive.
\end{lem}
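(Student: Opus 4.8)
The plan is to translate solutions directly along the canonical projection $\modprojection_V$, transferring connectivity by \cref{thm:quotient_connected} (invoked at the root module $\pmodule = V \in \modint(G)$) and matching costs via the definition of $\cfct^q$. The key enabling observation is that the hypothesis ``no module $\module \in \modpartition(G)$ contains all of $\terminals$'' is precisely what guarantees $|\modprojection_V(\terminals)| \geq 2$, which is the side condition needed to apply \cref{thm:quotient_connected} in both directions; there is no deeper obstacle, the argument is essentially careful bookkeeping.

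For the forward direction I would start from a Steiner tree $X \subseteq V(G)$ for $(G, \terminals, \cfct, \budget)$ and set $X^q = \modprojection_V(X)$. From $\terminals \subseteq X$ we get $\terminals^q = \modprojection_V(\terminals) \subseteq X^q$, and since $|\terminals^q| \geq 2$ also $|X^q| \geq 2$, so \cref{thm:quotient_connected} yields that $\pquotient[X^q] = G^q[X^q]$ is connected. For the cost bound, fix any $\module \in \modpartition(G)$ with $X \cap \module \neq \emptyset$: if $\terminals \cap \module \neq \emptyset$ then $\cfct^q(\qvertex) = \cfct(\terminals \cap \module) \leq \cfct(X \cap \module)$ since $\terminals \cap \module \subseteq X \cap \module$ and costs are positive, and if $\terminals \cap \module = \emptyset$ then $\cfct^q(\qvertex) = \min_{v \in \module}\cfct(v) \leq \cfct(X \cap \module)$ because $X \cap \module$ is nonempty. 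Summing over all $\qvertex \in X^q$ gives $\cfct^q(X^q) \leq \cfct(X) \leq \budget = \budget^q$, so $X^q$ is a Steiner tree for the quotient instance.

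For the backward direction I would lift a Steiner tree $Y^q \subseteq V(G^q)$ as follows: for each $\module \in \modpartition(G)$ with $\qvertex \in Y^q$, select the set $\terminals \cap \module$ if this is nonempty, and otherwise a single vertex $u_\module \in \module$ attaining $\min_{v \in \module}\cfct(v)$; let $X$ be the union of these selected sets. Then $X \cap \module \neq \emptyset$ exactly when $\qvertex \in Y^q$, so $\modprojection_V(X) = Y^q$; since $\terminals^q \subseteq Y^q$, every module meeting $\terminals$ is selected, whence $\terminals \subseteq X$. As $|Y^q| \geq |\terminals^q| \geq 2$, \cref{thm:quotient_connected} again gives that $G[X]$ is connected, and by construction $\cfct(X \cap \module) = \cfct^q(\qvertex)$ for every selected $\module$ (this uses that $u_\module$ realizes the minimum cost in the non-terminal case), so $\cfct(X) = \cfct^q(Y^q) \leq \budget$.

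I do not expect primality of $G^q$ to enter this equivalence at all; it is only needed for the subsequent step, where $(G^q, \terminals^q, \cfct^q, \budget^q)$ is solved by running the treewidth-based \ST algorithm on the prime quotient graph. The only points requiring any attention are (i) checking $V \in \modint(G)$, which holds because $G$ is connected with $|V| \geq 2$ while $G[\terminals]$ is disconnected, and (ii) keeping the two cost cases (a module containing terminals versus not) consistent in both directions.
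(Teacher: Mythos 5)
Your proof is correct and follows essentially the same route as the paper: project along $\modprojection_V$ for the forward direction, lift by taking $\terminals \cap \module$ or a minimum-cost vertex per selected module for the backward direction, and use \cref{thm:quotient_connected} (enabled by $|\modprojection_V(\terminals)| \geq 2$) for connectivity in both directions. Your side remarks — that primality of $G^q$ is not actually needed for the equivalence itself, and the explicit case split in the cost bound — are accurate refinements of what the paper states more tersely.
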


\begin{proof}
  If $X$ is an optimum Steiner tree of $(G, \terminals, \cfct, \budget)$, then we claim that $X^q = \modprojection_V(X)$ is a Steiner tree of $(G^q, \terminals^q, \cfct^q, \budget^q)$ with $\cfct^q(X^q) \leq \cfct(X)$. We have that $\terminals^q = \modprojection_V(\terminals)$, so $\terminals \subseteq X$ implies that $\terminals^q \subseteq X^q$. By \cref{thm:quotient_connected}, we see that $G^q[X^q]$ is connected as well. By definition of $X^q$ and $\cfct^q$, we have for all $v^q_\module \in X^q$ that $\cfct^q(v^q_\module) \leq \cfct(X \cap \module)$ and hence $\cfct^q(X^q) \leq \cfct(X) \leq \budget = \budget^q$.
 
 If $X^q$ is an optimum Steiner tree of $(G^q, \terminals^q, \cfct^q, \budget^q)$, then we claim that $X = \terminals \cup \{v_\module \sep v^q_\module \in X^q, \terminals \cap \module = \emptyset\}$, where $v_\module = \arg \min_{v \in \module} \cfct(v)$, is a Steiner tree of $(G, \terminals, \cfct, \budget)$ with $\cfct(X) \leq \cfct^q(X^q)$. We have that $\terminals \subseteq X$ by definition of $X$ and for the costs we compute that $\cfct(X) = \cfct(\terminals) + \cfct(X \setminus \terminals) = \cfct^q(\terminals^q) + \cfct^q(X^q \setminus \terminals^q) = \cfct^q(X^q) \leq \budget^q = \budget$. Note that $X^q$ satisfies $X^q = \modprojection_V(X)$ by definition of $X$. Therefore, \cref{thm:quotient_connected} implies that $G[X]$ is connected and $X$ is a Steiner tree of $G$. \qed
\end{proof}

\begin{prop}[\cite{CyganNPPRW22}]
  \label{thm:st_tw_algo}
  There exists a Monte-Carlo algorithm that given a tree decomposition of width at most $k$ for $G$ solves \ST in time $\Oh^*(3^k)$. The algorithm cannot give false positives and may give false negatives with probability at most $1/2$.
\end{prop}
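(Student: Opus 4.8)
The plan is to apply the cut-and-count-technique of \cref{sec:modtw_cutandcount}, but now directly on a tree decomposition of $G$ rather than on a quotient graph. First I would set up the \textbf{cut part}. We may assume $\terminals \neq \emptyset$ (otherwise $\emptyset$ is a trivial solution); fix an arbitrary terminal $\terminal_1 \in \terminals$. Take the universe $U = V$, let the solution set be $\sols = \{X \subseteq V \sep \terminals \subseteq X,\ G[X]\text{ is connected},\ \cfct(X) \le \budget\}$, and relax connectivity to $\rsols = \{X \subseteq V \sep \terminals \subseteq X,\ \cfct(X) \le \budget\}$. For a weight function $\wfct\colon V \to [\wmax]$ and target values $b,c$, define
\[
  \csols^{b,c}_\wfct = \{(X,(X_L,X_R)) \in \cuts(G) \sep X \in \rsols,\ \terminal_1 \in X_L,\ \wfct(X) = b,\ \cfct(X) = c\}.
\]

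For the \textbf{count part}, I would invoke \cref{thm:cons_cut}: a fixed $X$ with $\terminal_1 \in X$ contributes exactly $2^{\cc(G[X])-1}$ pairs to $\csols^{b,c}_\wfct$, since the component of $\terminal_1$ must go into $X_L$ while the remaining components of $G[X]$ are two-colored freely. Hence $|\csols^{b,c}_\wfct| \equiv |\{X \in \sols \sep \wfct(X) = b,\ \cfct(X) = c\}| \pmod 2$, so every disconnected relaxed solution cancels. Setting $\wmax = 2|V|$ and applying \cref{thm:isolation}, if $\sols \neq \emptyset$ then with probability at least $1/2$ the sampled $\wfct$ isolates $\sols$, in which case the unique weight-minimal solution makes one of the counts odd; if $\sols = \emptyset$, all the counts vanish. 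So the algorithm samples $\wfct$, computes $|\csols^{b,c}_\wfct| \bmod 2$ for all $b \in [0, 2|V|^2]$ and $c \in [0, \budget]$, and reports a positive answer iff some value is odd; this gives no false positives and false-negative probability at most $1/2$.

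The remaining and main task is a dynamic program computing all $|\csols^{b,c}_\wfct| \bmod 2$ within time $\Oh^*(3^k)$. I would first make the given decomposition very nice via \cref{thm:very_nice_tree_decomposition}. At a node $t$ with bag $\bag_t$ and associated graph $G_t$, a \emph{state} is a map $s\colon \bag_t \to \{0, 1_L, 1_R\}$, where $0$ means ``outside $X$'' and $1_L, 1_R$ mean ``in $X$, on the left resp.\ right side of the cut''; there are $3^{|\bag_t|}$ states. The table entry $\dppoly_t[s,b,c] \in \FF_2$ should count modulo $2$ the partial objects $(X,(X_L,X_R))$ that are consistently cut subgraphs of $G_t$, contain every terminal in $V_t$, keep $\terminal_1$ on the left whenever $\terminal_1 \in V_t$, restrict to $s$ on $\bag_t$, and satisfy $\wfct(X) = b$ and $\cfct(X) = c$. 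The transitions are the standard cut-and-count ones: a leaf sets $\dppoly[\emptyset,0,0] = 1$; an introduce-vertex node branches over the admissible states of the new vertex (never $0$ if it is a terminal, only $1_L$ if it equals $\terminal_1$) and shifts $b$ and $c$ by that vertex's weight and cost when it enters $X$; an introduce-edge node $\{u,v\}$ zeroes exactly the states with $\{s(u),s(v)\} = \{1_L,1_R\}$ and leaves the others unchanged; a forget-vertex node sums the three child entries over the forgotten vertex's states; and a join node, where both children carry the same bag and hence the same state $s$, takes the $\FF_2$-convolution over the weight split and the cost split, subtracting each bag vertex's contribution once so it is not counted twice. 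At the root $\rvertex$ we have $\bag_{\rvertex} = \emptyset$ and $\dppoly_{\rvertex}[\emptyset,b,c] \equiv |\csols^{b,c}_\wfct| \pmod 2$ for every $c \le \budget$, which is exactly what the count part consumes.

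For the running time, each table has at most $3^{k+1}$ states times $\Oh(|V|^2\,\budget)$ weight-cost pairs; the bottleneck is the join node, but since it keeps a single shared bag state it costs only $3^{k+1}$ times a polynomial-size convolution, and over all $\Oh(n)$ nodes this is $\Oh^*(3^k)$. I expect the main obstacle to be the correctness bookkeeping rather than any conceptual hurdle: one must verify that forcing terminals into $X$ at introduction, keeping $\terminal_1 \in X_L$ throughout, and subtracting bag-vertex weights exactly once at joins together ensure $\dppoly_{\rvertex}[\emptyset,b,c]$ equals $|\csols^{b,c}_\wfct|$ modulo $2$ --- precisely the place where an off-by-one or double-count would break the parity cancellation that makes the whole scheme work.
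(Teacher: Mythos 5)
Your proposal is correct and is essentially the standard Cygan et al.\ cut-and-count algorithm for \ST that the paper simply cites for this proposition (the paper's own ``proof'' is a one-line remark that the known algorithm can be augmented with polynomially bounded vertex costs, which is exactly the extra cost coordinate you carry in the table). The fixing of $\terminal_1 \in X_L$, the $2^{\cc(G[X])-1}$ counting argument, the three states $\{0,1_L,1_R\}$, and the identical-state join giving $\Oh^*(3^k)$ all match the cited construction.
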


\begin{proof}
  The algorithm presented by Cygan et al.~\cite{CyganNPPRW11} can be easily augmented to handle positive vertex costs in this running time under the assumption that the costs $\cfct(v)$, $v \in V$, are at most polynomial in $|V|$. \qed
\end{proof}

By recursing, applying \cref{thm:st_tw_algo} to solve the reduced instance from \cref{thm:st_mod_reduction}, and handling parallel and series nodes, we obtain the following.

\begin{thm}
  \label{thm:st_modtw_algo}
  There exists a Monte-Carlo algorithm that given a tree decomposition of width at most $k$ for every prime node in the modular decomposition of $G$ solves \ST in time $\Oh^*(3^k)$. The algorithm cannot give false positives and may give false negatives with probability at most $1/2$.
\end{thm}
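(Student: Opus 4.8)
The plan is to give a recursive algorithm that descends the modular decomposition tree $\modtree(G)$ and, at the first node where the terminals are genuinely spread over at least two child modules, invokes the treewidth algorithm of \cref{thm:st_tw_algo} on the corresponding quotient graph. Before recursing I would apply the usual preprocessing: if $G$ is disconnected restrict to the component containing all terminals (or report ``no''); if $G[\terminals]$ is connected, in particular if $|\terminals|\le 1$, then $\terminals$ is an optimum; cap $\budget$ at $\cfct(V)$ so it is polynomially bounded; and recall the vertex costs are polynomial by assumption. The recursion operates on a current module $\pmodule\in\modtree(G)$, initially $\pmodule=V$, and returns the cost of a cheapest Steiner tree of $G[\pmodule]$ spanning $\terminals$ (hence automatically a subset of $\pmodule$), or $+\infty$ if none exists.

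At a node $\pmodule$ I would split into two cases. If all of $\terminals$ lies in a single child $\module\in\modpartition(G[\pmodule])$, I invoke \cref{thm:st_all_terminals_in_module} (applied to $G[\pmodule]$): either some optimum is contained in $\module$, found by recursing on $G[\module]$, or some optimum has size $|\terminals|+1$ and must equal $\terminals$ together with one vertex $v\in\pmodule\setminus\module$ with $\module\subseteq N(v)$; the best such candidate costs $\cfct(\terminals)+\min\{\cfct(v) : v\in\pmodule\setminus\module,\ \module\subseteq N(v)\}$, with value $+\infty$ if no such $v$ exists, and is computable in polynomial time. By the cost estimate in the proof of \cref{thm:st_all_terminals_in_module}, the minimum of the two candidates is the correct answer. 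Otherwise $|\modprojection_\pmodule(\terminals)|\ge 2$, and I branch on the Gallai type of $\pmodule$ (\cref{thm:gallai_modular}): if $\pmodule$ is a series node its children are pairwise adjacent, so $G[\terminals]$ is connected and $\terminals$ itself is an optimum; if $\pmodule$ is a parallel node its children are pairwise non-adjacent, so $\terminals$ meets at least two connected components of $G[\pmodule]$ and the answer is $+\infty$ (this never occurs at the root, since $G$ is connected); if $\pmodule$ is a prime node, I pass to the equivalent instance $(\qgraph{\pmodule},\terminals^q,\cfct^q,\budget^q)$ via \cref{thm:st_mod_reduction} and solve it with \cref{thm:st_tw_algo}, using the supplied tree decomposition of $\qgraph{\pmodule}$ of width at most $k$.

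For correctness I would note that the ``all terminals in one child'' branch strictly descends the finite tree $\modtree(G)$, so the recursion terminates, necessarily in one of the other cases, which are answered directly; correctness then follows by assembling \cref{thm:st_all_terminals_in_module}, \cref{thm:st_mod_reduction}, \cref{thm:gallai_modular}, and \cref{thm:st_tw_algo}. For the running time, the recursive descent follows a single root-to-node path of length $\Oh(n)$ with polynomial work per step, and the randomized subroutine of \cref{thm:st_tw_algo} is invoked at most once; hence the total running time is $\Oh^*(3^k)$, there are no false positives, and the false-negative probability remains at most $1/2$.

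The step I expect to require the most care is verifying that recursing into $G[\module]$ without re-establishing the preprocessing invariants for $G[\module]$ is sound: this hinges on the parallel-node case correctly detecting the absence of an intra-module Steiner tree, and on \cref{thm:st_all_terminals_in_module} guaranteeing that a size-$(|\terminals|+1)$ candidate is available exactly when no solution contained in $\module$ exists.
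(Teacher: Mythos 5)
Your proposal is correct and follows essentially the same route as the paper: preprocess, descend the modular decomposition while all terminals lie in one child module (using \cref{thm:st_all_terminals_in_module} to keep the size-$(|\terminals|+1)$ candidate available), and at the first level where terminals span two or more children dispose of the parallel/series cases directly and invoke \cref{thm:st_mod_reduction} plus \cref{thm:st_tw_algo} on the prime quotient, calling the randomized subroutine only once. The only differences are presentational (you take a minimum over the two candidates where the paper does an early budget check, and you name the series/parallel subcases explicitly at deeper recursion levels), and these do not affect correctness.
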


\begin{proof} 
  If no module $\module \in \modpartition(G)$ contains all terminals $\terminals$, then we want to invoke \cref{thm:st_mod_reduction}. If $G^q$ is a parallel node, then the answer is trivially no. If $G^q$ is a series node, then $G[\terminals]$ is already connected, but we have assumed that this is not the case. Hence, by \cref{thm:gallai_modular} $G^q$ must be a prime node and we can indeed invoke \cref{thm:st_mod_reduction}, so it suffices to solve the \ST instance $(G^q, \terminals^q, \cfct^q, \budget^q)$. By definition of modular-treewidth, we have $\tw(G^q) \leq \modtw(G) \leq k$ and we are given a corresponding tree decomposition of $G^q$. Hence, we can simply run the algorithm of \cref{thm:st_tw_algo} and return its result.
  
  If some module $\module \in \modpartition(G)$ contains all terminals $\terminal$, then due to \cref{thm:st_all_terminals_in_module} we first compute in polynomial time an optimum Steiner tree $X_1$ of $G$ subject to $|X_1| = |\terminals| + 1$ by brute force. If $\cfct(X_1) \leq \budget$, then we answer yes. Otherwise, we repeatedly recurse into the module $\module$ until we reach a node $G^q_* = G_* / \modpartition(G_*)$ in the modular decomposition of $G$ such that no $\module_* \in \modpartition(G_*)$ contains all terminals $\terminals$. We can then solve the \ST instance $(G_*, \terminals, \cfct\big|_{V(G_*)}, \budget)$ like in the first paragraph and return its answer. Note that this recursion can never lead to a $G_*$ with $|V(G_*)| = 1$ as that would imply $|\terminals| = 1$, which contradicts the assumption that $G[\terminals]$ is not connected. 
  
  As we call \cref{thm:st_tw_algo} at most once, we obtain the same error bound. \qed
\end{proof}

Cygan et al.~\cite{CyganNPPRW11arxiv} have shown that \ST cannot be solved in time $\Oh^*((3 - \eps)^{\pw(G)})$ for some $\eps > 0$, unless \SETH fails. Since $\modtw(G) \leq \tw(G) \leq \pw(G)$, this shows that the running time of \cref{thm:st_modtw_algo} is tight.

\subsection{Connected Dominating Set}
\label{sec:modtw_cds_reduction}

In the \CDS problem, we are given a graph $G = (V,E)$, a cost function $\cfct \colon V \rightarrow \NN \setminus \{0\}$, and an integer $\budget$ and we have to decide whether there exists a subset of vertices $X \subseteq V$ such that $N_G[X] = V$ and $G[X]$ is connected. We assume that $G$ is connected, otherwise the answer is trivially no, and that the costs $\cfct(v)$, $v \in V$, are at most polynomial in $|V|$.

\CDS can be solved by essentially considering only the first quotient graph. First, we will have to handle some edge cases though. If the first quotient graph $G^q = \qgraph{V} = G / \modpartition(G)$ contains a \emph{universal} vertex $\qvertex \in V(G^q)$, i.e., $N_{G^q}[\qvertex] = V(G^q)$, then there could be a connected dominating set $X$ of $G$ that is fully contained in $\module$. We search for such a connected dominating set by recursively solving \CDS on $G[\module]$. At some point, we arrive at a graph, where the first quotient graph does not contain a universal vertex, or at the one-vertex graph. In the latter case, the answer is trivial. Otherwise, the structure of connected dominating sets allows us to solve the problem on the quotient graph $G^q$.

\begin{lem}
  \label{thm:universal_implies_series}
 If $|V| \geq 2$, then $G^q$ contains a universal vertex if and only if $G^q$ is a clique.
\end{lem}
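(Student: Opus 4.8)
The plan is to prove the two directions separately, the ``only if'' direction being the substantive one, which I would reduce to Gallai's trichotomy (\cref{thm:gallai_modular}).

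For the ``if'' direction, suppose $G^q$ is a clique. Since $|V| \ge 2$, the canonical modular partition $\modpartition(G)$ has at least two parts, so $G^q$ has at least two vertices; in a clique every vertex is adjacent to all others, hence every vertex $u$ of $G^q$ satisfies $N_{G^q}[u] = V(G^q)$, and in particular $G^q$ has a universal vertex. This part is routine.

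For the ``only if'' direction, suppose $\qvertex$ is a universal vertex of $G^q$, with $\module \in \modpartition(G)$. First I would translate this back to $G$: $\qvertex$ being adjacent in $G^q$ to every other vertex means $\module$ is adjacent to every other module of $\modpartition(G)$, and by the definition of adjacent modules this yields a complete bipartite graph between $\module$ and each other module, so every vertex of $\module$ is adjacent in $G$ to every vertex of $V \setminus \module$. Since the parts of $\modpartition(G)$ are nonempty and there are at least two of them, $\emptyset \neq \module \subsetneq V$, hence $V \setminus \module \neq \emptyset$. Consequently $\overline{G}$ contains no edge between the nonempty sets $\module$ and $V \setminus \module$, i.e.\ $\overline{G}$ is disconnected. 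As $V \in \modint(G)$ (it is a strong module and not a singleton, because $|V| \ge 2$), \cref{thm:gallai_modular} then forces $V$ to be a series node, and therefore $G^q = \qgraph{V}$ is a clique, as desired.

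I do not expect a genuine obstacle; the only points needing care are the bookkeeping that $\module$ is a proper nonempty subset of $V$ (so that the disconnection of $\overline{G}$ is not vacuous) and checking that $V$ lies in $\modint(G)$ so that \cref{thm:gallai_modular} applies. An alternative, slightly longer route would avoid the complement and instead directly eliminate the parallel case (an independent set on at least two vertices has no universal vertex) and the prime case (in a prime graph $H$ on at least two vertices, a universal vertex $v$ makes $V(H) \setminus \{v\}$ a module, which is non-trivial once $|V(H)| \ge 3$, contradicting primality), leaving only the series case; but routing through ``$\overline{G}$ disconnected'' is the cleanest.
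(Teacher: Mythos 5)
Your proof is correct. Both directions check out: the reverse direction is immediate, and in the forward direction the key facts — that a universal vertex $\qvertex$ of $G^q$ forces a join between $\module$ and $V \setminus \module$ (both nonempty since $|\modpartition(G)| \geq 2$), that $V \in \modint(G)$, and hence that $\overline{G}$ is disconnected — are all established cleanly before invoking \cref{thm:gallai_modular}.

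The route differs mildly from the paper's. The paper also reduces to Gallai's trichotomy but argues by elimination: it dismisses the parallel case directly (an independent set on at least two vertices has no universal vertex) and dismisses the prime case by observing that $V \setminus \module_0$ is a module of $G$, so the canonical partition would have at most two parts, contradicting primality of the quotient. You instead verify the series condition \emph{positively} by exhibiting the disconnection of $\overline{G}$, which is arguably the more standard way to read the trichotomy (node type determined by connectivity of $G$ and $\overline{G}$). The only caveat is that \cref{thm:gallai_modular} as stated lists three conjunctions of which exactly one holds, so strictly speaking ``$\overline{G}$ disconnected'' rules out the parallel case immediately (as $G$ is then a join of nonempty parts, hence connected) but leaves the prime case to be excluded; this is standard and easy (as in your own parenthetical argument), but it is the one spot where the paper's elimination is doing work that your main route glosses over. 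Your alternative route at the end — eliminating parallel and prime at the quotient level via the module $V(H) \setminus \{v\}$ — is essentially verbatim the paper's proof.
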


\begin{proof}
 The reverse direction is simple: every vertex of a clique is a universal vertex.
 
 For the forward direction, first notice that $G^q$ cannot be a parallel node if $G^q$ contains a universal vertex. Suppose that $G^q$ contains a universal vertex $v^q_{\module_0}$. Consider the set $\module = V(G) \setminus \module_0$ and notice that $\module$ has to be a module of $G$, because $v^q_{\module_0}$ is a universal vertex in $G^q$. If $G^q$ were a prime node, then all modules in $\modpartition(G)$ are maximal proper modules by \cref{thm:gallai_modular}, but $V(G) = \module_0 \cup \module$ implies that $|\modpartition(G)| \leq 2$ which contradicts that $G^q$ is prime. Therefore, the only remaining possibility is that $G^q$ is a series node, i.e., $G^q$ is a clique. \qed
\end{proof}

\begin{lem}
 \label{thm:cds_mod_structure}
 If $G^q$ is a prime node, then no connected dominating set $X$ of $G$ is contained in a single module $\module \in \modpartition(G)$. Furthermore, for any optimum connected dominating set $X$ of $G$ and module $\module \in \modpartition(G)$ it holds that either $X \cap \module = \emptyset$ or $X \cap \module = \{v_\module\}$, where $v_\module$ is some vertex of minimum cost in $\module$.
\end{lem}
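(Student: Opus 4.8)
The plan is to prove the two statements separately, each by an exchange argument building on \cref{thm:module_exchange_connected} and \cref{thm:universal_implies_series}. For the first statement I would argue by contradiction. Suppose $X$ is a connected dominating set of $G$ with $X \subseteq \module$ for some $\module \in \modpartition(G)$. Since $G^q$ is a prime node it is not a clique (by \cref{thm:gallai_modular}, the parallel, series, and prime cases are mutually exclusive and only series nodes yield cliques), so by \cref{thm:universal_implies_series} the vertex $\qvertex$ is not universal in $G^q$; hence there is a module $\module' \in \modpartition(G)$ with $\module' \neq \module$ that is nonadjacent to $\module$. As $\module$ is a module nonadjacent to $\module'$, no vertex of $\module'$ has a neighbour in $\module$, and $X \subseteq \module$ gives $\module' \cap X = \emptyset$; thus no vertex of the nonempty set $\module'$ lies in $N_G[X]$, contradicting that $X$ is dominating.

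For the second statement, let $X$ be an optimum connected dominating set and suppose for contradiction that some $\module \in \modpartition(G)$ has $X \cap \module \neq \emptyset$ but $X \cap \module \neq \{v\}$ for every minimum-cost vertex $v$ of $\module$. Fix a minimum-cost vertex $v_\module \in \module$. Because all costs are positive, the assumption yields $\cfct(X \cap \module) > \cfct(v_\module)$: if $|X \cap \module| \geq 2$ then $\cfct(X \cap \module) \geq 2\cfct(v_\module) > \cfct(v_\module)$, and if $X \cap \module = \{w\}$ then $w$ is not of minimum cost, so $\cfct(w) > \cfct(v_\module)$. I would then set $X' = (X \setminus \module) \cup \{v_\module\}$ and show $X'$ is again a connected dominating set; since $\cfct(X') = \cfct(X) - \cfct(X \cap \module) + \cfct(v_\module) < \cfct(X)$, this contradicts optimality of $X$.

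It remains to verify that $X'$ is a connected dominating set. By the first statement $X \not\subseteq \module$, so $X$ intersects at least two modules of $\modpartition(G)$, and \cref{thm:module_exchange_connected} applied at $V \in \modint(G)$ (with $Y = \{v_\module\}$) shows $G[X']$ is connected. For domination, consider any $u \in V$ and $x \in X$ with $u \in N_G[x]$. If $u \notin \module$: when $x \notin \module$ we still have $x \in X'$; when $x \in \module$ the edge $\{x,u\}$ forces the module containing $u$ to be adjacent to $\module$, so $v_\module$, having the same neighbourhood outside $\module$ as $x$, dominates $u$. If $u \in \module$: connectivity of $G[X]$ together with $\emptyset \neq X \cap \module \subsetneq X$ gives an edge from $X \cap \module$ to some $y \in X \setminus \module$, so the module containing $y$ is adjacent to $\module$ and $y \in X'$ dominates all of $\module$, in particular $u$. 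The only non-routine point is this last step, namely realising that connectivity of $G[X]$ forces a vertex of $X$ into a module adjacent to $\module$, which then single-handedly dominates $\module$ after $X \cap \module$ is replaced by $\{v_\module\}$; everything else is bookkeeping with the module property and positivity of the costs.
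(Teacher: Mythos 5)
Your proof is correct and follows essentially the same route as the paper: the first part via \cref{thm:universal_implies_series} to find a module nonadjacent to $\module$ that cannot be dominated, and the second part via the exchange $X' = (X \setminus \module) \cup \{v_\module\}$, using \cref{thm:module_exchange_connected} for connectivity and the existence of an adjacent module meeting $X$ for domination. Your verification of the domination property is somewhat more explicit than the paper's (which leaves the case of vertices outside $\module$ implicit), but the argument is the same.
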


\begin{proof}
  By \cref{thm:universal_implies_series}, $G^q$ cannot contain a universal vertex. Suppose that $X \subseteq \module$ for some $\module \in \modpartition(G)$. Since $v^q_\module \in V(G^q)$ is not a universal vertex, there exists a module $\module' \in \modpartition(G) \setminus \{\module\}$ that is not adjacent to $\module$, hence $X$ cannot dominate the vertices in $\module'$ and thus cannot be a connected dominating set.
  
  For the statement about optimum connected dominating sets, suppose that $X$ is a connected dominating set of $G$ and $\cfct(X \cap \module) > \cfct(v_\module) > 0$, where $v_\module$ is some vertex of minimum cost in $\module$, for some $\module \in \modpartition(G)$. The set $X' = (X \setminus \module) \cup \{v_\module\}$ satisfies $\cfct(X') < \cfct(X)$ and \cref{thm:module_exchange_connected} shows that $G[X']$ is connected. Since $X$ is a connected dominating set intersecting at least two modules, there has to be a module $\module' \in \modpartition(G)$ that is adjacent to $\module$ and satisfies $X \cap \module' \neq \emptyset$. Since $\module \neq \module'$, there is some $v \in X' \cap \module' \neq \emptyset$ which dominates all vertices in $\module$. Hence, $X'$ is a dominating set as well.
  
  Repeatedly applying this argument shows the statement about optimum connected dominating sets.  \qed
\end{proof}

\begin{prop}[\cite{CyganNPPRW22}]
  \label{thm:cds_tw_algo}
  There exists an algorithm that given a tree decomposition of width at most $k$ for $G$ and a weight function $\wfct$ isolating the optimum connected dominating sets solves \CDS in time $\Oh^*(4^k)$. If $\wfct$ is not isolating, then the algorithm may return false negatives.
\end{prop}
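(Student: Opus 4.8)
This is the cut-and-count algorithm for \CDS of Cygan et al.~\cite{CyganNPPRW22}, and the plan is to reprove it only far enough to see how the framework instantiates and to check the (routine) extension to polynomially bounded vertex costs. First I would set up the combinatorial core. Take the universe $U = V$; after fixing a vertex $v_1 \in V$ that is forced into the solution (the disjunction over all $n$ choices of $v_1$ is taken at the very end), let $\rsols$ be the family of all dominating sets $X \subseteq V$ with $v_1 \in X$ and put
\[
  \csols = \{(X,(X_L,X_R)) : X \in \rsols,\ v_1 \in X_L,\ (X_L,X_R)\text{ a consistent cut of }G[X]\}.
\]
By \cref{thm:cons_cut}, for a fixed $X \in \rsols$ there are exactly $2^{\cc(G[X])-1}$ such pairs (the component of $v_1$ is forced into $X_L$, the remaining $\cc(G[X])-1$ components are free), which is odd iff $G[X]$ is connected. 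Hence for every $b,w$ the parity of $|\{(X,(X_L,X_R)) \in \csols : \cfct(X)=b,\ \wfct(X)=w\}|$ equals the parity of the number of connected dominating sets containing $v_1$ of cost $b$ and weight $w$; disconnected candidates cancel in pairs.

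Next I would compute these parities by a bottom-up dynamic program over a very nice tree decomposition (\cref{thm:very_nice_tree_decomposition}) rooted at $\rvertex$ with $\bag_{\rvertex} = \emptyset$. For each node $t$, each state $\state \colon \bag_t \to \{0_{\mathtt{und}},0_{\mathtt{dom}},\one_L,\one_R\}$ (outside $X$ and not yet dominated; outside $X$ and dominated; in $X_L$; in $X_R$), and each admissible $(b,w)$, store modulo $2$ the number of partial candidates on $G_t$ agreeing with $\state$, forming a consistent cut of $G_t[X']$, in which every \emph{forgotten} non-solution vertex is dominated, with cost $b$ and weight $w$. There are at most $4^{k+1}$ states per node; the leaf, introduce-vertex, introduce-edge, forget-vertex, and join transitions are exactly those of~\cite{CyganNPPRW22}, and since $b \le \cfct(V)$ and $w \le 2n\max_v \wfct(v)$ are polynomially bounded they only add polynomial factors, giving total time $4^{k+1}n^{\Oh(1)} = \Oh^*(4^k)$. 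At $\rvertex$ the only state is the empty map and ``every forgotten non-solution vertex is dominated'' becomes $N_G[X]=V$; the algorithm answers \emph{yes} iff for some $v_1$, some $b \le \budget$ and some $w$ the stored value is odd.

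Finally I would verify correctness. If a connected dominating set of cost $\le\budget$ exists, let $X^*$ be the (by the isolation hypothesis unique) minimum-weight one among those of minimum cost $b^*$; choosing $v_1 \in X^* \neq \emptyset$, the set $X^*$ is still the unique connected dominating set containing $v_1$ of cost $b^*$ and weight $w^* = \wfct(X^*)$, and none of smaller cost exists, so the parity at $(b^*,w^*)$ is $1$. Conversely an odd parity at any $(b,w)$ with $b \le \budget$ certifies some connected dominating set of cost $b$, so the algorithm never gives a false positive, and a non-isolating $\wfct$ can only make the parity at $(b^*,w^*)$ vanish by cancellation, i.e.\ cause a false negative. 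I expect the single real obstacle to be the join node: making the domination flags compose as ``dominated in some child'' while keeping the base at $4$; this is handled, as in~\cite{CyganNPPRW22}, by letting a $0$-vertex flag range over a two-element lattice (``may remain undominated'' / ``must be dominated now''), multiplying pointwise at joins and performing a Möbius correction when the vertex is forgotten. The vertex-cost extension is routine given the polynomial bound on $\cfct$.
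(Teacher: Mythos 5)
Your proposal is correct and matches the paper's approach: the paper simply defers to the cut-and-count algorithm of Cygan et al.\ for \CDS (noting that polynomially bounded vertex costs and a supplied isolating weight function are handled trivially), and what you write out is precisely a correct reconstruction of that cited algorithm, including the $2^{\cc(G[X])-1}$ parity argument, the four states per bag vertex, the covering-product trick at join nodes, and the isolation-based correctness analysis.
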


\begin{proof}
  The algorithm presented by Cygan et al.~\cite{CyganNPPRW22} can be easily augmented to handle positive vertex costs in this running time under the assumption that the costs $\cfct(v)$, $v \in V$, are at most polynomial in $|V|$. Notice that the only source of randomness in the algorithm of Cygan et al.\ is the sampling of a weight function. If we are already given an isolating weight function, the algorithm will always succeed. \qed
\end{proof}

As for \ST, the strategy is again to essentially just call the known algorithm for \CDS parameterized by treewidth on the quotient graphs. However, a single call will not be sufficient in the case of \CDS; to still obtain the same success probability, we will analyze the behavior of isolating weight functions under the following reduction.

Let $(G, \cfct, \budget)$ be a \CDS instance such that $G^q$ is a prime node and let $\wfct \colon V \rightarrow \NN$ be a weight function. In each $\module \in \modpartition(G)$ pick a vertex $v^{\cfct, \wfct}_\module$ that lexicographically minimizes $(\cfct(v), \wfct(v))$ among all vertices $v \in \module$. We construct the \CDS instance $(G^q, \cfct^q, \budget)$ with $\cfct^q(v^q_\module) = \cfct(v^{\cfct, \wfct}_\module)$ for all $v^q_\module \in V(G^q)$ and define the weight function $\wfct^q(v^q_\module) = \wfct(v^{\cfct, \wfct}_\module)$ for all $v^q_\module \in V(G^q)$.

\begin{lem}
  \label{thm:cds_modtw_reduction}
  Let $(G, \cfct, \budget)$ be a \CDS instance such that $G^q$ is a prime node, let $\wfct \colon V \rightarrow \NN$ be a weight function, and let $(G^q, \cfct^q, \budget)$ and $\wfct^q$ be defined as above. The following statements hold:
 \begin{enumerate}
  \item If $X$ is an optimum connected dominating set of $(G, \cfct)$, then $X^q = \modprojection_V(X)$ is a connected dominating set of $G^q$ with $\cfct^q(X^q) = \cfct(X)$.
  \item If $X^q$ is an optimum connected dominating set of $(G^q, \cfct^q)$, then $X = \{v^{\cfct, \wfct}_\module \sep v^q_\module \in X^q\}$ is a connected dominating set of $G$ with $\cfct(X) = \cfct^q(X^q)$.
  \item If $\wfct$ isolates the optimum connected dominating sets of $(G, \cfct)$, then $\wfct^q$ isolates the optimum connected dominating sets of $(G^q, \cfct^q)$.
 \end{enumerate}
\end{lem}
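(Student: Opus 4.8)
The plan is to establish the three statements in the given order: the first two are structural and follow from the description of optimum connected dominating sets in \cref{thm:cds_mod_structure} together with the quotient characterization of connectivity in \cref{thm:quotient_connected}, while the third builds on both of them. I expect the third statement to be the main obstacle, since it requires controlling how the lexicographic tie-breaking used to choose the vertices $v^{\cfct,\wfct}_\module$ interacts with the given isolating weight function $\wfct$.

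For the first statement, I would use that $G^q$ is prime, so by \cref{thm:cds_mod_structure} the optimum connected dominating set $X$ is not contained in a single module, whence $|\modprojection_V(X)| \geq 2$, and moreover $X \cap \module$ is either empty or a singleton $\{v_\module\}$ with $v_\module$ of minimum cost in $\module$. Then \cref{thm:quotient_connected} with $\pmodule = V$ shows that $G^q[X^q]$ is connected. For domination, take any $v \in \module$ and some $u \in X \cap N[v]$; either $u \in \module$, so $v^q_\module \in X^q$, or $u$ lies in a module $\module'$ that must be adjacent to $\module$ (as $\module$ is a module meeting $N(u)$), so $v^q_{\module'} \in X^q$ dominates $v^q_\module$. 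The cost equality is then immediate, since both $v_\module$ and $v^{\cfct,\wfct}_\module$ realize $\min_{v \in \module}\cfct(v)$, giving $\cfct^q(X^q) = \sum_{v^q_\module \in X^q}\cfct(v^{\cfct,\wfct}_\module) = \sum \cfct(v_\module) = \cfct(X)$.

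For the second statement, I would first note that a prime $G^q$ has no universal vertex by \cref{thm:universal_implies_series}, so every connected dominating set $X^q$ of $G^q$ has $|X^q| \geq 2$; hence $X$ contains exactly one vertex per module of $X^q$, satisfies $\modprojection_V(X) = X^q$, and \cref{thm:quotient_connected} gives that $G[X]$ is connected. Domination of an arbitrary $v \in \module$ follows because $v^q_\module$ either lies in $X^q$ and then has a neighbor $v^q_{\module'} \in X^q$ (since $G^q[X^q]$ is connected on at least two vertices), so $v^{\cfct,\wfct}_{\module'} \in X$ dominates $v$, or $v^q_\module$ is dominated by some $v^q_{\module'} \in X^q$, yielding the same conclusion. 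The identity $\cfct(X) = \cfct^q(X^q)$ is just the definition of $\cfct^q$ together with the fact that distinct modules contribute distinct vertices to $X$.

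For the third statement, the plan is as follows. Combining the first two statements shows that the minimum cost of a connected dominating set, call it $\mathrm{opt}$, is the same for $(G,\cfct)$ and $(G^q,\cfct^q)$. Consider the map $\psi(Z^q) = \{v^{\cfct,\wfct}_\module \sep v^q_\module \in Z^q\}$, which by the second statement sends optimum connected dominating sets of $(G^q,\cfct^q)$ to optimum connected dominating sets of $(G,\cfct)$, is injective with $\modprojection_V(\psi(Z^q)) = Z^q$, and satisfies $\wfct(\psi(Z^q)) = \wfct^q(Z^q)$ by the definition of $\wfct^q$. The \emph{crucial} step is to show that the unique $\wfct$-minimum optimum connected dominating set $X^*$ of $(G,\cfct)$ uses only canonical representatives, i.e.\ $X^* \cap \module \in \{\emptyset, \{v^{\cfct,\wfct}_\module\}\}$ for every $\module$: if instead $X^*$ contained some other minimum-cost vertex $y_\module \in \module$, then swapping it for $v^{\cfct,\wfct}_\module$ yields another connected dominating set of cost $\mathrm{opt}$ (using \cref{thm:module_exchange_connected} for connectivity and the domination argument in the proof of \cref{thm:cds_mod_structure}) whose $\wfct$-weight is at most that of $X^*$, since $v^{\cfct,\wfct}_\module$ lexicographically minimizes $(\cfct,\wfct)$ over $\module$; this contradicts either the minimality or the uniqueness of $X^*$. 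Hence $X^* = \psi(X^q_*)$ with $X^q_* := \modprojection_V(X^*)$ an optimum connected dominating set of $(G^q,\cfct^q)$ and $\wfct^q(X^q_*) = \wfct(X^*)$. Finally, for any optimum connected dominating set $Z^q$ of $(G^q,\cfct^q)$ we have $\wfct^q(Z^q) = \wfct(\psi(Z^q)) \geq \wfct(X^*) = \wfct^q(X^q_*)$, with equality only if $\psi(Z^q) = X^*$, i.e.\ only if $Z^q = \modprojection_V(\psi(Z^q)) = X^q_*$; therefore $\wfct^q$ isolates the optimum connected dominating sets of $(G^q,\cfct^q)$.
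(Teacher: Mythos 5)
Your proposal is correct and follows essentially the same route as the paper: statements 1 and 2 via \cref{thm:cds_mod_structure}, \cref{thm:universal_implies_series}, and \cref{thm:quotient_connected} (the paper phrases connectivity/domination through the isomorphism between $G^q$ and the subgraph induced by the canonical representatives, which is an equivalent formulation), and statement 3 via the same key claim that the isolated optimum $X^*$ must use exactly the representatives $v^{\cfct,\wfct}_\module$, proved by the same exchange argument distinguishing strict weight decrease (contradicting minimality) from a tie (contradicting uniqueness). No gaps.
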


\begin{proof}
  First, notice that the subgraph $G' = (V', E')$ of $G$ induced by $\{v^{\cfct,\wfct}_\module \sep \module \in \modpartition(G)\}$ is isomorphic to $G^q$.
  \begin{enumerate}
    \item Let $X$ be an optimum connected dominating set of $(G, \cfct)$ and set $X^q = \modprojection_V(X)$. We compute 
    \begin{equation*}
      \cfct^q(X^q) = \sum_{v^q_\module \in X^q} \cfct^q(v^q_\module) = \sum_{\substack{\module \in \modpartition(G):\\X \cap \module \neq \emptyset}} \cfct(v^{\cfct,\wfct}_\module) = \sum_{\substack{\module \in \modpartition(G):\\X \cap \module \neq \emptyset}} \cfct(X \cap \module) = \cfct(X),
    \end{equation*}
    where the penultimate equality follows from \cref{thm:cds_mod_structure} and the choice of $v^{\cfct, \wfct}_\module$. Furthermore, we can assume $X \cap \module = \{v^{\cfct, \wfct}_\module\}$ whenever $X \cap \module \neq \emptyset$ by \cref{thm:cds_mod_structure}. Then, the isomorphism between $G^q$ and $G'$ also maps $X^q$ to $X$ and hence $X^q$ has to be a connected dominating set of $G^q$.
    \item Suppose that $X^q$ is an optimum connected dominating set of $(G^q, \cfct^q)$. Defining $X$ as above, we see that $X^q$ satisfies $X^q = \modprojection_V(X)$. By \cref{thm:universal_implies_series}, $G^q$ contains no universal vertex, hence $|X^q| \geq 2$ and $X$ must intersect at least two modules. Therefore, we can apply \cref{thm:quotient_connected} to see that $G[X]$ is connected. The isomorphism between $G^q$ and $G'$ shows that $X$ must dominate all vertices in $V'$.
    
    For any vertex $v \in V \setminus (X \cup V')$ and its module $v \in \module \in \modpartition(G)$, we claim that there exists a module $\module' \in \modpartition(G)$ such that $v^{\cfct, \wfct}_{\module'} \in X$ dominates $v$. If $X \cap \module = \emptyset$, then there exists an adjacent module $\module'$ with $X \cap \module' \neq \emptyset$, because the vertex $v^{\cfct, \wfct}_\module \in V'$ must be dominated by $X$. If $X \cap \module \neq \emptyset$, a module $\module'$ with the same properties exists, because $X$ intersects at least two modules and $G[X]$ is connected. In either case, $v^{\cfct, \wfct}_{\module'}$ must dominate the vertex $v$ by the module property, hence $X$ is a connected dominating set of $G$. It remains to compute 
    \begin{equation*}
      \cfct(X) = \sum_{v^{\cfct,\wfct}_\module \in X} \cfct(v^{\cfct,\wfct}_\module) = \sum_{v^q_\module \in X^q} \cfct^q(v^q_\module) = \cfct^q(X^q).
    \end{equation*}
    \item The first two statements show that connected dominating sets in $(G, \cfct)$ and $(G^q, \cfct^q)$ have the same optimum cost. Suppose that $\wfct$ is a weight function that isolates the optimum connected dominating sets of $(G, \cfct)$ and let $X$ be the optimum connected dominating set that is isolated by $\wfct$. Therefore, $X$ lexicographically minimizes $(\cfct(X), \wfct(X))$ among all connected dominating sets of $G$. By \cref{thm:cds_mod_structure}, we know that $X \cap \module = \{v'_\module\}$ whenever $X \cap \module \neq \emptyset$, where $v'_\module$ is a vertex of minimum cost in $\module$. 
    
    We claim that $v'_\module = v^{\cfct, \wfct}_\module$ for all modules $\module \in \modpartition(G)$ with $X \cap \module \neq \emptyset$. By definition of $v^{\cfct, \wfct}_\module$, we must have $\wfct(v'_\module) \geq \wfct(v^{\cfct, \wfct}_\module)$. If $\wfct(v'_\module) > \wfct(v^{\cfct, \wfct}_\module)$, then we could reduce the weight of $X$ by exchanging $v'_\module$ with $v^{\cfct, \wfct}_\module$, contradicting the minimality of $(\cfct(X), \wfct(X))$. If $\wfct(v'_\module) = \wfct(v^{\cfct, \wfct}_\module)$ and $v'_\module \neq v^{\cfct, \wfct}_\module$, then $X$ cannot be the isolated connected dominating set, because by exchanging $v'_\module$ and $v^{\cfct, \wfct}_\module$ we would obtain another connected dominating set of the same cost and weight. This proves the claim.
    
    Using the claim, we compute 
    \begin{equation*}
      \wfct^q(X^q) = \sum_{v^q_\module \in X^q} \wfct^q(v^q_\module) = \sum_{\substack{\module \in \modpartition(G):\\X \cap \module \neq \emptyset}} \wfct(v^{\cfct, \wfct}_\module) = \wfct(X).
    \end{equation*}
    Finally, consider any other optimum connected dominating set $Y^q \neq X^q$ of $G^q$. Setting $Y = \{v^{\cfct, \wfct}_\module \sep v^q_\module \in Y^q\} \neq X$, we obtain $Y^q = \modprojection_V(Y)$ and $\cfct(Y) = \cfct^q(Y^q) = \cfct^q(X^q) = \cfct(X)$, hence $\wfct^q(Y^q) = \wfct(Y) > \wfct(X) = \wfct^q(X^q)$, where the inequality follows because $\wfct$ isolates the optimum connected dominating sets of $(G, \cfct)$. This shows that $\wfct^q$ isolates the optimum connected dominating sets of $(G^q, \cfct^q)$. \qed
  \end{enumerate}
\end{proof}

\begin{thm}
  \label{thm:cds_modtw_algo}
  There exists a Monte-Carlo algorithm that given a tree decomposition of width at most $k$ for every prime node in the modular decomposition of $G$ solves \CDS in time $\Oh^*(4^k)$. The algorithm cannot give false positives and may give false negatives with probability at most $1/2$. 
\end{thm}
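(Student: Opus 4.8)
The plan is to reuse the template of \cref{thm:st_modtw_algo}: recurse through the modular decomposition and, at each prime node that the recursion actually reaches, reduce to the quotient graph and invoke the treewidth algorithm \cref{thm:cds_tw_algo} there. The one genuinely new point compared to \ST is that \cref{thm:cds_tw_algo} needs an \emph{isolating} weight function, so the randomization has to be arranged once, globally, before the recursion starts, rather than fired off independently inside each recursive call.

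First I would sample a single weight function $\wfct \colon V \to [N]$ with $N = 2n^2$ uniformly at random, where $n = |V|$. For every prime node $\module$ of the modular decomposition, $\wfct\big|_\module$ is uniformly distributed on $[N]^\module$, so by \cref{thm:isolation} it isolates the family of minimum-cost connected dominating sets of $G[\module]$ with probability at least $1 - |\module|/N \geq 1 - n/N$. Since the modular decomposition tree has fewer than $n$ internal nodes, a union bound shows that with probability at least $1/2$ the restriction $\wfct\big|_\module$ is isolating at \emph{every} prime node simultaneously; I call $\wfct$ \emph{good} in this case and condition on it. Note that $N$ and all costs stay polynomially bounded, so the weighted instances later handed to \cref{thm:cds_tw_algo} still have polynomially bounded data.

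Next I would compute, recursively over $\module \in \modtree(G)$ starting at the root $V$, the minimum cost $\mathrm{OPT}(G[\module])$ of a connected dominating set of $G[\module]$ (with value $\infty$ if none exists), answering \textsc{yes} iff $\mathrm{OPT}(G) \leq \budget$. A singleton module returns the cost of its vertex, and a parallel node returns $\infty$ since $G[\module]$ is disconnected. At a prime node $\module$, the quotient $\qgraph{\module}$ is prime by \cref{thm:gallai_modular}; I form the reduced instance $(\qgraph{\module}, \cfct^q)$ together with $\wfct^q$ exactly as in the paragraph preceding \cref{thm:cds_modtw_reduction} applied to $G[\module]$, which is isolating by part~(3) of that lemma since $\wfct\big|_\module$ is, and then run \cref{thm:cds_tw_algo} on it using the supplied width-$\leq k$ tree decomposition of $\qgraph{\module}$ (invoked $\Oh(\log \cfct(V))$ times via binary search on the budget to extract the optimum); by parts~(1)--(2) of \cref{thm:cds_modtw_reduction} the returned value equals $\mathrm{OPT}(G[\module])$. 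At a series node $\module$ with $\children(\module) = \{\module_1, \dots, \module_\ell\}$, the quotient $\qgraph{\module}$ is a clique by \cref{thm:universal_implies_series}, so any two vertices from distinct child modules are adjacent and jointly dominate $\module$; hence a connected dominating set of $G[\module]$ either lies inside a single $\module_i$ — in which case it is exactly a connected dominating set of $G[\module_i]$ — or already contains such a pair, which gives
\[ \mathrm{OPT}(G[\module]) = \min\Bigl( \min_{i \neq j}(c_i + c_j),\ \min_{i \in [\ell]} \mathrm{OPT}(G[\module_i]) \Bigr), \qquad c_i := \min_{v \in \module_i} \cfct(v), \]
where the first term is computed directly and the second by recursion.

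For correctness I would invoke \cref{thm:universal_implies_series}, \cref{thm:cds_mod_structure}, \cref{thm:module_exchange_connected}, and \cref{thm:quotient_connected} to justify the prime- and series-node identities, the base cases being immediate. The recursion visits each of the $\Oh(n)$ modular-decomposition nodes once; every non-prime node costs polynomial time, while each prime node, whose quotient has treewidth at most $k$, triggers $\Oh(\log \cfct(V))$ calls to \cref{thm:cds_tw_algo}, i.e.\ $\Oh^*(4^k)$ per prime node and $\Oh^*(4^k)$ in total. There are no false positives because \cref{thm:cds_tw_algo} has none and everything else is exact; and if $\wfct$ is good then every invocation of \cref{thm:cds_tw_algo} receives an isolating weight function and is therefore correct, so the algorithm errs only when $\wfct$ is not good, which has probability at most $1/2$ (and is boostable by repetition). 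The step I expect to be the main obstacle is exactly the interaction between the global randomness and the recursion: one must argue that the series-node case forces the optimum to be either a polynomially-findable pair or a solution confined to a single child module (so that no extra randomized subroutine is needed there), and that a single sampled weight function can be made isolating at all prime nodes at once — which is what keeps the failure probability at $1/2$ despite the potentially $\Omega(n)$ internal calls to \cref{thm:cds_tw_algo}.
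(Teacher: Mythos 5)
Your algorithm is essentially the paper's: recurse top-down through the modular decomposition, descend only at series nodes, answer parallel nodes with a trivial no, reduce each prime node to its quotient via \cref{thm:cds_modtw_reduction}, and handle series nodes by the dichotomy ``cheapest cross-module pair versus a solution confined to a single child module''. The series-node identity and the prime-node reduction are justified exactly as in the paper (via \cref{thm:cds_mod_structure}, \cref{thm:module_exchange_connected}, and \cref{thm:quotient_connected}), so the construction is sound and achieves the claimed running time.

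The one place where you genuinely diverge is the error analysis, which is also the only delicate point of the theorem. The paper samples $\wfct \colon V \to [2n]$ and proves a propagation statement: if $\wfct$ isolates the optimum $X^*$ of the whole graph, then at every series node the restriction of $\wfct$ to the child module containing $X^*$ is again isolating, so only the single branch of the recursion that actually contains $X^*$ needs to succeed. You instead take $N = 2n^2$ and union-bound over the fewer than $n$ prime nodes to get simultaneous isolation at all of them with probability at least $1/2$. Your version is correct: restrictions of independent uniform weights are independent uniform, the families being isolated are nonempty because a prime node induces a connected graph (so $\module$ itself is a connected dominating set of $G[\module]$), and all weights remain polynomially bounded, so the hypotheses of \cref{thm:cds_tw_algo} and of part~(3) of \cref{thm:cds_modtw_reduction} are met at every call. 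It is arguably simpler than the paper's route, since it avoids the exchange argument behind the propagation lemma; the price is only the polynomially larger weight range, while the paper's argument shows that $N = 2n$ already suffices. Two cosmetic points: the fact that a series node's quotient is a clique is \cref{thm:gallai_modular} rather than \cref{thm:universal_implies_series}; and the binary search over the budget is unnecessary if one simply passes the original budget down and returns yes/no as the paper does, though computing the exact optimum at each visited module, as you do, works equally well.
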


\begin{proof} 
  We begin by sampling a weight function $\wfct \colon V \rightarrow [2|V|]$. By \cref{thm:isolation}, $\wfct$ isolates the optimum connected dominating sets of $(G, \cfct)$ with probability at least $1/2$. The algorithm proceeds top-down through the modular decomposition tree of $G$, but we only recurse further if the current node is a series node. Each recursive call is determined by some $\pmodule \in \modtree(G)$ and we have to determine in this call if a connected dominating set $X$ of $G[\pmodule]$ with $\cfct(X) \leq \budget$ exists, i.e., solve the \CDS instance $(G[\pmodule], \cfct\restrict{\pmodule}, \budget)$. The weight function $\wfct$ is passed along by considering its restriction, i.e., $\wfct\restrict{\pmodule}$. 
  
  Let $\algo_{\tw}$ denote the algorithm from \cref{thm:cds_tw_algo}. Our algorithm may perform several calls to $\algo_{\tw}$, where each call may return false negatives when the considered weight function is not isolating. We return to the error analysis after finishing the description of the modular-treewidth algorithm. 
  
  We begin by explaining the three base cases. If $|\pmodule| = 1$, then we let $\pmodule = \{v_\pmodule\}$ and check whether $\cfct(v_\pmodule) \leq \budget$ and return yes or no accordingly. Otherwise, we have $|\pmodule| \geq 2$ and can consider $\pquotient$. If $\pquotient$ is a parallel node, then the answer is trivially no. If $\pquotient$ is a prime node, then we can invoke \cref{thm:cds_modtw_reduction} to reduce the \CDS instance $(G[\pmodule], \cfct\restrict{\pmodule}, \budget)$ to a \CDS instance on the quotient graph $\pquotient$. We are given a tree decomposition of $\pquotient$ of width at most $k$ by assumption. We run $\algo_{\tw}$ on the quotient instance together with the weight function from \cref{thm:cds_modtw_reduction} and return its result.
   
  Finally, suppose that $\pquotient$ is a series node. In this case, any set $X$ of size $2$ that intersects two different modules $\module \in \children(\pmodule) = \modpartition(G[\pmodule])$ is a connected dominating set of $G[\pmodule]$. We compute all those sets by brute force in polynomial time and return yes if any of them satisfies $\cfct(X) \leq \budget$. Otherwise, we need to recurse into the modules $\module \in \children(\pmodule)$, because any connected dominating set of $G[\module]$ will also be a connected dominating set of $G[\pmodule]$. We return true if at least one of these recursive calls returns true. This concludes the description of the algorithm and we proceed with the error analysis now. 
  
  The only source of errors is that we may call $\algo_{\tw}$ with a non-isolating weight function, but this can only yield false negatives and hence the modular-treewidth algorithm cannot give false positives either. Even if the sampled weight function is isolating, this may not be the case for the restrictions $\wfct\restrict{\pmodule}$, $\pmodule \in \modtree(G)$. Nonetheless, we show that if $\wfct$ is isolating, then the modular-treewidth algorithm does not return an erroneous result. To do so, we show that if $\wfct\restrict\pmodule$ is isolating at a series node, then the weight function in the branch containing the isolated optimum connected dominating set must be isolating as well.
  
  To be precise, suppose that $\pquotient$ is a series node and that $\wfct\restrict{\pmodule}$ isolates $X^*$ among the optimum connected dominating sets of $(G[\pmodule], \cfct\restrict{\pmodule})$. We claim that $\wfct\restrict{\module}$, $\module \in \children(\pmodule)$, isolates $X^*$ among the optimum connected dominating sets of $(G[\module], \cfct\restrict{\module})$ if $X^* \subseteq \module$. This follows by a simple exchange argument: if $\wfct\restrict{\module}$ is not isolating, i.e., there is some optimum connected dominating set $X \neq X^*$ of $(G[\module], \cfct\restrict{\module})$ with $\wfct(X) = \wfct(X^*)$, then $X$ is also an optimum connected dominating set of $(G[\pmodule], \cfct\restrict{\pmodule})$, contradicting that $\wfct\restrict{\pmodule}$ is isolating $X^*$. If $X^*$ intersects multiple modules $\module \in \children(\pmodule)$, then $X^*$ is found deterministically among the sets of size $2$. 
  
  As $\wfct$ is isolating with probability at least $1/2$ this concludes the error analysis. Furthermore, for every module $\module \in \modtree(G)$, we need at most time $\Oh^*(4^k)$. Therefore, the theorem statement follows. \qed
\end{proof}

Cygan et al.~\cite{CyganNPPRW11arxiv} have shown that \CDS cannot be solved in time $\Oh^*((4 - \eps)^{\pw(G)})$ for some $\eps > 0$, unless \SETH fails. Since $\modtw(G) \leq \tw(G) \leq \pw(G)$, this shows that the running time of \cref{thm:cds_modtw_algo} is tight.

\section{Connected Vertex Cover Algorithm}
\label{sec:modtw_cvc_algo}

In the \CVC problem, we are given a graph $G = (V,E)$, a cost function $\cfct\colon V \rightarrow \NN \setminus \{0\}$, and an integer $\budget$ and we have to decide whether there exists a subset of vertices $X \subseteq V$ with $\cfct(X) \leq \budget$ such that $G - X$ contains no edges and $G[X]$ is connected. We will assume that the values of the cost function $\cfct$ are polynomially bounded in the size of the graph $G$. We also assume that $G$ is connected and contains at least two vertices, hence $|\modpartition(G)| \geq 2$ and $G^q := G^q_V = G / \modpartition(G)$ cannot be edgeless.

To solve \CVC, we begin by computing some optimum (possibly non-connected) vertex cover $Y_\module$ with respect to $\cfct{\big|_\module}$ for every module $\module \in \modpartition(G)$ that $G[\module]$ contains at least one edge. If $G[\module]$ contains no edges, then we set $Y_\module = \{v^*_\module\}$, where $v^*_\module \in \module$ is a vertex minimizing the cost inside $\module$, i.e., $v^*_\module := \arg \min_{v \in \module} \cfct(v)$. The vertex covers can be computed in time $\Oh^*(2^{\modtw(G)})$ by using the algorithm from \cref{thm:is_mtw_algo}.

\begin{dfn}
 Let $X \subseteq V$ be a vertex subset. We say that $X$ is \emph{nice} if for every module $\module \in \modpartition(G)$ it holds that $X \cap \module \in \{\emptyset, Y_\module, \module\}$. 
\end{dfn}

We will show that it is sufficient to only consider nice vertex covers via some exchange arguments. This allows us to only consider a constant number of states per module in the dynamic programming algorithm.

\begin{lem}
  \label{thm:cvc_mod_structure}
  If there exists a connected vertex cover $X$ of $G$ that intersects at least two modules in $\modpartition(G)$, then there exists a connected vertex cover $X'$ of $G$ that is nice and intersects at least two modules in $\modpartition(G)$ with $\cfct(X') \leq \cfct(X)$.
\end{lem}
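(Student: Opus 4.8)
The plan is to start from a connected vertex cover $X$ of $G$ with $|\modprojection_V(X)| \geq 2$ and make it nice by a single module-wise exchange that never increases the cost. I would partition $\modpartition(G)$ into $\mathcal{A} = \{\module : X \cap \module = \emptyset\}$, $\mathcal{B} = \{\module : \emptyset \neq X \cap \module \subsetneq \module\}$, and $\mathcal{C} = \{\module : \module \subseteq X\}$, and set
\[
  X' = \bigcup_{\module \in \mathcal{C}} \module \;\cup\; \bigcup_{\module \in \mathcal{B}} Y_\module .
\]
By construction $X' \cap \module$ equals $\emptyset$, $Y_\module$, or $\module$ according to whether $\module$ lies in $\mathcal{A}$, $\mathcal{B}$, or $\mathcal{C}$, so $X'$ is nice; the work is to verify that $X'$ is a connected vertex cover, that it intersects at least two modules, and that $\cfct(X') \leq \cfct(X)$.

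For the cost bound, only the modules in $\mathcal{B}$ matter, and there $X \cap \module$ is a vertex cover of $G[\module]$ (since $X$ covers every edge of $G[\module]$ and such edges lie inside $\module$); hence $\cfct(Y_\module) \leq \cfct(X \cap \module)$ --- by optimality of $Y_\module$ when $G[\module]$ has an edge, and because $Y_\module = \{v^*_\module\}$ is a single vertex of minimum cost while $X \cap \module \neq \emptyset$ when $G[\module]$ is edgeless. For the vertex-cover property I would argue edge by edge: edges inside a module are covered since modules in $\mathcal{A}$ are edgeless (as $X \cap \module = \emptyset$ and $X$ is a vertex cover), modules in $\mathcal{B}$ retain the vertex cover $Y_\module$, and modules in $\mathcal{C}$ are taken entirely; the remaining edges run between two adjacent modules $\module, \module'$ and form a biclique, so $X$ being a vertex cover forces $\module \subseteq X$ or $\module' \subseteq X$, i.e., $\module \in \mathcal{C}$ or $\module' \in \mathcal{C}$, and since $X'$ still contains one of the two sides in full these edges stay covered. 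Since $Y_\module \neq \emptyset$ for every module, $X' \cap \module \neq \emptyset$ exactly when $X \cap \module \neq \emptyset$, so $\modprojection_V(X') = \modprojection_V(X)$; in particular $X'$ meets at least two modules, and, applying \cref{thm:quotient_connected} with $\pmodule = V$ (which lies in $\modint(G)$ since $|V| \geq 2$), connectivity of $G[X']$ is equivalent to connectivity of $G^q[\modprojection_V(X')] = G^q[\modprojection_V(X)]$, which holds because $G[X]$ is connected.

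I do not expect a real obstacle beyond careful bookkeeping: the exchange must be carried out simultaneously over all modules, so that the biclique argument is applied to the original $X$, where we know precisely which side of each inter-module biclique is fully contained; and one must not forget the edgeless-module case in the cost comparison (this is exactly why $Y_\module$ is defined as a cheapest single vertex there). As an alternative to invoking \cref{thm:quotient_connected} for the connectivity of $G[X']$, one could instead replace the modules in $\mathcal{B}$ one at a time and apply \cref{thm:module_exchange_connected} at each step, but the quotient-graph argument is cleaner because $\modprojection_V$ is unchanged by the whole exchange.
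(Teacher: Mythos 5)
Your proof is correct and matches the paper's argument in substance: the paper performs the same module-wise replacement (partial intersections become $Y_\module$, full modules stay, using that any module not fully contained forces all adjacent modules into $X$), just iteratively one module at a time with \cref{thm:module_exchange_connected} supplying connectivity at each step, whereas you do the exchange simultaneously and get connectivity from \cref{thm:quotient_connected} via the unchanged projection $\modprojection_V(X') = \modprojection_V(X)$. Both routes are sound; yours packages the bookkeeping slightly more cleanly, and you correctly handle the edgeless-module cost case that the paper also treats separately.
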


\begin{proof}
  Let $X$ be the given connected vertex cover. Via exchange arguments, we will see that we can find a nice connected vertex cover with the same cost. Suppose that there is a module $\module \in \modpartition(G)$ such that $G[\module]$ contains no edges and $1 \leq |X \cap \module| < |\module|$. We claim that $X' = (X \setminus \module) \cup \{v^*_\module\}$ is a connected vertex cover with $\cfct(X') \leq \cfct(X)$. For any module $\module' \in \modpartition(G)$ adjacent to $\module$, we must have that $X' \cap \module' = X \cap \module' = \module'$, else there would be an edge between $\module$ and $\module'$ that is not covered by $X$. In particular, all edges incident to $\module$ are already covered by $X \setminus \module = X' \setminus \module$. By \cref{thm:module_exchange_connected}, $X'$ is connected and we have that $\cfct(X') \leq \cfct(X)$ due to the choice of $v^*_\module$. 
  
  If $\module \in \modpartition(G)$ is a module such that $G[\module]$ contains at least one edge, then we consider two cases. If $\cfct(X \cap \module) < \cfct(Y_\module)$, then $X \cap \module$ cannot be a vertex cover of $G[\module]$ and hence $X$ would not be a vertex cover of $G$. If $\cfct(Y_\module) \leq \cfct(X \cap \module) < \cfct(\module)$, then we claim that $X' = (X \setminus \module) \cup Y_\module$ is a connected vertex cover with $\cfct(X') \leq \cfct(X)$. By assumption, we have $\cfct(X') \leq \cfct(X)$. We must have that $X \cap \module \neq \module$, therefore, as before, $X$ and $X'$ must fully contain all modules adjacent to $\module$ to cover all edges leaving $\module$. Since $G[\module]$ contains at least one edge, we have that $Y_\module \neq \emptyset$ and $G[X']$ must be connected by \cref{thm:module_exchange_connected}. 
  
  By repeatedly applying these arguments to $X$, we obtain the claim. \qed
\end{proof}

The next lemma enables us to handle connected vertex covers that are contained in a single module with polynomial-time preprocessing. 

\begin{lem}
 \label{thm:cvc_single_module}
 A vertex set $X \subseteq V$ is a connected vertex cover of $G$ with $X \subseteq \module$ for some module $\module \in \modpartition(G)$ if and only if $X = \module$, all edges of $G$ are incident to $\module$, and $G[\module]$ is connected.
\end{lem}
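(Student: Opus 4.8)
The plan is to prove the two directions of the biconditional separately, with the forward direction being the one requiring a small argument and the backward direction being essentially immediate.

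For the backward direction, suppose $X = \module$, all edges of $G$ are incident to $\module$, and $G[\module]$ is connected. Then $G[X] = G[\module]$ is connected by assumption. To see that $X$ is a vertex cover: every edge of $G$ has an endpoint in $\module = X$ because every edge is incident to $\module$. Hence $X$ is a connected vertex cover contained in $\module$, and $X \subseteq \module$ holds trivially.

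For the forward direction, assume $X \subseteq \module$ is a connected vertex cover of $G$. First I would argue $X = \module$: since $G$ is connected and $|V| \geq 2$, every vertex of $\module$ has a neighbor; for a vertex $v \in \module \setminus X$, any edge incident to $v$ must be covered by $X$, so $v$ has a neighbor in $X \subseteq \module$, but then consider a neighbor $u \notin \module$ of $v$ --- such a $u$ exists only if $\module$ is adjacent to some other module, in which case the module property forces $\module \subseteq N(u)$, so the edge $\{u,v\}$ exists and is not covered since $u \notin X$ and $v \notin X$, a contradiction. If $\module$ is adjacent to no other module and $|V| > |\module|$, then $G$ is disconnected, contradicting connectivity; and if $|V| = |\module|$ then $G = G[\module]$ and again $v$ has a neighbor which must lie in $X$, but then actually we need $v \in X$ to cover... let me instead argue more cleanly: if $v \in \module \setminus X$, every edge at $v$ goes to a vertex in $N(v)$; those in $\module$ are in $X$ fine, but since $G$ is connected on $\geq 2$ vertices and $\module$ is a module, if $\module \neq V$ there is $u \in V \setminus \module$ with $\module \subseteq N(u)$ (any vertex outside adjacent to $\module$; one exists by connectivity), giving edge $\{u,v\}$ with both endpoints outside $X$ --- contradiction; if $\module = V$ then $X \subseteq \module = V$ and $X$ being a vertex cover with $v \notin X$ is consistent only if $v$ is isolated, contradicting connectivity with $|V| \geq 2$. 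Hence no such $v$ exists, i.e.\ $X = \module$. Next, all edges of $G$ are incident to $\module$: any edge $\{a,b\}$ with $a,b \notin \module$ would be uncovered by $X = \module$, so every edge has an endpoint in $\module$. Finally, $G[\module] = G[X]$ is connected since $X$ is a connected vertex cover.

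The main obstacle is the forward direction's first step, namely ruling out the existence of a vertex of $\module$ outside $X$; the argument hinges on combining the connectivity of $G$, the hypothesis $|V| \geq 2$, and the module property (which guarantees that any external vertex adjacent to $\module$ is adjacent to \emph{all} of $\module$). I would streamline this by splitting on whether $\module = V$ or $\module$ is proper, using in the latter case that a proper module in a connected graph with $\geq 2$ vertices has at least one external neighbor which is then complete to $\module$.
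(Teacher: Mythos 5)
Your proof is correct and follows essentially the same route as the paper: the reverse direction is immediate, and for the forward direction the paper likewise uses that $G$ is connected with $|\modpartition(G)|\geq 2$ to obtain a module $\module'$ adjacent to $\module$, so that $X \neq \module$ would leave an edge between $\module$ and $\module'$ uncovered. Your digression on the case $\module = V$ is vacuous (and its handling is actually slightly off, since a vertex cover need not contain $v$ when all neighbors of $v$ are covered), because $\modpartition(G)$ consists of \emph{proper} modules and has at least two parts under the standing assumption $|V|\geq 2$.
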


\begin{proof}
  The reverse direction is trivial. We will show the forward direction.	Since $G$ is connected and $|\modpartition(G)| \geq 2$, there exists a module $\module' \in \modpartition(G)$ adjacent to $\module$. If $X \neq \module$, then there exists an edge between $\module$ and $\module'$ that is not covered by $X$. If there is an edge in $G$ not incident to $\module$, then clearly $X$ cannot cover all edges. Clearly, $G[X] = G[\module]$ must be connected. \qed
\end{proof}

Before going into the main algorithm, we handle the edge case of series nodes. The following lemma shows that there are only a polynomial number of interesting cases for series nodes, hence we can check them by brute force in polynomial time.

\begin{lem}
 \label{thm:cvc_series}
 If $G^q$ is a clique of size at least two, then for any vertex cover $X$ there is some $\module' \in \modpartition(G)$ such that for all other modules $\module' \neq \module \in \modpartition(G)$, we have $X \cap \module = \module$.
\end{lem}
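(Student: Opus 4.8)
The plan is to exploit that a clique quotient means $G$ is a \emph{join} of its modules. Since $G^q = G/\modpartition(G)$ is a clique, every pair of distinct modules $\module_1 \neq \module_2$ in $\modpartition(G)$ is adjacent in the quotient, and by the definition of the quotient graph (together with the module property) this means that \emph{every} vertex of $\module_1$ is adjacent to \emph{every} vertex of $\module_2$. So the first step is just to make this biclique structure between distinct modules explicit.

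The core step is to show that at most one module can fail to be entirely contained in $X$. Suppose toward a contradiction that there are two distinct modules $\module_1 \neq \module_2$ in $\modpartition(G)$ with $X \cap \module_1 \neq \module_1$ and $X \cap \module_2 \neq \module_2$. Then pick $u \in \module_1 \setminus X$ and $v \in \module_2 \setminus X$; by the join structure $\{u,v\} \in E(G)$, yet neither endpoint lies in $X$, contradicting that $X$ is a vertex cover. Hence the collection of modules $\module$ with $X \cap \module \neq \module$ has size at most one.

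Finally I would wrap up the statement: if there is such an exceptional module, take it to be $\module'$; otherwise every module of $\modpartition(G)$ is fully contained in $X$, and we may take $\module'$ to be an arbitrary module, which exists since $|\modpartition(G)| \geq 2$ (as noted in the setup of \cref{sec:modtw_cvc_algo}). In either case, for all $\module \in \modpartition(G)$ with $\module \neq \module'$ we have $X \cap \module = \module$, as required.

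There is no real obstacle here; the argument is immediate once the join structure of series (clique) nodes is spelled out. The only minor points to be careful about are handling the degenerate case where $X$ already contains every module (resolved by choosing $\module'$ arbitrarily, which needs $|\modpartition(G)| \geq 2$), and invoking the quotient definition to obtain the full biclique between adjacent modules rather than merely a single crossing edge.
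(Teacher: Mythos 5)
Your argument is correct and is essentially identical to the paper's proof: both derive a contradiction from two modules each missing a vertex from $X$, since the clique quotient forces an uncovered edge between $\module_1 \setminus X$ and $\module_2 \setminus X$. Your write-up is slightly more explicit about the biclique structure and the degenerate case where every module is fully contained in $X$, but the substance is the same.
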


\begin{proof}
 Suppose there are two modules $\module_1 \neq \module_2 \in \modpartition(G)$ such that $X \cap \module_1 \neq \module_1$ and $X \cap \module_2 \neq \module_2$. These modules are adjacent, because $G^q$ is a clique< and thus $X$ cannot be a vertex cover, since there exists an uncovered edge between $\module_1 \setminus X$ and $\module_2 \setminus X$. \qed
\end{proof}
 
\subsection{Dynamic Programming for Prime Nodes}
 
 It remains to handle the case that $G$ is a prime node. Due to \cref{thm:cvc_single_module}, we only need to look for connected vertex covers that intersect at least two modules in $\modpartition(G)$ now. Hence, we can make use of \cref{thm:cvc_mod_structure} and \cref{thm:hom_cut}. We are given a tree decomposition $(\TT^q, (\bag^q_t)_{t \in V(\TT^q)})$ of the quotient graph $G^q := G^q_V = G / \modpartition(G)$ of width $k$ and by \cref{thm:very_nice_tree_decomposition}, we can assume that it is a very nice tree decomposition.
 
 To solve \CVC on $G$, we perform dynamic programming along the tree decomposition $\TT^q$ using the cut-and-count-technique. \cref{thm:hom_cut} allows us to work directly on the quotient graph. We begin by presenting the cut-and-count-formulation of the problem. For any subgraph $G'$ of $G$, we define the \emph{relaxed solutions} $\rsols(G') = \{X \subseteq V(G') \sep X \text{ is a nice vertex cover of $G'$}\}$ and \emph{the cut solutions} $\csols(G') = \{(X, (X_L, X_R)) \in \homcuts{V}(G') \sep X \in \rsols(G')\}$.

 For the isolation lemma, cf.\ \cref{thm:isolation}, we sample a weight function $\wfct\colon V \rightarrow [2n]$ uniformly at random. We will need to track the cost $\cfct(X)$, the weight $\wfct(X)$, and the number of intersected modules $|\modprojection_V(X)|$ of each partial solution $(X, (X_L, X_R))$. Accordingly, we define $\rsols^{\ctarget, \wtarget, \modtarget}(G') = \{X \in \rsols(G') \sep \cfct(X) = \ctarget, \wfct(X) = \wtarget, |\modprojection_V(X)| = \modtarget\}$ and $\csols^{\ctarget, \wtarget, \modtarget}(G') = \{(X, (X_L, X_R)) \in \csols(G') \sep X \in \rsols^{\ctarget, \wtarget, \modtarget}(G')\}$ for all subgraphs $G'$ of $G$, $\ctarget \in [0, \cfct(V)], \wtarget \in [0, \wfct(V)], \modtarget \in [0, |\modpartition(G)|]$.
 
 As discussed, to every node $t \in V(\TT^q)$ we associate a subgraph $G^q_t = (V^q_t, E^q_t)$ of $G^q$ in the standard way, which in turn gives rise to a subgraph $G_t = (V_t, E_t)$ of $G$. The subgraphs $G_t$ grow module by module and are considered by the dynamic program, hence we define $\rsols^{\ctarget, \wtarget, \modtarget}_t = \rsols^{\ctarget, \wtarget, \modtarget}(G_t)$ and $\csols^{\ctarget, \wtarget, \modtarget}_t = \csols^{\ctarget, \wtarget, \modtarget}(G_t)$ for all $\ctarget$, $\wtarget$, and $\modtarget$. We will compute the sizes of the sets $\csols^{\ctarget, \wtarget, \modtarget}_t$ by dynamic programming over the tree decomposition $\TT^q$, but to do so we need to parameterize the partial solutions by their state on the current bag.
 
 Disregarding the side of the cut, \cref{thm:cvc_mod_structure} tells us that each module $\module \in \modpartition(G)$ has one of three possible states for some $X \in \rsols^{\ctarget, \wtarget, \modtarget}_t$, namely $X \cap \module \in \{\emptyset, Y_\module, \module\}$. Since we are considering homogeneous cuts there are two possibilities if $X \cap \module \neq \emptyset$; $X \cap \module$ is contained in the left side of the cut or in the right side. Thus, there are five total choices. We define $\states = \{\zero, \one_L, \one_R, \all_L, \all_R\}$ with $\one$ denoting that the partial solution contains at least one vertex, but not all, from the module and with $\all$ denoting that the partial solution contains all vertices of the module; the subscript denotes the side of the cut.
 
 A function of the form $f\colon \bag^q_t \rightarrow \states$ is called \emph{$t$-signature}. For every node $t \in V(\TT^q)$, cost $\ctarget \in [0, \cfct(V)]$, weight $\wtarget \in [0, \wfct(V)]$, number of modules $\modtarget \in [0, |\modpartition(G)|]$, and $t$-signature $f$, the family $\dpsols^{\ctarget, \wtarget, \modtarget}_t(f)$ consists of all $(X, (X_L, X_R)) \in \csols^{\ctarget, \wtarget, \modtarget}_t$ that satisfy for all $v^q_\module \in \bag^q_t$:
 \begin{align*}
	 f(v^q_\module) = \zero_{\phantom{L}} & \leftrightarrow _{\phantom{L}}X \cap \module = \emptyset, \\
	 f(v^q_\module) = \one_L & \leftrightarrow X_L \cap \module = Y_\module \neq \module, &
	 f(v^q_\module) = \one_R & \leftrightarrow X_R \cap \module = Y_\module \neq \module, \\
	 f(v^q_\module) = \all_L & \leftrightarrow X_L \cap \module = \module, &
	 f(v^q_\module) = \all_R & \leftrightarrow X_R \cap \module = \module.
 \end{align*}
 Recall that by considering homogeneous cuts, we have that $X_L \cap \module = \emptyset$ or $X_R \cap \module = \emptyset$ for every module $\module \in \modpartition(G)$. We use the condition $Y_\module \neq \module$ for the states $\one_L$ and $\one_R$ to ensure a well-defined state for modules of size 1. Note that the sets $\dpsols^{\ctarget, \wtarget, \modtarget}_t(f)$, ranging over $f$, partition $\csols^{\ctarget, \wtarget, \modtarget}_t$ due to considering nice vertex covers and homogeneous cuts.

 Our goal is to compute the size of $\dpsols^{\ctarget, \wtarget, \modtarget}_{\rvertex}(\emptyset) = \csols^{\ctarget, \wtarget, \modtarget}_{\rvertex} = \csols^{\ctarget, \wtarget, \modtarget}(G)$, where $\rvertex$ is the root vertex of the tree decomposition $\TT^q$, modulo 4 for all $\ctarget$, $\wtarget$, $\modtarget$. By \cref{thm:hom_cut}, there is a connected vertex cover $X$ of $G$ with $\cfct(X) = \ctarget$ and $\wfct(X) = \wtarget$ if the result is nonzero. 

 We present the recurrences for the various bag types to compute $\DP^{\ctarget, \wtarget, \modtarget}_t(f) = |\dpsols^{\ctarget, \wtarget, \modtarget}_t(f)|$; if not stated otherwise, then $t \in V(\TT^q)$, $\ctarget \in [0, \cfct(V)]$, $\wtarget \in [0, \wfct(V)]$, $\modtarget \in [0, |\modpartition(G)|]$, and $f$ is a $t$-signature. We set $\DP^{\ctarget, \wtarget, \modtarget}_t(f) = 0$ whenever at least one of $\ctarget$, $\wtarget$, or $\modtarget$ is negative.
 
\subsubsection*{Leaf bag.} We have that $\bag^q_t = \bag_t = \emptyset$ and $t$ has no children. The only possible $t$-signature is $\emptyset$ and the only possible partial solution is $(\emptyset, (\emptyset, \emptyset))$. Hence, we only need to check the tracker values:
\begin{equation*}
	\DP_t^{\ctarget, \wtarget, \modtarget}(\emptyset) = [\ctarget = 0] [\wtarget = 0] [\modtarget = 0].
\end{equation*} 

\subsubsection*{Introduce vertex bag.} We have $\bag^q_t = \bag^q_s \cup \{v^q_\module\}$, where $s \in V(\TT^q)$ is the only child of $t$ and $v^q_\module \notin \bag^q_s$. Hence, $\bag_t = \bag_s \cup \module$. We have to consider all possible interactions of a partial solution with $\module$, since we are considering nice vertex covers these interactions are quite restricted. To formulate the recurrence, we let, as an exceptional case, $f$ be an $s$-signature here and not a $t$-signature. Since no edges of the quotient graph $G^q$ incident to $v^q_\module$ are introduced yet, we only have to check some edge cases and update the trackers when introducing $v^q_\module$:
\begin{equation*}
\begin{array}{lcll}
	\DP^{\ctarget, \wtarget, \modtarget}_t(f[v^q_\module \mapsto \zero]) & = & [G[\module] \text{ is edgeless}] & \DP^{\ctarget, \wtarget, \modtarget}_s(f), \\
	\DP^{\ctarget, \wtarget, \modtarget}_t(f[v^q_\module \mapsto \one_L]) & = & [|\module| > 1] & \DP^{\ctarget - \cfct(Y_\module), \wtarget - \wfct(Y_\module), \modtarget - 1}_s(f), \\
	\DP^{\ctarget, \wtarget, \modtarget}_t(f[v^q_\module \mapsto \one_R]) & = & [|\module| > 1] & \DP^{\ctarget - \cfct(Y_\module), \wtarget - \wfct(Y_\module), \modtarget - 1}_s(f), \\
	\DP^{\ctarget, \wtarget, \modtarget}_t(f[v^q_\module \mapsto \all_L]) & = & & \DP^{\ctarget - \cfct(\module), \wtarget - \wfct(\module), \modtarget - 1}_s(f), \\
	\DP^{\ctarget, \wtarget, \modtarget}_t(f[v^q_\module \mapsto \all_R]) & = & & \DP^{\ctarget - \cfct(\module), \wtarget - \wfct(\module), \modtarget - 1}_s(f). \\
\end{array}
\end{equation*}

\subsubsection*{Introduce edge bag.} Let $\{v^q_{\module_1}, v^q_{\module_2}\}$ denote the introduced edge. We have that $\{v^q_{\module_1}, v^q_{\module_2}\} \subseteq \bag^q_t = \bag^q_s$. The edge $\{v^q_{\module_1}, v^q_{\module_2}\}$ corresponds to adding a join between the modules $\module_1$ and $\module_2$. We need to filter all solutions whose states at $\module_1$ and $\module_2$ are not consistent with $\module_1$ and $\module_2$ being adjacent. There are essentially two possible reasons: either not all edges between $\module_1$ and $\module_2$ are covered, or the introduced edges go across the homogeneous cut. We implement this via the helper function $\cons\colon \states \times \states \rightarrow \{0,1\}$ which is defined by $\cons(\state_1, \state_2) = [\{\state_1, \state_2\} \cap \{\all_L, \all_R\} \neq \emptyset][\state_1 \in \{\one_L, \all_L\} \rightarrow \state_2 \notin \{\one_R, \all_R\}][\state_1 \in \{\one_R, \all_R\} \rightarrow \state_2 \notin \{\one_L, \all_L\}]$ or, equivalently, the following table:
\begin{equation*}
	\begin{array}{l|ccccc}
	\cons  & \zero & \one_L & \one_R & \all_L & \all_R \\
	\hline
	\zero  & 0 & 0 & 0 & 1 & 1 \\
	\one_L & 0 & 0 & 0 & 1 & 0 \\
	\one_R & 0 & 0 & 0 & 0 & 1 \\
	\all_L & 1 & 1 & 0 & 1 & 0 \\
	\all_R & 1 & 0 & 1 & 0 & 1 
	\end{array}
\end{equation*}
The recurrence is then simply given by 
\begin{equation*}
	\DP^{\ctarget, \wtarget, \modtarget}_t(f) = \cons(f(v^q_{\module_1}), f(v^q_{\module_2})) \DP^{\ctarget, \wtarget, \modtarget}_s(f).
\end{equation*}

\subsubsection*{Forget vertex bag.} We have that $\bag^q_t = \bag^q_s \setminus \{v^q_\module\}$, where $v^q_\module \in \bag^q_s$ and $s \in V(\TT^q)$ is the only child of $t$. Here, we only need to forget the state at $v^q_\module$ and accumulate the contributions from the different states $\qvertex$ could assume, as the states are disjoint no overcounting happens:
\begin{equation*}
	\DP^{\ctarget, \wtarget, \modtarget}_t(f) = \sum_{\state \in \states} \DP^{\ctarget, \wtarget, \modtarget}_s(f[v \mapsto \state]).
\end{equation*}

\subsubsection*{Join bag.} We have $\bag^q_t = \bag^q_{s_1} = \bag^q_{s_2}$, where $s_1, s_2 \in V(\TT^q)$ are the children of $t$. Two partial solutions, one at $s_1$, and the other at $s_2$, can be combined when the states agree on all $v^q_\module \in \bag^q_t$. Since we update the trackers already at introduce vertex bags, we need to take care that the values of the modules in the bag are not counted twice. For this sake, define $S^f = \bigcup_{v^q_\module \in f^{-1}(\{\one_L, \one_R\})} Y_\module \cup \bigcup_{v^q_\module \in f^{-1}(\{\all_L, \all_R\})} \module$ for all $t$-signatures $f$. This definition satisfies $X \cap \bag_t = S^f$ for all $(X, (X_L, X_R)) \in \dpsols^{\ctarget, \wtarget, \modtarget}(f)$. Then, the recurrence is given by
\begin{equation*}
	\DP_t^{\ctarget, \wtarget, \modtarget}(f) = \sum_{\substack{\ctarget_1 + \ctarget_2 = \ctarget + \cfct(S^f) \\ \wtarget_1 + \wtarget_2 = \wtarget + \wfct(S^f)}} \sum_{\modtarget_1 + \modtarget_2 = \modtarget + (|\bag^q_t| - f^{-1}(\zero))} \DP_{s_1}^{\ctarget_1, \wtarget_1, \modtarget_1}(f) \DP_{s_2}^{\ctarget_2, \wtarget_2, \modtarget_2}(f).
\end{equation*}

\begin{lem}\label{thm:cvc_mtw_prime}
	If $G^q$ is prime, then there exists a Monte-Carlo algorithm that, given a tree decomposition for $G^q$ of width at most $k$ and the sets $Y_\module$ for all $\module \in \modpartition(G)$, determines whether there is a connected vertex cover $X$ of $G$ with $\cfct(X) \leq \budget$ intersecting at least two modules of $\modpartition(G)$ in time $\Oh^*(5^k)$. The algorithm cannot give false positives and may give false negatives with probability at most $1/2$.
\end{lem}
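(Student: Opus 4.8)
The plan is to run the cut-and-count dynamic program specified by the recurrences above and wrap it with the isolation lemma. Concretely, I would first sample a weight function $\wfct\colon V \to [2n]$ uniformly at random, then evaluate the recurrences bottom-up along $\TT^q$ to obtain $\DP^{\ctarget,\wtarget,\modtarget}_{\rvertex}(\emptyset) = |\csols^{\ctarget,\wtarget,\modtarget}(G)|$ modulo $4$ for all $\ctarget \in [0,\budget]$, $\wtarget \in [0,\wfct(V)]$, and $\modtarget \in [2, |\modpartition(G)|]$, and answer \textbf{yes} if and only if at least one of these values is nonzero modulo $4$. The restriction to $\modtarget \ge 2$ is essential since it is exactly what licenses the use of \cref{thm:hom_cut}.

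The first thing to verify is that the recurrences indeed compute $\DP^{\ctarget,\wtarget,\modtarget}_t(f) = |\dpsols^{\ctarget,\wtarget,\modtarget}_t(f)|$. This I would prove by induction along $\TT^q$, using that the families $\dpsols^{\ctarget,\wtarget,\modtarget}_t(f)$, ranging over $t$-signatures $f$, partition $\csols^{\ctarget,\wtarget,\modtarget}_t$ because we only consider nice vertex covers and homogeneous cuts. The leaf, forget-vertex, and introduce-vertex cases are straightforward bookkeeping of the trackers; the introduce-edge case reduces to checking that $\cons$ correctly characterizes which pairs of module states survive a join (all crossing edges must be covered, and no crossing edge may run across the homogeneous cut); and the join case needs the correction terms $\cfct(S^f)$, $\wfct(S^f)$, and $|\bag^q_t| - |f^{-1}(\zero)|$ precisely so that the contribution to cost, weight, and module count of the modules currently in the bag — which both children already record — is not counted twice.

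Next I would establish the mod-$4$ identity. Fix $\ctarget$, $\wtarget$, and $\modtarget \ge 2$; then every $X \in \rsols^{\ctarget,\wtarget,\modtarget}(G)$ has $|\modprojection_V(X)| = \modtarget \ge 2$, so \cref{thm:hom_cut} gives that the number of $V$-homogeneous consistent cuts of $G[X]$ is $2^{\cc(G^q[\modprojection_V(X)])}$, which is $\equiv 2 \pmod 4$ when $G[X]$ is connected and $\equiv 0 \pmod 4$ otherwise. Summing over $X$ yields $|\csols^{\ctarget,\wtarget,\modtarget}(G)| \equiv 2 \cdot |\{X \in \rsols^{\ctarget,\wtarget,\modtarget}(G) : G[X] \text{ connected}\}| \pmod 4$. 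A nonzero value therefore certifies an actual connected vertex cover of cost $\ctarget \le \budget$ intersecting at least two modules, so the algorithm never produces false positives. For the converse, if some connected vertex cover of cost $\le \budget$ intersects at least two modules, then by \cref{thm:cvc_mod_structure} the family $\family$ of \emph{nice} connected vertex covers of cost $\le \budget$ intersecting at least two modules is nonempty; by \cref{thm:isolation} with $N = 2n$, with probability at least $1 - n/(2n) = 1/2$ the function $\wfct$ isolates a unique minimizer $X^* \in \family$. Since any $X' \in \family$ with $\wfct(X') = \wfct(X^*)$ equals $X^*$, the set counted above for the triple $(\cfct(X^*), \wfct(X^*), |\modprojection_V(X^*)|)$ has size exactly $1$, so that entry is $\equiv 2 \pmod 4$ and the algorithm answers \textbf{yes}; hence the false-negative probability is at most $1/2$.

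For the running time, each bag $\bag^q_t$ contains at most $k+1$ vertices of $G^q$, so there are at most $5^{k+1}$ signatures per node; the trackers $\ctarget, \wtarget, \modtarget$ range over polynomially many values because costs are polynomially bounded by assumption, $\wfct(V) \le 2n^2$, and $\modtarget \le n$; and each recurrence, including the convolution in the join step, incurs only a polynomial factor per signature. Over the polynomially many nodes of the very nice tree decomposition $\TT^q$ this totals $\Oh^*(5^k)$. I expect the main obstacle to be the careful verification of the join recurrence — getting the double-counting corrections $\cfct(S^f)$, $\wfct(S^f)$, and $|\bag^q_t| - |f^{-1}(\zero)|$ exactly right — together with checking that the $\modtarget$-tracker is updated consistently (only at introduce-vertex bags) so that this correction is valid and so that the reduction modulo $4$ via \cref{thm:hom_cut} applies uniformly for all $\modtarget \ge 2$.
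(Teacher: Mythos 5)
Your proposal is correct and follows essentially the same route as the paper: sample a weight function, run the stated recurrences to get $|\csols^{\ctarget,\wtarget,\modtarget}(G)| \bmod 4$, use \cref{thm:hom_cut} for the cancellation of disconnected candidates, and invoke the isolation lemma (you isolate over all nice solutions of cost at most $\budget$ rather than only the optimum ones as the paper does, but both yield a unique witness in some slice and hence a nonzero entry). The paper's proof merely spells out the per-bag-type bijections for the recurrence correctness in more detail than your sketch, but you identify the same key points (the $\cons$ filter at introduce-edge bags and the double-counting corrections at join bags).
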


\begin{proof}
	The algorithm samples a weight function $\wfct\colon V \rightarrow [2n]$ uniformly at random. Using the recurrences, we compute the values $\DP^{\ctarget, \wtarget, \modtarget}_{\rvertex}(\emptyset)$ modulo 4 for all $\ctarget \in [0, \cfct(V)]$, $\wtarget \in [0, \wfct(V)]$, $\modtarget \in [2, |\modpartition(G)|]$. Setting $\sols^{\ctarget, \wtarget, \modtarget} = \{X \in \rsols^{\ctarget, \wtarget, \modtarget}(G) \sep G[X] \text{ is connected}\}$, we have that $|\csols^{\ctarget, \wtarget, \modtarget}(G)| = |\csols^{\ctarget, \wtarget, \modtarget}_{\rvertex}| = \DP^{\ctarget, \wtarget, \modtarget}_{\rvertex}(\emptyset) = \sum_{X \in \rsols^{\ctarget, \wtarget, \modtarget}(G)} 2^{\cc(G[X])} \equiv_4 2|\sols^{\ctarget, \wtarget, \modtarget}|$ by \cref{thm:hom_cut}. By \cref{thm:isolation}, $\wfct$ isolates the set of optimum nice connected vertex covers intersecting at least two modules of $\modpartition(G)$ with probability at least $1/2$. If $\ctarget$ denotes the optimum value, then there exist choices of $\wtarget$ and $\modtarget$ such that $|\sols^{\ctarget, \wtarget, \modtarget}| = 1$ and hence $\DP^{\ctarget, \wtarget, \modtarget}_{\rvertex}(\emptyset) \not\equiv_4 0$. The algorithm searches for the smallest such $\ctarget$ and returns true if $\ctarget \leq \budget$. Note that if a connected vertex cover $X$ intersecting at least two modules with $\cfct(X) \leq \budget$ exists, then so does a nice one by \cref{thm:cvc_mod_structure}. If $\ctarget > \budget$, the algorithm returns false.
 
 It remains to prove the correctness of the provided recurrences and the running time of the algorithm. We first consider the running time. Since a very nice tree decomposition has polynomially many nodes and since the cost function $\cfct$ is assumed to be polynomially bounded, there are $\Oh^*(5^k)$ table entries to compute. Furthermore, it is easy to see that every recurrence can be computed in polynomial time, hence the running time of the algorithm follows. We proceed by proving the correctness of the recurrences.

 If $t$ is a leaf node, then we have that $V_t = \emptyset$ and hence $\csols_t^{\ctarget, \wtarget, \modtarget}$ can contain at most $(\emptyset, (\emptyset, \emptyset))$, and we have that $\cfct(\emptyset) = \wfct(\emptyset) = |\modprojection_V(\emptyset)| = 0$, which is checked by the recurrence.
 
 If $t$ is an introduce vertex node introducing $v^q_\module$, consider $(X,(X_L, X_R)) \in \dpsols_t^{\ctarget, \wtarget, \modtarget}(f[\qvertex \mapsto \state])$, where $f$ is some $s$-signature and $\state \in \states$. We have that $(X \setminus \module, (X_L \setminus \module, X_R \setminus \module)) \in \dpsols_s^{\ctarget', \wtarget', \modtarget'}(f)$ for $\ctarget' = \cfct(X \setminus \module)$, $\wtarget' = \wfct(X \setminus \module)$, $\modtarget' = |\modprojection_V(X \setminus \module)|$. Depending on $\state$, we argue that this sets up a bijection between $\dpsols_t^{\ctarget, \wtarget, \modtarget}(f[\qvertex \mapsto \state])$ and $\dpsols_s^{\ctarget', \wtarget', \modtarget'}(f)$. The injectivity of this map follows in general by observing that $\state$ completely determines the interaction of $(X, (X_L, X_R))$ with $\module$.
 \begin{itemize}
  \item $\state = \zero$: We have $X \cap \module = \emptyset$, which implies that $G[\module]$ does not contain an edge, as $X$ cannot be a vertex cover of $G_t$ otherwise. In this case, the mapping is essentially the identity mapping, hence the trackers do not change and it is clearly bijective. 
  \item $\state = \one_L$: We have $X \cap \module = X_L \cap \module = Y_\module \neq \module$ and $X_R \cap \module = \emptyset$. Due to $\emptyset \neq Y_\module \neq \module$, we have that $|\module| > 1$. As $X \cap \module = Y_\module$, we update the trackers according to $Y_\module$. Note that any $(X', (X'_L, X'_R)) \in \dpsols_s^{\ctarget', \wtarget', \modtarget'}(f)$ is hit by $(X' \cup Y_\module, (X'_L \cup Y_\module, X'_R)) \in \dpsols_t^{\ctarget, \wtarget, \modtarget}(f[\qvertex \mapsto \state])$, which relies on the fact that no edges incident to $\qvertex$ have been introduced yet, so that neither the vertex cover property nor consistent cut property can be violated when extending by $Y_\module$.
  \item $\state = \one_R$: analogous to the previous case.
  \item $\state = \all_L$: We have $X \cap \module = X_L \cap \module = \module$ and $X_R \cap \module = \emptyset$. Hence, we update the trackers according to $\module$. For surjectivity, we see that $(X', (X'_L, X'_R)) \in \dpsols_s^{\ctarget', \wtarget', \modtarget'}(f)$ is hit by $(X' \cup \module, (X'_L \cup \module, X'_R)) \in \dpsols_t^{\ctarget, \wtarget, \modtarget}(f[\qvertex \mapsto \state])$, which again relies on the fact that no edges incident to $\qvertex$ have been introduced yet.
  \item $\state = \all_R$: analogous to the previous case.
 \end{itemize}
 
 If $t$ is an introduce edge bag introducing edge $\{v^q_{\module_1}, v^q_{\module_2}\}$, then $\csols_t^{\ctarget, \wtarget, \modtarget} \subseteq \csols_s^{\ctarget, \wtarget, \modtarget}$ and we need to filter out all $(X,(X_L,X_R)) \in \csols_s^{\ctarget, \wtarget, \modtarget} \setminus \csols_t^{\ctarget, \wtarget, \modtarget}$. A partial solution $(X, (X_L, X_R)) \in \csols_s^{\ctarget, \wtarget, \modtarget}$ has to be filtered if and only if an edge between $\module_1$ and $\module_2$ is not covered or an edge between $X \cap \module_1$ and $X \cap \module_2$ connects both sides of the homogeneous cut. These criteria are implemented by the function $\cons$; the first case corresponds to $\cons(\state_1, \state_2) = \zero$ for all $\state_1, \state_2 \in \{\zero, \one_L, \one_R\}$ and the second case corresponds to $\cons(\state_1, \state_2) = \zero$ whenever $\state_1 \neq \zero \neq \state_2$ and the cut subscript of $\state_1$ and $\state_2$ disagrees.
 
 If $t$ is a forget vertex bag forgetting $v^q_\module$, then $\csols_t^{\ctarget, \wtarget, \modtarget} = \csols_s^{\ctarget, \wtarget, \modtarget}$ and every $(X,(X_L,X_R)) \in \csols_t^{\ctarget, \wtarget, \modtarget}$ is counted by some $\DP_s^{\ctarget, \wtarget, \modtarget}(f[v^q_\module \mapsto \mathbf{s}])$ with $\mathbf{s}$ being the appropriate state and the states are disjoint as already noted.
 
 If $t$ is a join bag, then $V_t = V_{s_1} \cup V_{s_2}$ and $\bag_t = \bag_{s_1} = \bag_{s_2} = V_{s_1} \cap V_{s_2}$. Since $G_{s_1}$ and $G_{s_2}$ are subgraphs of $G_t$, any $(X, (X_L, X_R)) \in \dpsols_t^{\ctarget, \wtarget, \modtarget}(f)$ splits into $(X^1, (X^1_L, X^1_R)) \in \dpsols_{s_1}^{\ctarget_1, \wtarget_1, \modtarget_1}(f)$ and $(X^2, (X^2_L, X^2_R)) \in \dpsols_{s_2}^{\ctarget_2, \wtarget_2, \modtarget_2}(f)$, where $X^i = X \cap V_{s_i}$, $X^i_L = X_L \cap V_{s_i}$, $X^i_R = X_R \cap V_{s_i}$ for $i \in [2]$. Since $S^f = X\cap \bag_t = X^1\cap \bag_t = X^2\cap \bag_t$, some overcounting occurs when adding up e.g.\ the costs $\ctarget_1$ and $\ctarget_2$.  This is accounted for by the equation $\ctarget_1 + \ctarget_2 = \ctarget + \cfct(S^f)$ and similarly for the weights and the number of modules hit by $X$. Vice versa, the union of the graphs $G_{s_1}$ and $G_{s_2}$ yields $G_t$, and any $(X^1, (X^1_L, X^1_R)) \in \dpsols_{s_1}^{\ctarget_1, \wtarget_1, \modtarget_1}(f)$ and $(X^2, (X^2_L, X^2_R)) \in \dpsols_{s_2}^{\ctarget_2, \wtarget_2, \modtarget_2}(f)$ must agree on $\bag_t$, since the behavior on $\bag_t$ is completely specified by $f$. Therefore, one can argue that $(X^1 \cup X^2, (X^1_L \cup X^2_L, X^1_R \cup X^2_R)) \in \dpsols_t^{\ctarget, \wtarget, \modtarget}(f)$. \qed
\end{proof}

Putting everything together, we obtain the following algorithm.

\begin{thm}
 \label{thm:cvc_mtw_algo}
 There exists a Monte-Carlo algorithm that given a tree decomposition of width at most $k$ for every prime quotient graph $H \in \primefamily(G)$, solves \CVC in time $\Oh^*(5^k)$. The algorithm cannot give false positives and may give false negatives with probability at most $1/2$.
\end{thm}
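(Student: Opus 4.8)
The plan is to combine the structural reductions of this section --- \cref{thm:cvc_single_module}, \cref{thm:cvc_mod_structure}, and \cref{thm:cvc_series} --- with the dynamic program of \cref{thm:cvc_mtw_prime} and the \VC-subroutine \cref{thm:is_mtw_algo}. As a preprocessing step, for each $\module \in \modpartition(G)$ we compute the sets $Y_\module$: if $G[\module]$ has an edge we run \cref{thm:is_mtw_algo} on $G[\module]$ to obtain an optimum (possibly disconnected) vertex cover --- this uses the supplied tree decompositions of the prime quotient graphs of $G[\module]$, which form a subfamily of $\primefamily(G)$, and costs $\Oh^*(2^k)$ in total --- and otherwise we set $Y_\module = \{v^*_\module\}$. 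Since $G$ is connected with at least two vertices, $G^q = G/\modpartition(G)$ is not edgeless, so by \cref{thm:gallai_modular} it is either a series node or a prime node.

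I would then split the search for an optimum connected vertex cover $X$ of $G$ into the case $X \subseteq \module$ for some $\module \in \modpartition(G)$ and the case $|\modprojection_V(X)| \geq 2$. The first case is handled in polynomial time: by \cref{thm:cvc_single_module}, $X$ must equal $\module$ with all edges of $G$ incident to $\module$ and $G[\module]$ connected, so we simply record $\cfct(\module)$ for every $\module \in \modpartition(G)$ passing these tests. In the second case, if $G^q$ is a prime node we invoke \cref{thm:cvc_mtw_prime} with the given width-$k$ tree decomposition of $G^q$ (valid since $\tw(G^q) \leq \modtw(G) \leq k$ as $G^q \in \primefamily(G)$) and the precomputed $Y_\module$; by \cref{thm:cvc_mod_structure} restricting to nice vertex covers loses nothing, so this returns the optimum cost of a connected vertex cover hitting at least two modules in time $\Oh^*(5^k)$. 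If instead $G^q$ is a series node, then by \cref{thm:cvc_series} every such $X$ fully contains all but one module $\module'$, and combining this with \cref{thm:cvc_mod_structure} it suffices to try, for each $\module' \in \modpartition(G)$, the three nice possibilities $X \cap \module' \in \{\emptyset, Y_{\module'}, \module'\}$ together with all other modules full, each checkable for the vertex-cover property, for connectivity (using that $\module'$ is joined to every other module), and for cost in polynomial time. The algorithm answers yes iff the minimum over all recorded candidate costs is at most $\budget$.

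The error and running-time analysis is then immediate: preprocessing is $\Oh^*(2^k)$, the single-module and series-node branches are polynomial given the $Y_\module$, and \cref{thm:cvc_mtw_prime} contributes $\Oh^*(5^k)$ and is invoked at most once, so its one-sided-error guarantee carries over verbatim --- no false positives, and false negatives with probability at most $1/2$.

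I expect the main difficulty to lie not in any single deep argument but in verifying that the case distinction is exhaustive and that the polynomial-time branches really find every relevant optimum --- in particular the degenerate situations $|\modpartition(G)| = 2$, modules $\module$ with $G[\module]$ edgeless, and modules that are themselves disconnected, where the correctness of the assembled series-node candidates relies on the join structure of a series node and on \cref{thm:module_exchange_connected}, exactly as already exploited in \cref{thm:cvc_mod_structure}. Everything else has been prepared by the preceding lemmas, so the theorem follows by assembling them.
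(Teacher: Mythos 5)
Your proposal is correct and follows essentially the same route as the paper's proof: precompute the sets $Y_\module$ via \cref{thm:is_mtw_algo}, dispose of single-module solutions with \cref{thm:cvc_single_module}, handle the series case in polynomial time via \cref{thm:cvc_series} and \cref{thm:cvc_mod_structure}, and invoke \cref{thm:cvc_mtw_prime} exactly once in the prime case, so the one-sided error bound carries over. The only cosmetic difference is that the paper also dispatches the trivial case $|V(G)|=1$ explicitly, which your standing assumption $|V(G)|\geq 2$ renders moot.
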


\begin{proof}
 If $|V(G)| = 1$, then $\emptyset$ is a connected vertex cover and we can always answer true. Otherwise, we first compute the sets $Y_\module$ for all $\module \in \modpartition(G)$ in time $\Oh^*(2^k)$ using \cref{thm:is_mtw_algo}. Using \cref{thm:cvc_single_module}, we first check in polynomial time if there is any connected vertex cover $X$ of $G$ contained in a single module with $\cfct(X) \leq \budget$. If yes, then we return true. Otherwise, we will proceed based on the node type of $V(G)$ in the modular decomposition of $G$.
  
 If $V(G)$ is a parallel node, i.e., $G^q$ is an independent set of size at least two, then $G$ cannot be connected, contradicting our assumption. If $V(G)$ is a series node, i.e., $G^q$ is a clique of size at least two, then we solve the problem in polynomial time using \cref{thm:cvc_mod_structure} and \cref{thm:cvc_series}, which tell us that there only $3 |\modpartition(G)|$ possible solutions to consider. 

 If $G^q$ is prime, then it remains to search for connected vertex covers intersecting at least two modules and hence we can invoke \cref{thm:cvc_mtw_prime}. This completes the proof. \qed
\end{proof}

Note that \cref{thm:cvc_mtw_algo} gets a tree decomposition for \emph{every} quotient graph as input, whereas \cref{thm:cvc_mtw_prime} only requires a tree decomposition for the topmost quotient graph. This is due to the fact that the algorithm in \cref{thm:is_mtw_algo} to compute the vertex cover $Y_\module$ of $G[\module]$ for every $\module \in \modtree(G)$ requires a decomposition for every quotient graph, but the vertex covers are enough information to enable us to solve \CVC by just considering the topmost quotient graph.

\section{Feedback Vertex Set Algorithm}
\label{sec:modtw_fvs_algo}

\newcommand{\optfamily}{\mathcal{F}_{opt}}

\newcommand{\vertex}{\mathbf{v}}
\newcommand{\indset}{\mathbf{is}}
\newcommand{\indfor}{\mathbf{if}}

\newcommand{\vtarget}{{\overline{v}}}
\newcommand{\etarget}{{\overline{e}}}
\newcommand{\markers}{D}
\newcommand{\mtarget}{{\overline{d}}}

\newcommand{\req}{\mathbf{req}}

\newcommand{\marking}{\varphi}

\newcommand{\twoind}{{\two_{\mathcal{I}}}}
\newcommand{\twoedge}{{\two_{\mathcal{E}}}}

The cut-and-count-technique applies more naturally to the dual problem \IF instead of \FVS, so we choose to study the dual problem. An instance of \IF consists of a graph $G = (V, E)$, and a budget $\budget \in \NN$, and the task is to decide whether there exists a vertex set $X \subseteq V$ with $|X| \geq \budget$ such that $G[X]$ is a forest. As our algorithm is quite technical, we only consider the case of unit costs here to reduce the amount of technical details. 

For \CVC, it was sufficient to essentially only look at the first quotient graph, because we did not have to compute \emph{connected} vertex covers for the subproblems, only usual vertex covers. However, for \IF this is not the case; here, we do need to compute an induced forest in each module $\module \in \modtree(G)$. This essentially means that we need a \emph{nested} dynamic programming algorithm; one \emph{outer dynamic program (outer DP)} along the modular decomposition tree and one \emph{inner dynamic program (inner DP)} along the tree decompositions of the quotient graphs solving the subproblems of the outer DP. 

The inner DP will again be using the cut-and-count-technique and can therefore produce erroneous results due to the randomization. We will carefully analyze where errors can occur and see that a single global sampling of an isolating weight function will be sufficient, even though some subproblems might be solved incorrectly. For this reason, the notation in this section will more closely track which node of the modular decomposition we are working on, as the setup in the \CVC algorithm would be too obfuscating here.

\subsubsection*{Notation.} $\pmodule \in \modtree(G)$ will denote the parent module and represents the current subproblem to be solved by the inner DP. The inner DP will work on the quotient graph $\pquotient = G[\pmodule] / \modpartition(G[\pmodule])$ whose vertices correspond to modules $\module \in \children(\pmodule) = \modpartition(G[\pmodule])$; associated to the quotient graph $\pquotient$ is the projection $\pproj\colon \pmodule \rightarrow V(\pquotient)$. By $\qvertex \in \pquotient$ we refer to the vertex in the quotient graph corresponding to $\module$. At times, it will be useful to not have to specify the parent module and then we say that two modules $\module_1, \module_2 \in \modtree(G)$ are \emph{siblings} if there is some $\pmodule$ such that $\module_1, \module_2 \in \children(\pmodule)$, i.e., they have the same parent. For a module $\module \in \modtree(G)$, we let $\nsib(\module)$ denote the family of sibling modules of $\module$ that are adjacent to $\module$ and we define $\nall(\module) = \{ \module' \in \modtree(G) \sep \module \cap \module' = \emptyset, E_G(\module, \module') \neq \emptyset\}$, i.e., the family of all strong modules that are adjacent to $\module$.

\subsection{Structure of Optimum Induced Forests}

We begin by studying the structure of optimum induced forests with respect to the modular decomposition. Let $\optfamily(G)$ be the family of maximum induced forests of $G$. We start by giving some definitions to capture the structure of induced forests with respect to the modular decomposition.

\begin{dfn}\label{dfn:fvs_modtw_marking}
  Let $X \subseteq V(G)$ be a vertex subset. We associate with $X$ a \emph{module-marking} $\marking_X \colon \modtree(G) \rightarrow \{\zero, \one, \twoind, \twoedge\}$ defined by
  \begin{equation*}
    \marking_X(\module) = 
    \begin{cases}
      \zero,  & \text{if $|X \cap \module| = 0$}, \\
      \one,   & \text{if $|X \cap \module| = 1$}, \\
      \twoind, & \text{if $|X \cap \module| \geq 2$ and $G[X \cap \module]$ contains no edge}, \\
      \twoedge, & \text{if $|X \cap \module| \geq 2$ and $G[X \cap \module]$ contains at least one edge}.
    \end{cases}
  \end{equation*}
\end{dfn}
We use module-markings to describe the states taken by an induced forest $X$ on the modules $M \in \modtree(G)$.
Ordering $\zero < \one < \twoind < \twoedge$, note that every module-marking $\marking_X$ is \emph{monotone} in the following sense: for all $\module_1, \module_2 \in \modtree(G)$ the inclusion $\module_1 \subseteq \module_2$ implies that $\marking_X(\module_1) \leq \marking_X(\module_2)$.

Any induced forest has to satisfy some local properties relative to the modules which are captured by the following definition.

\begin{dfn}\label{dfn:fvs_modtw_nice}
  Let $X \subseteq V(G)$ be a vertex subset. We say that $X$ is \emph{forest-nice} if for every $\module \in \modtree(G)$ the following properties hold:
  \begin{itemize}
    \item If $\marking_X(\module) = \twoind$, then $\marking_X(\nall(\module)) \subseteq \{\zero, \one\}$ and $|\nsib(\module) \cap \marking_X^{-1}(\one)| \leq 1$.
    \item If $\marking_X(\module) = \twoedge$, then $\marking_X(\nall(\module)) \subseteq \{\zero\}$.
  \end{itemize}
\end{dfn}
The ``degree-condition'' $|\nsib(\module) \cap \marking_X^{-1}(\one)| \leq 1$ deliberately only talks about the sibling modules, as we can have arbitrarily long chains of modules with $v \in \module_1 \subseteq \module_2 \subseteq \cdots \subseteq \module_\ell$, so no useful statement is possible if we would instead consider all modules.

\begin{lem}\label{thm:forest_nice}
 Every induced forest $X \subseteq V(G)$ of $G$ is forest-nice.
\end{lem}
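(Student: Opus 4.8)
The plan is to argue by contradiction: I will assume that $X$ is an induced forest but that forest-niceness fails at some module $\module \in \modtree(G)$, and in each case produce a cycle of length $3$ or $4$ inside $G[X]$, contradicting that $G[X]$ is a forest. The single structural fact I will use throughout is the dichotomy recalled in the preliminaries on modular decomposition: any two \emph{disjoint} modules of $G$ are either completely joined or completely non-adjacent. In particular, every $\module' \in \nall(\module)$ and every adjacent sibling in $\nsib(\module)$ is \emph{completely} joined to $\module$, so that an edge of $G[X \cap \module]$ together with any vertex of $X$ in such a module already closes a triangle, etc. So the proof reduces to three short gadget-finding arguments, one per clause of \cref{dfn:fvs_modtw_nice}.

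First I would treat the $\twoedge$ clause. If $\marking_X(\module) = \twoedge$ then there are $u,v \in X \cap \module$ with $\{u,v\} \in E$; if some $\module' \in \nall(\module)$ had $\marking_X(\module') \neq \zero$, pick $w \in X \cap \module'$, note $w \notin \{u,v\}$ since $\module \cap \module' = \emptyset$, and use that $\module,\module'$ are completely joined to get $\{u,w\},\{v,w\} \in E$; then $\{u,v,w\}$ induces a triangle in $G[X]$. Next the $\twoind$ clause has two parts. Pick distinct $u,v \in X \cap \module$. For the $\nall$-part, if some $\module' \in \nall(\module)$ had $\marking_X(\module') \geq \two$, choose distinct $u',v' \in X \cap \module'$; disjointness of $\module,\module'$ makes $u,v,u',v'$ pairwise distinct, the complete join supplies the four edges $\{u,u'\},\{u',v\},\{v,v'\},\{v',u\}$, and $u\,u'\,v\,v'$ is a $4$-cycle in $G[X]$. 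For the degree-part, if two distinct siblings $\module_1,\module_2 \in \nsib(\module)$ both satisfied $\marking_X(\module_i)=\one$, write $X \cap \module_i = \{w_i\}$; since $\module,\module_1,\module_2$ are pairwise disjoint children of a common parent, $u,v,w_1,w_2$ are four distinct vertices, each $\module_i$ is completely joined to $\module$, giving edges $\{u,w_1\},\{w_1,v\},\{v,w_2\},\{w_2,u\}$, so $u\,w_1\,v\,w_2$ is a $4$-cycle in $G[X]$. Each case contradicts that $X$ induces a forest, which proves the lemma.

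I do not expect a genuine obstacle here: no estimate is involved, and the only thing requiring care is the bookkeeping that in each subcase the chosen vertices are pairwise distinct — which always follows from the modules in play being pairwise disjoint — and that the needed edges are present — which always follows from applying the completely-joined/completely-non-adjacent dichotomy to disjoint modules. I would simply state that dichotomy once at the outset and invoke it uniformly, so the write-up is just the three two-line gadget arguments above.
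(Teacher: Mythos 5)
Your proof is correct and follows essentially the same route as the paper's: a case analysis on the three clauses of forest-niceness, each time exhibiting a triangle or a $4$-cycle in $G[X]$ via the completely-joined property of disjoint adjacent modules. Your write-up is in fact slightly more explicit about vertex distinctness than the paper's, but the argument is identical in substance.
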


\begin{proof}
 Consider any $\module \in \modtree(G)$ with $|X \cap \module| \geq 2$. If there were some module $\module' \in \nall(\module)$ with $|X \cap \module'| \geq 2$, then $G[X \cap (\module \cup \module')]$ contains a cycle of size 4 as all edges between $\module$ and $\module'$ exist in $G$, hence such $\module'$ cannot exist. If, additionally, $G[X \cap \module]$ contains an edge, then any $\module' \in \nall(\module)$ with $X \cap \module' \neq \emptyset$ would necessarily lead to a cycle of size 3 in $G[X \cap (\module \cup \module')]$, hence such $\module'$ cannot exist. Finally, suppose that $\marking_X(\module) = \twoind$ and two neighboring sibling modules $\module_1 \neq \module_2 \in \nsib(\module)$ with $\marking_X(\module_1) = \marking_X(\module_2) = \one$ exist. We must have $\module_1 \cap \module_2 = \emptyset$ and therefore a cycle of size 4 would exist in $G[X \cap (\module \cup \module_1 \cup \module_2)]$, which is again not possible. \qed
\end{proof}

The modular structure allows us to perform the following exchange arguments.

\begin{lem}\label{thm:forest_exchange}
 Let $X$ be an induced forest of $G$ and $\module \in \modtree(G)$.
 \begin{enumerate}
  \item If $\marking_X(\module) = \twoind$ and $Y$ is an independent set of $G[\module]$, then $(X \setminus \module) \cup Y$ is an induced forest of $G$.
  \item If $\marking_X(\module) = \twoedge$ and $Y$ is an induced forest of $G[\module]$, then $(X \setminus \module) \cup Y$ is an induced forest of $G$.
 \end{enumerate}
\end{lem}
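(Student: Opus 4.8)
The plan is to handle both parts with a single cycle-exchange argument. Set $X' := (X \setminus \module) \cup Y$ and suppose, for contradiction, that $G[X']$ contains a cycle $C$ (necessarily on at least three vertices, as $G$ is simple); I will manufacture a cycle either inside $Y$ or inside $X$, contradicting the respective hypothesis. Throughout I will use two facts: the module property, namely that every vertex of $N_G(\module)$ is adjacent to \emph{all} of $\module$; and the information the marking carries, namely that $\marking_X(\module) = \twoind$ supplies two distinct vertices $a \neq b$ of $X$ inside $\module$, whereas $\marking_X(\module) = \twoedge$ supplies two \emph{adjacent} such vertices $a,b$.

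First I would split on how $V(C)$ meets $\module$. If $V(C) \cap \module = \emptyset$ then $C$ lies in $X' \setminus \module = X \setminus \module \subseteq X$, contradicting that $X$ is an induced forest. If $V(C) \subseteq \module$ then $C$ lies in $X' \cap \module = Y$, which is impossible since in part~(1) $Y$ is independent and in part~(2) $Y$ is an induced forest. So the only real case is the mixed one, where $V(C)$ meets both $\module$ and its complement; then connectedness of $C$ gives a cycle-edge $\{y,u\}$ with $y \in V(C) \cap \module$ and $u \in V(C) \setminus \module \subseteq X \setminus \module$, and since $u \notin \module$ is adjacent to $y \in \module$, the module property forces $u \in N_G(\module)$, so $u$ is adjacent to every vertex of $\module$.

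For part~(2) this already closes the argument: with $a,b$ the two adjacent vertices of $X \cap \module$, the set $\{a,b,u\} \subseteq X$ (three distinct vertices, since $u \notin \module \ni a,b$ and $a \neq b$) induces a triangle in $G$, so $G[X]$ is not a forest. For part~(1) I would exploit that $Y$ is independent: since $y \in V(C) \cap \module \subseteq X' \cap \module = Y$, no cycle-neighbor of $y$ can lie in $Y$, hence both cycle-neighbors of $y$ lie in $X \setminus \module$; call them $u,v$, which are distinct (the cycle has length at least $3$) and both in $N_G(\module)$ by the module property. Taking the two distinct vertices $a,b$ of $X \cap \module$, the four distinct vertices $a,b,u,v$ (the pair $a,b$ inside $\module$, the pair $u,v$ outside) span the $4$-cycle $a\,u\,b\,v$ in $G[X]$, so again $G[X]$ is not a forest.

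The step to be careful with — the ``hard part'', such as it is — is the mixed case of part~(1): one must verify that both cycle-neighbors of $y$ are genuinely outside $\module$ (this is precisely where independence of $Y$ is used) and that $a,b,u,v$ are pairwise distinct, and one must note that we only need $G[X]$ to \emph{contain} a $4$-cycle, not an induced one. It is worth remarking that this proof uses neither \cref{thm:forest_nice} nor the forest-niceness conditions: it only relies on $G[X]$ being acyclic together with the local description of $X \cap \module$ given by $\marking_X(\module)$.
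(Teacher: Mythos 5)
Your proof is correct. It follows the same high-level skeleton as the paper's proof (assume a cycle $C$ in $G[X']$, dispose of the cases $V(C)\subseteq\module$ and $V(C)\cap\module=\emptyset$, then derive a contradiction in the mixed case), but the mixed case is handled differently in both parts. For part~(1), the paper distinguishes whether $C$ has an edge entirely outside $\module$ — rerouting $C$ through a single vertex of $X\cap\module$ in one subcase, and extracting a $K_{2,2}$ from the biclique between $X'\cap\module$ and its neighborhood in the other — whereas you observe directly that independence of $Y$ forces both cycle-neighbors of any $y\in V(C)\cap\module$ outside $\module$, which together with two vertices of $X\cap\module$ immediately yields a $4$-cycle in $G[X]$; this collapses the paper's case distinction into one uniform argument. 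For part~(2), the paper invokes \cref{thm:forest_nice} to conclude that $\marking_X(\nall(\module))\subseteq\{\zero\}$ and hence that $(X'\cap\module,\,X'\setminus\module)$ is a consistent cut of $G[X']$, while you re-derive the needed fact on the spot: a cycle vertex $u\notin\module$ adjacent to $\module$ forms a triangle with the two adjacent vertices witnessing $\twoedge$. Your version is thus self-contained and slightly more elementary (it does not depend on forest-niceness), at the cost of not reusing the structural lemma the paper has already established; both are valid.
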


\begin{proof}
 We set $X' = (X \setminus \module) \cup Y$ in both cases. Since $X' \setminus \module = X \setminus \module$, there cannot be any cycle in $G[X' \setminus \module]$. Also there cannot be any cycle in $G[X \cap \module] = G[Y]$ by assumption.
 \begin{enumerate}
   \item Suppose there is a cycle $C'$ in $G[X']$. By the previous arguments, we must have $C' \cap \module \neq \emptyset$ and $C' \setminus \module \neq \emptyset$. We will argue that such a cycle would give rise to a cycle $C$ in $G[X]$, contradicting the assumption that $X$ is an induced forest. Let $v_1, \ldots, v_\ell, v_1$ be the sequence of vertices visited by $C'$ and let $v_{i_1}, \ldots, v_{i_r}$ with $1 \leq i_1 < \cdots < i_r \leq \ell$ denote the vertices of $C'$ that are in $\module$. If some edge of $C'$, say $\{v_1,v_2\}$ without loss of generality, is contained in $G[X' \setminus \module]$, pick some $u \in X \cap \module$ and consider the cycle $C$ given by the vertex sequence $v_1, v_2, \ldots, v_{i_1 - 1}, u, v_{i_r + 1}, \ldots, v_\ell, v_1$; $C$ is a cycle of $G[X]$ as the edges $\{v_{i_1 - 1}, u\}$ and $\{u, v_{i_r + 1}\}$ exist in $G$, because $u, v_{i_1 - 1}, v_{i_r + 1} \in \module$. If no such edge exists in $C'$, then $C'$ is a cycle in the biclique with parts $X' \cap \module$ and $N_G(X' \cap \module)$, in particular $|C' \cap \module| \geq 2$ and $|C' \setminus \module| \geq 2$. Since $|X \cap \module| \geq 2$ by assumption and $|X \cap \module| = |X' \cap \module| \geq |C' \setminus \module| \geq 2$, it follows that $G[X]$ contains a biclique with parts of size at least two and hence $G[X]$ must contain a cycle.
   
   \item Since $X$ is forest-nice by \cref{thm:forest_nice}, $\marking_X(\module) = \twoedge$ implies that $\marking_{X'}(\nall(\module)) = \marking_X(\nall(\module)) \subseteq \{\zero\}$, and therefore $(X' \cap \module, X' \setminus \module)$ is a consistent cut of $G[X']$. Therefore any cycle $C$ in $G[X']$ must be fully contained in either $X' \cap \module$ or $X' \setminus \module$, but we ruled out each of these cases previously. Hence, $G[X']$ contains no cycle. \qed
 \end{enumerate}
\end{proof}

\cref{thm:forest_exchange} allows us to see that maximum induced forests must make locally optimal choices inside each module. We capture these local choices with the following two definitions.

\begin{dfn}\label{dfn:fvs_modtw_opt_sub}
  Let $X \subseteq V(G)$ be a vertex subset. We say that $X$ has \emph{optimal substructure} if for every $\module \in \modtree(G)$ the following properties hold:
  \begin{itemize}
    \item If $\marking_X(\module) = \twoind$, then $X \cap \module$ is a maximum independent set of $G[\module]$. 
    \item If $\marking_X(\module) = \twoedge$, then $X \cap \module$ is a maximum induced forest of $G[\module]$.
  \end{itemize}
\end{dfn}

\begin{dfn}\label{dfn:fvs_modtw_promotion}
  Let $X \subseteq V(G)$ be a vertex subset. We say that $X$ has the \emph{promotion property} if for every $\module \in \modtree(G)$ with $|X \cap \module| \geq 2$ and $\marking_X(\nall(\module)) = \{\zero\}$, we have that $X \cap \module$ is a maximum induced forest of $G[\module]$.
\end{dfn}
While we could have subsumed the promotion property as part of the definition of optimal substructure, we define it separately as it has more involved implications on the dynamic program and deserves separate care.

\begin{lem}\label{thm:fvs_opt_sub}
 Every maximum induced forest of $G$, i.e., $X \in \optfamily(G)$, has optimal substructure and the promotion property.
\end{lem}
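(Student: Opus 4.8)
The plan is a contradiction argument: for optimal substructure I would invoke the two exchange operations of \cref{thm:forest_exchange}, while for the promotion property a direct ``disjoint union of forests'' argument turns out to suffice. Throughout, fix a maximum induced forest $X \in \optfamily(G)$, and recall that for every $\module \in \modtree(G)$ the graph $G[X \cap \module]$ is an induced subgraph of the forest $G[X]$, hence itself a forest.

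For the optimal-substructure part I argue one module at a time. If $\marking_X(\module) = \twoind$, then $X \cap \module$ is an independent set of $G[\module]$; were it not a \emph{maximum} independent set, pick a maximum one $Y$ of $G[\module]$ with $|Y| > |X \cap \module|$, apply \cref{thm:forest_exchange}(1) to get that $X' := (X \setminus \module) \cup Y$ is an induced forest of $G$, and observe $|X'| = |X| - |X \cap \module| + |Y| > |X|$, contradicting maximality of $X$. The case $\marking_X(\module) = \twoedge$ is verbatim the same, with ``independent set'' replaced by ``induced forest'' and \cref{thm:forest_exchange}(1) by \cref{thm:forest_exchange}(2).

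For the promotion property, fix $\module$ with $|X \cap \module| \geq 2$ and $\marking_X(\nall(\module)) = \{\zero\}$. The one step that requires care is the following translation of this hypothesis: \emph{no vertex of $X \setminus \module$ has a neighbour in $\module$}. Indeed, for $v \in X \setminus \module$ the singleton $\{v\}$ is a strong module disjoint from $\module$, so if $v$ were adjacent to some vertex of $\module$, then — as $\module$ is a module — $\{v\}$ would be adjacent to $\module$, i.e.\ $\{v\} \in \nall(\module)$, forcing $\marking_X(\{v\}) = \zero$ and hence $v \notin X$, a contradiction. Granting this, suppose $X \cap \module$ is not a maximum induced forest of $G[\module]$ and let $Y$ be one, so $|Y| > |X \cap \module|$. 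Since $Y \subseteq \module$ and $X \setminus \module$ sends no edge into $\module$, the graph $G[(X \setminus \module) \cup Y]$ is the disjoint union of $G[X \setminus \module]$ (a forest, being an induced subgraph of $G[X]$) and $G[Y]$ (a forest by choice of $Y$), hence a forest; as it has strictly more vertices than $X$, this contradicts maximality. This proves the promotion property; note that when $\marking_X(\module) = \twoedge$ it is already implied by optimal substructure, but the argument above covers $\marking_X(\module) \in \{\twoind, \twoedge\}$ uniformly.

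The main obstacle, as indicated, is the module-theoretic translation ``$\marking_X(\nall(\module)) = \{\zero\}$'' $\Rightarrow$ ``$X$ sends no edge from outside $\module$ into $\module$'', which relies on treating every singleton as a strong module and on the defining property of modules; once this is in place, the remaining content is counting vertices and using that a disjoint union of forests is a forest. Note that \cref{thm:forest_nice} is not invoked directly here — it is the qualitative counterpart that makes these exchange moves legitimate in the first place.
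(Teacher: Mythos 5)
Your proposal is correct and follows essentially the same route as the paper: optimal substructure via the two cases of \cref{thm:forest_exchange}, and the promotion property via a fresh exchange argument whose key point is that $\marking_X(\nall(\module)) = \{\zero\}$ forbids any edge between $X \setminus \module$ and $\module$ (the paper phrases this as $(X' \cap \module, X' \setminus \module)$ being a consistent cut of $G[X']$, you phrase it as a disjoint union of two forests — same content). Your explicit justification of that step via singletons being strong modules belonging to $\nall(\module)$ is a detail the paper leaves implicit, and it is a correct reading of the definitions.
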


\begin{proof}
  \cref{thm:forest_nice} already shows that $X$ is forest-nice. If $X$ would not have optimal substructure, then we can invoke \cref{thm:forest_exchange} to obtain a larger induced forest $X'$, hence $X$ would not be a maximum induced forest.
  
  We prove a strengthened exchange argument to show the promotion property. We claim that for any induced forest $X$ of $G$, module $\module \in \modtree(G)$ with $\marking_X(\module) \in \{\twoind, \twoedge\}$ and $\marking_X(\nall(\module)) \subseteq \{\zero\}$, and induced forest $Y$ of $G[\module]$, the set $X' = (X \setminus \module) \cup Y$ is again an induced forest of $G$. Suppose that $G[X']$ contains a cycle $C'$. By assumption on $X$, $C'$ cannot be contained in $G[X' \setminus \module] = G[X \setminus \module]$. By assumption on $Y$, $C'$ cannot be contained in $G[X' \cap \module] = G[Y]$. Therefore, $C'$ must intersect $X' \cap \module$ and $X' \setminus \module$ simultaneously. However, $\marking_{X'}(\nall(\module)) = \marking_X(\nall(\module)) \subseteq \{\zero\}$ implies that $(X' \cap \module, X' \setminus \module)$ is a consistent cut of $G[X']$ and hence such a cycle $C'$ cannot exist. Therefore $X'$ is also an induced forest. If an induced forest $X$ violates the promotion property, then we can invoke this exchange argument to see that $X$ cannot be a maximum induced forest. \qed
\end{proof}

Since any induced forest $X$ is forest-nice, the condition $\marking_X(\module) = \twoedge$ implies $\marking_X(\nall(\module)) \subseteq \{\zero\}$ and therefore the second condition of optimal substructure also follows from the promotion property. 

The requirement $|X \cap \module| \geq 2$ in the promotion property could also be removed. However, the dynamic programming on quotient graphs will only apply the underlying exchange argument when $|X \cap \module| \geq 2$ holds, therefore we already add this requirement here.

Note that a forest-nice vertex subset $X$ does not necessarily induce a forest as a cycle could be induced by the modules $\module \in \modpartition(G)$ with $\marking_X(\module) = \one$.

\subsection{Application of Isolation Lemma}

We will again use the cut-and-count-technique and the isolation lemma to solve \IF parameterized by modular-treewidth. However, since \IF is a maximization problem, we feel it is more natural to use a maximization version of the isolation lemma as we must closely investigate when isolation transfers to subproblems. Let us define the appropriate terminology.

\begin{dfn}
  A function $\wfct \colon U \rightarrow \ZZ$ \emph{max-isolates} a set family $\family \subseteq 2^U$ if there is a unique $S' \in \family$ with $\wfct(S') = \max_{S \in \family} \wfct(S)$, where for subsets $X$ of $U$ we define $\wfct(X) = \sum_{u \in X} \wfct(u)$.
\end{dfn}

\begin{lem}[Adapt proof of \cite{MulmuleyVV87} or \cite{TaShma15}]
  \label{thm:max_isolation}
  Let $\family \subseteq 2^U$ be a nonempty set family over a universe $U$. Let $N \in \NN$ and for each $u \in U$ choose a weight $\wfct(u) \in [N]$ uniformly and independently at random. Then
  $\PP[\wfct \text{ max-isolates } \family] \geq 1 - |U|/N$.
\end{lem}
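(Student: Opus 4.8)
The plan is to mirror the classical argument for the minimization version (\cref{thm:isolation}) with maxima in place of minima. First I would introduce the notion of a \emph{singular} element: call $u \in U$ singular if there exist two distinct sets $S_1, S_2 \in \family$ both attaining $\max_{S \in \family} \wfct(S)$ with $u \in S_1$ and $u \notin S_2$. The key observation is that if $\wfct$ fails to max-isolate $\family$, then there are two distinct maximizers $S_1 \neq S_2$, and every element of their symmetric difference $S_1 \symdiff S_2$ (which is nonempty) is singular; hence it suffices to bound $\PP[\text{some } u \in U \text{ is singular}]$ by $|U|/N$ via a union bound.

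Next I would bound, for each fixed $u \in U$, the probability that $u$ is singular by $1/N$. For this, condition on the weights $\wfct(u')$ of all $u' \neq u$, so that only $\wfct(u)$ remains random. Define $W_{\mathrm{in}} = \max\{\wfct(S \setminus \{u\}) : S \in \family, u \in S\}$ and $W_{\mathrm{out}} = \max\{\wfct(S) : S \in \family, u \notin S\}$, with the convention that an empty maximum is $-\infty$; crucially, neither quantity depends on $\wfct(u)$. Then $\max_{S \in \family} \wfct(S) = \max(W_{\mathrm{in}} + \wfct(u), W_{\mathrm{out}})$, and $u$ belongs to some maximizer iff $W_{\mathrm{in}} + \wfct(u) \geq W_{\mathrm{out}}$, while $u$ is avoided by some maximizer iff $W_{\mathrm{in}} + \wfct(u) \leq W_{\mathrm{out}}$. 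Hence $u$ is singular exactly when $W_{\mathrm{in}} + \wfct(u) = W_{\mathrm{out}}$, i.e.\ when $\wfct(u)$ equals the single fixed value $W_{\mathrm{out}} - W_{\mathrm{in}}$; since $\wfct(u)$ is uniform over the $N$ values in $[N]$, this occurs with conditional probability at most $1/N$ regardless of the conditioning, and therefore with unconditional probability at most $1/N$ as well.

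Finally I would combine the two steps: $\PP[\wfct \text{ does not max-isolate } \family] \leq \PP[\exists u \in U \text{ singular}] \leq \sum_{u \in U} \PP[u \text{ singular}] \leq |U|/N$, which is equivalent to the claimed bound $\PP[\wfct \text{ max-isolates } \family] \geq 1 - |U|/N$. I do not expect any real obstacle: the only point requiring mild care is the bookkeeping with the $-\infty$ conventions when one of the two sub-families $\{S \in \family : u \in S\}$ or $\{S \in \family : u \notin S\}$ is empty, in which case $u$ is trivially non-singular and the per-element bound holds vacuously; the assumption $\family \neq \emptyset$ ensures at least one of the two sub-families is nonempty, which is all that the argument needs.
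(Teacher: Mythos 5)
Your proof is correct and is precisely the standard Mulmuley--Vazirani--Vazirani argument adapted to maximization, which is exactly what the paper intends by citing ``adapt proof of \cite{MulmuleyVV87} or \cite{TaShma15}'' without giving details. The reduction to singular elements, the conditioning so that only $\wfct(u)$ is random, and the observation that singularity forces $\wfct(u)$ to hit the single value $W_{\mathrm{out}} - W_{\mathrm{in}}$ are all handled correctly, including the degenerate cases where one of the two sub-families is empty.
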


Due to \cref{thm:forest_nice} and \cref{thm:fvs_opt_sub}, we want our algorithm to compute maximum independent sets and maximum induced forests of $G[\module]$ for every $\module \in \modtree(G)$. The computation of the maximum independent sets can be done deterministically quickly enough using \cref{thm:is_mtw_algo}. To compute the maximum induced forests however, we essentially want to recursively call our algorithm again, but the algorithm is randomized. Doing this naively and sampling a weight function for each call would exponentially decrease the success probability depending on the depth of the modular decomposition tree.

To circumvent this issue, we sample a global weight function only once and let the subproblems inherit this weight function, observing that for all ``important'' subproblems the inherited weight function is max-isolating if the global weight function is (for appropriate choices of set families).

We define $\optfamily(G, \state)$, where $\state \in \{\zero, \one, \twoind, \twoedge\}$, as the family of maximum sets $X$ subject to $G[X]$ being a forest and $\marking_X(V(G)) \leq \state$. Hence, we have that $\optfamily(G, \twoedge) = \optfamily(G)$ and $\optfamily(G, \twoind)$ is the family of maximum independent sets of $G$ and $\optfamily(G, \one)$ is the family of singleton sets.

\begin{lem}\label{thm:fvs_descendant_isolation}
  Let $N \in \NN$ and assume that $\wfct \colon V(G) \rightarrow [N]$ is a weight function that max-isolates $\optfamily(G)$. Let $X \in \optfamily(G)$ be the set that is max-isolated by $\wfct$. For every $\module \in \modtree(G)$, we have that $\wfct\big|_\module$ max-isolates $X \cap \module$ in $\optfamily(G[\module], \marking_X(\module))$.
\end{lem}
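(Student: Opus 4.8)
The plan is to prove the statement in two stages. First I would show that $X \cap \module$ is in fact a member of $\optfamily(G[\module], \marking_X(\module))$ for every $\module \in \modtree(G)$; then I would upgrade this to uniqueness of maximum $\wfct\big|_\module$-weight by a cut-and-paste exchange argument based on \cref{thm:forest_exchange}. Throughout I would use that we are in the unit-cost case, so ``maximum'' means ``of largest cardinality'', and that the value $\marking_Y(\module)$ depends only on $|Y \cap \module|$ and on whether $G[Y \cap \module]$ contains an edge, so it does not matter whether it is computed in $G$ or in $G[\module]$.

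For the first stage I would do a case analysis on $\state := \marking_X(\module)$. In every case $X \cap \module$ induces a forest (being a subset of the induced forest $X$) and satisfies $\marking_{X \cap \module}(\module) = \state$, so only the maximality of $X \cap \module$ among all $Y \subseteq \module$ with $G[Y]$ a forest and $\marking_Y(\module) \le \state$ needs checking. If $\state \in \{\zero, \one\}$ the constraint forces $|Y| \le 1$, so $X \cap \module$ is trivially of maximum size. If $\state = \twoind$, then any such $Y$ is an independent set of $G[\module]$, and by \cref{thm:fvs_opt_sub} (optimal substructure) $X \cap \module$ is a maximum independent set of $G[\module]$, whence $|Y| \le |X \cap \module|$. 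If $\state = \twoedge$, then $Y$ is an induced forest of $G[\module]$, and again \cref{thm:fvs_opt_sub} gives that $X \cap \module$ is a maximum induced forest of $G[\module]$. This simultaneously records that every $Y \in \optfamily(G[\module], \state)$ has $|Y| = |X \cap \module|$, which I need below.

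For the second stage, let $Y \in \optfamily(G[\module], \state)$ with $\wfct(Y) \ge \wfct(X \cap \module)$, and set $X' := (X \setminus \module) \cup Y$. I would first show $X' \in \optfamily(G)$: it has the same cardinality as $X$ by the previous paragraph, and $G[X']$ is a forest because, depending on $\state$, either $X' = X$ (case $\state = \zero$), or $G[X']$ is isomorphic to $G[X]$ via swapping the unique chosen vertex of $\module$ for the one in $Y$ — an isomorphism since $\module$ is a module and no other vertex of $X'$ lies in $\module$ (case $\state = \one$) — or \cref{thm:forest_exchange}(1)/(2) applies verbatim (case $\state = \twoind$ with $Y$ independent, resp.\ $\state = \twoedge$ with $Y$ an induced forest). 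Once $X' \in \optfamily(G)$, the weight computation $\wfct(X') = \wfct(X \setminus \module) + \wfct(Y)$ gives a contradiction in the two relevant subcases: $\wfct(Y) > \wfct(X \cap \module)$ contradicts maximality of $\wfct(X)$ over $\optfamily(G)$, while $\wfct(Y) = \wfct(X \cap \module)$ with $Y \ne X \cap \module$ produces a second maximum-weight set $X' \ne X$ in $\optfamily(G)$, contradicting that $\wfct$ max-isolates $\optfamily(G)$. Hence the only $Y \in \optfamily(G[\module], \state)$ with $\wfct(Y) \ge \wfct(X \cap \module)$ is $X \cap \module$ itself, i.e., $\wfct\big|_\module$ max-isolates $X \cap \module$ in $\optfamily(G[\module], \marking_X(\module))$.

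I do not expect a serious obstacle: the structural lemmas \cref{thm:forest_nice}, \cref{thm:forest_exchange} and \cref{thm:fvs_opt_sub} were set up precisely to make this transfer go through. The only point needing a little care is the $\state = \one$ subcase of the exchange step, where \cref{thm:forest_exchange} does not apply and one has to spell out that replacing the single chosen vertex of $\module$ by another vertex of $\module$ is a graph isomorphism on the induced subgraph; everything else is routine bookkeeping of cardinalities and weights.
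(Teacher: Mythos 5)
Your proposal is correct and follows essentially the same route as the paper's proof: membership of $X \cap \module$ in $\optfamily(G[\module], \marking_X(\module))$ via optimal substructure (\cref{thm:fvs_opt_sub}), followed by the exchange argument of \cref{thm:forest_exchange} to contradict max-isolation of $X$ in $\optfamily(G)$. Your explicit treatment of the $\marking_X(\module) = \one$ case (where \cref{thm:forest_exchange} does not literally apply and one uses the module property to swap single vertices) is a small point of extra care that the paper glosses over.
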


\begin{proof}
  $X$ has optimal substructure due to \cref{thm:fvs_opt_sub}, therefore we have $X \cap \module \in \optfamily(G[\module], \marking_X(\module))$ for all $\module \in \modtree(G)$. Suppose there is some $\module \in \modtree(G)$ such that $\wfct\big|_\module$ does not max-isolate $\optfamily(G[\module], \marking_X(\module))$, then there is some $X \cap \module \neq Y \in \optfamily(G[\module], \marking_X(\module))$ with $\wfct(Y) \geq \wfct(X \cap \module)$. By \cref{thm:forest_exchange}, $X' = (X \setminus \module) \cup Y$ must satisfy $X' \in \optfamily(G)$, $X' \neq X$, and $\wfct(X') \geq \wfct(X)$. However, then $\wfct$ cannot max-isolate $X$ in $\optfamily(G)$. \qed
\end{proof}

We remark that the previous lemma allows for the possibility that, e.g. $\wfct\big|_\module$ max-isolates $\optfamily(G[\module], \twoind)$, but $\wfct\big|_\module$ does not max-isolate $\optfamily(G[\module], \twoedge) = \optfamily(G[\module])$, which can lead to our algorithm not finding an optimum induced forest for this subinstance. 

\subsection{Detecting Acyclicness}

Let us describe how to check whether a forest-nice subset $X$ induces a forest. The property of being forest-nice essentially allows us to only consider the induced subset on a quotient graph which we then handle by lifting cut-and-count. The property of being forest-nice is a global property in the sense that it considers the whole modular decomposition tree. We first introduce a local version of forest-nice that only considers the children of a parent module $\pmodule \in \modint(G)$:

\newcommand{\squotient}{{\widetilde{G}^q}}

\begin{dfn}\label{dfn:fvs_modtw_nice_local}
  Let $\pmodule \in \modint(G)$, $\squotient$ be a subgraph of $\pquotient$, and $X \subseteq \pmodule$ with $X^q := \pproj(X) \subseteq V(\squotient)$, we say that $X$ is \emph{$\pmodule$-forest-nice with respect to $\squotient$}, if the following properties hold for all $\qvertex \in V(\squotient)$:
  \begin{itemize}
    \item If $\marking_X(\module) = \twoind$, then $\deg_{\squotient[X^q]}(\qvertex) \leq 1$ and $\marking_X(\module') \in \{\zero, \one\}$ for all $v^q_{\module'} \in N_{\squotient}(\qvertex)$.
    \item If $\marking_X(\module) = \twoedge$, then $\marking_X(\module') = \zero$ for all $v^q_{\module'} \in N_{\squotient}(\qvertex)$.
  \end{itemize}
  In the case $\squotient = \pquotient$, we simply say that $X$ is \emph{$\pmodule$-forest-nice}.
\end{dfn}

As the (very nice) tree decomposition of $\pquotient$ adds edges one-by-one, we need to account for changes in the neighborhoods of vertices in the local definition of forest-niceness via $\squotient$. Otherwise, \cref{dfn:fvs_modtw_nice_local} is essentially the same definition as \cref{dfn:fvs_modtw_nice}, but only considering the child modules of $\pmodule$.  In particular, if $X$ is forest-nice, then $X \cap \pmodule$ is $\pmodule$-forest-nice for all $\pmodule \in \modint(G)$.

The next lemma essentially shows that in a $\pmodule$-forest-nice set $X$ no cycles intersecting some module $\module \in \children(\pmodule)$ in more than one vertex exist, hence all possible cycles can already be seen in the quotient graph.

\begin{lem}\label{thm:forest_quotient}
	Let $\pmodule \in \modint(G)$ and $X \subseteq \pmodule$ be $\pmodule$-forest-nice and suppose that $G[X \cap \module]$ is a forest for all modules $\module \in \children(\pmodule)$ and define $X^q = \pproj(X)$. Then, $G[X]$ is a forest if and only if $\pquotient[X^q]$ is a forest.
\end{lem}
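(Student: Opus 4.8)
The plan is to prove the two implications separately, using forest-niceness only for the converse. The forward direction is immediate and mirrors \cref{thm:quotient_connected}: for every $\module \in \children(\pmodule)$ with $X \cap \module \neq \emptyset$ pick a representative $v_\module \in X \cap \module$ and set $X' = \{v_\module \sep v^q_\module \in X^q\} \subseteq X$. Since each $\module \in \children(\pmodule)$ is a module, $v_{\module_1}$ and $v_{\module_2}$ are adjacent in $G$ exactly when $\module_1$ and $\module_2$ are adjacent, so $G[X']$ is isomorphic to $\pquotient[X^q]$; as $G[X']$ is an induced subgraph of the forest $G[X]$, it is a forest, hence so is $\pquotient[X^q]$. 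This direction uses neither forest-niceness nor the hypothesis on the child modules.

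For the converse I would argue by contraposition: assume $G[X]$ contains a cycle $C$ (necessarily on at least three vertices) and produce a cycle in $\pquotient[X^q]$. The heart of the proof is the claim that $|C \cap \module| \leq 1$ for every $\module \in \children(\pmodule)$. Granting this, $\pproj$ is injective on $V(C)$; moreover any two consecutive vertices of $C$ lie in distinct child modules (else $|C \cap \module| \geq 2$), are adjacent in $G$, and hence their images are distinct and adjacent in $\pquotient[X^q]$. Thus the image $\pproj(V(C))$ traces a cycle in $\pquotient[X^q]$, contradicting that it is a forest.

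It remains to prove the claim, which I would do by distinguishing the value of $\marking_X(\module)$. If $\marking_X(\module) \in \{\zero, \one\}$, then $|C \cap \module| \leq |X \cap \module| \leq 1$ trivially. If $\marking_X(\module) = \twoedge$, then $\pmodule$-forest-niceness gives $\marking_X(\module') = \zero$, i.e.\ $X \cap \module' = \emptyset$, for every $v^q_{\module'} \in N_{\pquotient}(\qvertex)$; consequently no vertex of $C$ in $\module$ can have a $C$-neighbour outside $\module$ (such a neighbour lies in $X$ and in a module adjacent to $\module$), so connectivity of $C$ forces $V(C) \subseteq \module$, whence $G[X \cap \module]$ contains a cycle, contradicting the hypothesis; therefore $C \cap \module = \emptyset$. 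If $\marking_X(\module) = \twoind$, then $G[X \cap \module]$ is edgeless, so no edge of $C$ lies inside $\module$ and every $v \in C \cap \module$ has both of its $C$-neighbours outside $\module$; each such neighbour lies in a module adjacent to $\module$ that meets $X$, i.e.\ in a neighbour of $\qvertex$ in $\pquotient[X^q]$. Since $\deg_{\pquotient[X^q]}(\qvertex) \leq 1$ and that unique neighbour $\module'$ satisfies $\marking_X(\module') \leq \one$, we get $X \cap \module' = \{w\}$ for a single vertex $w$, so both $C$-neighbours of $v$ would equal $w$, impossible in a cycle; hence $C \cap \module = \emptyset$.

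I expect the $\twoind$ case to be the main obstacle: it is the one place where the degree bound $\deg_{\pquotient[X^q]}(\qvertex)\leq 1$ and the singleton structure of the adjacent module in $X^q$ must be combined, and one should double-check that the hypothesis ``$G[X \cap \module]$ is a forest'' is genuinely needed in the $\twoedge$ case, since forest-niceness alone does not preclude a cycle living entirely inside $\module$. The remaining ingredients — the trivial marking cases and the observation that $\pproj$ carries a simple cycle to a simple cycle once the claim holds — are routine.
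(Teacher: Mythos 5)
Your proof is correct and follows essentially the same route as the paper's: the forward direction via the quotient being (isomorphic to) an induced subgraph of $G[X]$, and the converse by showing a cycle $C$ in $G[X]$ meets each child module at most once (the paper splits cases on the structure of $C\cap\module$, you split on $\marking_X(\module)$, but the key steps — forest-niceness forcing a $\twoedge$-cycle inside a module, and the degree-$\leq 1$ plus singleton-neighbour argument killing the $\twoind$ case — are the same) and then projecting $C$ to a cycle in $\pquotient[X^q]$.
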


\begin{proof}
  The graph $\pquotient[X^q]$ can be considered a subgraph of $G[X]$, so if $\pquotient[X^q]$ is not a forest, then neither is $G[X]$.
  
  For the other direction, suppose that $G[X]$ contains a cycle $C$. It cannot be that $C \subseteq X \cap \module$ for some $\module \in \children(\pmodule)$, since $G[X \cap \module]$ contains no cycle by assumption. It also cannot be that $G[C \cap M]$ contains an edge for some $\module \in \children(\pmodule)$, since $\pmodule$-forest-nice would then imply that $C$ is contained in $\module$, which we just ruled out. If $|C \cap \module| \geq 2$ for some $\module \in \children(\pmodule)$, then $\pmodule$-forest-nice implies that at most one neighboring sibling module $\module'$ is intersected by $C$ and $|C \cap \module'| \geq 1$, but since $G[C \cap \module]$ cannot contain an edge, this means that the vertices in $C \cap \module$ must have degree one in $C$, so $C$ cannot be a cycle.
  Finally, we must have $|C \cap \module| \leq 1$ for all $\module \in \children(\pmodule)$, but any such cycle $C$ clearly gives rise to a cycle $C^q = \pproj(C)$ in $G^q[X^q]$, too. \qed
\end{proof}

\begin{lem}[Lemma~4.5 in \cite{CyganNPPRW22}]
  \label{thm:forest_check}
  Let $G$ be a graph with $n$ vertices and $m$ edges. Then, $G$ is a forest if and only if $\cc(G) \leq n - m$ if and only if $\cc(G) = n - m$.
\end{lem}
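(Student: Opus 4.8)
The whole statement rests on the single inequality $\cc(G) \ge n - m$, valid for every graph, together with a precise description of when it is tight. So the plan is: first prove this inequality and its equality case, then derive the three-way equivalence by pure propositional logic.

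\emph{Step 1: the inequality.} I would argue componentwise. Let $C_1, \dots, C_c$ be the connected components of $G$, with $n_i = |V(C_i)|$ and $m_i = |E(C_i)|$, so that $n = \sum_i n_i$, $m = \sum_i m_i$, and $c = \cc(G)$. The key elementary fact is that a connected graph on $n_i$ vertices has at least $n_i - 1$ edges, with equality precisely when it is a tree (equivalently, when it is acyclic). This can be obtained either from the existence of a spanning tree, or by a short induction on $m_i$: deleting an edge from a connected graph with $m_i \ge n_i$ edges leaves a graph with at most one more component, so by induction one reaches a connected graph that still has at least $n_i - 1$ edges; and an acyclic connected graph is a tree, which has exactly $n_i - 1$ edges. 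Summing $m_i \ge n_i - 1$ over $i$ gives $m \ge n - c$, i.e. $\cc(G) \ge n - m$. Moreover equality $m = n - c$ holds if and only if $m_i = n_i - 1$ for every $i$, i.e. if and only if every component is a tree, i.e. if and only if $G$ is a forest.

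\emph{Step 2: the equivalence.} With Step 1 in hand, the claimed chain is immediate. If $G$ is a forest, then by the equality case $\cc(G) = n - m$, and in particular $\cc(G) \le n - m$. Conversely, if $\cc(G) \le n - m$, then combining with the always-valid inequality $\cc(G) \ge n - m$ from Step 1 forces $\cc(G) = n - m$, and the equality case then forces $G$ to be a forest. Finally $\cc(G) = n - m$ trivially implies $\cc(G) \le n - m$. Thus ``$G$ is a forest'' $\Rightarrow$ ``$\cc(G) \le n - m$'' $\Rightarrow$ ``$\cc(G) = n - m$'' $\Rightarrow$ ``$G$ is a forest'', which is exactly the asserted three-way equivalence.

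\emph{Main obstacle.} There is no real obstacle here; the only point requiring any care is the equality case of the edge bound for connected graphs (that $m_i = n_i - 1$ characterizes trees among connected graphs), which is standard. Everything else is bookkeeping over connected components and a two-line logical chaining of implications.
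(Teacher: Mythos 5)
Your proof is correct, and it is the standard argument: the paper itself does not prove this lemma but cites it from Cygan et al., and the underlying proof there is the same componentwise counting ($\cc(G) \geq n - m$ for every graph, with equality exactly for forests) followed by the chain of implications you give. Nothing is missing.
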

·
One could use the \emph{marker technique} already used by Cygan et al.~\cite{CyganNPPRW22} for the treewidth-parameterization together with \cref{thm:forest_check} to obtain a cut-and-count algorithm, but the marker technique results in several further technical details to take care of. The marker technique can be avoided by working modulo higher powers of two instead of only modulo two, which was also done by Nederlof et al.~\cite{NederlofPSW20} when applying cut-and-count to edge-based problems parameterized by treedepth. We also do so, to obtain a cleaner presentation of our algorithm. 

\begin{lem}
  \label{thm:forest_hom_cuts}
  Let $\pmodule \in \modint(G)$ and $X \subseteq \pmodule$ be $\pmodule$-forest-nice and suppose that $G[X \cap \module]$ is a forest for all modules $\module \in \children(\pmodule)$. Let $X^q = \pproj(X)$ and let $\vtarget = |X^q|$ and $\etarget = |E(\pquotient[X^q])|$. Then, $G[X]$ is a forest if and only if $|\{(X_L, X_R) \sep (X, (X_L, X_R)) \in \homcuts{\pmodule}(G)\}| \not\equiv_{2^{\vtarget - \etarget + 1}} 0$.
\end{lem}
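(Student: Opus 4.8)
The plan is to assemble the statement from three results established above: the homogeneous-cut count underlying \cref{thm:hom_cut}, the elementary forest characterization \cref{thm:forest_check}, and the reduction of acyclicity to the quotient graph \cref{thm:forest_quotient}. Concretely, I would prove the chain of equivalences
\[
\begin{aligned}
 G[X]\text{ is a forest}
 &\iff \pquotient[X^q]\text{ is a forest}
 \iff \cc(\pquotient[X^q]) = \vtarget - \etarget\\
 &\iff 2^{\cc(\pquotient[X^q])} \not\equiv_{2^{\vtarget-\etarget+1}} 0,
\end{aligned}
\]
and then identify the quantity on the last line with the number of $\pmodule$-homogeneous consistent cuts of $(X,\cdot)$.

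For the first equivalence I would invoke \cref{thm:forest_quotient}, whose hypotheses ($X$ is $\pmodule$-forest-nice and $G[X\cap\module]$ is a forest for every $\module \in \children(\pmodule)$) are exactly those of the present lemma. For the second equivalence I would use that the cyclomatic number of any graph is nonnegative, so $\cc(\pquotient[X^q]) \ge |V(\pquotient[X^q])| - |E(\pquotient[X^q])| = \vtarget - \etarget$, with equality precisely when $\pquotient[X^q]$ is a forest; this is \cref{thm:forest_check}. Writing $\cc(\pquotient[X^q]) = \vtarget - \etarget + \mu$ with $\mu \ge 0$, the third equivalence is then immediate: $2^{\cc(\pquotient[X^q])} = 2^{\vtarget-\etarget}\cdot 2^{\mu}$ is divisible by $2^{\vtarget-\etarget+1}$ exactly when $\mu \ge 1$, i.e.\ exactly when $\pquotient[X^q]$ is not a forest.

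It remains to match $2^{\cc(\pquotient[X^q])}$ with the cut count. Here I would reuse the bijection constructed in the proof of \cref{thm:hom_cut}, between $\{(X_L,X_R) \sep (X,(X_L,X_R)) \in \homcuts{\pmodule}(G)\}$ and the consistent cuts of $\pquotient[X^q]$: that bijection only uses homogeneity of the cut together with the module property, not the side condition $|\pproj(X)| \ge 2$, which is invoked in \cref{thm:hom_cut} solely afterwards to pass from the count to connectivity of $G[X]$ via \cref{thm:quotient_connected}. Combining this bijection with \cref{thm:cons_cut} applied to $X^q \subseteq V(\pquotient)$ gives $|\{(X_L,X_R) \sep (X,(X_L,X_R)) \in \homcuts{\pmodule}(G)\}| = 2^{\cc(\pquotient[X^q])}$, and the lemma follows from the chain above.

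The only point needing real care is extracting this counting statement from \cref{thm:hom_cut} without its $|\pproj(X)| \ge 2$ hypothesis, which amounts to re-reading that proof and isolating the hypothesis-free part; beyond that there is just a quick check of the trivial cases $|X^q| \le 1$ (the cut count is $1$ or $2$, and $G[X]$ is a forest by assumption, matching $\vtarget - \etarget + 1 \in \{1,2\}$) and $\etarget \ge \vtarget$ (then $\pquotient[X^q]$ has more edges than vertices, so neither it nor $G[X]$ is a forest, and since $\cc(\pquotient[X^q]) \ge \vtarget - \etarget + 1$ the congruence stays consistent). Everything else is bookkeeping with the lemmas already available.
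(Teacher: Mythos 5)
Your proof is correct and follows essentially the same route as the paper's: it chains the homogeneous-cut count from \cref{thm:hom_cut}, the characterization of forests via \cref{thm:forest_check}, and the reduction to the quotient graph via \cref{thm:forest_quotient}. The one place you go beyond the paper is in noticing that \cref{thm:hom_cut} formally carries the hypothesis $|\pproj(X)| \geq 2$, which the paper's one-line proof invokes without comment; your observation that the counting bijection in that proof does not use this hypothesis, together with the explicit check of the cases $|X^q| \leq 1$, is a legitimate (and welcome) tightening rather than a divergence in approach.
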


\begin{proof}
  By \cref{thm:hom_cut}, we have that $|\{(X_L, X_R) \sep (X, (X_L, X_R)) \in \homcuts{\pmodule}(G)\}| = 2^{\cc(\pquotient[X^q])}$. By \cref{thm:forest_check}, we see that $\pquotient[X^q]$ is a forest if and only if $|\{(X_L, X_R) \sep (X, (X_L, X_R)) \in \homcuts{\pmodule}(G)\}| \neq 0 \mod 2^{\vtarget^q - \etarget^q + 1}$. The lemma then follows via \cref{thm:forest_quotient}. \qed
\end{proof}

\subsection{Outer DP: Candidate Forests}

\newcommand{\fixv}[1]{{Y^{\one}_{#1}}}
\newcommand{\fixis}[1]{{Y^{\twoind}_{#1}}}
\newcommand{\fixif}[1]{{Y^{\twoedge}_{#1}}}
\newcommand{\nv}[1]{{\#^{\vertex}_{#1}}}
\newcommand{\nis}[1]{{\#^{\indset}_{#1}}}
\newcommand{\nif}[1]{{\#^{\indfor}_{#1}}}
\newcommand{\cv}[1]{{c^{\one}_{#1}}} 
\newcommand{\cis}[1]{{c^{\twoind}_{#1}}}
\newcommand{\cif}[1]{{c^{\twoedge}_{#1}}}
\newcommand{\wv}[1]{{w^{\one}_{#1}}}
\newcommand{\wis}[1]{{w^{\twoind}_{#1}}}
\newcommand{\wif}[1]{{w^{\twoedge}_{#1}}}

\newcommand{\dpfam}{\mathcal{R}}
\newcommand{\dpfamcuts}{\mathcal{Q}}
\newcommand{\targets}{{\ctarget, \wtarget, \vtarget, \etarget}}

\newcommand{\cwpairs}[1]{P_{#1}}

Fix an \IF instance $(G = (V, E), \budget)$ and a weight function $\wfct \colon V \rightarrow [N]$ throughout this section. To solve \IF parameterized by modular-treewidth, we perform dynamic programming in two ways: we proceed bottom-up along the modular decomposition tree of $G$ and to compute the table entries for the node corresponding to module $\pmodule \in \modint(G)$, we use the tables of the children $\children(\pmodule) = \modpartition(G[\pmodule])$ and perform dynamic programming along the tree decomposition of the associated quotient graph $\pquotient = G[\pmodule] / \modpartition(G[\pmodule])$.

For every module $\module \in \modtree(G)$, we have the following data precomputed:
\begin{itemize}
  \item a singleton set $\fixv{\module}$ in $\module$ that maximizes $\wfct(\fixv{\module})$ and its weight $\wv{\module} = \wfct(\fixv{\module})$,
  \item a maximum independent set $\fixis{\module}$ of $G[\module]$ that maximizes $\wfct(\fixis{\module})$, the size $\cis{\module} = |\fixis{\module}|$ and the weight $\wis{\module} = \wfct(\fixis{\module})$ of such an independent set.
\end{itemize}
The vertex data can clearly be precomputed in polynomial time and the independent set data can be precomputed in time $\Oh^*(2^{\modtw(G)})$ by running the \IS algorithm from \cref{thm:is_mtw_algo}.

\subsubsection*{Candidate Forests.} We will recursively define for each module $\pmodule \in \modtree(G)$, the \emph{$\pmodule$-candidate forest} $\fixif{\pmodule}$ (which depends on the fixed weight function $\wfct$). Among all induced forests $X$ of $G[\pmodule]$ found by the algorithm, the forest $\fixif{\pmodule}$ lexicographically maximizes $(|X|, \wfct(X))$. Due to the randomization in the cut-and-count-technique however, it can happen that $\fixif{\pmodule}$ is not necessarily a maximum induced forest of $G[\pmodule]$. We will see that if we sampled an isolating weight function $\wfct$, then no errors will occur for the ``important'' subproblems, hence still allowing us to find a maximum induced forest of the whole graph.
The definition of $\fixif{\pmodule}$ is mutually recursive with the definition of the solution family that will be defined afterwards. 

\subsubsection*{Properties of Candidate Forests.} We highlight several properties of the candidate forests that are important for the algorithm.
\begin{itemize}
 \item The base case is given by $\fixif{\{v\}} = \{v\}$ for all $v \in V(G)$.
 \item $\fixif{\pmodule}$ is an induced forest of $G[\pmodule]$.
 \item If $G[\pmodule]$ contains no edge, then $\fixif{\pmodule} = \fixis{\pmodule}$.
 \item If $G[\pmodule]$ contains an edge, then $|\fixif{\pmodule}| > |\fixis{\pmodule}|$.
\end{itemize}

Given $\fixif{\module}$ for all $\module \in \children(\pmodule)$, we can describe how to compute $\fixif{\pmodule}$. This step depends on which kind of node $\pmodule$ corresponds to in the modular decomposition. We first handle the degenerate cases of a parallel or series node and then proceed with the much more challenging case of a prime node.

\subsubsection{Computing Candidate Forests in Parallel and Series Nodes.}\label{sec:fvs_parallel_series}

  If $\pmodule \in \modint(G)$ is a parallel node, i.e., $\pquotient$ is an independent set, then \cref{thm:forest_nice} and \cref{thm:fvs_opt_sub} tell us to simply take a maximum induced forest inside each child module $\module \in \children(\pmodule)$. Hence, we set $\fixif{\pmodule} = \bigcup_{\module \in \children(\pmodule)} \fixif{\module}$ and accordingly $\cif{\pmodule} = \sum_{\module \in \children(\pmodule)} \cif{\module}$ and $\wif{\pmodule} = \sum_{\module \in \children(\pmodule)} \wif{\module}$.

	If $\pmodule \in \modint(G)$ is a series node, then we first analyze the structure of maximum induced forests with respect to a series node.
\begin{lem}\label{thm:fvs_modtw_series}
  Let $\pmodule \in \modint(G)$ and $X$ be a maximum induced forest of $G[\pmodule]$. If $\pmodule$ is a series module, i.e., the quotient graph $\pquotient$ is a clique, then one of the following statements holds:
  \begin{itemize}
    \item $X \subseteq \module$ for some $\module \in \children(\pmodule)$ and $X$ is a maximum induced forest of $G[\module]$.
    \item $X \subseteq \module_1 \cup \module_2$ for some $\module_1 \neq \module_2 \in \children(\pmodule)$ and $X \cap \module_1$ is a maximum independent set of $G[\module_1]$ and $|X \cap \module_2| = 1$.
 \end{itemize}
\end{lem}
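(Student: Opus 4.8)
The plan is to exploit the fact that a series node is a join: since $\pquotient$ is a clique, any two distinct child modules $\module_1, \module_2 \in \children(\pmodule)$ are adjacent in $G$, so $G[\pmodule]$ is obtained from the disjoint graphs $G[\module]$, $\module \in \children(\pmodule)$, by adding all edges between different child modules. Consequently $G[X]$ cannot contain a triangle spread over three child modules nor a $4$-cycle spread over two of them, and these two forbidden patterns are enough to pin down exactly the two listed configurations for a maximum induced forest $X$.

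The first step I would carry out is bounding how many child modules $X$ can meet. If $X$ contained vertices $v_1 \in X \cap \module_1$, $v_2 \in X \cap \module_2$, $v_3 \in X \cap \module_3$ in three distinct child modules, then $v_1, v_2, v_3$ would be pairwise adjacent and induce a triangle in $G[X]$, contradicting acyclicity; hence $X$ meets at most two child modules. It also meets at least one: $\pmodule \in \modint(G)$ forces $|\pmodule| \geq 2$, a single vertex is an induced forest, and $X$ is maximum, so $X \neq \emptyset$.

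Next I would split into the two remaining cases. If $X$ meets exactly one child module $\module$, then $X \subseteq \module$; every induced forest of $G[\module]$ is also an induced forest of $G[\pmodule]$, so maximality of $X$ forces $X$ to be a maximum induced forest of $G[\module]$, which is the first alternative. If $X$ meets exactly two child modules $\module_1 \neq \module_2$, then $G[X]$ contains the complete bipartite graph between the nonempty sets $X \cap \module_1$ and $X \cap \module_2$; if both had size at least two this graph would contain a $C_4$, so after renaming $|X \cap \module_2| = 1$, say $X \cap \module_2 = \{w\}$. Then $X \cap \module_1$ must be independent in $G[\module_1]$ (an edge $\{u,u'\}$ inside it together with $w$ would be a triangle), and it must in fact be a \emph{maximum} independent set of $G[\module_1]$, since a strictly larger independent set $I$ of $G[\module_1]$ would make $I \cup \{w\}$ a star, hence an induced forest of $G[\pmodule]$ strictly larger than $X$; this is the second alternative.

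I expect the only delicate points to be the two exchange arguments in the two-module case, where one must verify that the replacement vertex set genuinely induces a forest — here this is immediate because $\{w\}$ together with an independent set spans a star — and the bookkeeping for the degenerate subcase $|X \cap \module_1| = |X \cap \module_2| = 1$, which is absorbed by the second alternative: the one-vertex set $X \cap \module_1$ is trivially independent, and it is maximum because otherwise the same star exchange would enlarge $X$. As a sanity check one can also see that the statement follows from \cref{thm:forest_nice} and \cref{thm:fvs_opt_sub}: at a clique quotient, forest-niceness forces all but one child module to have marking $\zero$ when some child has marking $\twoedge$, and forces all other children to have marking in $\{\zero,\one\}$ with at most one of marking $\one$ when some child has marking $\twoind$, while optimal substructure supplies the required maximality; but the direct triangle/$C_4$ argument above is shorter and self-contained, so that is the route I would write up.
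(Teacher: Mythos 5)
Your proof is correct, and it reaches the same two-case skeleton as the paper (triangle obstruction bounds $X$ to at most two child modules, then split on one versus two), but it justifies the maximality claims by a different, self-contained route. The paper disposes of both cases in one line each by invoking \cref{thm:forest_nice} and \cref{thm:fvs_opt_sub}: forest-niceness forces the $C_4$/triangle-free interaction pattern between child modules, and the promotion property respectively optimal substructure supply the maximality of $X\cap\module$ as an induced forest respectively independent set. You instead re-derive exactly the special cases needed here from scratch: the explicit $C_4$ between two modules each met twice, the triangle ruling out an edge inside $X\cap\module_1$, and the two star-exchange arguments (replace $X$ by a larger induced forest of $G[\module]$ in the one-module case, and by $I\cup\{w\}$ in the two-module case). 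These are in substance the same exchanges that underlie \cref{thm:forest_exchange} and \cref{thm:fvs_opt_sub}, so nothing new is being proved, but your write-up does not depend on that infrastructure and handles the degenerate subcase $|X\cap\module_1|=|X\cap\module_2|=1$ explicitly, which the paper leaves implicit. The trade-off is length versus locality: the paper's version is shorter given the surrounding lemmas, yours is checkable in isolation. Both are valid.
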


\begin{proof}
  Suppose that $X$ intersects three different modules in $\children(\pmodule)$, since they are all adjacent $X$ would induce a triangle. Hence, $X$ can intersect at most two different modules. By \cref{thm:forest_nice} and \cref{thm:fvs_opt_sub}, $X$ is forest-nice, has optimal substructure and satisfies the promotion property. If $X$ intersects only a single module $\module$, then the first statement follows due to the promotion property. If $X$ intersects two modules, then the second statement follows due to $X$ being forest-nice and optimal substructure. \qed
\end{proof}

 Given the maximum independent sets $\fixis{\module}$ for all $\module \in \children(\pmodule)$, we can in polynomial time compute an optimum induced forest $\tilde{Y}_\pmodule$ of $G[\pmodule]$ subject to the second condition in \cref{thm:fvs_modtw_series}. We compare the induced forests $\tilde{Y}_\pmodule$ and all $\fixif{\module}$ for all $\module \in \children(\pmodule)$ lexicographically by their cost and weight and, motivated by \cref{thm:fvs_modtw_series}, we let $\fixif{\pmodule}$ be the winner of this comparison.

\subsubsection{Computing Candidate Forests in Prime Nodes.}\label{sec:fvs_prime_candidate}

 To compute the $\pmodule$-candidate forest $\fixif{\pmodule}$ when $\pmodule$ is a prime node, we will use the cut-and-count-technique and dynamic programming along the given tree decomposition of the quotient graph $\pquotient$. Before going into the details of the dynamic programming, we will give the necessary formal definitions to describe the partial solutions of the dynamic programming and the subproblem that has to be solved. This will already allow us to define the induced forest $\fixif{\pmodule}$ and prove the correctness of the outer loop involving the modular decomposition. We first introduce some ``local'' versions of \cref{dfn:fvs_modtw_opt_sub} and \cref{dfn:fvs_modtw_promotion}.

\begin{dfn}\label{dfn:fvs_modtw_opt_sub_local}
  Let $\pmodule \in \modint(G)$ and $X \subseteq \pmodule$, we say that $X$ has \emph{$\pmodule$-substructure} if for all $\module \in \children(\pmodule)$ we have that $\marking_X(\module) \neq \zero$ implies $X \cap \module = Y^{\marking_X(\module)}_{\module}$.
\end{dfn}
Comparing the definition of \emph{$\pmodule$-substructure} to \emph{optimal substructure}, we see that in $\pmodule$-substructure we only consider the child modules and require the choice of a specified vertex, maximum independent set, or induced forest, respectively. Note that due to the previously discussed issue, $\fixif{\module}$ does not necessarily need to be a maximum induced forest.
\begin{dfn}\label{dfn:fvs_modtw_promotion_local}
  Let $\pmodule \in \modint(G)$ and $X \subseteq \pmodule$, we say that $X$ satisfies the \emph{$\pmodule$-promotion property} if for all modules $\module \in \children(\pmodule)$ with $|X \cap \module| \geq 2$ and $\marking_X(\nsib(\module)) = \{\zero\}$ it holds that $X \cap \module = \fixif{\module}$.
\end{dfn}
\cref{dfn:fvs_modtw_promotion_local}, unlike \cref{dfn:fvs_modtw_nice_local}, does not need to account for the current subgraph of $\pquotient$ as promotion is only checked for modules that have already been forgotten by the tree decomposition, i.e., all incident edges have already been added, and for non-introduced modules $\module$, we simply have $X \cap \module = \emptyset$.

We can now define the solution family considered by our algorithm.

\begin{dfn}
  The family $\dpfam_\pmodule$ consists of all $X \subseteq \pmodule$ such that $X$ is $\pmodule$-forest-nice wrt.\ $\pquotient$, has $\pmodule$-substructure, and satisfies the $\pmodule$-promotion property. Given $\ctarget \in [0, |\pmodule|]$, $\wtarget \in [0, \wfct(\pmodule)]$, $\vtarget \in [0, |\children(\pmodule)|]$, $\etarget \in [0, \vtarget - 1]$, the family $\dpfam_\pmodule^{\targets}$ consists of all $X \in \dpfam_\pmodule$ with
  \begin{itemize}
    \item $|X| = \ctarget$ and $\wfct(X) = \wtarget$,
    \item $|X^q| = \vtarget$ and $|E(\pquotient[X^q])| = \etarget$, where $X^q = \pproj(X)$.
  \end{itemize}
  We also define $\sols_\pmodule^\targets = \{X \in \dpfam_\pmodule^\targets \sep G[X] \text{ is a forest}\}$.
\end{dfn}

By pairing elements of $\dpfam_\pmodule^\targets$ with homogeneous cuts, we can use the cut-and-count-technique to decide whether $\sols_\pmodule^\targets$ is empty or not.

\begin{dfn}
  The family $\dpfamcuts_\pmodule$ consists of all $(X, (X_L, X_R)) \in \homcuts{\pmodule}(G)$ with $X \in \dpfam_\pmodule$.
  Similarly, $\dpfamcuts_\pmodule^{\targets}$ consists of all $(X, (X_L, X_R)) \in \homcuts{\pmodule}(G)$ with $X \in \dpfam_\pmodule^{\targets}$.
\end{dfn}

The crucial property of $\dpfamcuts_\pmodule^{\targets}$ is given by the following lemma.

\begin{lem}\label{thm:fvs_nonzero_implies_forest}
  Let $\pmodule \in \modint(G)$. It holds that $|\dpfamcuts_\pmodule^{\targets}| \equiv_{2^{\vtarget - \etarget + 1}} 2^{\vtarget - \etarget} |\sols_\pmodule^{\targets}|$.
\end{lem}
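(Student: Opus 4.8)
The plan is to group $\dpfamcuts_\pmodule^{\targets}$ by its underlying vertex set $X$, count $\pmodule$-homogeneous consistent cuts per $X$, and then exploit the divisibility of powers of two together with the forest-counting criterion \cref{thm:forest_check}. Concretely, I would start from
\[
  |\dpfamcuts_\pmodule^{\targets}| \;=\; \sum_{X \in \dpfam_\pmodule^{\targets}} \bigl|\{(X_L, X_R) : (X, (X_L, X_R)) \in \homcuts{\pmodule}(G)\}\bigr|,
\]
which holds because $\dpfamcuts_\pmodule^{\targets}$ is by definition the set of pairs $(X,(X_L,X_R)) \in \homcuts{\pmodule}(G)$ with $X \in \dpfam_\pmodule^{\targets}$.

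The next step is to evaluate the inner cardinality for a fixed $X \in \dpfam_\pmodule^{\targets}$. By $\pmodule$-substructure, for every $\module \in \children(\pmodule)$ the set $X \cap \module$ is one of $\emptyset$, $\fixv{\module}$, $\fixis{\module}$, or $\fixif{\module}$, and each of these induces a forest in $G[\module]$ (using that a candidate forest is an induced forest); moreover $X$ is $\pmodule$-forest-nice by definition of $\dpfam_\pmodule$. Hence, when $\vtarget = |X^q| \ge 2$, \cref{thm:hom_cut} applies and the inner cardinality equals $2^{\cc(\pquotient[X^q])}$. The remaining cases are degenerate: $\vtarget = 0$ forces $\dpfam_\pmodule^{\targets} = \emptyset$ since then $\etarget$ would have to lie in $[0,-1]$, and if $\vtarget = 1$ then $X$ is contained in a single child module, so it admits exactly the two homogeneous consistent cuts $(\emptyset,X)$ and $(X,\emptyset)$, matching $2^{\cc(\pquotient[X^q])} = 2$ as $\pquotient[X^q]$ is a single vertex. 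Thus the formula $2^{\cc(\pquotient[X^q])}$ holds for all relevant $X$.

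It then remains to reduce modulo $2^{\vtarget-\etarget+1}$. By \cref{thm:forest_quotient}, $G[X]$ is a forest if and only if $\pquotient[X^q]$ is, and by \cref{thm:forest_check} the latter is equivalent to $\cc(\pquotient[X^q]) = |X^q| - |E(\pquotient[X^q])| = \vtarget - \etarget$, while in all cases $\cc(\pquotient[X^q]) \ge \vtarget - \etarget$. Therefore $2^{\cc(\pquotient[X^q])}$ is always divisible by $2^{\vtarget-\etarget}$, it equals $2^{\vtarget-\etarget}$ exactly when $X \in \sols_\pmodule^{\targets}$, and it is divisible by $2^{\vtarget-\etarget+1}$ whenever $G[X]$ is not a forest (since then $\cc(\pquotient[X^q]) \ge \vtarget - \etarget + 1$). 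Summing over $X$ and using $\sols_\pmodule^{\targets} \subseteq \dpfam_\pmodule^{\targets}$ yields
\[
  |\dpfamcuts_\pmodule^{\targets}| \;=\; \sum_{X \in \dpfam_\pmodule^{\targets}} 2^{\cc(\pquotient[X^q])} \;\equiv_{2^{\vtarget-\etarget+1}}\; \sum_{X \in \sols_\pmodule^{\targets}} 2^{\vtarget-\etarget} \;=\; 2^{\vtarget-\etarget}\,|\sols_\pmodule^{\targets}|,
\]
which is the claim.

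The arithmetic here is routine; the part that needs care is checking that the structural hypotheses genuinely line up — that $\pmodule$-substructure forces each $G[X \cap \module]$ to be a forest so that \cref{thm:forest_quotient} is applicable, and that the $\pmodule$-forest-niceness built into $\dpfam_\pmodule$ is precisely the hypothesis of \cref{thm:hom_cut} and \cref{thm:forest_quotient} — together with the boundary case $\vtarget \le 1$, where \cref{thm:hom_cut} is not literally applicable but the count of $\pmodule$-homogeneous consistent cuts is still $2^{\cc(\pquotient[X^q])}$.
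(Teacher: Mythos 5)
Your proof is correct and follows essentially the same route as the paper's: group the pairs in $\dpfamcuts_\pmodule^{\targets}$ by their underlying set $X$, use \cref{thm:hom_cut} to count $2^{\cc(\pquotient[\pproj(X)])}$ homogeneous consistent cuts per $X$, and conclude via \cref{thm:forest_quotient} and \cref{thm:forest_check} that forests contribute exactly $2^{\vtarget-\etarget}$ while non-forests contribute multiples of $2^{\vtarget-\etarget+1}$ (the paper packages the latter step as \cref{thm:forest_hom_cuts}). Your explicit verification of the applicability hypotheses and of the degenerate cases $\vtarget \leq 1$ is more careful than the paper's write-up but does not change the argument.
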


\begin{proof}
  Consider any $X \in \dpfam_\pmodule^\targets$ and let $X^q = \pproj(X)$. If $G[X]$ is a forest, then so is $\pquotient[X^q]$ by \cref{thm:forest_quotient} and we have that $X$ contributes exactly $2^{\vtarget - \etarget}$ objects to $\dpfamcuts_\pmodule^\targets$ by \cref{thm:hom_cut} and \cref{thm:forest_check}. By \cref{thm:forest_hom_cuts}, we see that if $G[X]$ is not a forest, then $X$ contributes a multiple of $2^{\vtarget - \etarget + 1}$ objects to $\dpfamcuts_\pmodule^\targets$, which therefore cancel. \qed
\end{proof}

From the sets $\dpfamcuts_\pmodule^{\targets}$ for a fixed $\pmodule \in \modint(G)$, we can finally give the recursive definition of the $\pmodule$-candidate forest $\fixif{\pmodule}$.

\begin{dfn}
  \label{dfn:recursive_candidate_forest}
  Let $\pmodule \in \modint(G)$ such that $\pquotient$ is prime. The set of \emph{attained cost-weight-pairs} $\cwpairs{\pmodule}$ consists of all pairs $(\ctarget, \wtarget)$ such that there exist $\vtarget$ and $\etarget$ with $|\dpfamcuts_\pmodule^{\targets}| \not\equiv_{2^{\vtarget - \etarget + 1}} 0$. We denote the lexicographic maximum pair in $\cwpairs{\pmodule}$ by $(\ctarget_{max}, \wtarget_{max})$. \cref{thm:fvs_nonzero_implies_forest} guarantees the existence of an induced forest $Y$ of $G[\pmodule]$ with $|Y| = \ctarget_{max}$ and $\wfct(Y) = \wtarget_{max}$. If $\ctarget_{max} > |\fixis{\pmodule}|$, then the \emph{$\pmodule$-candidate forest} $\fixif{\pmodule}$ is an arbitrary induced forest among these, else we greedily extend $\fixis{\pmodule}$ by some vertices, without introducing cycles, to obtain $\fixif{\pmodule}$. We set $\cif{\pmodule} = |\fixif{\pmodule}|$ and $\wif{\pmodule} = \wfct(\fixif{\pmodule})$.
\end{dfn}

The algorithm does not know the exact set $\fixif{\pmodule}$, hence no issue is caused by the arbitrary choice, but the algorithm knows the values $\cif{\pmodule}$ and $\wif{\pmodule}$. The set $\fixif{\pmodule}$ is only used for the analysis of the algorithm. We will see that the choice of $\fixif{\pmodule}$ is unique when $\wfct\restrict{\pmodule}$ isolates the optimum induced forests of $G[\pmodule]$, else the choice might not be unique. Only in the latter case can $\ctarget_* \leq |\fixis{\pmodule}|$ occur, but since $\pquotient$ is prime, the graph $G[\pmodule]$ must contain some edges and hence there exists a larger induced forest that is not an independent set.

Note that $\fixif{\pmodule}$ is always an induced forest, but $G[\fixif{\pmodule}]$ does not necessarily contain an edge, i.e., $\fixif{\pmodule}$ may be an independent set or even a single vertex if $\pquotient$ is a parallel node or singleton node. This means that for some $X \subseteq V(G)$ with $X \cap \pmodule = \fixif{\pmodule}$, we only know $\marking_X(\pmodule) \leq \twoedge$ and not necessarily $\marking_X(\pmodule) = \twoedge$.

The complete outer DP is summarized in \cref{algo:outer_dp}.
 \begin{algorithm}
   \uIf{$\pmodule$ is a parallel node}{
     $\fixif{\pmodule} := \bigcup_{\module \in \children(\pmodule)} \fixif{\module}$\;
   }
   \uElseIf{$\pmodule$ is a series node}{
     pick $Y_1$ among all $\fixif{\module}$, $\module \in \children(\pmodule)$, to lex.\ maximize $(|Y_1|, \wfct(Y_1))$\;
     pick $Y_2 \in \{\fixis{\module_1} \cup \fixv{\module_2} \sep \module_1 \neq \module_2 \in \children(\pmodule)\}$ to lex.\ max.\ $(|Y_2|, \wfct(Y_2))$\;
     pick $\fixif{\pmodule}$ as a winner of the lex.\ comparison $(|Y_1|, \wfct(Y_1))$ vs $(|Y_2|, \wfct(Y_2))$\;
   }
   \Else{
     compute $|\csols^{\targets}_\pmodule|$ for all $\targets$ using treewidth-based DP\;
     construct $\cwpairs{\pmodule} = \left\{(\ctarget, \wtarget) \sep \text{ there are $\vtarget, \etarget$ such that } |\csols^{\targets}_\pmodule| \not\equiv_{2^{\vtarget - \etarget + 1}} 0 \right\}$\;
     let $(\ctarget_{max}, \wtarget_{max}) \in \cwpairs{\pmodule}$ be the lexicographic maximum\;
     \uIf{$\ctarget_{max} > |\fixis{\pmodule}|$}{
     		pick any $\fixif{\pmodule}$ among induced forests $Y$ of $G[\pmodule]$ with $|Y| = \ctarget_{max}$ and $\wfct(Y) = \wtarget_{max}$\;
     }
     \Else{
     		obtain $\fixif{\pmodule}$ by greedily extending $\fixis{\pmodule}$ by vertices without creating cycles\;
     }
   }
   \caption{Outer DP to compute $\fixif{\pmodule}$.}
   \label{algo:outer_dp}
 \end{algorithm}

\subsubsection{Correctness of Outer DP.}

Assuming an algorithm that computes the values $|\dpfamcuts_{\pmodule}^{\targets}|$ for all prime $\pquotient$ and all $\targets$, we obtain an algorithm that implicitly computes $\fixif{\pmodule}$ for all $\pmodule \in \modtree(G)$ by starting with $\fixif{\{v\}} = \{v\}$ for all $v \in V$ and performs bottom-up dynamic programming along the modular decomposition tree using the appropriate algorithm based on the node type. While the precise set $\fixif{\pmodule}$ is not known to the algorithm, it knows the value $\cif{\pmodule} = |\fixif{\pmodule}|$. The algorithm returns positively if $\cif{V} \geq \budget$ and negatively otherwise. As we ensure that $\fixif{\pmodule}$ is an induced forest for all $\pmodule \in \modtree(G)$, the algorithm does not return false positives. The next lemma concludes the discussion of the outer DP and implies that the algorithm answers correctly assuming that the weight function $\wfct$ isolates the maximum induced forests of $G$.

\begin{lem}[Main Correctness Lemma]\label{thm:fvs_mtw_main_correctness}
  Suppose that $\wfct$ max-isolates $X_*$ in $\optfamily(G)$. The following properties hold for all $\pmodule \in \modtree(G)$:
  \begin{enumerate}
    \item $\state(\pmodule) := \marking_{X_*}(\pmodule) \neq \zero$ implies that $X_* \cap \pmodule = Y^{\state(\pmodule)}_\pmodule$, ($\pmodule$-substructure for all $\pmodule$)
    \item $\state(\pmodule) = \marking_{X_*}(\pmodule) = \twoedge$ implies that $\fixif{\pmodule}$ is a maximum induced forest of $G[\pmodule]$,
    \item $\state(\pmodule) = \marking_{X_*}(\pmodule) = \twoedge$ implies that $X_* \cap \pmodule \in \dpfam_{\pmodule}$.
  \end{enumerate}
\end{lem}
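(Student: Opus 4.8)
The plan is to prove statements (1)--(3) simultaneously by induction on the modular decomposition tree $\modtree(G)$, processed bottom-up from the leaves. For the base case $\pmodule = \{v\}$ we have $\state(\pmodule) \in \{\zero, \one\}$; statement~(1) holds since $Y^{\one}_{\{v\}} = \fixv{\{v\}} = \{v\}$, and statements~(2),(3) are vacuous. For the inductive step, fix an internal node $\pmodule$ and assume (1)--(3) for all $\module \in \children(\pmodule)$. We will use two facts throughout: $X_*$ is forest-nice (\cref{thm:forest_nice}), has optimal substructure and the promotion property (\cref{thm:fvs_opt_sub}); and by \cref{thm:fvs_descendant_isolation}, $\wfct\big|_{\module'}$ max-isolates $X_* \cap \module'$ inside $\optfamily(G[\module'], \marking_{X_*}(\module'))$ for every $\module' \in \modtree(G)$, in particular for $\module' = \pmodule$. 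I would also first establish the auxiliary combinatorial fact that whenever $G[\module]$ contains an edge, its maximum induced-forest size strictly exceeds its maximum independent-set size; for prime nodes this is the observation recorded after \cref{dfn:recursive_candidate_forest}, and for parallel and series nodes it follows by the same bottom-up induction (a parallel node is a disjoint union; a series node on at least two child modules always contains an edge between two child modules).

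I would dispatch statement~(1) first in the cases $\state(\pmodule) \in \{\one, \twoind\}$: here $\optfamily(G[\pmodule], \state(\pmodule))$ is the family of singletons, respectively the family of maximum independent sets of $G[\pmodule]$, so max-isolation of $X_*\cap\pmodule$ in it forces $X_* \cap \pmodule = \fixv{\pmodule} = Y^{\one}_\pmodule$, respectively $X_*\cap\pmodule = \fixis{\pmodule} = Y^{\twoind}_\pmodule$ (using optimal substructure to know $X_*\cap\pmodule$ is a maximum independent set in the second case). Next I would prove statement~(3). Assuming $\state(\pmodule) = \twoedge$, put $X = X_*\cap\pmodule$ and verify the three conditions defining $\dpfam_\pmodule$. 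The $\pmodule$-forest-niceness of $X$ with respect to $\pquotient$ is the restriction of forest-niceness of $X_*$ to $\children(\pmodule)$; the $\pmodule$-substructure condition is exactly the inductive hypothesis~(1) for the children (with $\marking_X(\module) = \marking_{X_*}(\module)$ and $X\cap\module \subseteq \pmodule$). For the $\pmodule$-promotion property, let $\module \in \children(\pmodule)$ satisfy $|X\cap\module| \geq 2$ and $\marking_X(\nsib(\module)) = \{\zero\}$; a short module argument shows $\marking_{X_*}(\nall(\module)) \subseteq \{\zero\}$ as well (an adjacent strong module is a sibling, is contained in a sibling, or is disjoint from $\pmodule$ and therefore adjacent to all of $\pmodule$, and in each case misses $X_*$), so the promotion property of $X_*$ makes $X\cap\module$ a maximum induced forest of $G[\module]$; combining this with the auxiliary fact and the inductive hypotheses~(1),(2) gives $X\cap\module = \fixif{\module}$, as $\pmodule$-promotion requires.

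I would then prove statement~(2) by the node type of $\pmodule$, writing $X = X_*\cap\pmodule$, $\ctarget = |X|$, $\wtarget = \wfct(X)$, so $\ctarget$ is the maximum induced-forest size of $G[\pmodule]$. For a parallel node, $G[\pmodule]$ is the disjoint union of the $G[\module]$, each $X\cap\module$ is a maximum induced forest of $G[\module]$, and a short case split (an edgeless $\module$ has $\fixif{\module} = \fixis{\module}$; an $\module$ with an edge has $\marking_{X_*}(\module) = \twoedge$ by the auxiliary fact, so hypothesis~(2) applies) shows every $\fixif{\module}$ is a maximum induced forest of $G[\module]$, hence so is $\fixif{\pmodule} = \bigcup_\module \fixif{\module}$. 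For a series node, \cref{thm:fvs_modtw_series} gives the two possible shapes of a maximum induced forest, and matching them to the two candidates compared by the outer DP (using hypothesis~(2) on the relevant child and that $\fixis{\module}$ is a maximum independent set) yields $|\fixif{\pmodule}| = \ctarget$. For a prime node, statement~(3) places $X$ in $\sols_\pmodule^{\ctarget,\wtarget,\vtarget,\etarget}$ with $\vtarget = |\pproj(X)|$ and $\etarget = |E(\pquotient[\pproj(X)])|$; then \cref{thm:fvs_nonzero_implies_forest} shows $(\ctarget,\wtarget) \in \cwpairs{\pmodule}$, every pair in $\cwpairs{\pmodule}$ arises from an induced forest of $G[\pmodule]$ of size at most $\ctarget$, and max-isolation of $X$ among maximum induced forests of $G[\pmodule]$ makes $(\ctarget,\wtarget)$ the lexicographic maximum of $\cwpairs{\pmodule}$; since $\pquotient$ prime forces an edge in $G[\pmodule]$ and hence $\ctarget > |\fixis{\pmodule}|$ by the auxiliary fact, the outer DP picks $\fixif{\pmodule}$ as an induced forest of size $\ctarget_{max} = \ctarget$, i.e., a maximum induced forest. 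Finally, statement~(1) for $\state(\pmodule) = \twoedge$ follows from~(2): $\fixif{\pmodule}$ is a maximum induced forest of $G[\pmodule]$ with $\wfct(\fixif{\pmodule}) = \wtarget$ (by the construction in \cref{dfn:recursive_candidate_forest} for a prime node, and via the identities $\fixif{\module} = X_*\cap\module$ from hypothesis~(1) for parallel/series nodes), and uniqueness of the maximum-weight maximum induced forest under max-isolation gives $\fixif{\pmodule} = X_*\cap\pmodule = Y^{\twoedge}_\pmodule$.

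The main obstacle is the tight interdependence of the three statements at each node: statement~(3) must be routed through the cut-and-count analysis behind \cref{thm:fvs_nonzero_implies_forest} and \cref{dfn:recursive_candidate_forest} to obtain statement~(2), and only then can one close the loop on statement~(1) in the case $\state(\pmodule) = \twoedge$; on top of this there is delicate bookkeeping of which child module an offending edge or cycle can live in when transferring information between $X_*$, its restrictions $X_*\cap\module$, and the quotient-graph images $\pproj(X_*\cap\pmodule)$, together with verifying that the outer DP always lands in the ``good'' branch of \cref{dfn:recursive_candidate_forest} for prime nodes.
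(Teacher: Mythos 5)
Your proof is correct and follows essentially the same route as the paper's: the same ingredients (forest-niceness, optimal substructure, the promotion property, \cref{thm:fvs_descendant_isolation}, and \cref{thm:fvs_nonzero_implies_forest}), the same bottom-up induction over the modular decomposition tree with a case split on node type, and the same step of establishing membership in $\dpfam_\pmodule$ before invoking the cut-and-count characterization for prime nodes. The only differences are cosmetic — you derive property~(2) first and then~(1) via max-isolation, whereas the paper derives~(1) first and gets~(2) from optimal substructure, and you fold the paper's explicit ``lowest $\twoedge$-module'' base case into a uniform induction.
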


\begin{proof}
  Notice that for singleton modules only the first property is relevant and is trivially true. By \cref{thm:forest_nice} and \cref{thm:fvs_opt_sub}, $X_*$ is forest-nice, has optimal substructure and the promotion property. By \cref{thm:fvs_descendant_isolation}, it follows that $\wfct\big|_{\pmodule}$ max-isolates $X_* \cap \pmodule$ in $\optfamily(G, \state(\pmodule))$ for all $\pmodule \in \modtree(G)$. Since $X_*$ is forest-nice, $X_* \cap \pmodule$ must be $\pmodule$-forest-nice for all $\pmodule \in \modint(G)$ as the quotient graph $\pquotient$ captures when two sibling modules $\module, \module' \in \children(\pmodule)$ are adjacent.
  
  We proceed by proving the first property whenever $\state(\pmodule) \neq \twoedge$. Fix some $\pmodule$ with $\state(\pmodule) \in \{\one, \twoind\}$. We have $X_* \cap \pmodule, Y^{\state(\pmodule)}_{\pmodule} \in \optfamily(G[\pmodule], \state(\pmodule))$ by optimal substructure and definition. By choice of $Y^{\state(\pmodule)}_{\pmodule}$, we have that $\wfct(X_* \cap \pmodule) \leq \wfct(Y^{\state(\pmodule)}_{\pmodule})$. By max-isolation of $X_* \cap \pmodule$ it follows that $\wfct(X_* \cap \pmodule) = \wfct(Y^{\state(\pmodule)}_{\pmodule})$ and even $X_* \cap \pmodule = Y^{\state(\pmodule)}_{\pmodule}$.
    
  The remainder of the proof is an induction along the modular decomposition tree, as the base case we consider modules $\pmodule \in \modint(G)$ with $\state(\pmodule) = \twoedge$ and $\state(\module) \neq \twoedge$ for all $\module \in \children(\pmodule)$. For the base case, we have already shown that $X_* \cap \module = Y^{\state(\module)}_\module$ for all $\module \in \children(\pmodule)$, hence $X_* \cap \pmodule$ has $\pmodule$-substructure in this case. 
  
  We continue with the $\pmodule$-promotion property in the base case. Suppose it is violated for some $\module \in \children(\pmodule)$, i.e., $\state(\nsib(\module)) \subseteq \{\zero\}$ and $X_* \cap \module = \fixis{\module} \neq \fixif{\module}$ (using $\pmodule$-substructure). By definition of $\fixif{\module}$, we have that $\fixif{\module}$ is an induced forest of $G[\module]$ and $\fixif{\module} \neq \fixis{\module}$ if and only if $|\fixif{\module}| > |\fixis{\module}|$. We claim that $\module$ must also violate the promotion property of $X_*$. For this it remains to establish that $\state(\nall(\module)) \subseteq \{\zero\}$. We have $\state(\nsib(\module)) \subseteq \{\zero\}$ by assumption, this shows that $\state(\module') = \zero$ for all $\module' \in \nall(\module)$ with $\module' \subseteq \pmodule$. Every module $\module' \in \nall(\module)$ with $\module' \not\subseteq \pmodule$ must be disjoint from $\pmodule$ and hence $\module' \in \nall(\pmodule)$ which implies that $\state(\module') = \zero$ since $X_*$ is forest-nice.
  
  For the base case, we have now established that $X_* \cap \pmodule \in \dpfam_{\pmodule}$, as we have verified that $X_* \cap \pmodule$ is $\pmodule$-forest-nice wrt.\ $\pquotient$, has $\pmodule$-substructure, and has the $\pmodule$-promotion property. We can now proceed by showing the first and second property for the base case when $\state(\pmodule) = \twoedge$. Note that the second property follows from the first one by optimal substructure of $X_*$, so we only have to prove the first property.
  
  If $\pquotient$ is a parallel or series node, then the analysis in \cref{sec:fvs_parallel_series} shows that $\fixif{\pmodule} \in \optfamily(G[\pmodule])$. Since also $X_* \cap \pmodule \in \optfamily(G[\pmodule])$ and both maximize their weight (by definition and max-isolation), the isolation of $X_* \cap \pmodule$ implies $X_* \cap \pmodule = \fixif{\pmodule}$. If $\pquotient$ is a prime node, then we set $X^q_* = \pproj(X_* \cap \pmodule)$ and $\ctarget_* = |X_* \cap \pmodule|$, $\wtarget_* = \wfct(X_* \cap \pmodule)$, $\vtarget_* = |X_*^q|$, $\etarget_* = |E(\pquotient[X_*^q])|$. Hence, we have that $X_* \cap \pmodule \in \dpfam^{\ctarget_*, \wtarget_*, \vtarget_*, \etarget_*}_\pmodule$ and $X_* \cap \pmodule \in \sols^{\ctarget_*, \wtarget_*, \vtarget_*, \etarget_*}_\pmodule$. By max-isolation of $X_* \cap \pmodule$, we therefore have $|\sols^{\ctarget_*, \wtarget_*, \vtarget_*, \etarget_*}_\pmodule| = 1$ and \cref{thm:fvs_nonzero_implies_forest} shows that $|\dpfamcuts^{\ctarget_*, \wtarget_*, \vtarget_*, \etarget_*}_\pmodule| \not\equiv_{2^{\vtarget_* - \etarget_* + 1}} 0$, so $(\ctarget_*, \wtarget_*) \in \cwpairs{\pmodule}$. Also, $(\ctarget_*, \wtarget_*)$ must be the lexicographic maximum in $\cwpairs{\pmodule}$. Therefore, \cref{dfn:recursive_candidate_forest} must pick $\fixif{\pmodule} = X_* \cap \pmodule$; we must have $|X_* \cap \pmodule| > |\fixis{\pmodule}|$, since $G[\pmodule]$ contains an edge and $X_* \cap \pmodule \in \optfamily(G[\pmodule])$. This concludes the proof of the base case.
  
  Now, when proving the three properties for some $\pmodule \in \modint(G)$, we can inductively assume that they hold for all $\module \in \children(\pmodule)$. The argument for the inductive step is essentially the same as for the base case, however $\state(\module) = \twoedge$ can occur now, but for this case we can apply the already proven properties. The first two properties for the child modules allow us to establish $X_* \cap \pmodule \in \dpfam_\pmodule$ even in the inductive step. From that point on, the same argument considering the sets $\dpfamcuts^{\targets}_\pmodule$ can be followed to also obtain the first and second property for $\pmodule$. \qed
\end{proof}

\subsection{Dynamic Programming along Tree Decomposition}\label{sec:fvs_prime_dp}

\newcommand{\iso}{\mathrm{iso}}

We now need to show how to compute the values $|\dpfamcuts_{\pmodule}^{\targets}|$ modulo $2^{\vtarget - \etarget + 1}$ for all $\targets$ when $\pquotient$ is prime, from which we can then obtain the $\pmodule$-candidate forest $\fixif{\pmodule}$ and proceed through the modular decomposition. We will compute these values by performing dynamic programming along the tree decomposition of the quotient graph $\pquotient = G[\pmodule] / \children(\pmodule)$.

\subsubsection*{Precomputed Data.} Let us fix some $\pmodule \in \modint(G)$ and recap the data that is available from solving the previous subproblems. For every $\module \in \children(\pmodule)$, we know the values
\begin{itemize}
  \item $\cv{\module} = |\fixv{\module}| = 1$, $\wv{\module} = \wfct(\fixv{\module})$,
  \item $\cis{\module} = |\fixis{\module}|$, $\wis{\module} = \wfct(\fixis{\module})$,
  \item $\cif{\module} = |\fixif{\module}|$, $\wif{\module} = \wfct(\fixif{\module})$.
\end{itemize}
The algorithm also knows the sets $\fixv{\module}$ and $\fixis{\module}$, but not the sets $\fixif{\module}$, they will be used in the analysis however. Furthermore, we are given a tree decomposition $(\TT^q_{\pmodule}, (\bag^q_t)_{t \in V(\TT^q_{\pmodule})})$ of the quotient graph $\pquotient$ of width $k$ which can be assumed to be very nice by \cref{thm:very_nice_tree_decomposition}. To lighten the notation, we do not annotate the bags $\bag^q_t$ with $\pmodule$, but keep in mind that there is a different tree decomposition for each quotient graph.

\begin{dfn}
  Let $t \in V(\TT^q_\pmodule)$ be a node of the tree decomposition $\TT^q_\pmodule$. The set of \emph{relaxed solutions} $\rsols_{t, \pmodule}$ consists of the vertex subsets $X \subseteq V_t = \pprojinv(V^q_t)$ that satisfy the following properties:
  \begin{itemize}
   \item $X$ is $\pmodule$-forest-nice with respect to $G^q_t$, 
   \item $X$ has $\pmodule$-substructure,
   \item $\forall \module \in \children(\pmodule)\colon \marking_X(\module) = \twoedge \rightarrow (\module \subseteq V_t \setminus \bag_t \vee \text{$G[\module]$ is a clique of size at least 2})$,
   \item $\forall \qvertex \in V^q_t \setminus \bag^q_t\colon (|X \cap \module| \geq 2 \wedge \deg_{G^q_t[\pproj(X)]}(v^q_\module) = 0) \rightarrow X \cap \module = \fixif{\module}$.
  \end{itemize}
\end{dfn}

Let $\rvertex$ be the root node of the tree decomposition $\TT^q_\pmodule$, we want this definition to achieve $\rsols_{\rvertex, \pmodule} = \dpfam_\pmodule$. Hence, the first two properties are a natural requirement. The third and fourth property lead to the $\pmodule$-promotion property at the root node $\rvertex$ and are more intricate to facilitate the dynamic program. To be precise, since the the bag $\bag^q_{\rvertex}$ at the root node $\rvertex$ is empty, the third property is trivially satisfied and the fourth property turns into the $\pmodule$-promotion property. 

We exclude the current bag from consideration, because we only want to check whether a module $\module$ is isolated in $X$ once all incident edges have been introduced. This is certainly the case when $\module$ leaves the current bag, i.e., it is forgotten. If $\module$ is isolated at this point, we can safely replace the independent set $\fixis{\module}$ inside $\module$ by the induced forest $\fixif{\module}$, which cannot decrease the size of $X$. This means, with the exception of modules inducing a clique, that no module $\module$ in the current bag satisfies $\marking_X(\module) = \twoedge$.

The naive dynamic programming routine would not use promotion and track in which modules of the current bag the solution chooses an induced forest (and not just an independent set). By using promotion, we can save this state and only handle the remaining states, namely choosing no vertex, a single vertex, or an independent set. Thereby, we obtain an improved running time.

Due to \cref{thm:fvs_nonzero_implies_forest}, we want to count for each $X \in \rsols_{t, \pmodule}$ the number of consistent homogeneous cuts. Before considering cuts, each module $\module$ in the considered bag has four possible states. The intersection with $X$ can be empty, contain a single vertex, or contain at least two vertices, and in the latter case we distinguish whether $X$ intersects a neighboring module or not. To count the homogeneous cuts naively, we would split all states except the empty state into two states, one for each side of a cut, thus obtaining seven total states. However, it turns out that tracking the cut side is not necessary when $X$ intersects $\module$ in at least two vertices. When $\module$ is isolated, we can simply count it twice, and otherwise $\module$ inherits the cut side from the unique neighboring module that is also intersected by $X$. Hence, five states suffice and we define the cut solutions accordingly. 

\begin{dfn}
  Let $t \in V(\TT^q_{\pmodule})$ be a node of the tree decomposition $\TT^q_{\pmodule}$. The set of \emph{cut solutions} $\csols_{t, \pmodule}$ consists of pairs $(X, (X_L, X_R))$ such that $X \in \rsols_{t, \pmodule}$ and $(X_L, X_R)$ is $\pmodule$-homogeneous and a consistent cut of $G_t[X \setminus (\iso_t(X) \cap \bag_t)]$, where $\iso_t(X) = \bigcup \{\module \in \children(\pmodule) \sep |X \cap \module| \geq 2, \deg_{G^q_t[\pproj(X)]}(\qvertex) = 0\}$.
\end{dfn}
In the case of isolated modules, we consider it easier to account for the cut side when forgetting the module. Hence, the cuts considered in the definition of $\csols_{t, \pmodule}$ do not cover such modules that belong to the current bag $\bag_t$. Again, for the root node $\rvertex$ of the tree decomposition $\TT^q_{\pmodule}$ this extra property will be trivially satisfied as the associated bag is empty. The definition is again built in such a way that $\csols_{\rvertex, \pmodule} = \csols_\pmodule$.

Our dynamic programming algorithm has to track certain additional data of a solution $X$, namely its size $\ctarget = |X|$, its weight $\wtarget = \wfct(X)$ for the isolation lemma, the number $\vtarget = |\pproj(X)|$ of intersected modules, and the number $\etarget = |E(G^q_t[\pproj(X)])|$ of induced edges in the currently considered subgraph $G^q_t$ of the quotient graph $\pquotient$. We need $\vtarget$ and $\etarget$ to apply \cref{thm:forest_hom_cuts}. Accordingly, we define
$\rsols_{t, \pmodule}^{\targets} = \{X \in \rsols_{t, \pmodule} \sep \ctarget = |X \setminus \bag_t|, \wtarget = \wfct(X \setminus \bag_t), \vtarget = |\pproj(X) \setminus \bag^q_t|, \etarget = |E(G^q_t[\pproj(X)])|\}$ and $\csols_{t, \pmodule}^{\targets} = \{(X, (X_L, X_R)) \in \csols_{t, \pmodule} \sep X \in \rsols_{t, \pmodule}^{\targets}\}$. Note that we exclude the current bag in these counts, except for $\etarget$, hence we have to update these counts when we forget a module. This choice simplifies some recurrences in the algorithm, otherwise updating the counts would be a bit cumbersome due to promotion. 

Finally, we can define the table that is computed at each node $t \in V(\TT^q_{\pmodule})$ by our dynamic programming algorithm. Every module $\module$ in the current bag has one of five states for a given solution $X$, these states are denoted by $\states = \{\zero, \one_L, \one_R, \two_0, \two_1\}$. The bold number refers to the size of the intersection $X \cap \module$, i.e., $\zero$ if $X \cap \module = \emptyset$, $\one$ if $|X \cap \module| = 1$, and $\two$ if $|X \cap \module| \geq 2$. For $\one$, we additionally track whether the module belongs to the left ($\one_L$) or right side ($\one_R$) of the considered homogeneous cut. For $\two$, we additionally track how many neighboring modules are intersected by $X$, due to the definition of $\pmodule$-forest-nice this number is either zero ($\two_0$) or one ($\two_1$). As argued before, we will not have any modules $\module$ with $\marking_X(\module) = \twoedge$ in the current bag unless $\module$ induces a clique. 

We remark that there is an edge case when the graph $G[\module]$ is a clique of size at least 2, as in that case the maximum independent sets of $G[\module]$ are simply singletons which are captured by the states $\one_L$ and $\one_R$. As we do not track the degree of such states, we cannot safely perform promotion for them. Instead we directly introduce induced forests inside $\module$ in this exceptional case with the state $\two_1$.

\begin{dfn}
  Let $t \in V(\TT^q_{\pmodule})$ be a node of the tree decomposition $\TT^q_{\pmodule}$. A function $f \colon \bag^q_t \rightarrow \states$ is called a \emph{$t$-signature}. Let $(X, (X_L, X_R)) \in \csols_{t, \pmodule}$ and $X^q = \pproj(X)$. We say that $(X,(X_L,X_R))$ is \emph{compatible} with a $t$-signature $f$ if the following properties hold for every $v^q_\module \in \bag^q_t$:
  \begin{itemize}
   \item $f(v^q_\module) = \zero$ implies that $\marking_X(\module) = \zero$,
   \item $f(v^q_\module) = \one_L$ implies that $\marking_X(\module) = \one$ and $X \cap \module \subseteq X_L$,
   \item $f(v^q_\module) = \one_R$ implies that $\marking_X(\module) = \one$ and $X \cap \module \subseteq X_R$,
   \item $f(v^q_\module) = \two_0$ implies that $\marking_X(\module) = \twoind$ and $\deg_{G^q_t[X^q]}(v^q_\module) = 0$,
   \item $f(v^q_\module) = \two_1$ and $G[\module]$ is not a clique implies that $\marking_X(\module) = \twoind$ and $\deg_{G^q_t[X^q]}(v^q_\module) = 1$,
   \item $f(v^q_\module) = \two_1$ and $G[\module]$ is a clique implies that $\marking_X(\module) = \twoedge$.
  \end{itemize}
  For a $t$-signature $f$, we let $\dpsols_{t,\pmodule}(f)$ denote the set of all $(X, (X_L, X_R)) \in \csols_{t,\pmodule}$ that are compatible with $f$. Similarly, we define $\dpsols_{t,\pmodule}^{\targets}(f)$ for given $\ctarget \in [0, \cfct(\pmodule)]$, $\wtarget \in [0, \wfct(\pmodule)]$, $\vtarget \in [0, |\pmodule|]$, and $\etarget \in [0, \vtarget - 1]$.
\end{dfn}

Fix a parent module $\pmodule \in \modint(G)$ and for every node $t \in V(\TT^q_{\pmodule})$, $t$-signature $f$, and appropriate $\targets$, define the value $\dppoly_t^{\targets}(f) = |\dpsols_{t, \pmodule}^{\targets}(f)|$. Whenever at least one of $\targets$ is negative, we assume that $\dppoly_t^{\targets}(f) = 0$. We will now describe the dynamic programming recurrences to compute $\dppoly_t^{\targets}(f)$ for all choices of $t$, $f$, $\targets$ based on the type of the node $t$ in the very nice tree decomposition $\TT^q_{\pmodule}$.

\subsubsection*{Leaf bag.} We have that $V^q_t = \bag^q_t = \emptyset$ and $t$ has no child. Therefore, the only candidate is $(\emptyset, (\emptyset, \emptyset))$ and we simply need to check if the trackers $\targets$ agree with that:
\begin{equation*}
  \dppoly_t^{\targets}(f) = [\ctarget = \wtarget = \etarget = \vtarget = 0]
\end{equation*}

\subsubsection*{Introduce vertex bag.} We have that $\bag^q_t = \bag^q_s \cup \{\qvertex\}$, where $\qvertex \notin \bag^q_s$ and $s$ is the only child of $t$. For the sake of the write-up, we assume that $f$ is an $s$-signature here. The recurrence is straightforward with the exception of handling the clique case:
\begin{equation*}
  \dppoly_t^{\targets}(f[\qvertex \mapsto \state]) = 
  \begin{cases}
    \dppoly_s^{\targets}(f), & \text{if } \state \in \{\zero, \one_L, \one_R\}, \\
    [\text{$G[\module]$ is not a clique}] \dppoly_s^{\targets}(f), & \text{if } \state = \two_0, \\
    [|\module| > 1 \text{ and $G[\module]$ is a clique}] \dppoly_s^{\targets}(f), & \text{if } \state = \two_1.
  \end{cases}
\end{equation*}
If $G[\module]$ is a clique, then $\marking_X(\module) = \twoind$ can never be satisfied. So, we will directly generate solutions with $\marking_X(\module) = \twoedge$ in this case. If $G[\module]$ is not a clique, such solutions will only be generated at forget nodes by \emph{promotion}. Recall that no edges incident to $\module$ have been introduced yet, which in particular rules out the case that $f(\qvertex) = \two_1$ when $G[\module]$ is not a clique, and the trackers are only updated when we forget a module.

\subsubsection*{Introduce edge bag.} We have that $\{v^q_{\module_1}, v^q_{\module_2}\} \subseteq \bag^q_t = \bag^q_s$, where $\{v^q_{\module_1}, v^q_{\module_2}\}$ denotes the introduced edge and $s$ is the only child of $t$.
Define helper functions $\newedge, \cons \colon \states \times \states \rightarrow \{0,1\}$ by $\newedge(\state_1, \state_2) = [\state_1 \neq \zero \wedge \state_2 \neq \zero]$ and $\cons$ is given by the following table:
\begin{equation*}
  \begin{array}{l|ccccc}
    \cons  & \zero & \one_L & \one_R & \two_0 & \two_1 \\
    \hline
    \zero  & 1 & 1 & 1 & 1 & 1 \\
    \one_L & 1 & 1 & 0 & 0 & 1 \\
    \one_R & 1 & 0 & 1 & 0 & 1 \\
    \two_0 & 1 & 0 & 0 & 0 & 0 \\
    \two_1 & 1 & 1 & 1 & 0 & 0 
  \end{array}
\end{equation*}
The $\cons$-function is used to filter partial solutions that have incompatible states at the newly introduced edge. There are three reasons why states might be incompatible: they belong to different sides of the cut, they directly induce a cycle, or they do not correctly account for the degree in the graph induced by the partial solution.

Furthermore, given a $t$-signature $f$, we define the $s$-signature $\tilde{f}$ as follows. We set $\tilde{f} := f$ if $\cons(f(v^q_{\module_1}), f(v^q_{\module_2})) = 0$ or $\newedge(f(v^q_{\module_1}), f(v^q_{\module_2})) = 0$ or $\two_1 \notin \{f(v^q_{\module_1}), f(v^q_{\module_2})\}$. Otherwise, the introduced edge changes the state from $\two_0$ to $\two_1$ at one of its endpoints, i.e., without loss of generality $f(v^q_{\module_1}) = \two_1$ and $f(v^q_{\module_2}) \in \{\one_L, \one_R\}$ (else, swap role of $\module_1$ and $\module_2$) and we set $\tilde{f} := f[v^q_{\module_1} \mapsto \two_0]$. Finally, the recurrence is given by 
\begin{equation*}
  \dppoly_t^{\targets}(f) = \cons(f(v^q_{\module_1}), f(v^q_{\module_2})) \dppoly_s^{\ctarget, \wtarget, \vtarget, \etarget - \newedge(f(v^q_{\module_1}), f(v^q_{\module_2}))}(\tilde{f}).
\end{equation*}
Observe that we update the edge count, if necessary, in this recurrence. We remark that if $f(v^q_{\module_1}) = \two_1$ and $f(v^q_{\module_2}) \in \{\one_L, \one_R\}$ and $G[\module_1]$ is a clique, we should filter as well, because this means $\marking_X(\module_1) = \twoedge$ and hence $v^q_{\module_1}$ should not receive incident edges in $G^q_t[\pproj(X)]$. One could explicitly adapt the recurrence for this case or instead, as we do, observe that since $\marking_X(\module_1) = \twoind$ is impossible, all entries $\dppoly_s^{\targets}(\tilde{f})$ will be zero due to $\tilde{f}(v^q_{\module_1}) = \two_0$ and hence we do not generate any partial solutions for this case anyway.

\subsubsection*{Forget vertex bag.} We have that $\bag^q_t = \bag^q_s \setminus \{\qvertex\}$, where $\qvertex \in \bag^q_s$ and $s$ is the only child of $t$. Recall that $\cis{\module}$, $\cif{\module}$, $\wv{\module}$, $\wis{\module}$, $\wif{\module}$ denote the size or weight of a singleton set, maximum independent set, or the candidate forest inside $\module$, respectively. The recurrence is given by:
\begin{equation*}
  \begin{array}{lclcl}
    \dppoly_t^{\targets}(f) & = & & & \dppoly_s^{\targets}(f[v^q_\module \mapsto \zero]) \\
    & + & & & \dppoly_s^{\ctarget - 1, \wtarget - \wv{\module}, \vtarget - 1, \etarget}(f[v^q_\module \mapsto \one_L]) \\
    & + & & & \dppoly_s^{\ctarget - 1, \wtarget - \wv{\module}, \vtarget - 1, \etarget}(f[v^q_\module \mapsto \one_R]) \\
    & + & 2 & \cdot & \dppoly_s^{\ctarget - \cif{\module}, \wtarget - \wif{\module}, \vtarget - 1, \etarget}(f[v^q_\module \mapsto \two_0]) \\
    & + & [\text{$G[\module]$ is not a clique}] & \cdot & \dppoly_s^{\ctarget - \cis{\module}, \wtarget - \wis{\module}, \vtarget - 1, \etarget}(f[v^q_\module \mapsto \two_1]) \\
    & + & 2[\text{$|\module| > 1$ and $G[\module]$ is a clique}] & \cdot & \dppoly_s^{\ctarget - \cif{\module}, \wtarget - \wif{\module}, \vtarget - 1, \etarget}(f[v^q_\module \mapsto \two_1]) 
  \end{array}
\end{equation*}
As $\module$ leaves the current bag, we need to update the trackers $\ctarget$, $\wtarget$, and $\vtarget$. The first three cases are straightforward, but the latter three deserve an explanation. If $\module$ had state $\two_0$ before, then $\module \subseteq \iso_s(X)$ and $G[\module]$ cannot be a clique, so we want to promote the independent set in $\module$ to an induced forest and also track the cut side now. Since $\module$ remains isolated, both cut sides are possible, explaining the factor 2. If $G[\module]$ is not a clique and $\module$ had state $\two_1$ before, then we keep the independent set in $\module$ and its cut side is already tracked. If instead $G[\module]$ is a clique and had state $\two_1$ before, then $M \subseteq \iso_s(X)$ and we are taking an edge (= maximum induced forest) inside $\module$ and we need to track its cut side now. 

\subsubsection*{Join bag.} We have that $\bag^q_t = \bag^q_{s_1} = \bag^q_{s_2} = V^q_{s_1} \cap V^q_{s_2}$, where $s_1$ and $s_2$ are the two children of $t$. To state the recurrence for the join bag, we first introduce the \emph{induced forest join} $\oplus_{\indfor} \colon \states \times \states \rightarrow \states \cup \{\bot\}$, where $\bot$ stands for an undefined value, which is defined by the following table:
\begin{equation*}
  \begin{array}{l|ccccc}
    \oplus_{\indfor}  & \zero & \one_L & \one_R & \two_0 & \two_1 \\
    \hline
    \zero  & \zero & \bot & \bot & \bot & \bot \\
    \one_L & \bot & \one_L & \bot & \bot & \bot \\
    \one_R & \bot & \bot & \one_R & \bot & \bot \\
    \two_0 & \bot & \bot & \bot & \two_0 & \two_1 \\
    \two_1 & \bot & \bot & \bot & \two_1 & \bot 
  \end{array}
\end{equation*}
When combining two partial solutions, one coming from child $s_1$ and the other one coming from $s_2$, we want to ensure that they have essentially the same states on $\bag^q_t = V^q_{s_1} \cap V^q_{s_2}$. However for the state $\two_1$ (if the considered modules does not induce a clique), we need to decide which child contributes the incident edge in the quotient graph and ensure that the other child does not contribute an additional edge. This is implemented by the operation $\oplus_{\indfor}$. Given some set $S$ and functions $f, g \colon S \rightarrow \states$, we abuse notation and let $f \oplus_{\indfor} g \colon S \rightarrow \states \cup \{\bot\}$ denote the function obtained from $f$ and $g$ by pointwise application of $\oplus_{\indfor}$. We also define $\oplus_{\two} = \oplus_{\indfor} \big|_{\{\two_0, \two_1\} \times \{\two_0, \two_1\}}$ and similarly extend it to functions.

\newcommand{\cliquebag}{\widetilde{\bag}}
For any module $\module$ with $\qvertex \in \bag^q_t$ that induces a clique, the state $\two_1$ behaves differently and should agree on both children. Hence, we define $\cliquebag^q_t = \{\qvertex \in \bag^q_t \sep G[\module] \text{ is a clique}\}$. We can now state a first version of the recurrence, which will be transformed further to enable efficient computation. The preliminary recurrence is given by
\begin{equation*}
  \dppoly_t^{\targets}(f) = \sum_{\substack{\ctarget_1 + \ctarget_2 = \ctarget \\ \wtarget_1 + \wtarget_2 = \wtarget}} \sum_{\substack{\vtarget_1 + \vtarget_2 = \vtarget \\ \etarget_1 + \etarget_2 = \etarget}} \smashoperator[r]{\sum_{\substack{f_1, f_2 \colon \bag^q_t \setminus \cliquebag^q_t \rightarrow \states \colon \\ f_1 \oplus_{\indfor} f_2 = f}}} A_{s_1}^{\ctarget_1, \wtarget_1, \vtarget_1, \etarget_1}(f_1 \cup f\restrict{\cliquebag^q_t}) A_{s_2}^{\ctarget_2, \wtarget_2, \vtarget_2, \etarget_2}(f_2 \cup f\restrict{\cliquebag^q_t}),
\end{equation*}
where we ensure that all states agree for modules inducing cliques and otherwise apply the induced forest join $\oplus_{\indfor}$.

\newcommand{\eqdomain}{D^=}
\newcommand{\neqdomain}{D^{\neq}}

To compute this recurrence quickly, we separately handle the part of $\oplus_{\indfor}$ that essentially checks for equality and reduce the remaining part to already known the results. Given a $t$-signature $f \colon \bag^q_t \rightarrow \states$, we define $\eqdomain_t(f) := \cliquebag^q_t \cup f^{-1}(\{\zero, \one_L, \one_R\})$ and $\neqdomain_t(f) := \bag^q_t \setminus \eqdomain_t(f)$. We decompose $f$ into $f^= := f \restrict{\eqdomain_t(f)}$ and $f^{\neq} := f \restrict{\neqdomain_t(f)}$. 

\newcommand{\boldx}{\vec{\mathbf{x}}}

We fix the values $\targets$ and a function $g \colon S \rightarrow \states$ where $\cliquebag^q_t \subseteq S \subseteq \bag^q_t$ is some subset of the current bag containing the clique modules. We claim that the entries $\dppoly_t^{\targets}(f)$ for all $t$-signatures $f$ with $f^= = g$ (including $\eqdomain_t(f) = S$) can be computed in time $\Oh^*(2^{|\bag^q_t \setminus S|})$. We branch on $\boldx_1 = (\ctarget_1, \wtarget_1, \vtarget_1, \etarget_1)$, which determines the values $\boldx_2 = (\ctarget_2, \wtarget_2, \vtarget_2, \etarget_2)$, and define the auxiliary table $T_g^{\boldx_1, \boldx_2}$ indexed by $h\colon \bag^q_t \setminus S \rightarrow \{\two_0, \two_1\}$ as follows
\begin{equation*}
  T_g^{\boldx_1, \boldx_2}(h) = {\sum_{\substack{h_1, h_2 \colon \bag^q_t \setminus S \rightarrow \{\two_0, \two_1\} \colon \\ h_1 \oplus_{\two} h_2 = h}}} \dppoly_{s_1}^{\ctarget_1, \wtarget_1, \vtarget_1, \etarget_1}(g \cup h_1) \dppoly_{s_2}^{\ctarget_2, \wtarget_2, \vtarget_2, \etarget_2}(g \cup h_2).
\end{equation*}
Since $\oplus_{\two}$ is essentially the same as addition over $\{0,1\}$ with $1 + 1$ being undefined, we can compute all entries of $T_g^{\boldx_1, \boldx_2}$ in time $\Oh^*(2^{|\bag^q_t \setminus S|})$ by the work of, e.g., van Rooij~\cite[Theorem 2]{Rooij21} using fast subset convolution and the fast fourier transform. Then, for every $t$-signature $f$ with $f^= = g$, we obtain $\dppoly_t^{\targets}(f)$ by summing $T_g^{\boldx_1, \boldx_2}(f^{\neq})$ over all $\boldx_1 + \boldx_2 = (\ctarget, \wtarget, \vtarget, \etarget)$. Since there are only polynomially many choices for $\boldx_1$ and $\boldx_2$, this proves the claim.

In conclusion, to compute $\dppoly_t^{\targets}(f)$ for all $\targets$, $f$, we need time
\begin{align*}
  \sum_{\cliquebag^q_t \subseteq S \subseteq \bag^q_t} \sum_{g \colon S \rightarrow \{\zero, \one_L, \one_R\}} \Oh^*(2^{|\bag^q_t \setminus S|}) & \leq \sum_{S \subseteq \bag^q_t} \Oh^*(3^{|S|}2^{|\bag^q_t \setminus S|}) = \Oh^*\left(\sum_{i = 0}^{|\bag^q_t|} \binom{|\bag^q_t|}{i} 3^i 2^{|\bag^q_t| - i}\right) \\
  & = \Oh^*((3 + 2)^{|\bag^q_t|}) = \Oh^*(5^k).
\end{align*}

\begin{lem}\label{thm:fvs_mtw_prime_algo}
	Let $\pmodule \in \modint(G)$ be a prime node and $\wfct\colon V \rightarrow [2|V|]$ a weight function. Given a tree decomposition of $\pquotient$ of width $k$ and the sets $\fixv{\module}$, $\fixis{\module}$ and values $\cif{\module}$, $\wif{\module}$ for all $\module \in \children(\pmodule)$, the values $|\dpfamcuts^\targets_\pmodule|$ can be computed in time $\Oh^*(5^k)$ for all $\targets$.
\end{lem}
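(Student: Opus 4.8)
By \cref{thm:very_nice_tree_decomposition} we may assume the given tree decomposition $(\TT^q_{\pmodule}, (\bag^q_t)_{t \in V(\TT^q_{\pmodule})})$ of $\pquotient$ is very nice. The plan is to prove by bottom-up induction along this tree decomposition that the stated recurrences correctly compute $\dppoly_t^{\targets}(f) = |\dpsols_{t, \pmodule}^{\targets}(f)|$ for every node $t$, every $t$-signature $f$, and all admissible $\targets$; the lemma then follows by reading off the root entry. Indeed, at the root $\rvertex$ the bag $\bag^q_{\rvertex}$ is empty, so the only $\rvertex$-signature is the empty function, the third defining property of $\rsols_{\rvertex, \pmodule}$ is vacuous, the fourth turns into the $\pmodule$-promotion property, and the cut in $\csols_{\rvertex, \pmodule}$ covers all of $X$; hence $\dpsols_{\rvertex, \pmodule}^{\targets}(\emptyset) = \csols_{\rvertex, \pmodule}^{\targets} = \dpfamcuts_\pmodule^{\targets}$, so $\dppoly_{\rvertex}^{\targets}(\emptyset) = |\dpfamcuts_\pmodule^{\targets}|$.

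The structural backbone of the induction is the observation that every $(X, (X_L, X_R)) \in \csols_{t, \pmodule}$ is compatible with \emph{exactly one} $t$-signature: the five states determine $\marking_X(\module)$, the cut side when $\marking_X(\module) = \one$, and $\deg_{G^q_t[X^q]}(\qvertex)$ when $\marking_X(\module) = \twoind$, while the third defining property of $\rsols_{t, \pmodule}$ forbids marking $\twoedge$ on a non-clique bag module and $\marking_X(\module) = \twoind$ is impossible for a clique module. Thus the families $\dpsols_{t, \pmodule}^{\targets}(f)$ partition $\csols_{t, \pmodule}^{\targets}$, and it remains to check that each recurrence realizes the passage from the child solutions to those at $t$. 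For leaf and introduce-vertex nodes this is routine bookkeeping, the only subtlety being that the state $\two_0$ is forbidden for clique modules, whose $\twoedge$-solutions are produced directly via $\two_1$. For introduce-edge nodes one performs a case analysis against the $\cons$ and $\newedge$ tables: $\cons$ discards states lying on opposite sides of the homogeneous cut, states that immediately induce a cycle, and states realizing a forbidden adjacency of two $\twoind$-modules, while passing to the auxiliary $s$-signature $\tilde f$ records that an introduced quotient edge may upgrade an endpoint from $\two_0$ to $\two_1$ (with $\etarget$ decremented via $\newedge$); the spurious case of a clique module in state $\two_1$ acquiring an incident edge is killed automatically since $\tilde f$ then assigns it $\two_0$, a state with no solutions on a clique.

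The two delicate recurrences, where I expect most of the work to lie, are the forget-vertex and join nodes. At a forget node, removing $\module$ from the bag forces a commitment to $\module$'s final contribution and an update of $\ctarget, \wtarget, \vtarget$ (not $\etarget$): states $\zero, \one_L, \one_R$ contribute $\emptyset$ resp.\ $\fixv{\module}$ on the recorded side; a state $\two_1$ on a non-clique $\module$ keeps $\fixis{\module}$ on its (forced) side; a state $\two_0$ triggers \emph{promotion}, replacing $\fixis{\module}$ by $\fixif{\module}$, which is legitimate because all quotient edges at $\module$ have been introduced and $\module$ is isolated in $X^q$, hence forms its own component of $\pquotient[X^q]$ with both cut sides available, explaining the coefficient $2$; and a state $\two_1$ on a clique module of size at least $2$ installs the edge $\fixif{\module}$, again an isolated component with free side, whence the coefficient $2$ as well. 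Verifying this amounts to tracking how $\iso_t(X)$ and the counts change and, crucially, to confirming that the factors of $2$ accumulated over all forget nodes reproduce $2^{\cc(\pquotient[X^q])}$ for each $X$: isolated $\twoind$- and clique-components get their factor from the explicit coefficient, whereas every other component of $\pquotient[X^q]$ contains a $\one$-module --- since by $\pmodule$-forest-niceness w.r.t.\ $\pquotient$ two $\twoind$-modules cannot be adjacent --- and the two $\one_L/\one_R$ summands at that module's forget node, together with the $\cons$-filtering at the earlier introduce-edge nodes, leave exactly the two globally consistent side assignments of the component. This is precisely the ingredient making $|\dpfamcuts_\pmodule^{\targets}|$ come out correctly, after which \cref{thm:forest_hom_cuts} and \cref{thm:fvs_nonzero_implies_forest} give the intended relation to $|\sols_\pmodule^{\targets}|$ modulo $2^{\vtarget - \etarget + 1}$. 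For join nodes, the preliminary recurrence is the standard split-and-glue argument using $V_t = V_{s_1} \cup V_{s_2}$ and $V_{s_1} \cap V_{s_2} = \bag_t$; the only twist is that a $\two_1$-module receives its unique incident quotient edge from exactly one child, handled by the partial operation $\oplus_{\indfor}$, while clique modules (whose $\two_1$ encodes $\twoedge$) must agree on both children.

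Finally, the running time is exactly the count performed above: there are $\Oh^*(5^k)$ table entries per node of $\TT^q_{\pmodule}$ (at most $5^{k+1}$ signatures, and $\targets$ ranges over polynomially many tuples since $\wfct$ has range $[2|V|]$), each leaf/introduce/forget recurrence is evaluable in polynomial time per entry, and the join recurrence is computed in total time $\Oh^*(5^k)$ per node via the auxiliary tables $T_g^{\boldx_1, \boldx_2}$ and fast subset convolution in the sense of \cite{Rooij21}. Multiplying by the polynomial number of tree-decomposition nodes yields the claimed $\Oh^*(5^k)$ bound. \qed
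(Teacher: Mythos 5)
Your proposal follows essentially the same route as the paper's proof: reduce to a very nice decomposition, argue that the root entry degenerates to $|\dpfamcuts_\pmodule^{\targets}|$, verify each recurrence via the same case analysis (including the clique edge cases at introduce/forget nodes, cut-side inheritance at introduce-edge nodes, the factor $2$ from promotion and isolated modules at forget nodes, and $\oplus_{\indfor}$ with the subset-convolution speedup at join nodes), and the identical running-time count. The level of detail and all key ideas match the paper's argument, so no further comparison is needed.
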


\begin{proof}
	From the sets $\fixv{\module}$ and $\fixis{\module}$, we directly obtain the values $\wv{\module}$, $\cis{\module}$, $\wis{\module}$ for all $\module \in \children(\pmodule)$. We then transform the given tree decomposition into a very nice tree decomposition $(\TT^q_{\pmodule}, (\bag^q_t)_{t \in V(\TT^q_{\pmodule})})$ using \cref{thm:very_nice_tree_decomposition} and run the described dynamic programming algorithm described before to compute the values $\dppoly_{\rvertex}^{\targets}(\emptyset)$, where $\rvertex$ is the root of $\TT^q_{\pmodule}$, for all appropriate values of $\targets$. Assuming the correctness of the recurrences, we have that $\dppoly_{\rvertex}^{\targets}(\emptyset) = |\dpsols_{\rvertex}^{\targets}(\emptyset)| = |\dpfamcuts_\pmodule^{\targets}|$ by definition and the degeneration of the conditions at $\rvertex$.
	
  For the running time, note that for every $t \in V(\TT^q_{\pmodule})$, there are at most $\Oh^*(5^k)$ table entries $\dppoly_{t}^{\targets}(f)$ and the recurrences can be computed in polynomial time except for the case of join bags. In the case of a join bag, we have shown how to compute all table entries simultaneously in time $\Oh^*(5^k)$. By \cref{thm:very_nice_tree_decomposition} the tree decomposition $\TT^q_{\pmodule}$ has a polynomial number of nodes, hence the running time follows and it remains to sketch the correctness of the dynamic programming recurrences.
	
  For leaf bags, the correctness follows by observing that $\dpsols_t(\emptyset) = \csols_{t, \pmodule} = \{(\emptyset, (\emptyset, \emptyset))\}$. So, we start by considering introduce vertex bags. We set up a bijection between $\dpsols_{t}(f[\qvertex \mapsto \state])$ and $\dpsols_{s}(f)$ depending on $\state \in \states$. We map $(X, (X_L, X_R)) \in \dpsols_{s}(f)$ to 
	\begin{itemize}
  \item $(X, (X_L, X_R))$ if $\state = \zero$,
  \item $(X \cup \fixv{\module}, (X_L \cup \fixv{\module}, X_R))$ if $\state = \one_L$ ($\one_R$ is analogous),
  \item $(X \cup \fixis{\module}, (X_L, X_R))$ if $\state = \two_0$ and $G[\module]$ is not a clique,
  \item $(X \cup \fixif{\module}, (X_L, X_R))$ if $\state = \two_1$ and $G[\module]$ is a clique of size at least 2.
 \end{itemize}
 In the last two cases, we have $\module \subseteq \iso_t(X)$, so we do not need to track the cut side. Using $\pmodule$-substructure it is possible to verify that these mappings constitute bijections. The case that $\state = \two_1$ and $G[\module]$ is not a clique is impossible, since no edges incident to $\qvertex$ are introduced yet. The case that $\state = \two_0$ and $G[\module]$ is a clique is impossible, since any subset of $\module$ of size at least two has to induce an edge.
 
 For introduce edge bags, we highlight the case that $\tilde{f}(v^q_{\module_1}) = \two_0$ and $f(v^q_{\module_1}) = \two_1$, where $\module_1$ needs to inherit the cut side from $\module_2$. Formally, a partial solution $(X, (X_L, X_R)) \in \dppoly_s^{\ctarget, \wtarget, \vtarget, \etarget - 1}(\tilde{f})$ with $f(v^q_{\module_2}) = \one_L$ is bijectively mapped to $(X, (X_L \cup (X \cap \module), X_R)) \in \dppoly_t^{\targets}(f)$ and analogously when $f(v^q_{\module_2}) = \one_R$. We have already argued the correct handling of the clique case when presenting the recurrence. The remaining cases are straightforward.
 
 We proceed with forget vertex bags. First, we observe that all considered cases are disjoint, hence no overcounting occurs. The handling of the cases $\zero$, $\one_L$, and $\one_R$ is standard and we omit further explanation. For isolated modules, we need to track the cut side when we forget them, since both sides are possible, we multiply with the factor 2. Furthermore, we need to perform the promotion when we forget a module with state $\two_0$. The most involved case is $\fixif{\module} \neq \fixis{\module}$ and $G[\module]$ is not a clique, then we perform promotion on the isolated module $\module$, swapping $\fixis{\module}$ with $\fixif{\module}$, and now have to track the cut side of $\module$, again yielding the factor 2. Formally, if $f$ is a $t$-signature and $(X, (X_L, X_R)) \in \dpsols_s(f[\qvertex \mapsto \two_0])$, then $G[\module]$ is not a clique and we obtain the solutions $((X \setminus \module) \cup \fixif{\module}, (X_L \cup \fixif{\module}, X_R)) \in \dpsols_t(f)$ and $((X \setminus \module) \cup \fixif{\module}, (X_L, X_R \cup \fixif{\module})) \in \dpsols_t(f)$.
 
 For the join bags, we have that $V^q_{s_1} \cap V^q_{s_2} = \bag^q_t$, so the behavior on the intersection is completely described by the signature $f$. Every $(X,(X_L,X_R)) \in \dpsols_t(f)$ splits into a solution $(X^1, (X^1_L, X^2_L)) \in \dpsols_{s_1}(f_1)$ at $s_1$ and a solution $(X^2, (X^2_L, X^2_R)) \in \dpsols_{s_2}(f_2)$ at $s_2$, where for $i \in [2]$ we set $X^i = X \cap V_{s_i}$, $X^i_L = (X_L \cap V_{s_i}) \setminus (\iso_{s_i}(X^i) \cap \bag_{s_i})$, $X^i_R = (X_R \cap V_{s_i}) \setminus (\iso_{s_i}(X^i) \cap \bag_{s_i})$ and 
 \begin{equation*}
   f_i(\qvertex) = \begin{cases}
     f(\qvertex), & \text{if } f(\qvertex) \neq \two_1, \\
     \two_1,      & \text{if $f(\qvertex) = \two_1$ and $G[\module]$ is clique of size $\geq 2$}, \\
     \two_d,      & \text{if $f(\qvertex) = \two_1$ and $G[\module]$ is not a clique and $\deg_{G^q_{s_i}[\pproj(X^i)]}(\qvertex) = d$}.
                   \end{cases}
 \end{equation*}
 For a non-clique module with state $\two_1$, the edge leading to degree 1 is present at one of the child nodes $s_1$ or $s_2$, but not at the other one. At the child, where the edge is not present, the module has state $\two_0$ and is isolated, therefore we do not track the cut side and hence have to account for this in the definitions of $X^i_L$ and $X^i_R$. This map can be seen to be a bijection between $\dpsols_t(f)$ and $\bigcup_{f_1, f_2} \dpsols_{s_1}(f_1) \times \dpsols_{s_2}(f_2)$, where the union is over all $f_1, f_2 \colon \bag^q_t \rightarrow \states$ such that $f\restrict{\cliquebag^q_t} = f_1\restrict{\cliquebag^q_t} = f_2\restrict{\cliquebag^q_t}$ and $f\restrict{\bag^q_t \setminus \cliquebag^q_t} = f_1\restrict{\bag^q_t \setminus \cliquebag^q_t} \oplus_{\indfor} f_2\restrict{\bag^q_t \setminus  \cliquebag^q_t}$, which is implemented by the join-recurrence once we account for the trackers $\ctarget$, $\wtarget$, $\vtarget$, and $\etarget$; as every edge is introduced exactly once and the other trackers are only computed for forgotten vertices, no overcounting happens here and we only have to consider how the trackers are distributed between $s_1$ and $s_2$. We also remark that the correctness here requires that the promotion property is only applied to forgotten modules which have received all incident edges already. \qed
\end{proof}

Finally, we have assembled all ingredients to prove the desired theorem.

\begin{thm}
	There exists a Monte-Carlo algorithm that, given a tree decomposition of width $k$ for every prime quotient graph in the modular decomposition of $G$, solves \FVS in time $\Oh^*(5^k)$. The algorithm cannot give false positives and may give false negatives with probability at most $1/2$.
\end{thm}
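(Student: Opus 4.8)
The plan is to reduce \FVS to its dual \IF and then run the outer dynamic program of \cref{algo:outer_dp} with a single, globally sampled weight function. First I would observe that a vertex set $S \subseteq V$ is a feedback vertex set of $G$ if and only if $G[V \setminus S]$ is a forest, so an \FVS-instance asking whether there is a solution of size at most $s$ is equivalent to the \IF-instance $(G, |V| - s)$; it therefore suffices to compute $\cif{V} = |\fixif{V}|$ and answer positively iff $\cif{V} \ge |V| - s$. After disposing of the trivial case $|V| \le 1$, I would precompute for every $\module \in \modtree(G)$ the maximum-weight singleton $\fixv{\module}$ (polynomial time) and a maximum-weight maximum independent set $\fixis{\module}$ of $G[\module]$ together with $\cis{\module}, \wis{\module}, \wv{\module}$; the latter costs total time $\Oh^*(2^{\modtw(G)}) = \Oh^*(2^k)$ by running the \IS algorithm of \cref{thm:is_mtw_algo} on the given tree decompositions. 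Then I would sample $\wfct \colon V \to [2|V|]$ uniformly at random and process the modular decomposition tree bottom-up: for singleton modules set $\fixif{\{v\}} = \{v\}$; for parallel and series nodes use the polynomial-time rules of \cref{sec:fvs_parallel_series} (justified by \cref{thm:forest_nice}, \cref{thm:fvs_opt_sub}, \cref{thm:fvs_modtw_series}); and for a prime node $\pmodule$ invoke \cref{thm:fvs_mtw_prime_algo} to obtain $|\dpfamcuts_\pmodule^\targets|$ for all $\targets$ in time $\Oh^*(5^k)$, from which \cref{dfn:recursive_candidate_forest} yields $\cif{\pmodule}$ and $\wif{\pmodule}$. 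Crucially the algorithm only ever needs the \emph{values} $\cif{\module}, \wif{\module}$ of the children, never the (possibly non-unique) sets $\fixif{\module}$ themselves, so this is well-defined.

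The running-time bound is then immediate: the modular decomposition tree has at most $2|V|$ nodes, the precomputation costs $\Oh^*(2^k)$, each parallel or series node costs polynomial time, and each prime node costs $\Oh^*(5^k)$ by \cref{thm:fvs_mtw_prime_algo}; since $\Oh^*$ absorbs the polynomial number of nodes and $2^k \le 5^k$, the total is $\Oh^*(5^k)$.

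For correctness, I would first note that there is never a false positive: every $\fixif{\pmodule}$ is an induced forest of $G[\pmodule]$ by construction --- for prime nodes this is exactly the guarantee of \cref{dfn:recursive_candidate_forest} via \cref{thm:fvs_nonzero_implies_forest}, for the other node types it is direct --- so $\cif{V}$ never exceeds the maximum size of an induced forest of $G$, regardless of $\wfct$. For the false-negative bound I would apply \cref{thm:max_isolation} with $N = 2|V|$: with probability at least $1 - |V|/(2|V|) = 1/2$ the sampled $\wfct$ max-isolates $\optfamily(G)$; assume it does and let $X_*$ be the isolated maximum induced forest. If $\marking_{X_*}(V) = \twoedge$, then property 2 of \cref{thm:fvs_mtw_main_correctness} applied to $\pmodule = V$ says $\fixif{V}$ is a maximum induced forest, so $\cif{V}$ is correct. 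Otherwise, since $|V| \ge 2$ forces $\marking_{X_*}(V) \in \{\twoind, \twoedge\}$ (a graph with an edge has the induced forest consisting of the two endpoints, and an edgeless graph has $V$ itself as maximum induced forest), we must have $\marking_{X_*}(V) = \twoind$, so $X_*$ is a maximum independent set; property 1 of \cref{thm:fvs_mtw_main_correctness} (which holds for all modules) then gives $X_* = \fixis{V}$, hence $\cis{V} = |X_*|$ equals the maximum induced forest size; combined with $\cif{V} \ge \cis{V}$ (an immediate consequence of the stated properties of candidate forests) and the no-false-positive inequality $\cif{V} \le |X_*|$, we again conclude that $\cif{V}$ is correct. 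Thus the algorithm errs only when max-isolation fails, which has probability at most $1/2$.

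The main obstacle is precisely this error analysis: arguing that one global sampling suffices even though the randomized inner DP may solve some subproblems incorrectly. Almost all of this work is already discharged by \cref{thm:fvs_mtw_main_correctness}, whose proof performs the delicate induction showing that max-isolation of $\optfamily(G)$ is inherited by all the ``important'' subproblems; the only residual subtlety, folded into the case distinction above, is handling the degenerate situation where a maximum induced forest of $G$ is merely an independent set, where one falls back on $\cis{V}$ and the monotone inequality $\cif{V} \ge \cis{V}$.
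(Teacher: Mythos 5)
Your proposal is correct and follows essentially the same route as the paper: reduce \FVS to \IF, precompute $\fixv{\module}$ and $\fixis{\module}$ via \cref{thm:is_mtw_algo}, sample one global weight function, run the outer DP with \cref{thm:fvs_mtw_prime_algo} at prime nodes, and invoke \cref{thm:fvs_mtw_main_correctness} for the error analysis. The only difference is that you explicitly treat the degenerate case $\marking_{X_*}(V) \neq \twoedge$ (where the maximum induced forest of $G$ is an independent set), which the paper's proof subsumes in a single appeal to the correctness lemma; your extra case analysis is a harmless and arguably welcome refinement rather than a different argument.
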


\begin{proof}
	Solving the complementary problem \IF, we begin by computing the sets $\fixv{\pmodule}$ and $\fixis{\pmodule}$ for all $\pmodule \in \modtree(G)$ in time $\Oh^*(2^k)$ using \cref{thm:is_mtw_algo}. We sample a weight function $\wfct\colon V \rightarrow [2n]$ uniformly at random, which max-isolates $\optfamily(G)$ with probability at least $1/2$ by \cref{thm:max_isolation}. We generate the sets $\fixif{\pmodule}$ for the base cases $\pmodule = \{v\}$, $v \in V$. 
	
	By bottom-up dynamic programming along the modular decomposition, we inductively compute the values $\cif{\pmodule}$ and $\wif{\pmodule}$, $\pmodule \in \modint(G)$, given the values $\cif{\module}$ and $\wif{\module}$ for all $\module \in \children(\pmodule)$. To do so, we distinguish whether $\pmodule$ is a parallel, series, or prime node. In the first two cases, we can compute these values in polynomial time by \cref{sec:fvs_parallel_series}. 
	
	In the prime case, we compute the values $|\dpfamcuts^\targets_\pmodule|$ in time $\Oh^*(5^k)$ using \cref{thm:fvs_mtw_prime_algo}. From these values, we can obtain the values $\cif{\pmodule}$ and $\wif{\pmodule}$ by the description in \cref{sec:fvs_prime_candidate} in polynomial time. As the modular decomposition has a polynomial number of nodes, the running time follows.
	
	If $\cif{V} \geq \budget$, then the algorithm returns true and otherwise the algorithm returns false. It remains to prove the correctness of this step, assuming that the weight function $\wfct$ is isolating. By \cref{thm:fvs_mtw_main_correctness}, we have that $\fixif{V}$ is a maximum induced forest of $G[V] = G$ if $\wfct$ is isolating and since $\cif{V}
	 = |\fixif{V}|$ this shows that the algorithm is correct in this case. Since we always ensure that $\fixif{V}$ is an induced forest, but not necessarily maximum, even if $\wfct$ is not isolating, the algorithm cannot return false positives. \qed
\end{proof}

\section{Lower Bounds}\label{sec:modtw_lb}

In this section, we prove the tight lower bounds for \CVC and \FVS parameterized by twinclass-pathwidth, cf.~\cref{thm:intro_lower_bounds}. The construction principle follows the style of Lokshtanov et al.~\cite{LokshtanovMS18}. On a high level, that means the resulting graphs can be interpreted as a \emph{matrix of blocks}, where each block spans several rows and columns. Every row is a long \emph{path-like gadget} that simulates a constant number of variables of the \SAT instance and which contributes 1 unit of twinclass-pathwidth. The number of simulated variables is tied to the running time we want to rule out. For technical reasons, we consider bundles of rows simulating a \emph{variable group} of appropriate size. Every column corresponds to a clause and consists of gadgets that \emph{decode} the states on the path gadgets and check whether the resulting assignment satisfies the clause.

In both lower bounds, the main technical contribution is the design of the \emph{path gadgets}. Whereas the design of the \emph{decoding gadgets} can be adapted from known constructions. The main challenge in the construction of the path gadgets is that the appearance of twinclasses \emph{restricts} the design space: we \emph{cannot} attach separate gadgets to each vertex in the twinclass, but only gadgets to read the state of the twinclass as a \emph{whole}. To interface with the decoding gadgets, each path gadget contains a \emph{clique-like} center containing one vertex per desired state of the path gadget. An additional complication is the \emph{transitioning} of the state throughout a long path, where the presence of twinclasses means that we have \emph{less control} over the transitioning compared to the sparse case, e.g., when simply parameterizing by pathwidth.

\subsection{Connected Vertex Cover}
\label{sec:modtw_cvc_lb}

This subsection is devoted to proving that \CVC parameterized by twinclass-pathwidth cannot be solved in time $\Oh^*((5-\eps)^{\tcpw(G)})$ for some $\eps > 0$ unless the \SETH fails. We first design the path gadget and analyze it in isolation and afterwards we present the complete construction. The decoding gadgets are directly adapted from the lower bound for \CVC parameterized by pathwidth given by Cygan et al.~\cite{CyganNPPRW11arxiv}.

\newpage
\subsubsection{Path Gadget Construction and Analysis}

\paragraph{Root.} We create a vertex $\rvertex$ called the \emph{root} and attach a vertex $\rvertex'$ of degree 1 to ensure that every connected vertex cover contains $\rvertex$. Given a subset $X \subseteq V(G)$ with $\rvertex \in X$, a vertex $v \in X$ is \emph{root-connected} in $X$ if there is a $v,\rvertex$-path in $G[X]$. We just say \emph{root-connected} if $X$ is clear from the context. Note that $G[X]$ is connected if and only if all vertices of $X$ are root-connected in $X$. 

\paragraph{States.} We define the three atomic states $\atoms = \{\zero, \one_0, \one_1\}$ and define the two predicates $\sol, \conn \colon \atoms \rightarrow \{0,1\}$ by $\sol(\bolda) = [\bolda \in \{\one_0, \one_1\}]$ and $\conn(\bolda) = [\bolda = \one_1]$. The atom $\zero$ means that a vertex is not inside the partial solution; $\one_1$ and $\one_0$ indicate that a vertex is inside the partial solution and the subscript indicates whether it is root-connected or not. Building on these atomic states, we define five states consisting of four atomic states each: 
\begin{itemize}
  \item $\state^1  = (\zero_{\phantom{0}}, \zero_{\phantom{0}}, \one_1, \one_1)$, 
 \item $\state^2  = (\one_0, \zero_{\phantom{0}}, \one_1, \one_0)$, 
 \item $\state^3  = (\one_1, \zero_{\phantom{0}}, \one_0, \one_0)$, 
 \item $\state^4  = (\one_0, \one_0, \one_1, \zero_{\phantom{0}})$, 
 \item $\state^5  = (\one_1, \one_1, \one_0, \zero_{\phantom{0}})$.
\end{itemize}
Why the states are numbered in this way will become clear later. 
We collect the five states in the set $\states = \{\state^1, \ldots, \state^5\}$ and use the notation $\state^\ell_i \in \atoms$, $i \in [4]$, $\ell \in [5]$, to refer to the $i$-th coordinate of state $\state^\ell$.

\paragraph{Path gadget.} The path gadget $P$ is constructed as follows. We create 15 \emph{central} vertices $w_{\ell, i}$, $\ell \in [5]$, $i \in [3]$, in 5 sets $W_\ell = \{w_{\ell, 1}, w_{\ell, 2}, w_{\ell, 3}\}$ of size 3 and each set will form a twinclass. We create 2 \emph{input} vertices $u_1, u_2$, 4 \emph{cost} vertices $w_{+,1}, \ldots, w_{+,4}$, 5 \emph{clique} vertices $v_1, \ldots, v_5$, and 5 \emph{complement} vertices $\bar{v}_1, \ldots, \bar{v}_5$. Furthermore, for every $f \in [4]$, we create 2 \emph{auxiliary} vertices $a_{1,f}, a_{2,f}$, 2 \emph{indicator} vertices $b_{0,f}, b_{1,f}$, and 2 \emph{connectivity} vertices $c_{0,f}, c_{1,f}$. Finally, we create 4 further auxiliary vertices $\bar{a}_{1,1}, \bar{a}_{2,1}, \bar{a}_{1,2}, \bar{a}_{2,2}$ and 4 further connectivity vertices $\bar{c}_{0,1}, \bar{c}_{1,1}, \bar{c}_{0,2}, \bar{c}_{1,2}$. The vertices $a_{1,4}$ and $\bar{a}_{1,2}$ will also be called \emph{output} vertices.

We add edges such that the central sets $W_\ell$, $\ell \in [5]$, are pairwise adjacent twinclasses, i.e. they induce a complete 5-partite graph, and such that the clique vertices $v_\ell$, $\ell \in [5]$, form a clique. Each complement vertex $\bar{v}_\ell$, $\ell \in [5]$, is made adjacent to $W_\ell$ and to $v_\ell$. The cost vertices $w_{+,1}$ and $w_{+,2}$ are made adjacent to $W_1$; $w_{+,3}$ is made adjacent to $W_2$; and $w_{+,4}$ is made adjacent to $W_3$. 

For every $f \in [4]$, we add edges $\{a_{1,f}, a_{2,f}\}$, $\{a_{2,f}, b_{1,f}\}$, $\{b_{1,f}, b_{0,f}\}$, $\{b_{0,f}, a_{1,f}\}$, forming a $C_4$, and the edges $\{a_{1,f}, c_{1,f}\}$ and $\{c_{0,f}, c_{1,f}\}$. For every $i \in [2]$, we add edges $\{\bar{a}_{1,i}, \bar{a}_{2,i}\}$, $\{\bar{a}_{1,i}, \bar{c}_{1,i}\}$, $\{\bar{c}_{0,i}, \bar{c}_{1,i}\}$. The input vertices $u_1$ and $u_2$ are made adjacent to each $a_{1,f}$ for $f \neq 4$ and they are made adjacent to $\bar{a}_{1,1}$.

All vertices except $\{a_{1,f} \sep f \in [4]\} \cup \{\bar{a}_{1,i}, \bar{a}_{2,i} \sep i \in [2]\} \cup \{u_1, u_2\}$ are made adjacent to the root $\rvertex$. Finally, we describe how to connect the central vertices to the rest. Each twinclass $W_\ell$, $\ell \in [5]$, is made adjacent to $b_{[\state^\ell_2 = \zero], f}$ and to $c_{[\state^\ell_1 = \state^\ell_2],f}$ for all $f \in [4]$ and $W_\ell$ is also made adjacent to $\bar{c}_{[\state^\ell_1 \neq \one_0], 1}$ and $\bar{c}_{[\state^\ell_1 \neq \one_1], 2}$. The construction is depicted in \cref{fig:cvc_modtw_path_1} and \cref{fig:cvc_modtw_path_2}.
\begin{figure}[h]
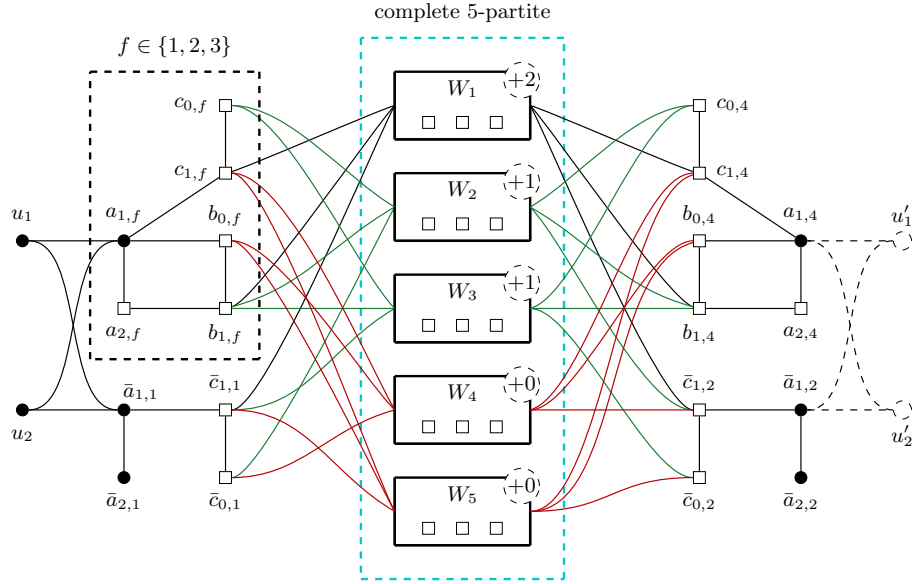

  \centering
  \scalebox{0.9}{\tikzfig{pictures/cvc_modtw_path_gadget_new}}
  \caption{Vertices depicted with a rectangle are adjacent to the root vertex $\rvertex$. The graph in the black dashed rectangle appears thrice with the same connections to the remaining vertices. The vertices inside the cyan dashed rectangle induce a complete 5-partite graph. The dashed circles at the central vertices indicate the number of cost vertices attached to this set and the dashed vertices and edges at the right indicate how to connect to the next copy of the path gadget.}
  \label{fig:cvc_modtw_path_1}
  \vspace*{-0.3cm}
\end{figure}

We emphasize that the graphs $P[\{a_{1,f}, a_{2,f}, b_{0,f}, b_{1,f}, c_{0,f}, c_{1,f}\} \cup \bigcup_{\ell \in [5]} W_\ell]$, $f \in [4]$, are all isomorphic to each other, however the first three are also adjacent to the input vertices $u_1$ and $u_2$, whereas the fourth one is not. To study the path gadget $P$, we mostly consider the parts in \cref{fig:cvc_modtw_path_1}; the parts in \cref{fig:cvc_modtw_path_2} are considerably simpler and will later allow us to simply attach the standard decoding gadget already used by Cygan et al.~\cite{CyganNPPRW11arxiv} for \CVC parameterized by pathwidth.

\begin{figure}[h]
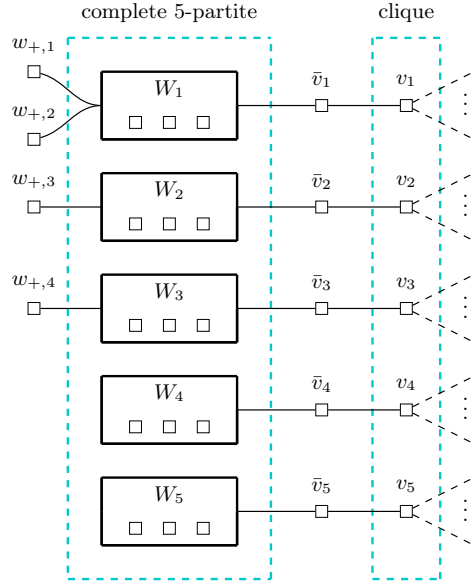

  \centering
  \scalebox{0.9}{\tikzfig{pictures/cvc_modtw_path_gadget_new_2}}
  \caption{The remaining parts of the path gadget $P$ which will be connected to the decoding gadget. All vertices that are depicted with a rectangle are adjacent to the root vertex $\rvertex$. The vertices inside the cyan dashed rectangle induce a complete 5-partite graph or a clique respectively. Only the clique vertices have neighbors outside of $P$.} 
  \label{fig:cvc_modtw_path_2}
  \vspace*{-0.5cm}
\end{figure}

For the upcoming lemmas, we assume that $G$ is a graph that contains $P + \rvertex$ as an induced subgraph and that only the input vertices $u_1, u_2$, the output vertices $a_{1,4}, \bar{a}_{1,2}$, and the clique vertices $v_\ell, \ell \in [5]$, have neighbors outside this copy of $P + \rvertex$. Furthermore, we assume that $\{u_1, u_2\}$ is a twinclass in $G$. Let $X$ be a vertex cover of $G$ with $\rvertex \in X$. We study the behavior of such vertex covers on $P$; we will abuse notation and write $X \cap P$ instead of $X \cap V(P)$.

Observe that the set 
\begin{align*}
  M = \, & \{\{a_{1,f}, a_{2,f}\}, \{b_{0,f}, b_{1,f}\}, \{c_{0,f}, c_{1,f}\} \sep f \in [4]\} \\
  \cup\, & \{\{\bar{a}_{1,i}, \bar{a}_{2,i}\}, \{\bar{c}_{0,i}, \bar{c}_{1,i}\} \sep i \in [2]\} \\
  \cup\, & \{\{v_\ell, \bar{v}_\ell\} \sep \ell \in [5]\}
\end{align*}
is a matching in $P$ of size $4 \cdot 3 + 2 \cdot 2 + 5 = 21$.

\begin{lem}\label{thm:cvc_modtw_path_gadget_lb}
  We have that $|\{\ell \in [5] \sep W_\ell \subseteq X\}| \geq 4$ and $|X \cap P| \geq |M| + 4 \cdot 3 = 33$.
  If $G[X]$ is connected, then $|X \cap P| \geq |M| + 4 \cdot 3 + 2 = 35$ and in case of equality, $|X \cap \{u_1, u_2, w_{+,1}, \ldots, w_{+,4}\}| = 2$ and there is a unique $\ell \in [5]$ such that $W_\ell \not\subseteq X$.
\end{lem}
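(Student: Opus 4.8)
The proof proceeds by a series of local counting arguments on the structure of $P$. The basic tool is the matching $M$: for any vertex cover $X$ and any edge $\{u,v\}\in M$ we have $u\in X$ or $v\in X$, so $|X\cap P|\ge |M|=21$ already. To push this to $33$ we need to find $12$ extra vertices in $X\cap P$ that are not endpoints of $M$-edges, and this is where the central twinclasses $W_\ell$ come in. I would argue first that for each $\ell$, either $W_\ell\subseteq X$ or $\bar v_\ell\in X$ but not covered by the $M$-edge argument alone; more precisely, the complete $5$-partite structure on $W_1,\dots,W_5$ forces at most one $W_\ell$ to be missing from $X$ (any two vertices in distinct $W_\ell,W_{\ell'}$ with neither $W_\ell\subseteq X$ nor $W_{\ell'}\subseteq X$ would yield an uncovered edge), giving $|\{\ell: W_\ell\subseteq X\}|\ge 4$. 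Since each $W_\ell$ has size $3$ and the $W_\ell$'s are disjoint from $V(M)$, this contributes $\ge 4\cdot 3=12$ vertices beyond $M$, hence $|X\cap P|\ge 21+12 = 33$.

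For the connectivity refinement, I would assume $G[X]$ is connected (so every vertex of $X\cap P$ is root-connected, using that only $u_1,u_2,a_{1,4},\bar a_{1,2},v_\ell$ touch the outside) and chase where the ``extra'' vertices must lie. The $+2$ comes from two sources of slack that connectivity closes off: (i) in each of the $C_4$'s $\{a_{1,f},a_{2,f},b_{0,f},b_{1,f}\}$ a vertex cover needs $2$ of the $4$ vertices, which is already $2$ per gadget and matches two $M$-edges, but the root-connectivity of whichever of $a_{1,f},a_{2,f}$ lies in $X$ forces the $c$-vertices / central vertices to behave consistently; (ii) the cost vertices $w_{+,1},\dots,w_{+,4}$ and input vertices $u_1,u_2$ are pendant-ish (degree-restricted) so their membership is pinned by covering the edges to $W_1,W_2,W_3$ and by connectivity. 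Concretely I would show: if $|X\cap P| = 33$ then the unique missing $W_\ell$ together with the forced choices in the $C_4$'s leaves some vertex of $X\cap P$ with no root-path, a contradiction; reaching $35$ requires exactly $2$ more vertices, and a parity/degree count of the edges incident to $W_1\cup W_2\cup W_3$ shows these must be exactly two of $\{u_1,u_2,w_{+,1},\dots,w_{+,4}\}$ — giving $|X\cap\{u_1,u_2,w_{+,1},\dots,w_{+,4}\}|=2$ — and that $W_\ell\not\subseteq X$ for a \emph{unique} $\ell$ (not zero, since otherwise $|X\cap P|$ would be even larger, contradicting equality).

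The key step making ``$\ge 35$'' work rather than just ``$\ge 33$'' is establishing that the ``missing'' central twinclass $W_\ell$ cannot be compensated for free: removing $W_\ell$ from $X$ disconnects some portion of $P\cap X$ from $\rvertex$ unless at least two further vertices (the two cost/input vertices) are added back to reconnect it. This is essentially a cut argument: $W_\ell$ together with its neighborhood is a small separator between the central clique-like structure and the rest, so dropping it must be paid for elsewhere. I would formalize this by exhibiting, for each candidate $\ell$, an explicit vertex $v\in X\cap P$ (e.g.\ one of the $b$- or $\bar c$-vertices adjacent to $W_\ell$) whose only root-paths in $P$ pass through $W_\ell$ unless one of the two extra vertices is present.

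\textbf{Main obstacle.} The delicate part is the connectivity bookkeeping in the case of equality: I need to simultaneously (a) rule out more than one missing $W_\ell$ (handled by the $5$-partite argument), (b) rule out zero missing $W_\ell$'s when $|X\cap P|=35$ (a counting argument: $W_\ell\subseteq X$ for all $\ell$ forces $|X\cap P|\ge 36$ after adding the minimum needed cost/connectivity vertices), and (c) show exactly two of $\{u_1,u_2,w_{+,1},\dots,w_{+,4}\}$ lie in $X$ and no fewer — all while tracking root-connectivity through the $C_4$ gadgets and the $c$-vertices. Getting the constant ``$2$'' exactly right, rather than off by one, will require carefully enumerating which of $a_{1,f},a_{2,f}$ is chosen in each $C_4$ and verifying each choice is forced by the state $\state^\ell$ assigned via the adjacencies $W_\ell\sim b_{[\state^\ell_2=\zero],f}$, $W_\ell\sim c_{[\state^\ell_1=\state^\ell_2],f}$. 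I expect this enumeration to be routine but lengthy, and the cleanest path is to fix the unique missing index $\ell$ first and then read off everything else as a consequence.
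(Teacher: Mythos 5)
Your first half is exactly the paper's argument: the complete $5$-partite graph on $W_1,\dots,W_5$ is vertex-disjoint from $M$, a vertex cover must fully contain at least four of the five classes, and each class has size $3$, so $|X\cap P|\ge |M|+12=33$. For the connectivity refinement your overall shape -- fix the unique missing index $\ell$ (unique because all five classes present would already give $\ge 36$) and then case-analyse -- is also the right one. But two of the mechanisms you lean on are misidentified. The ``cut argument'' is not available: every neighbour of $W_\ell$ outside the central classes ($\bar v_\ell$ and the $b$-, $c$-, $\bar c$- and cost vertices) is adjacent to $\rvertex$, so $W_\ell$ separates nothing from the root. The cost vertices are forced into $X$ purely by edge-covering (an uncovered vertex of $W_\ell$ is adjacent to them), not by reconnection, and they contribute only $+2,+1,+1,0,0$ extra vertices for $\ell=1,\dots,5$ respectively -- so for $\ell\ge 2$ they cannot supply the full $+2$ on their own, and your ``parity/degree count of the edges incident to $W_1\cup W_2\cup W_3$'' does not close this.

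The concretely missing step is the three-fold count on the gadgets $A_f=\{a_{1,f},a_{2,f},b_{0,f},b_{1,f}\}$, $f\in[3]$, which is what actually ties the value of $|X\cap\{u_1,u_2\}|$ to $\ell$. If $u_1,u_2\notin X$ then $a_{1,f}\in X$ for all $f\in[3]$ (they are neighbours of the uncovered $u_1,u_2$). The missing class $W_\ell$ forces, by covering, specific endpoints of the matching edges: for $\ell\in\{2,3\}$ it forces $b_{1,f},c_{0,f}\in X$, so every neighbour through which $a_{1,f}$ could reach $\rvertex$ ($a_{2,f}$, $b_{0,f}$ or $c_{1,f}$) is the second endpoint of a matching edge whose first endpoint is already in $X$, costing $+1$ per $f$ for root-connectivity; for $\ell\in\{4,5\}$ it forces $b_{0,f}\in X$, which leaves the $C_4$-edge $\{a_{2,f},b_{1,f}\}$ uncovered and again costs $+1$ per $f$. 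This gives $+3$ over the baseline, i.e.\ $|X\cap P|\ge 36$, for every $\ell>1$ with $X\cap\{u_1,u_2\}=\emptyset$, which is precisely what forces $\ell=1$ in that case (and, run with $|X\cap\{u_1,u_2\}|=1$, forces $\ell\in\{2,3\}$). Without this count you can only conclude $|X\cap P|\ge 34$, and the identity $|X\cap\{u_1,u_2,w_{+,1},\dots,w_{+,4}\}|=2$ in the equality case does not follow.
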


\begin{proof}
  The vertex set $\bigcup_{\ell \in [5]} W_\ell$ induces a complete $5$-partite graph disjoint from the matching $M$. Any vertex cover must contain at least 4 of the 5 partition classes completely, otherwise there is an edge that is not covered, and since each class is of size 3, this accounts for $4 \cdot 3 = 12$ further vertices. This shows that $|X \cap P| \geq |M| + 4 \cdot 3 = 33$. 

If $X$ completely contains all $W_\ell$, $\ell \in [5]$, then it immediately follows that $|X \cap P| \geq 36$, so if $|X \cap P| = 35$, then there is an unique $\ell \in [5]$ such that $W_\ell \not\subseteq X$. If $\ell = 1$, then we must have $w_{+,1}, w_{+,2} \in X$, so $|X \cap P| \geq 35$. Before we proceed with the remaining proof, notice that $A_f = \{a_{1,f}, a_{2,f}, b_{0,f}, b_{1,f}\}$ induces a $C_4$ for all $f \in [4]$, so if $|X \cap A_f| = 2$, then $X \cap A_f \in \{\{a_{1,f}, b_{1,f}\}, \{a_{2,f}, b_{0,f}\}\}$, i.e., $X$ must pick an antipodal pair from $A_f$.


For the remainder of the proof, assume that $G[X]$ is connected. Suppose that $X \cap \{u_1, u_2\} = \emptyset$, then $a_{1,f} \in X$ for all $f \in [3]$ and $a_{1,f}$ must be root-connected in $X$. If $\ell \in \{2,3\}$, then $b_{1,f}, c_{0,f} \in X$, so whichever neighbor of $a_{1,f}$ we choose for the sake of root-connectedness, the size of $X$ increases by one for every $f \in [3]$. If $\ell \in \{4,5\}$, then $b_{0,f} \in X$, so $a_{1,f}$ is root-connected, but we need to pick another vertex of $A_f$ to cover the remaining edge induced by $A_f$, again increasing the size of $X$. In summary, we obtain $|X \cap P| \geq 36$ if $\ell > 1$ and $X \cap \{u_1, u_2\} = \emptyset$.


Suppose that $|X \cap \{u_1, u_2\}| = 1$ and without loss of generality $u_1 \in X$ and $u_2 \notin X$. Again, we must have $a_{1,f} \in X$ for all $f \in [3]$. If $\ell \in \{2, 3\}$, we have that $w_{+,3} \in X$ or $w_{+,4} \in X$. If $\ell \in \{4, 5\}$, we again see that $|X \cap A_f| \geq 3$ for all $f \in [3]$ and hence $|X \cap P| \geq 37$, so $|X \cap P| \geq 35$ in either case. 

By the previous arguments, we see that $|X \cap P| = 35$ and $X \cap \{u_1, u_2\} = \emptyset$ implies that $\ell = 1$; $|X \cap P| = 35$ and $|X \cap \{u_1, u_2\}| = 1$ implies that $\ell \in \{2,3\}$; $|X \cap P| = 35$ and $|X \cap \{u_1, u_2\}| = 2$ implies that $\ell \in \{4,5\}$. So, the equation $|X \cap \{u_1, u_2, w_{+,1}, \ldots, w_{+,4}\}| = 2$ follows. \qed
\end{proof}

 We want to study the connected vertex covers on $P$ locally, but connectivity is not a local property. However, through our assumption, we know that any vertex in $G[X]$ that is not root-connected in $X \cap (P + \rvertex)$ has to be root-connected through the input or output vertices. In particular, although the clique vertices $v_\ell, \ell \in [5]$, may be adjacent to vertices outside of $P + \rvertex$, any path leaving $P + \rvertex$ through some clique vertex immediately yields a path to $\rvertex$ in $P + \rvertex$, since the clique vertices are adjacent to $\rvertex$. This motivates that we should distinguish whether a vertex in $P + \rvertex$ is root-connected already in $P + \rvertex$ or via a path that leaves $P$.

 Let $Y \subseteq V(G)$, we define $\statemap_Y \colon V(G) \rightarrow \atoms$ by
 \begin{equation*}
   \statemap_Y(v) = \begin{cases}
                      \zero & \text{if } v \notin Y, \\
                      \one_0 & \text{if } v \in Y \text{ and $v$ is not root-connected in $Y \cup \{\rvertex\}$}, \\
                      \one_1 & \text{if } v \in Y \text{ and $v$ is root-connected in $Y \cup \{\rvertex\}$}. \\
                    \end{cases}
 \end{equation*}
 For $Y \subseteq V(P)$, we define $\statemap(Y) = (\statemap_Y(u_1), \statemap_Y(u_2), \statemap_Y(\bar{a}_{1,2}), \statemap_Y(a_{1,4}))$.
 
 We say that a vertex subset $Y \subseteq V(G)$ is \emph{canonical} with respect to the twinclass $\{u_1, u_2\}$ if $u_2 \in Y$ implies $u_1 \in Y$; we will just say that $Y$ is canonical if $\{u_1, u_2\}$ is clear from the context. Since $\{u_1, u_2\}$ is a twinclass, we can always assume that we are working with a canonical subset.
 
 \begin{lem}\label{thm:cvc_modtw_path_gadget_tight}
   If $X$ is canonical, $G[X]$ is connected, and $|X \cap P| \leq 35$, then $|X \cap P| = 35$ and there is an unique $\ell \in [5]$ such that $v_\ell \notin X$ and we have that $\statemap(X \cap P) = \state^\ell$.   
 \end{lem}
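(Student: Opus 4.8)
The statement to prove is a refinement of \cref{thm:cvc_modtw_path_gadget_lb}: we already know from that lemma that if $G[X]$ is connected then $|X \cap P| \geq 35$, that equality forces $|X \cap \{u_1, u_2, w_{+,1}, \dots, w_{+,4}\}| = 2$ and a unique $\ell \in [5]$ with $W_\ell \not\subseteq X$, and (by the edge $\{v_\ell, \bar v_\ell\}$ being in the matching $M$) a unique $\ell'$ with $v_{\ell'} \notin X$. So the work is (i) identifying $\ell = \ell'$, i.e.\ that the ``missing'' central class and the ``missing'' clique vertex have the same index, and (ii) showing that with that $\ell$ fixed the four tracked atoms $(\statemap_X(u_1), \statemap_X(u_2), \statemap_X(\bar a_{1,2}), \statemap_X(a_{1,4}))$ are forced to equal $\state^\ell$. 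First I would establish $|X\cap P| = 35$ (immediate from \cref{thm:cvc_modtw_path_gadget_lb} plus the hypothesis $\le 35$) and fix notation: let $\ell$ be the unique index with $W_\ell \not\subseteq X$ and let $\ell'$ be the unique index with $v_{\ell'} \notin X$. From the tightness analysis in the previous lemma we know exactly how $|X \cap \{u_1,u_2\}|$ depends on $\ell$: it is $0$ if $\ell = 1$, exactly $1$ if $\ell \in \{2,3\}$, and exactly $2$ if $\ell \in \{4,5\}$; combined with canonicity this pins down $(\statemap_X(u_1), \statemap_X(u_2))$ up to the root-connectedness subscripts, matching the first two coordinates of $\state^\ell$ (recall $\sol$-values of $\state^\ell$ on coordinates $1,2$ are $\zero,\zero$ for $\ell=1$; $\one,\zero$ for $\ell\in\{2,3\}$; $\one,\one$ for $\ell\in\{4,5\}$).

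**Pinning down $\ell = \ell'$ and the budget inside each gadget block $A_f$.** The key leverage is the matching $M$ of size $21$ together with the complete $5$-partite graph on $\bigcup W_\ell$: since $|X \cap P| = 35 = 21 + 12 + 2$, exactly $2$ vertices of $X\cap P$ lie outside $M \cup \bigcup_\ell W_\ell$, and $X$ must contain an antipodal pair of each $C_4$ $A_f = \{a_{1,f},a_{2,f},b_{0,f},b_{1,f}\}$ and exactly one vertex of each edge $\{c_{0,f},c_{1,f}\}$, $\{v_\ell,\bar v_\ell\}$, $\{\bar c_{0,i},\bar c_{1,i}\}$, $\{\bar a_{1,i},\bar a_{2,i}\}$. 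Now I would use the adjacency rules $W_\ell \sim b_{[\state^\ell_2 = \zero],f}$, $W_\ell \sim c_{[\state^\ell_1 = \state^\ell_2],f}$, $W_\ell \sim \bar c_{[\state^\ell_1 \neq \one_0],1}$, $W_\ell \sim \bar c_{[\state^\ell_1 \neq \one_1],2}$. For every index $m \neq \ell$ we have $W_m \subseteq X$, so the two indicator vertices $b_{0,f}, b_{1,f}$ each receive an edge from some $W_m$ forcing the \emph{other} one to be... wait — rather, for the $C_4$ $A_f$ the antipodal pair chosen is either $\{a_{1,f}, b_{1,f}\}$ or $\{a_{2,f}, b_{0,f}\}$; the edges from the $W_m$'s ($m\neq\ell$) to $b_{\cdot,f}$ and $c_{\cdot,f}$, which must be covered by $X$ on the $b/c$ side (since $W_m \subseteq X$ covers them trivially — so no constraint there) — the real constraints come from edges incident to vertices \emph{not} guaranteed in $X$, namely the $a_{1,f}$'s, the $u_i$'s, and from root-connectedness. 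I would argue: $a_{1,f} \in X$ for $f \in [3]$ exactly when $u_1 \notin X$ or some other covering need arises; track, case by case on $\ell \in \{1\},\{2,3\},\{4,5\}$, which antipodal pair each $A_f$ takes and which of $c_{0,f}, c_{1,f}$ is in $X$, using the ``$2$ leftover vertices'' budget to forbid any extra pick. This forces the state of $\bar c_{1,1}, \bar c_{1,2}$ (hence the third tracked atom $\statemap_X(\bar a_{1,2})$ via the $\bar a_{1,2}$–$\bar a_{2,2}$–$\bar c_{1,2}$ path) and the state of $a_{1,4}$ (the fourth atom) via the $f=4$ block, which is \emph{not} attached to $u_1,u_2$. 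Crucially, the edges from $W_\ell$ to the clique vertex $v_\ell$ and complement $\bar v_\ell$: since $W_\ell \not\subseteq X$, some vertex of $W_\ell$ is outside $X$, and the edge $\{$that vertex$, \bar v_\ell\}$ must be covered, forcing $\bar v_\ell \in X$, hence $v_\ell \notin X$ (only one of the antipodal pair $\{v_\ell,\bar v_\ell\}$ in $M$ is chosen, and the $5$-partite/budget count shows it is $\bar v_\ell$); this gives $\ell' = \ell$.

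**Reading off the root-connectedness subscripts.** Having fixed $\ell$ and all the antipodal/edge choices, the remaining content is: for each of $u_1, u_2, \bar a_{1,2}, a_{1,4}$ that lies in $X$, decide whether it is root-connected within $X \cup \{\rvertex\}$ \emph{using only $P + \rvertex$}, which is what $\statemap_X$ records under the standing assumption that external connections only pass through the clique vertices (all of which are adjacent to $\rvertex$, so they give root-connectedness for free to their $P$-neighbors but that is already captured). Concretely: $v_m \in X$ for all $m \neq \ell$, so $v_m$ is root-connected (adjacent to $\rvertex$); $\bar v_\ell \in X$ is adjacent to $v_m$ for $m \neq \ell$ hence root-connected; the central vertices of $W_m \subseteq X$ are root-connected via $\bar v_{m'}$ or directly; then $a_{1,f}$ (when in $X$) is root-connected iff it has a neighbor in $X$ that is root-connected, i.e.\ via $b_{0,f}$ (which is in $X$ iff $\state^\ell_2 = \zero$, by the adjacency $W_m \sim b_{[\state^m_2=\zero],f}$ forcing $b_{0,f}$ or $b_{1,f}$ into $X$) or via $c_{1,f}$ (in $X$ iff $\state^\ell_1 = \state^\ell_2$); tracing these through for each $\ell\in[5]$ reproduces exactly the subscript pattern of $\state^\ell$ on coordinates $3$ and $4$, and similarly $u_1$ (when in $X$) is root-connected iff it reaches some $a_{1,f}$, $f\in[3]$, that is root-connected, which again matches $\state^\ell_1$. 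I expect the main obstacle to be the bookkeeping in this step: there are five cases for $\ell$ and within each, four blocks $A_f$ plus the two barred blocks, and one must verify simultaneously that (a) the budget of exactly $2$ extra vertices is never exceeded and never underspent — this is what makes the state \emph{unique} — and (b) the forced configuration is \emph{realizable}, i.e.\ actually a connected vertex cover of the stated size, which I would either verify here or defer to the companion ``upper bound'' lemma for the path gadget. The cleanest write-up is probably a single table indexed by $\ell\in[5]$ listing, for each $\ell$: $X\cap\{u_1,u_2\}$, the antipodal pair of each $A_f$, which $c_{0,f}/c_{1,f}$ and $\bar c_{0,i}/\bar c_{1,i}$ are in $X$, and the resulting four atoms — then observing the last column equals $\state^\ell$.
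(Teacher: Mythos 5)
Your plan follows the paper's proof essentially step for step: tightness and the unique incomplete class $W_\ell$ come from \cref{thm:cvc_modtw_path_gadget_lb}; $\bar{v}_\ell \in X$ (hence $v_\ell \notin X$ and $X \supseteq \{v_1,\dots,v_5\}\setminus\{v_\ell\}$) follows from covering the edges at the missing $W_\ell$-vertex; the antipodal-pair structure of each $C_4$ $A_f$ together with the forced memberships $b_{[\state^\ell_2=\zero],f}, c_{[\state^\ell_1=\state^\ell_2],f}, \bar{c}_{[\state^\ell_1\neq\one_0],1}, \bar{c}_{[\state^\ell_1\neq\one_1],2} \in X$ determines all four tracked atoms; and root-connectivity of $a_{1,f}$ and $\bar{a}_{1,i}$ is read off via $c_{1,f}$ and $\bar{c}_{1,i}$, exactly as in the paper. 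One local slip to fix in the write-up: when $a_{1,f}\in X$ the chosen antipodal pair is $\{a_{1,f},b_{1,f}\}$, so $b_{0,f}\notin X$ (your parenthetical has its membership condition inverted) and $b_{0,f}$ can never serve as a root-connection route for $a_{1,f}$ --- its only route inside $P$, apart from $u_1,u_2$ for $f\le 3$, is $c_{1,f}$.
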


\begin{proof}
  \cref{thm:cvc_modtw_path_gadget_lb} implies that $|X \cap P| = 35$, $|X \cap \{u_1, u_2, w_{+,1}, \ldots, w_{+,4}\}| = 2$, that $X$ contains exactly one endpoint of each edge in $M$ and that there is an unique $\ell \in [5]$ such that $W_\ell \not\subseteq X$. To cover all edges between $W_\ell$ and $\bar{v}_\ell$, we must have that $\bar{v}_\ell \in X$ and $v_\ell \notin X$, since $\{\bar{v}_\ell, v_\ell\} \in M$. Furthermore, we must have $X \cap \{v_1, \ldots, v_5\} = \{v_1, \ldots, v_5\} \setminus \{v_\ell\}$, because otherwise $X$ does not cover the clique induced by $v_1, \ldots, v_5$. Hence, the uniqueness of $v_\ell$ follows.
  
  Recall that $A_f = \{a_{1,f}, a_{2,f}, b_{0,f}, b_{1,f}\}$ induces a $C_4$ and $|X \cap A_f| = 2$ because $A_f$ contains two edges of $M$, hence we have that $X \cap A_f \in \{\{a_{1,f}, b_{1,f}\}, \{a_{2,f}, b_{0,f}\}\}$ for all $f \in [4]$.
  
  We claim that $\statemap_{(X \cap P) \setminus \{u_1, u_2\}}(a_{1,f}) = \state^\ell_4$ for all $f \in [4]$. Observe that $\state^\ell_2 = \zero \Leftrightarrow \state^\ell_4 \neq \zero$ and $\state^\ell_1 = \state^\ell_2 \Leftrightarrow \state^\ell_4 \neq \one_0$. Hence, by construction $W_\ell$ is adjacent to $b_{[s^\ell_4 \neq \zero],f}$ and $c_{[\state^\ell_4 \neq \one_0], f}$, so $b_{[s^\ell_4 \neq \zero],f}, c_{[\state^\ell_4 \neq \one_0], f} \in X$ to cover the edges incident to $W_\ell$. So, we see that $a_{1,f} \in X \Leftrightarrow b_{1,f} \in X \Leftrightarrow \state^\ell_4 \neq \zero$ as desired. Concerning the root-connectivity of $a_{1,f}$ in $(X \cap P) \setminus \{u_1, u_2\}$, we know that the adjacent vertices $a_{2,f}$ and $b_{0,f}$ are not in $X$ when $a_{1,f}$ is in $X$, due to $A_f$ inducing a $C_4$, hence $a_{1,f}$ can only be root-connected via $c_{1,f}$. Finally, we see that $c_{1,f} \in X \Leftrightarrow \state^\ell_4 \neq \one_0$. This proves the claim.
  
  The claim implies that $\statemap_{X \cap P}(a_{1,4}) = \state^\ell_4$ as desired. We proceed by computing $\statemap_{(X \cap P) \setminus \{u_1, u_2\}}(\bar{a}_{1,i})$ for $i \in {1,2}$. Due to the degree-1-neighbor $\bar{a}_{2,i}$, we see that $\bar{a}_{1,i} \in X$ because $X$ is a connected vertex cover. The vertex $\bar{a}_{1,i}$ can only be root-connected via $\bar{c}_{1,i}$ and because $\bar{c}_{1,i}$ is an endpoint of a matching edge, we see that $\bar{c}_{1,i} \in X$ if and only if $\bar{c}_{1,i}$ is adjacent to $W_\ell$. For $i = 1$, we have that 
  \begin{equation*}
    \statemap_{(X \cap P) \setminus \{u_1, u_2\}}(\bar{a}_{1,1}) = \one_1 \Leftrightarrow \bar{c}_{1,1} \in X \Leftrightarrow \state^\ell_1 \neq \one_0 \Leftrightarrow \ell \in \{1,3,5\}.
  \end{equation*}
  For $i = 2$, we have that 
  \begin{equation*}
    \statemap_{(X \cap P) \setminus \{u_1, u_2\}}(\bar{a}_{1,2}) = \one_1 \Leftrightarrow \bar{c}_{1,2} \in X \Leftrightarrow \state^\ell_1 \neq \one_1 \Leftrightarrow \state^\ell_3 = \one_1.
  \end{equation*}
  In particular, we have shown that $\statemap_{X \cap P}(\bar{a}_{1,2}) = \state^\ell_3$ as desired.
  
  It remains to show that $\statemap_{X \cap P}(u_1) = \state^\ell_1$ and $\statemap_{X \cap P}(u_2) = \state^\ell_2$. Due to $|X \cap \{u_1, u_2, w_{+,1}, \ldots, w_{+,4}\}| = 2$ and $X$ being canonical, we see that
  \begin{equation*}
    X \cap \{u_1, u_2\} = \begin{cases}
                           \emptyset, & \ell = 1, \\
                           \{u_1\}, & \ell \in \{2,3\}, \\
                           \{u_1, u_2\}, & \ell \in \{4,5\}.
                          \end{cases}
  \end{equation*}
  Hence, we only have to determine the root-connectivity of $u_1$ and possibly $u_2$ in $X \cap P$ for $\ell > 1$. They can only obtain root-connectivity via $a_{1,1}$, $a_{1,2}$, $a_{1,3}$, or $\bar{a}_{1,1}$. By the previous calculations, at least one of these is root-connected in $(X \cap P) \setminus \{u_1, u_2\}$ if and only if $\state^\ell_3 = \one_0$ or $\state^\ell_4 = \one_1$, which happens precisely when $\ell \in \{3,5\}$ as desired (as $\ell = 1$ is excluded). \qed
\end{proof}

\begin{lem}\label{thm:cvc_modtw_state_exists} 
  For every $\ell \in [5]$, there exists a canonical vertex cover $X_P^\ell$ of $P$ such that $|X_P^\ell| = 35$, $X_P^\ell \cap \{v_1, \ldots, v_5\} = \{v_1, \ldots, v_5\} \setminus \{v_\ell\}$, and $\statemap(X_P^\ell) = \state^\ell$. If $X$ is a vertex cover of $G$ with $\rvertex \in X$, $X \cap P = X_P^\ell$, and $\statemap_X(\{u_1$, $u_2$, $\bar{a}_{1,2}$, $a_{1,4}\}) \subseteq \{\zero, \one_1\}$, then every vertex of $X_P^\ell$ is root-connected in $X$.
\end{lem}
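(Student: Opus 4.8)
The plan is to obtain $X_P^\ell$ by running the analysis of \cref{thm:cvc_modtw_path_gadget_tight} in reverse: that lemma already pins down, for a tight connected vertex cover omitting exactly $W_\ell$, the precise intersection with every vertex type of $P$, so I would simply \emph{declare} $X_P^\ell$ to be that set and then check the listed properties directly. Concretely, $X_P^\ell$ contains the four central classes $W_{\ell'}$ with $\ell'\neq\ell$, the complement vertex $\bar v_\ell$, and the four clique vertices $v_{\ell'}$ with $\ell'\neq\ell$; exactly one endpoint of each matching edge of $M$, chosen as the antipodal pair $\{a_{1,f},b_{1,f}\}$ from $A_f=\{a_{1,f},a_{2,f},b_{0,f},b_{1,f}\}$ if $\state^\ell_4\neq\zero$ and $\{a_{2,f},b_{0,f}\}$ otherwise, the vertex $c_{[\state^\ell_1=\state^\ell_2],f}$, the vertex $\bar a_{1,i}$, and $\bar c_{[\state^\ell_1\neq\one_0],1}$ and $\bar c_{[\state^\ell_1\neq\one_1],2}$; and the two extra vertices forced by \cref{thm:cvc_modtw_path_gadget_lb}, namely $\{w_{+,1},w_{+,2}\}$ for $\ell=1$, $\{u_1,w_{+,3}\}$ for $\ell=2$, $\{u_1,w_{+,4}\}$ for $\ell=3$, and $\{u_1,u_2\}$ for $\ell\in\{4,5\}$.

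Next I would verify the four easy properties. That $X_P^\ell$ is a vertex cover of $P$ is a routine edge-by-edge check: edges among the selected central classes are covered, the edges incident to $W_\ell$ are covered because the construction selects precisely the neighbor of $W_\ell$ that is required (using the identities $\state^\ell_2=\zero\Leftrightarrow\state^\ell_4\neq\zero$ and $\state^\ell_1=\state^\ell_2\Leftrightarrow\state^\ell_4\neq\one_0$ from the proof of \cref{thm:cvc_modtw_path_gadget_tight} to match the selected $b_{\cdot,f},c_{\cdot,f},\bar c_{\cdot,i}$, the cost vertices, and $\bar v_\ell$ against the neighbors of $W_\ell$), and the $C_4$'s $A_f$, the $v$-clique, the $\bar a$-$\bar c$ paths, and the $u_i$-$a_{1,f}$ edges are covered by the chosen endpoints. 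The size is $|M|+4\cdot 3+2=21+12+2=35$, since $X_P^\ell$ consists of one endpoint of each of the $21$ edges of $M$, all $12$ vertices of four central classes, and exactly $2$ cost/input vertices, three pairwise disjoint sets. The identity $X_P^\ell\cap\{v_1,\ldots,v_5\}=\{v_1,\ldots,v_5\}\setminus\{v_\ell\}$ and canonicity (only $\ell\in\{4,5\}$ puts $u_2$ into $X_P^\ell$, and then also $u_1$) are immediate. For $\statemap(X_P^\ell)=\state^\ell$ I would reuse the root-connectivity computation from the proof of \cref{thm:cvc_modtw_path_gadget_tight}: since $X_P^\ell\subseteq V(P)$ the map $\statemap_{X_P^\ell}$ only sees $G[X_P^\ell\cup\{\rvertex\}]$ inside $P+\rvertex$, every vertex of $X_P^\ell$ outside $S:=\{a_{1,f}\sep f\in[4]\}\cup\{\bar a_{1,i},\bar a_{2,i}\sep i\in[2]\}\cup\{u_1,u_2\}$ is adjacent to $\rvertex$ and hence root-connected, and each $a_{1,f}$ (resp.\ $\bar a_{1,i}$) in $X_P^\ell$ is root-connected inside $P+\rvertex$ exactly when $c_{1,f}\in X_P^\ell$ (resp.\ $\bar c_{1,i}\in X_P^\ell$) or it can reach $\rvertex$ through $\{u_1,u_2\}$; plugging in the selected $c$- and $\bar c$-vertices and using that $u_1,u_2$ are adjacent inside $P$ only to $a_{1,1},a_{1,2},a_{1,3},\bar a_{1,1}$, one reads off $\statemap_{X_P^\ell}(u_1)=\state^\ell_1$, $\statemap_{X_P^\ell}(u_2)=\state^\ell_2$, $\statemap_{X_P^\ell}(\bar a_{1,2})=\state^\ell_3$, and $\statemap_{X_P^\ell}(a_{1,4})=\state^\ell_4$.

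For the second statement I would take $X$ a vertex cover of $G$ with $\rvertex\in X$, $X\cap P=X_P^\ell$, and $\statemap_X(\{u_1,u_2,\bar a_{1,2},a_{1,4}\})\subseteq\{\zero,\one_1\}$, so each of $u_1,u_2,\bar a_{1,2},a_{1,4}$ that lies in $X$ is root-connected in $X$. Then every vertex of $X_P^\ell\setminus S$ is root-connected via its edge to $\rvertex$; every $a_{1,f}$ or $\bar a_{1,i}$ in $X_P^\ell$ whose connectivity vertex $c_{1,f}$ or $\bar c_{1,i}$ was selected is root-connected through that vertex; and every remaining vertex of $S\cap X_P^\ell$ is, inside $P$, either one of $u_1,u_2,\bar a_{1,2},a_{1,4}$ (root-connected in $X$ by hypothesis) or adjacent to $\{u_1,u_2\}$ and hence root-connected through whichever of $u_1,u_2$ lies in $X_P^\ell$ (root-connected in $X$ by hypothesis). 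Going over $\ell\in[5]$ shows this exhausts $X_P^\ell$, finishing the proof. The main obstacle is not conceptual but combinatorial bookkeeping: both the definition of $X_P^\ell$ and the two verifications branch into the five states, and for each $\ell$ one must keep careful track of exactly which connectivity vertices $c_{\cdot,f},\bar c_{\cdot,i}$ and which cost/input vertices are taken, since this is precisely what decides whether a given vertex of $S$ is root-connected internally inside $P+\rvertex$ or only via the hypothesis on the input/output vertices — essentially re-running, in the opposite direction, the finite case analysis of \cref{thm:cvc_modtw_path_gadget_tight}.
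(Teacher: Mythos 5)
Your construction of $X_P^\ell$ coincides with the paper's explicit definition (the paper writes it as $\bigcup_{k\neq\ell}(W_k\cup\{v_k\})\cup\{\bar a_{1,1},\bar a_{1,2}\}\cup\{a_{2-[\state_2^\ell=\zero],f}\sep f\in[4]\}\cup U_\ell\cup N(W_\ell)$, which is exactly your set), and your verification of the size, the cover property, the statemap, and the root-connectivity in the second part follows the same route — in particular the same split into vertices adjacent to $\rvertex$, vertices reached via their $c$-/$\bar c$-vertices, and vertices reached via $u_1,u_2$ or the hypothesis on the output vertices. The proposal is correct and essentially identical to the paper's proof.
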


\begin{proof}
 We claim that 
 \begin{equation*}
  X_P^\ell = \left(\bigcup_{k \in [5] \setminus \{\ell\}} W_k \cup \{v_k\}\right) \cup \{\bar{a}_{1,1}, \bar{a}_{1,2}\} \cup \{a_{2 - [\state_2^\ell = 0], f} \sep f \in [4]\} \cup U_\ell \cup N(W_\ell),
 \end{equation*}
 where $U_1 = \emptyset$, $U_2 = U_3 = \{u_1\}$, $U_4 = U_5 = \{u_1, u_2\}$, is the desired vertex cover. Clearly, $X_P^\ell$ is canonical. By construction of $P$, we compute that
 \begin{equation*}
 	N(W_\ell) = \{\bar{v}_\ell, \bar{c}_{[s_1^\ell \neq \one_0], 1}, \bar{c}_{[s_1^\ell \neq \one_1], 2}\} \cup \{b_{[\state_2^\ell = 0],f}, c_{[\state_1^\ell = \state_2^\ell],f} \sep f \in [4]\} \cup W_{+,\ell},
 \end{equation*} 
 where $W_{+,1} = \{w_{+,1}, w_{+,2}\}, W_{+,2} = \{w_{+,3}\}, W_{+,3} = \{w_{+,4}\}, W_{+,4} = W_{+,5} = \emptyset$. Note that $|U_\ell| + |W_{+,\ell}| = 2$ and hence $|X_P^\ell| = 35$ for all $\ell \in [5]$.
 
 We proceed by verifying that $X_P^\ell$ is a vertex cover of $P$. The only non-trivial edges to consider are $\{a_{1,f}, c_{1,f}\}$, $f \in [4]$, and the edges between $\{u_1, u_2\}$ and $\{a_{1,f} \sep f \in [3]\}$. If $a_{1,f} \notin X_P^\ell$, then $\state_2^\ell \neq \zero$ which also implies that $\state_1^\ell = \state_2^\ell$ and hence $c_{1,f} \in X_P^\ell$, so the edge $\{a_{1,f}, c_{1,f}\}$, $f \in [4]$, is covered in all cases. If $1 \leq \ell \leq 3$, then $\state_2^\ell = \zero$, so $a_{1,f} \in X_P^\ell$ for all $f \in [4]$. If $4 \leq \ell \leq 5$, then $u_1, u_2 \in X$, so in either case the edges between $\{u_1, u_2\}$ and $\{a_{1,f} \sep f \in [3]\}$ are covered.
 
 Moving on to the second part, assume that $X$ is a vertex cover of $G$ with $\rvertex \in X$, $X \cap P = X_P^\ell$, and $\statemap_X(\{u_1, u_2, \bar{a}_{1,2}, a_{1,4}\}) \subseteq \{\zero, \one_1\}$. We only have to consider the vertices in $X_P^\ell \setminus N(\rvertex) \subseteq \{a_{1,f} \sep f \in [4]\} \cup \{\bar{a}_{1,1}, \bar{a}_{1,2}\}$. The statement immediately follows if $u_1$ or $u_2$ is root-connected in $X$, because they are adjacent to all vertices in $\{a_{1,f} \sep f \in [3]\} \cup \{\bar{a}_{1,1}\}$ and $a_{1,4}$ and $\bar{a}_{1,2}$ are handled by assumption. It remains to consider the case $u_1, u_2 \notin X$ which corresponds to $\ell = 1$, so we see that $a_{1,f}, c_{1,f} \in X$ for all $f \in [4]$ and $\bar{c}_{1,1} \in X$. Then, $a_{1,f}$ is root-connected via $c_{1,f}$ and $\bar{a}_{1,1}$ is root-connected via $\bar{c}_{1,1}$. \qed
\end{proof}

%

In the complete construction, we create long paths by repeatedly concatenating the path gadgets $P$. To study the \emph{state transitions} between two consecutive path gadgets, suppose that we have two copies $P^1$ and $P^2$ of $P$ such that the vertices $a_{1,4}$ and $\bar{a}_{1,2}$ in $P^1$ are joined to the vertices $u_1$ and $u_2$ in $P^2$. We denote the vertices of $P^1$ with a superscript $1$ and the vertices of $P^2$ with a superscript $2$, e.g., $a^1_{1,4}$ refers to the vertex $a_{1,4}$ of $P^1$. Again, suppose that $P^1$ and $P^2$ are embedded as induced subgraphs in a larger graph $G$ with a root vertex $\rvertex$ and that only the vertices $u_1, u_2^1, a_{1,4}^2, \bar{a}_{1,2}^2$ and the clique vertices $v^1_\ell, v^2_\ell$, $\ell \in [5]$, have neighbors outside of $P^1 + P^2 + \rvertex$. Let $X$ be a connected vertex cover of $G$ with $\rvertex \in X$. 

\begin{lem}\label{thm:cvc_modtw_path_transition}
  Suppose that $X$ is canonical with respect to $\{u_1^1, u_2^1\}$ and $\{u_2^1, u_2^2\}$, that $G[X]$ is connected and that $|X \cap P^1| \leq 35$ and $|X \cap P^2| \leq 35$, then $\statemap(X \cap P^1) = \state^{\ell_1}$ and $\statemap(X \cap P^2) = \state^{\ell_2}$ with $\ell_1 \leq \ell_2$. 
  
  Additionally, for each $\ell \in [5]$, the set $X^\ell = X^\ell_{P^1} \cup X^\ell_{P^2}$ is a vertex cover of $P^1 + P^2$ with $\statemap_{X^\ell}(\{u_1^1, u_2^1, a_{1,4}^2, \bar{a}_{1,2}^2\}) \subseteq \{\zero, \one_1\}$.
\end{lem}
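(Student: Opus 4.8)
The plan is to bootstrap from the single-gadget analysis. First I would note that within $P^1+P^2+\rvertex$ each copy satisfies the embedding hypotheses used by \cref{thm:cvc_modtw_path_gadget_tight}: the only vertices of $P^1$ with a neighbour outside $P^1+\rvertex$ are $u^1_1,u^1_2$ (towards the previous structure in $G$), $a^1_{1,4},\bar a^1_{1,2}$ (which acquire the neighbours $u^2_1,u^2_2$ from the join), and the clique vertices $v^1_\ell$, i.e.\ exactly the input, output, and clique vertices; symmetrically for $P^2$; and $\{u^1_1,u^1_2\}$, $\{u^2_1,u^2_2\}$ are twinclasses in $G$ (the latter because the join is complete). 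Since $X$ is canonical with respect to both these twinclasses, $G[X]$ is connected, $\rvertex\in X$, and $|X\cap P^i|\le 35$, \cref{thm:cvc_modtw_path_gadget_tight} applies to each copy and gives $|X\cap P^1|=|X\cap P^2|=35$, unique $\ell_1,\ell_2\in[5]$ with $v^1_{\ell_1},v^2_{\ell_2}\notin X$, and $\statemap(X\cap P^1)=\state^{\ell_1}$, $\statemap(X\cap P^2)=\state^{\ell_2}$.

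Next I would prove $\ell_1\le\ell_2$ in two stages. The first stage uses only that $X$ covers the four edges between $\{a^1_{1,4},\bar a^1_{1,2}\}$ and $\{u^2_1,u^2_2\}$. Since $\statemap_{X\cap P^1}(\bar a^1_{1,2})=\state^{\ell_1}_3\neq\zero$ for every $\ell_1$, the two edges at $\bar a^1_{1,2}$ are automatically covered; looking at the two edges at $a^1_{1,4}$, one reads off that $\state^{\ell_1}_4=\zero$ (i.e.\ $\ell_1\in\{4,5\}$) forces $\state^{\ell_2}_1\neq\zero$ and $\state^{\ell_2}_2\neq\zero$, i.e.\ $\ell_2\in\{4,5\}$. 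Equivalently, the pair $(\ell_1,\ell_2)$ avoids $\{4,5\}\times\{1,2,3\}$, so the only pairs with $\ell_1>\ell_2$ left to rule out are $(2,1),(3,1),(3,2),(5,4)$. The second stage rules each of these out via connectivity. In states $\state^2,\state^3$ the vertex $a^1_{1,4}$ lies in $X$ with $\statemap_{X\cap P^1}(a^1_{1,4})=\one_0$, and by the explicit description of $X\cap A^1_f$ (antipodal pairs of the $C_4$'s $A_f$) and of the $c_{\cdot,f}$-vertices already extracted in the proof of \cref{thm:cvc_modtw_path_gadget_tight}, $a^1_{1,4}$ has no neighbour of $X$ inside $P^1$, no neighbour outside $P^1+P^2+\rvertex$, and is not adjacent to $\rvertex$; similarly $\bar a^1_{1,2}$ is an isolated (within its copy) $\one_0$-vertex in state $\state^5$. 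Tracing the remaining adjacencies of such a vertex through $u^2_1,u^2_2$ and the corresponding (again explicitly describable) star of $\one_0$-vertices inside $P^2$ in states $\state^1,\state^2,\state^4$, one obtains in each of the four cases a nonempty subset of $X$ that is a union of connected components of $G[X]$ avoiding $\rvertex$, contradicting connectedness of $G[X]$. This second stage, which requires recomputing for the relevant states exactly which of the small transition vertices ($a_{1,f}$, $\bar a_{1,i}$, $c_{\cdot,f}$, $\bar c_{\cdot,i}$) belong to $X$, is the technical heart of the argument and the step I expect to be the main obstacle.

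For the second assertion I would set $X^\ell=X^\ell_{P^1}\cup X^\ell_{P^2}$ with $X^\ell_{P^i}$ the vertex covers of $P^i$ supplied by \cref{thm:cvc_modtw_state_exists}, each realising $\statemap(X^\ell_{P^i})=\state^\ell$. The only edges of $P^1+P^2$ not covered by $X^\ell_{P^1}\cup X^\ell_{P^2}$ edgewise are the four join edges; since $\state^\ell_3\neq\zero$ for all $\ell$ we have $\bar a^1_{1,2}\in X^\ell$, covering two of them, and either $\state^\ell_4\neq\zero$ (then $a^1_{1,4}\in X^\ell$ covers the other two) or $\ell\in\{4,5\}$, where $\state^\ell_1,\state^\ell_2\neq\zero$ gives $u^2_1,u^2_2\in X^\ell$. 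For the state claim one checks, case by case on $\ell$, that after the gluing every interface vertex that lies in $X^\ell$ is root-connected in $G[X^\ell\cup\{\rvertex\}]$: in each $\state^\ell$ a $\one_0$ output vertex of $P^1$ is adjacent to an input vertex of $P^2$ that is $\one_1$ in $P^2$ alone (or a $\one_0$ input vertex of $P^2$ is adjacent to the $\one_1$ output vertex $\bar a^1_{1,2}$ of $P^1$), so it inherits a path to $\rvertex$ through the other copy; hence the $\statemap_{X^\ell}$-values on the stated boundary vertices lie in $\{\zero,\one_1\}$. This last part is a routine verification using only \cref{thm:cvc_modtw_state_exists} and the explicit states $\state^1,\dots,\state^5$.
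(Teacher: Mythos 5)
Your proposal is correct and follows the paper's proof essentially step for step: apply \cref{thm:cvc_modtw_path_gadget_tight} to each copy (after checking the embedding and twinclass hypotheses), exclude $\ell_1>\ell_2$ by first using the covering condition on the four join edges to eliminate $\{4,5\}\times[3]$ and then a root-connectivity argument on the interface $U=\{a^1_{1,4},\bar a^1_{1,2},u^2_1,u^2_2\}$ to eliminate $(2,1),(3,1),(3,2),(5,4)$, and finally verify the gluing of the canonical solutions $X^\ell_{P^1}\cup X^\ell_{P^2}$ directly; the paper phrases the second stage abstractly via the state pairs $(\state^{\ell_1}_3,\state^{\ell_1}_4)$ and $(\state^{\ell_2}_1,\state^{\ell_2}_2)$, while you carry it out by exhibiting the dead components explicitly, which is the same argument at a lower level and, if anything, more careful about why a $\one_0$ interface vertex cannot escape through its own gadget. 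One point worth flagging in the second assertion: what your argument (and the paper's own proof) actually establishes is that the \emph{internal} interface vertices $a^1_{1,4},\bar a^1_{1,2},u^2_1,u^2_2$ of $X^\ell$ are root-connected, whereas the set printed in the lemma, $\{u^1_1,u^1_2,a^2_{1,4},\bar a^2_{1,2}\}$, appears to have its superscripts transposed -- for instance, for $\ell=2$ the vertex $u^1_1\in X^2$ has no path to $\rvertex$ inside $P^1+P^2+\rvertex$ and is genuinely in state $\one_0$ there -- so your concluding sentence should be read as referring to the internal interface, which is also the version needed downstream in \cref{thm:cvc_modtw_sat_to_sol}.
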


\begin{proof}
 By \cref{thm:cvc_modtw_path_gadget_tight}, we see that there are $\ell_1, \ell_2 \in [5]$ such that $\statemap(X \cap P^1) = \state^{\ell_1}$ and $\statemap(X \cap P^2) = \state^{\ell_2}$. It remains to show that $\ell_1 \leq \ell_2$.
 
 Define $U^1 = \{a_{1,4}^1, \bar{a}_{1,2}^1\}$ and $U^2 = \{u_1^2, u_2^2\}$ and $U = U^1 \cup U^2$. By the assumption on how $P^1 + P^2 + \rvertex$ can be connected to the rest of the graph $G$, one can see that any path from $U$ to $\rvertex$ passes through some vertex in $(V(P_1) \cup V(P_2)) \cap N(\rvertex)$. Hence, we can determine whether the vertices of $X \cap U$ are root-connected in $X$ by just considering the graph $P^1 + P^2 + \rvertex$. 
 
 \newcommand{\statepair}{\bar{\state}}
 Consider the state pairs $\statepair^1 = (\statemap_{X \cap P^1}(\bar{a}_{1,2}^1), \statemap_{X \cap P^1}(a_{1,4}^2)) = (\state^{\ell_1}_3, \state^{\ell_2}_4)$ and $\statepair^2 = (\statemap_{X \cap P^2}(u_1^2), \statemap_{X \cap P^2}(u_2^2)) = (\state^{\ell_2}_1, \state^{\ell_2}_2)$. We claim that whenever $\ell_1 > \ell_2$ there is some edge in $G[U]$ that is not covered by $X$ or there is a vertex in $X \cap U$ that is not root-connected in $X$. There is an uncovered edge in $G[U]$ if and only if both $\statepair^1$ and $\statepair^2$ each contain at least one $\zero$. This shows that $(\ell_1, \ell_2) \notin \{4,5\} \times [3]$. Some vertex in $X \cap U$ is not root-connected in $X$ if and only if either $\statepair^1$ or $\statepair^2$ contains a $\one_0$ and the other one only contains two $\zero$s or if both contain no $\one_1$ at all. This shows that $(\ell_1, \ell_2) \notin \{(5,4),(3,2),(3,1),(2,1)\}$ and concludes the proof of the first part.
 
 For the second part, notice that $\statemap(X_{P^1}^\ell) = \statemap(X_{P^2}^\ell) = \state^\ell$ by \cref{thm:cvc_modtw_path_gadget_tight} and using the same approach as in the last paragraph, we see that for $\ell = \ell_1 = \ell_2$ all edges in $G[U]$ are covered and all vertices in $X^\ell$ are root-connected in $X^\ell$. \qed
\end{proof}

\cref{thm:cvc_modtw_path_transition} is the reason for the chosen numbering of the elements of $\states$. We say that a \emph{cheat occurs} if $\ell_1 < \ell_2$. Creating arbitrarily long paths of the path gadgets $P$, \cref{thm:cvc_modtw_path_transition} tells us that at most $|\states| - 1 = 4 = \Oh(1)$ cheats may occur on such a path.

\subsubsection{Complete Construction}

\newcommand{\nregions}{{4\ngrps\grpsize + 1}}
\newcommand{\ncolumns}{{\nclss(\nregions)}}
\newcommand{\sequence}{\mathbf{h}}

\begin{figure}
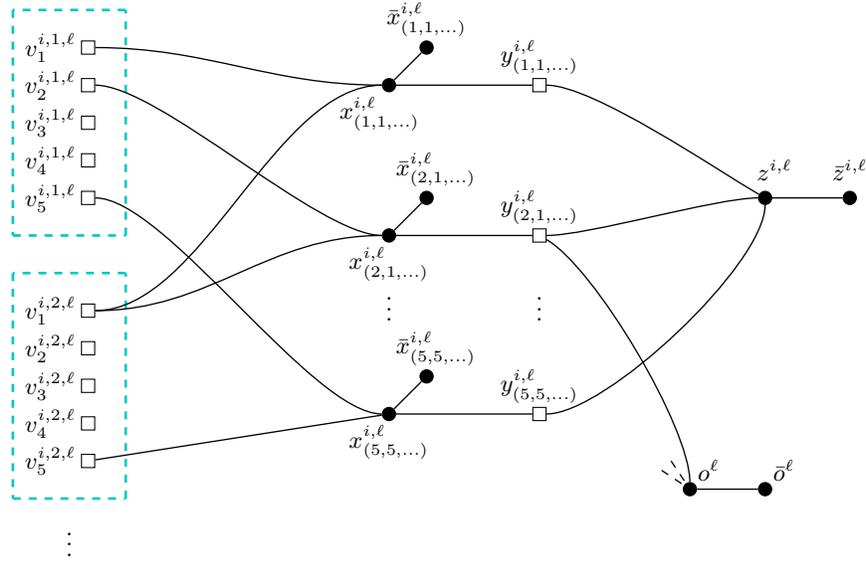

  \centering
  \tikzfig{pictures/cvc_modtw_decoding_gadget}
  \caption{The decoding gadget for group $i \in [\ngrps]$ and column $\ell \in [\ncolumns]$. The clause gadget for column $\ell$ consists of $o^\ell$ and $\bar{o}^\ell$ and represents clause $C_{\ell'}$, where $\ell' = (\ell - 1) \mod \nclss$. In this figure the truth assignment for group $i$ corresponding to $(2,1,\ldots) \in [5]^\grpsize$ satisfies clause $C_{\ell'}$.}  
  \label{fig:cvc_modtw_decoding}
\end{figure}

\paragraph{Setup.} 
Assume that \CVC can be solved in time $\Oh^*((5 - \eps)^{\tcpw(G)})$ for some $\eps > 0$. Given a \SAT-instance $\formula$ with $\nvars$ variables and $\nclss$ clauses, we construct an equivalent \CVC instance with twinclass-pathwidth approximately $\nvars \log_5(2)$ so that the existence of such an algorithm for \CVC would imply that \CNFSETH is false. 

We pick an integer $\vgrpsize$ only depending on $\eps$; the precise choice of $\vgrpsize$ will be discussed at a later point. The variables of $\formula$ are partitioned into groups of size at most $\vgrpsize$, resulting in $\ngrps = \lceil \nvars / \vgrpsize \rceil$ groups. Furthermore, we pick the smallest integer $\grpsize$ that satisfies $5^\grpsize \geq 2^\vgrpsize$. We now begin with the construction of the \CVC instance $(G = G(\formula, \vgrpsize), \budget)$.

We create the root vertex $\rvertex$ and attach a leaf $\rvertex'$ which forces $\rvertex$ into any connected vertex cover.
For every group $i \in [\ngrps]$, we create $\grpsize$ long path-like gadgets $P^{i, j}$, $j \in [\grpsize]$, where each $P^{i, j}$ consists of $\ncolumns$ copies $P^{i, j, \ell}$, $\ell \in [\ncolumns]$, of the path gadget $P$ and consecutive copies are connected by a join. More precisely, the vertices in some $P^{i, j, \ell}$ inherit their names from $P$ and the superscript of $P^{i, j, \ell}$ and for every $i \in [\ngrps]$, $j \in [\grpsize]$, $\ell \in [\ncolumns - 1]$, the output vertices $a_{1,4}^{i, j, \ell}$ and $\bar{a}_{1,2}^{i, j, \ell}$ are joined to the input vertices $u_1^{i, j, \ell + 1}$ and $u_2^{i, j, \ell + 1}$ of the next path gadget. The ends of each path $P^{i, j}$, namely the vertices $u_1^{i, j, 1}$, $u_2^{i, j, 1}$, $a_{1,4}^{i, j, \ncolumns}$, $\bar{a}_{1,2}^{i, j, \ncolumns}$ are made adjacent to the root $\rvertex$. 

For every group $i \in [\ngrps]$ and column $\ell \in [\ncolumns]$, we create a \emph{decoding gadget} $D^{i,\ell}$ in the same style as Cygan et al.~\cite{CyganNPPRW11arxiv} for \CVC parameterized by pathwidth. Every variable group $i$ has at most $2^\vgrpsize$ possible truth assignments and by choice of $\grpsize$ we have that $5^\grpsize \geq 2^\vgrpsize$, so we can find an injective mapping $\embedding \colon \{0,1\}^\vgrpsize \rightarrow [5]^\grpsize$ which assigns to each truth assignment $\tassign \in \{0,1\}^\vgrpsize$ a sequence $\embedding(\tassign) \in [5]^\grpsize$. For each sequence $\sequence = (h_1, \ldots, h_\grpsize) \in [5]^\grpsize$, we create vertices $x^{i, \ell}_\sequence$, $\bar{x}^{i, \ell}_\sequence$, $y^{i, \ell}_\sequence$ and edges $\{x^{i, \ell}_\sequence, \bar{x}^{i, \ell}_\sequence\}$, $\{x^{i, \ell}_\sequence, y^{i, \ell}_\sequence\}$, $\{y^{i, \ell}_\sequence, \rvertex\}$. Furthermore, we add the edge $\{x^{i, \ell}_\sequence, v^{i, j, \ell}_{h_j}\}$ for all $\sequence = (h_1, \ldots, h_\grpsize) \in [5]^\grpsize$ and $j \in [\grpsize]$. Finally, we create two adjacent vertices $z^{i, \ell}$ and $\bar{z}^{i, \ell}$ and edges $\{z^{i, \ell}, y^{i, \ell}_\sequence\}$ for all $\sequence \in [5]^\grpsize$. For every group $i \in [\ngrps]$ and column $\ell \in [\ncolumns]$, we bundle the the path gadgets $P^{i,j,\ell}$, $j \in [\grpsize]$, and the decoding gadget $D^{i, \ell}$ into the \emph{block} $B^{i, \ell}$.

Lastly, we construct the \emph{clause gadgets}.
We number the clauses of $\formula$ by $C_0, \ldots, C_{\nclss - 1}$. For every column $\ell \in [\ncolumns]$, we create an adjacent pair of vertices $o^{\ell}$ and $\bar{o}^{\ell}$. Let $\ell' \in [0, \nclss - 1]$ be the remainder of $(\ell - 1)$ modulo $\nclss$. For every $i \in [\ngrps]$, $\sequence \in \embedding(\{0,1\}^\vgrpsize)$, we add the edge $\{o^{\ell}, y^{i, \ell}_\sequence\}$ whenever $\embedding^{-1}(\sequence)$ is a truth assignment for variable group $i$ that satisfies clause $C_{\ell'}$. See \cref{fig:cvc_modtw_decoding} for a depiction of the decoding and clause gadgets and \cref{fig:cvc_modtw_schematic} for a high-level view of the whole construction.

\begin{figure}
  \centering
  \scalebox{.75}{\tikzfig{pictures/cvc_modtw_schematic}}
  \caption{The matrix structure of the constructed graph. Every $\nclss$ columns form a region.}  
  \label{fig:cvc_modtw_schematic}
\end{figure}

\begin{lem}\label{thm:cvc_modtw_sat_to_sol}
 If $\formula$ is satisfiable, then there exists a connected vertex cover $X$ of $G = G(\formula, \vgrpsize)$ of size $|X| \leq (35\ngrps \grpsize + (5^\grpsize + 2)\ngrps + 1)\ncolumns + 1 = \budget$.
\end{lem}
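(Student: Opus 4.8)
The plan is to construct the connected vertex cover $X$ directly from a satisfying assignment $\tassign$ of $\formula$ by specifying, block by block, which vertices of $G$ are selected, and then verify three things: that $X$ covers all edges, that $G[X]$ is connected, and that $|X|$ meets the stated budget.

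\textbf{Construction of $X$.} First include the root $\rvertex$ (forced by the pendant $\rvertex'$). For each group $i \in [\ngrps]$, let $\tassign_i \in \{0,1\}^{\vgrpsize}$ be the restriction of $\tassign$ to group $i$, and let $\embedding(\tassign_i) = (h_1^i, \ldots, h_{\grpsize}^i) \in [5]^{\grpsize}$ be its code. For each path $P^{i,j}$, $j \in [\grpsize]$, put every copy $P^{i,j,\ell}$ into the constant state $\state^{h_j^i}$ by selecting the vertex cover $X_{P^{i,j,\ell}}^{h_j^i}$ from \cref{thm:cvc_modtw_state_exists}; this contributes $35$ vertices per path gadget copy, hence $35 \ngrps \grpsize \ncolumns$ vertices in total. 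Since all copies on a path carry the same state, \cref{thm:cvc_modtw_path_transition} (second part) guarantees the selected sets glue consistently across the joins between consecutive copies. For each decoding gadget $D^{i,\ell}$, select: $x_{\sequence}^{i,\ell}$ for all $\sequence \in [5]^{\grpsize}$ (covering the edges to the clique vertices and the pendant edges to $\bar{x}_{\sequence}^{i,\ell}$), the vertex $y_{\sequence^{i,\ell}}^{i,\ell}$ where $\sequence^{i,\ell} = \embedding(\tassign_i)$ — actually we select $y_{\sequence}^{i,\ell}$ for \emph{all} $\sequence$ except possibly not needed; more carefully, to cover $\{x_\sequence, y_\sequence\}$ when $x_\sequence \in X$ is already handled, we select $y_\sequence^{i,\ell}$ only for the one code $\sequence = \embedding(\tassign_i)$ and instead cover $\{x_\sequence^{i,\ell}, y_\sequence^{i,\ell}\}$ via $x_\sequence^{i,\ell}$ for the others — and $\bar{z}^{i,\ell}$, giving $5^{\grpsize} + 2$ vertices per block. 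Finally select $o^{\ell}$ for each column $\ell$, contributing $\ncolumns$ vertices. The $+1$ accounts for $\rvertex$. Summing: $(35\ngrps\grpsize + (5^{\grpsize}+2)\ngrps + 1)\ncolumns + 1$, matching $\budget$; the per-block count $(5^\grpsize + 2)\ngrps$ and the $+1$ for $o^\ell$ inside the $\ncolumns$ factor need to be matched against the claimed formula carefully, adjusting which $y$-vertices are taken.

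\textbf{Edge cover verification.} Edges inside each $P^{i,j,\ell}$ and across path joins are covered by \cref{thm:cvc_modtw_state_exists} and \cref{thm:cvc_modtw_path_transition}. Edges between $P^{i,j,\ell}$ and $D^{i,\ell}$ are the edges $\{x_\sequence^{i,\ell}, v_{h_j}^{i,j,\ell}\}$: either $x_\sequence^{i,\ell} \in X$, or when $\sequence = \embedding(\tassign_i)$ we have $h_j = h_j^i$ and $v_{h_j^i}^{i,j,\ell} \notin X$ by \cref{thm:cvc_modtw_state_exists} — so we must keep $x_{\embedding(\tassign_i)}^{i,\ell} \in X$ as well; in fact we keep \emph{all} $x_\sequence^{i,\ell}$, so these edges are fine. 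The edges $\{x_\sequence, \bar x_\sequence\}$, $\{x_\sequence, y_\sequence\}$ are covered by $x_\sequence$; $\{y_\sequence, \rvertex\}$ by $\rvertex$; $\{z^{i,\ell}, y_\sequence^{i,\ell}\}$ by $\bar z^{i,\ell}$ (since $z^{i,\ell}\notin X$, we need $y_\sequence^{i,\ell}\in X$ for all $\sequence$ — so actually we select \emph{all} $y_\sequence^{i,\ell}$, and then we do not need $\bar z^{i,\ell}$ but we do need $z^{i,\ell}$ or $\bar z^{i,\ell}$ for their mutual edge; picking $\bar z^{i,\ell}$ and all $y_\sequence$ works). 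The clause edges $\{o^\ell, y_\sequence^{i,\ell}\}$ are covered by $o^\ell \in X$. I will reconcile the exact $y$-selection with the budget arithmetic at this point.

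\textbf{Connectivity.} Every selected vertex must have a path to $\rvertex$ in $G[X]$. The central, clique, complement, cost, indicator, connectivity vertices in each $P^{i,j,\ell}$ are adjacent to $\rvertex$ directly. The vertices $a_{1,f}, \bar a_{1,i}$ that are not adjacent to $\rvertex$ are root-connected by the second part of \cref{thm:cvc_modtw_state_exists}, which applies once we check $\statemap_X$ on the interface vertices $\{u_1, u_2, \bar a_{1,2}, a_{1,4}\}$ lies in $\{\zero, \one_1\}$; this is exactly what \cref{thm:cvc_modtw_path_transition}'s second part gives for interior joins, and for the two extreme copies the interface vertices are adjacent to $\rvertex$. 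The decoding-gadget vertices $x_\sequence^{i,\ell}$ reach $\rvertex$ via $x_\sequence^{i,\ell} \sim y_\sequence^{i,\ell} \sim \rvertex$ (here we need $y_\sequence^{i,\ell} \in X$), $\bar z^{i,\ell}$ reaches $\rvertex$ via some $y_\sequence^{i,\ell}$... wait, $\bar z^{i,\ell} \sim z^{i,\ell}$ only, and $z^{i,\ell}\notin X$; so $\bar z^{i,\ell}$ would be isolated. Hence I should instead select $z^{i,\ell}$ (adjacent to all $y_\sequence^{i,\ell}\in X$) rather than $\bar z^{i,\ell}$, making $\bar z^{i,\ell}$ a covered pendant; then $z^{i,\ell}$ is root-connected via any $y_\sequence^{i,\ell}$. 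Similarly $o^\ell$ is root-connected via $y_\sequence^{i,\ell}$ for any $i,\sequence$ with the edge $\{o^\ell, y_\sequence^{i,\ell}\}$ present, and such a pair exists precisely because $\tassign$ satisfies clause $C_{\ell'}$ — this is the one place the satisfiability hypothesis is used, and it is the crux of the argument.

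\textbf{Main obstacle.} The hard part is bookkeeping: pinning down exactly which $y$- and $z$-type vertices to include so that (i) all decoding/clause edges are covered, (ii) $z^{i,\ell}$ and $o^\ell$ are root-connected, and (iii) the count per block is precisely $5^{\grpsize} + 2$ as in the claimed budget, not one more or one less. Once the selection is fixed, all edge-cover and connectivity checks are local and routine, the only genuinely global ingredient being that each $o^\ell$ finds a satisfied-assignment code $\sequence$ to hang onto, which is where $\formula \models$ clause $C_{\ell'}$ enters.
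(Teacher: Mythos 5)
The overall skeleton is right and matches the paper: use \cref{thm:cvc_modtw_state_exists} to put every copy on path $P^{i,j}$ into the constant state $\state^{h^i_j}$, use the second part of \cref{thm:cvc_modtw_path_transition} for the joins, and root-connect $o^\ell$ via a $y$-vertex of a group whose assignment satisfies the clause. But the decoding-gadget selection --- which you repeatedly revise and finally defer as ``bookkeeping'' --- is a genuine gap, and none of the candidate selections you actually write down works. With all $x^{i,\ell}_\sequence$, only $y^{i,\ell}_{\embedding(\tassign_i)}$, and $\bar{z}^{i,\ell}$, the edges $\{z^{i,\ell}, y^{i,\ell}_\sequence\}$ for $\sequence \neq \embedding(\tassign_i)$ are uncovered (note $\bar{z}^{i,\ell}$ is \emph{not} an endpoint of these edges, so it cannot cover them). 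With all $y^{i,\ell}_\sequence$ selected, the block count becomes $2\cdot 5^\grpsize + 1$, exceeding the allowed $5^\grpsize + 2$, so the lemma's bound is not met. You never exhibit a selection satisfying coverage, connectivity, and the budget simultaneously.

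The resolution is not mere arithmetic; it needs one idea you are missing. The correct choice (as in the paper) is all $x^{i,\ell}_\sequence$, the single vertex $y^{i,\ell}_{\sequence^i}$ with $\sequence^i = \embedding(\tassign_i)$, and $z^{i,\ell}$ (not $\bar{z}^{i,\ell}$): then $z^{i,\ell}$ covers every edge $\{z^{i,\ell}, y^{i,\ell}_\sequence\}$ as well as $\{z^{i,\ell},\bar{z}^{i,\ell}\}$, and is root-connected via $y^{i,\ell}_{\sequence^i}$. The point you do not address is how $x^{i,\ell}_\sequence$ for $\sequence \neq \sequence^i$ reaches the root once its $y$-neighbor is \emph{not} selected: your proposal only offers the route $x_\sequence \sim y_\sequence \sim \rvertex$, which requires all $y$'s. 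Instead, for $\sequence = (h_1,\dots,h_\grpsize) \neq \sequence^i$ there is some $j$ with $h_j \neq h^i_j$, so the clique vertex $v^{i,j,\ell}_{h_j}$ lies in $X^{h^i_j}_{P^{i,j,\ell}}$ by \cref{thm:cvc_modtw_state_exists}; it is adjacent to $x^{i,\ell}_\sequence$ by construction and to $\rvertex$, which root-connects $x^{i,\ell}_\sequence$. This is exactly where the injectivity of $\embedding$ and the ``all clique vertices except one'' structure of the path-gadget solutions are needed, and without it the budget cannot be met.
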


\begin{proof}
 Let $\tassign$ be a satisfying truth assignment of $\formula$ and let $\tassign^i$ denote the restriction of $\tassign$ to the $i$-th variable group for every $i \in [\ngrps]$ and let $\embedding(\tassign^i) = \sequence^i = (h^i_1, \ldots, h^i_\grpsize)$ be the corresponding sequence. 
 The connected vertex cover is given by 
 \begin{equation*}
   X = \{\rvertex\} \cup \bigcup_{\ell \in [\ncolumns]} \left(\{o^\ell\} \cup \bigcup_{i \in [\ngrps]} \left(\{y^{i, \ell}_{\sequence^i}, z^{i,\ell}\} \cup \bigcup_{\sequence \in [5]^\grpsize} \{x^{i,\ell}_\sequence\} \cup \bigcup_{j \in [\grpsize]} X_{P^{i,j,\ell}}^{h^i_j} \right) \right),
 \end{equation*}
 where $X_{P^{i,j,\ell}}^{h^i_j}$ refers to the sets given by \cref{thm:cvc_modtw_state_exists}. 
 
 Clearly, $|X| = \budget$, so it remains to prove that $X$ is a connected vertex cover. By \cref{thm:cvc_modtw_state_exists} and the second part of \cref{thm:cvc_modtw_path_transition} all edges induced by the path gadgets are covered by $X$ and all vertices on the path gadgets that belong to $X$ are root-connected, except for possibly the vertices at the ends, i.e. $\bigcup_{i \in [\ngrps]} \bigcup_{j \in [\grpsize]} \{u^{i,j,1}_1, u^{i,j,1}_2, a^{i,j,\ncolumns}_{1,4}, \bar{a}^{i,j,\ncolumns}_{1,2}\}$, but these are contained in the neighborhood of $\rvertex$ by construction.
 
 Fix $i \in [\ngrps]$, $\ell \in [\ncolumns]$, and consider the corresponding decoding gadget. Since $z^{i, \ell} \in X$ and $x^{i,\ell}_\sequence \in X$ for all $\sequence \in [5]^\grpsize$, all edges induced by the decoding gadget and all edges between the decoding gadget and the path gadgets are covered by $X$. Furthermore, since $o^{\ell} \in X$, all edges inside the clause gadget and all edges between the clause gadget and the decoding gadgets are covered by $X$. Hence, $X$ has to be a vertex cover of $G$.
 
 It remains to prove that the vertices in the decoding and clause gadgets that belong to $X$ are also root-connected. Again, fix $i \in [\ngrps]$, $\ell \in [\ncolumns]$, and $\sequence = (h_1, \ldots, h_\grpsize) \in [5]^\grpsize \setminus \{\sequence^i\}$. Since $\sequence \neq \sequence^i$, there is some $j \in [\grpsize]$ such that $v^{i,j,\ell}_{h_j} \in X$ by \cref{thm:cvc_modtw_state_exists} which connects $x_\sequence^{i, \ell}$ to the root $\rvertex$. The vertices $x_{\sequence^i}^{i, \ell}$ and $z^{i, \ell}$ are root-connected via $y^{i, \ell}_{\sequence^i} \in X$.
 
 We conclude by showing that $o^\ell$ is root-connected for all $\ell \in [\ncolumns]$. Since $\tassign$ is a satisfying truth assignment of $\formula$, there is some variable group $i \in [\ngrps]$ such that $\tassign^i$ already satisfies clause $C_{\ell'}$, where $\ell'$ is the remainder of $\ell - 1$ modulo $\nclss$. By construction of $G$ and $X$, the vertex $y^{i, \ell}_{\sequence^i} \in X$ is adjacent to $o^\ell$, since $\embedding(\tassign^i) = \sequence^i$, and connects $o^\ell$ to the root $\rvertex$. This shows that all vertices of $X$ are root-connected, so $G[X]$ has to be connected. \qed
\end{proof}

\begin{lem}\label{thm:cvc_modtw_sol_to_sat}
  If there exists a connected vertex cover $X$ of $G = G(\formula, \vgrpsize)$ of size $|X| \leq (35\ngrps \grpsize + (5^\grpsize + 2)\ngrps + 1)\ncolumns + 1 = \budget$, then $\formula$ is satisfiable.
\end{lem}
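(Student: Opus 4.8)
The plan is to show that every connected vertex cover of size at most $\budget$ must be \emph{tight} on each of its gadgets, to read off from this tightness a state in $[5]$ for every copy of the path gadget, to use \cref{thm:cvc_modtw_path_transition} together with a pigeonhole argument over the regions to find a region where these states do not change from column to column, and finally to extract a satisfying assignment from the forced root-connectivity of the clause vertices in that region. Throughout we may assume that $X$ is \emph{canonical} with respect to every input twinclass $\{u_1^{i,j,\ell}, u_2^{i,j,\ell}\}$, since these are false twins in $G$ and exchanging $u_2$ for $u_1$ inside $X$ preserves both the connected-vertex-cover property and the size. The first step is the lower bound $|X| \geq \budget$: partition $V(G) \setminus \{\rvertex'\}$ into $\{\rvertex\}$, the copies $P^{i,j,\ell}$, the decoding gadgets $D^{i,\ell}$, and the clause gadgets $\{o^\ell, \bar o^\ell\}$, and sum local bounds. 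We have $\rvertex \in X$ because of the pendant $\rvertex'$; $|X \cap P^{i,j,\ell}| \geq 35$ by \cref{thm:cvc_modtw_path_gadget_lb} since $G[X]$ is connected; $|X \cap D^{i,\ell}| \geq 5^\grpsize + 2$, because the $5^\grpsize$ edges $\{x^{i,\ell}_\sequence, \bar x^{i,\ell}_\sequence\}$ are vertex-disjoint, the degree-$1$ vertex $\bar z^{i,\ell}$ forces $z^{i,\ell} \in X$, and root-connectivity of $z^{i,\ell}$ then forces at least one $y^{i,\ell}_\sequence \in X$; and $|X \cap \{o^\ell, \bar o^\ell\}| \geq 1$, as $\bar o^\ell$ has degree $1$ and forces $o^\ell \in X$. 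These sum to exactly $\budget$, so the hypothesis $|X| \leq \budget$ turns every one of these inequalities into an equality.

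Next I would exploit tightness. For each copy $|X \cap P^{i,j,\ell}| = 35$, so \cref{thm:cvc_modtw_path_gadget_tight} gives a unique index $h(i,j,\ell) \in [5]$ with $v^{i,j,\ell}_{h(i,j,\ell)} \notin X$, with the other four clique vertices of that copy in $X$, and with $\statemap(X \cap P^{i,j,\ell}) = \state^{h(i,j,\ell)}$; set $\sequence^{i,\ell} = (h(i,1,\ell), \ldots, h(i,\grpsize,\ell)) \in [5]^\grpsize$. In a tight $D^{i,\ell}$ no $\bar x^{i,\ell}_\sequence$ can lie in $X$ (it is a leaf in $G$, so it would be a non-root-connected vertex of $X$, or it would force $x^{i,\ell}_\sequence \in X$ as well and blow the count), hence all $x^{i,\ell}_\sequence \in X$; together with $z^{i,\ell}$ this already accounts for $5^\grpsize + 1$ vertices, so exactly one $y$-vertex of $D^{i,\ell}$ is in $X$. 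Since the only neighbours of $x^{i,\ell}_{\sequence^{i,\ell}}$ other than $\bar x^{i,\ell}_{\sequence^{i,\ell}}$ and $y^{i,\ell}_{\sequence^{i,\ell}}$ are the clique vertices $v^{i,j,\ell}_{h(i,j,\ell)} \notin X$, root-connectivity of $x^{i,\ell}_{\sequence^{i,\ell}}$ forces $y^{i,\ell}_{\sequence^{i,\ell}} \in X$, and by the count this is the unique $y$-vertex of $D^{i,\ell}$ in $X$. Now, concatenating along each path $P^{i,j}$, \cref{thm:cvc_modtw_path_transition} gives $h(i,j,\ell) \leq h(i,j,\ell+1)$ for every consecutive pair, with strict inequality (a \emph{cheat}) at most $|\states| - 1 = 4$ times; summing over the $\ngrps\grpsize$ paths, at most $4\ngrps\grpsize$ regions can contain a cheat at a transition internal to that region, while there are $\nregions = 4\ngrps\grpsize + 1$ regions, so some region $r$ contains no internal cheat. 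Hence $h(i,j,\ell)$ is constant over the $\nclss$ columns $\ell$ of $r$ for all $i,j$; write $\sequence^i$ for the common value of $\sequence^{i,\ell}$ on region $r$.

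Finally I would decode. For each column $\ell$ of region $r$, with $C_{\ell'}$ the corresponding clause ($\ell' \equiv \ell - 1 \bmod \nclss$), tightness forces $o^\ell \in X$ and $\bar o^\ell \notin X$, so $o^\ell$ must be root-connected through one of its $y$-neighbours; by the previous step the $y$-vertices of column $\ell$ lying in $X$ are exactly $\{y^{i,\ell}_{\sequence^i} : i \in [\ngrps]\}$, and an edge $\{o^\ell, y^{i,\ell}_{\sequence^i}\}$ exists precisely when $\sequence^i \in \embedding(\{0,1\}^\vgrpsize)$ and $\embedding^{-1}(\sequence^i)$ is a truth assignment of variable group $i$ satisfying $C_{\ell'}$. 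Thus for every clause $C_{\ell'}$ of $\formula$ there is a group $i$ with $\sequence^i \in \embedding(\{0,1\}^\vgrpsize)$ and $\embedding^{-1}(\sequence^i) \models C_{\ell'}$, so defining $\tassign$ by letting its restriction to group $i$ be $\embedding^{-1}(\sequence^i)$ whenever $\sequence^i \in \embedding(\{0,1\}^\vgrpsize)$ (and arbitrary otherwise) produces a satisfying assignment of $\formula$. The main obstacle will be making the tightness argument of the first step airtight — in particular that the ``$+2$'' per decoding gadget and the ``$+1$'' per clause gadget are forced by connectivity alone and that no surplus can hide anywhere — together with the coupling in the second step, where the structural description of the path gadget from \cref{thm:cvc_modtw_path_gadget_tight} must be combined with the decoding and clause gadgets to show that the single surviving $y$-vertex of each $D^{i,\ell}$ is exactly $y^{i,\ell}_{\sequence^{i,\ell}}$; once this correspondence is in place, the pigeonhole over regions and the final decoding are routine.
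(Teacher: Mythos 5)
Your proposal is correct and follows essentially the same route as the paper's proof: the same disjoint budget accounting forcing tightness everywhere, the same use of \cref{thm:cvc_modtw_path_gadget_lb}, \cref{thm:cvc_modtw_path_gadget_tight} and \cref{thm:cvc_modtw_path_transition} to read off states and bound cheats, the same pigeonhole over the $\nregions$ regions, and the same decoding via root-connectivity of $x^{i,\ell}_{\sequence^i}$ and of $o^\ell$. The concerns you flag at the end (tightness of the $+2$ and $+1$ terms, and identifying the unique surviving $y$-vertex) are in fact already resolved by the arguments you give earlier in the write-up.
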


\begin{proof}
  We assume without loss of generality that $X$ is canonical with respect to each twinclass $\{u_1^{i, j, \ell}, u_2^{i, j, \ell}\}$, $i \in [\ngrps]$, $j \in [\grpsize]$, $\ell \in [\ncolumns]$. 
  
  We begin by arguing that $X$ has to satisfy $|X| = \budget$. First, we must have that $\rvertex \in X$, because $\rvertex$ has a neighbor of degree 1. By \cref{thm:cvc_modtw_path_gadget_lb}, we have that $|X \cap P^{i,j,\ell}| \geq 35$ for all $i \in [\ngrps]$, $j \in [\grpsize]$, $\ell \in [\ncolumns]$. In every decoding gadget, i.e.\ one for every $i \in [\ngrps]$ and $\ell \in [\ncolumns]$, the set $\{z^{i,\ell}\} \cup \bigcup_{\sequence \in [5]^\grpsize} x_\sequence^{i, \ell}$ has to be contained in $X$, since every vertex in this set has a neighbor of degree 1. Furthermore, to connect $z^{i,j}$ to $\rvertex$, at least one of the vertices $y^{i,\ell}_\sequence$, $\sequence \in [5]^\grpsize$, has to be contained in $X$. Hence, $X$ must contain at least $5^\grpsize + 2$ vertices per decoding gadget. Lastly, $o^\ell \in X$ for all $\ell \in [\ncolumns]$, since $o^\ell$ has a neighbor of degree 1. Since we have only considered disjoint vertex sets, this shows that $|X| = \budget$ and all of the previous inequalities have to be tight, in particular for every $i \in [\ngrps]$ and $\ell \in [\ncolumns]$, there is a unique $\sequence \in [5]^\grpsize$ such that $y^{i,\ell}_\sequence \in X$. 
  
  By \cref{thm:cvc_modtw_path_gadget_tight}, we know that $X$ assumes one of the five possible states on each $P^{i,j,\ell}$. Fix some $P^{i,j} = \bigcup_{\ell \in [\ncolumns]} P^{i,j,\ell}$ and note that due to \cref{thm:cvc_modtw_path_transition} the state can change at most four times along $P^{i,j}$. Such a state change is called a \emph{cheat}. Let $\gamma \in [0, 4 \ngrps \grpsize]$ and define the $\gamma$-th \emph{region} $R^\gamma = \bigcup_{i \in [\ngrps]} \bigcup_{j \in [\grpsize]} \bigcup_{\ell = \gamma \nclss + 1}^{(\gamma + 1) \nclss} P^{i,j,\ell}$. Since there are $\nregions$ regions and $\ngrps \grpsize$ many paths, there is at least one region $R^\gamma$ such that no cheat occurs in $R^\gamma$. We consider region $R^\gamma$ for the rest of the proof and read off a satisfying truth assignment from this region.
  
  For $i \in [\ngrps]$, let $\sequence^{i} = (h^i_1, \ldots, h^i_\grpsize) \in [5]^\grpsize$ such that $v^{i,j,\gamma \nclss + 1}_{h^i_j} \notin X$ for all $j \in [\grpsize]$; this is well-defined by \cref{thm:cvc_modtw_path_gadget_tight}. Since $R^\gamma$ does not contain any cheats, the definition of $\sequence^i$ is independent of which column $\ell \in [\gamma \nclss + 1, (\gamma + 1)\nclss]$ we consider. For every $i \in [\ngrps]$ and $\ell \in [\gamma \nclss + 1, (\gamma + 1)\nclss]$, we claim that $y^{i, \ell}_\sequence \in X$ if and only if $\sequence = \sequence^i$. We have already established that for every $i$ and $\ell$, there is exactly one $\sequence$ such that $y^{i, \ell}_\sequence \in X$. Consider the vertex $x^{i, \ell}_{\sequence^i} \in X$, its neighbors in $G$ are $v^{i, 1, \ell}_{h^i_1}, v^{i, 2, \ell}_{h^i_2}, \ldots, v^{i, \grpsize, \ell}_{h^i_\grpsize}$, $\bar{x}^{i,\ell}_{\sequence^i}$, and $y^{i,\ell}_{\sequence^i}$. By construction of $\sequence^i$ and the tight allocation of the budget, we have $(N(x^{i, \ell}_{\sequence^i}) \setminus \{y^{i, \ell}_{\sequence^i}\}) \cap X = \emptyset$. Therefore, $X$ has to include $y^{i, \ell}_{\sequence^i}$ to connect $x^{i, \ell}_{\sequence^i}$ to the root $\rvertex$. This shows the claim.
  
  For $i \in [\ngrps]$, we define the truth assignment $\tassign^i$ for group $i$ by taking an arbitrary truth assignment if $\sequence^i \notin \embedding(\{0,1\}^\vgrpsize)$ and setting $\tassign^i = \embedding^{-1}(\sequence^i)$ otherwise. By setting $\tassign = \bigcup_{i \in [\ngrps]} \tassign^i$ we obtain a truth assignment for all variables and we claim that $\tassign$ satisfies $\formula$. Consider some clause $C_{\ell'}$, $\ell' \in [0, \nclss - 1]$, and let $\ell = \gamma \nclss + \ell' + 1$. We have already argued that $o^\ell \in X$ and to connect $o^\ell$ to the root $\rvertex$, there has to be some $y^{i, \ell}_{\sequence} \in N(o^\ell) \cap X$. By the previous claim, $\sequence = \sequence^i$ for some $i \in [\ngrps]$ and therefore $\tassign^i$, and also $\tassign$, satisfy clause $C_{\ell'}$ due to the construction of $G$. Because the choice of $C_{\ell'}$ was arbitrary, $\tassign$ has to be a satisfying assignment of $\formula$. \qed
\end{proof}

\begin{lem}\label{thm:cvc_modtw_bound}
  The constructed graph $G = G(\formula, \vgrpsize)$ has $\tcpw(G) \leq \ngrps \grpsize + 3 \cdot 5^\grpsize + \Oh(1)$ and a path decomposition of $G^q = G / \tcpartition(G)$ of this width can be constructed in polynomial time.
\end{lem}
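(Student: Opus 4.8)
Since $\tcpw(G) = \pw(G^q)$ with $G^q = G/\tcpartition(G)$, the plan is to bound $\pw(G^q)$, and by \cref{thm:mixed_search} it suffices to exhibit a winning mixed-search strategy on $G^q$ using at most $\ngrps\grpsize + 3\cdot 5^\grpsize + \Oh(1)$ searchers; the path decomposition whose bags are the sets of occupied vertices along the strategy is then explicit and of polynomial size. The first step is to pin down $\tcpartition(G)$. In a copy $P^{i,j,\ell}$ of the path gadget, the only vertices with neighbours outside that copy are the input vertices $u_1, u_2$, the output vertices $a_{1,4}, \bar{a}_{1,2}$, and the clique vertices $v_1,\dots,v_5$; the central sets $W_1,\dots,W_5$ are twinclasses already inside $P$ and get no further neighbours, and $\{u_1^{i,j,\ell}, u_2^{i,j,\ell}\}$ remains a (false-)twinclass because the join to $P^{i,j,\ell-1}$ attaches both $u$-vertices to the same two output vertices. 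So I expect the only nontrivial twinclasses of $G$ to be these sets $W_k^{i,j,\ell}$ and $\{u_1^{i,j,\ell},u_2^{i,j,\ell}\}$, with every decoding- and clause-gadget vertex a singleton (each has a private degree-$1$ neighbour or a neighbourhood distinct from its peers). Hence $G^q$ is $G$ with each $W_k^{i,j,\ell}$ and each $\{u_1,u_2\}$-pair contracted: every copy of the path gadget and every clause gadget becomes $\Oh(1)$ quotient vertices, while each decoding gadget $D^{i,\ell}$ keeps its $3\cdot 5^\grpsize + \Oh(1)$ vertices.

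The search will sweep the block matrix of \cref{fig:cvc_modtw_schematic} in reading order — columns $\ell=1,\dots,\ncolumns$ from left to right, and within a column the blocks $B^{1,\ell},\dots,B^{\ngrps,\ell}$ from top to bottom. One searcher stays on $\rvertex$ for the whole game (isolating everything along $N(\rvertex)$), and throughout the sweep we keep, for each of the $\ngrps\grpsize$ horizontal chains $P^{i,j}$, a single \emph{frontier searcher} on the contracted input vertex of its current copy; these $\ngrps\grpsize$ searchers block recontamination from the right because two consecutive copies $P^{i,j,\ell}, P^{i,j,\ell+1}$ are joined in $G^q$ only via $\{a_{1,4}^{i,j,\ell},\bar{a}_{1,2}^{i,j,\ell}\}$ and the contracted $u^{i,j,\ell+1}$. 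To clear $B^{i,\ell}$ I will, one copy $P^{i,j,\ell}$ at a time, occupy its $\Oh(1)$ internal vertices, clean all its edges, and retain searchers only on its five clique vertices and two output vertices; then I occupy all $3\cdot 5^\grpsize + \Oh(1)$ vertices of $D^{i,\ell}$ together with $o^\ell,\bar{o}^\ell$, clean all edges inside $D^{i,\ell}$, all edges $\{x_\sequence^{i,\ell},v_{h_j}^{i,j,\ell}\}$, and all edges $\{o^\ell,y_\sequence^{i,\ell}\}$; finally I release every searcher except the one on $\rvertex$, the one on $o^\ell$ (kept until column $\ell$ is finished, since $o^\ell$ meets all decoding gadgets of the column), and the frontier searchers, which I slide onto the contracted inputs $u^{i,j,\ell+1}$. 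The peak count is the $\le\ngrps\grpsize$ frontier searchers, plus at most $3\cdot 5^\grpsize+\Oh(1)$ for the block currently processed, plus $\Oh(1)$ more, i.e.\ $\ngrps\grpsize + 3\cdot 5^\grpsize + \Oh(1)$ (using that $\grpsize$ depends only on $\eps$).

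It remains to check the two things a mixed-search argument always needs: that no cleaned edge is ever recontaminated, and that the move count is polynomial. For the former, at each ``release'' step (vacating a path gadget, vacating a decoding gadget, advancing the frontier) every path from the still-contaminated region to a just-vacated vertex runs through the root image, a frontier vertex, or $o^\ell$, all of which carry searchers — a short case check over these three steps. The move count is clearly polynomial in $|V(G)|$, so the induced path decomposition of $G^q$ is computable in polynomial time. I expect the main obstacle to be the frontier bookkeeping: making sure each chain $P^{i,j}$ costs only one searcher, not two, while the joins between $\{a_{1,4}^{i,j,\ell},\bar{a}_{1,2}^{i,j,\ell}\}$ and $u^{i,j,\ell+1}$ are still cleaned in time; the resolution is that once $P^{i,j,\ell-1}$ is vacated, the single contracted vertex $u^{i,j,\ell}$ already separates the two copies in $G^q$, so holding that one vertex between vacating $P^{i,j,\ell-1}$ and processing $P^{i,j,\ell}$ suffices. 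The only other point requiring care is confirming the twinclass analysis is exhaustive, so that the $\Oh(1)$ bound on the quotient size of the path and clause gadgets genuinely holds; everything else is routine.
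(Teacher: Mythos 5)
Your proposal is correct and follows essentially the same route as the paper: both bound $\pw(G^q)$ via the mixed-search characterization (\cref{thm:mixed_search}) with a column-by-column, block-by-block sweep, one persistent searcher per chain $P^{i,j}$ on the contracted input vertex, and the root and $o^\ell$ held to block recontamination, yielding the same dominant terms $\ngrps\grpsize + 3\cdot 5^\grpsize$. The only (immaterial) differences are that the paper sidesteps your twinclass-exhaustiveness worry by contracting only the input pairs and observing that $G^q$ is a subgraph of the resulting graph, and that it clears the decoding gadget before, rather than after, the path gadgets of a block.
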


\begin{proof}
  By construction, all sets $\{u_1^{i,j,\ell}, u_2^{i,j,\ell}\}$, $i \in [\ngrps]$, $j \in [\grpsize]$, $\ell \in [\ncolumns]$, are twinclasses. Let $G'$ be the graph obtained by contracting each of these twinclasses, denoting the resulting vertex by $u^{i,j,\ell}$, then $G^q$ is a subgraph of $G'$. We will show that $\tcpw(G) = \pw(G^q) \leq \pw(G') \leq \ngrps \grpsize + 3 \cdot 5^\grpsize + \Oh(1)$ by giving an appropriate strategy for the mixed-search-game on $G'$ and applying \cref{thm:mixed_search}.
  
  \begin{algorithm}
    Place searchers on $\rvertex$ and $\rvertex'$\;
    Place searchers on $u^{i,j,1}$ for all $i \in [\ngrps]$, $j \in [\grpsize]$\;
    \For{$\ell \in [\ncolumns]$}
    {
      Place searchers on $o^\ell$ and $\bar{o}^\ell$\;
      \For{$i \in [\ngrps]$}
      {
        Place searchers on all vertices of the decoding gadget $D^{i, \ell}$\; 
        \For{$j \in [\grpsize]$}
        {
          Place searchers on all vertices of $P^{i,j,\ell} - \{u_1^{i,j,\ell}, u_2^{i,j,\ell}\}$\;
          Remove searcher from $u^{i,j,\ell}$ and place it on $u^{i,j,\ell + 1}$\;
          Remove searchers on $P^{i,j,\ell} - \{u_1^{i,j,\ell}, u_2^{i,j,\ell}\}$\;
        }
        Remove searchers on $D^{i, \ell}$\;
      }
      Remove searchers on $o^\ell$ and $\bar{o}^\ell$\;
    }
    \caption{Mixed-search-strategy for $G'$}
    \label{algo:cvc_modtw_search_game}
  \end{algorithm}
  \vspace*{-.5cm}
  The mixed-search-strategy for $G'$ described in \cref{algo:cvc_modtw_search_game} proceeds column by column and group by group in each column. The maximum number of placed searchers occurs on line $8$ and is $2 + \ngrps \grpsize + 2 + (3 \cdot 5^\grpsize + 2) + 61$. \qed
\end{proof}

\begin{thm}
  No algorithm can solve \CVC, given a path decomposition of $G^q = G / \tcpartition(G)$ of width $k$, in time $\Oh^*((5 - \eps)^k)$ for some $\eps > 0$, unless \CNFSETH fails.
\end{thm}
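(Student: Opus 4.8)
The plan is to assemble \cref{thm:cvc_modtw_sat_to_sol}, \cref{thm:cvc_modtw_sol_to_sat}, and \cref{thm:cvc_modtw_bound} into a fine-grained reduction from \SAT and then do the exponent bookkeeping that forces \CNFSETH to fail. Fix $\eps > 0$ and assume an algorithm solving \CVC in time $\Oh^*((5-\eps)^k)$ whenever a path decomposition of $G^q = G/\tcpartition(G)$ of width $k$ is supplied. The first step is to commit to the so-far-unspecified block size $\vgrpsize$. Since $\eps > 0$ we have $\log_5(5-\eps) = 1 - \delta$ for a constant $\delta = \delta(\eps) > 0$; I choose $\vgrpsize = \vgrpsize(\eps)$ large enough that $\log_2(5-\eps)/\vgrpsize < \delta/2$. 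Then $\grpsize$ is the least integer with $5^{\grpsize} \ge 2^{\vgrpsize}$, so $\grpsize \le \vgrpsize\log_5 2 + 1$, and $\ngrps = \lceil \nvars/\vgrpsize\rceil$; note that $5^{\grpsize} < 2\cdot 2^{\vgrpsize}$ is a constant depending only on $\eps$.

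Next, given a \SAT-instance $\formula$ with $\nvars$ variables and $\nclss$ clauses, I build $G = G(\formula,\vgrpsize)$ and the budget $\budget$. Because $\vgrpsize$ is a fixed constant, $5^{\grpsize}$ and hence $\budget$ are constant-bounded, while $\ncolumns = \nclss(4\ngrps\grpsize + 1)$ is polynomial in $\nvars + \nclss$; therefore $G$ has polynomially many vertices and the whole construction, including a path decomposition of $G^q$ of width $k \le \ngrps\grpsize + 3\cdot 5^{\grpsize} + \Oh(1)$ via \cref{thm:cvc_modtw_bound}, takes polynomial time. By \cref{thm:cvc_modtw_sat_to_sol} and \cref{thm:cvc_modtw_sol_to_sat}, $\formula$ is satisfiable if and only if $G$ has a connected vertex cover of cost at most $\budget$, so running the assumed algorithm decides satisfiability of $\formula$ in time $\Oh^*((5-\eps)^{\ngrps\grpsize + 3\cdot 5^{\grpsize} + \Oh(1)})$.

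Finally I estimate this running time. The factor $(5-\eps)^{3\cdot 5^{\grpsize} + \Oh(1)}$ depends only on $\eps$ and is a constant, and for the main factor I bound $\ngrps\grpsize \le (\nvars/\vgrpsize + 1)(\vgrpsize\log_5 2 + 1) = \nvars\log_5 2 + \nvars/\vgrpsize + \Oh_{\vgrpsize}(1)$, whence
\begin{equation*}
  (5-\eps)^{\ngrps\grpsize}
  \le 2^{\nvars\log_5(5-\eps)}\cdot 2^{\nvars\log_2(5-\eps)/\vgrpsize}\cdot 2^{\Oh_{\vgrpsize}(1)}
  \le 2^{\nvars(1-\delta)}\cdot 2^{\nvars\delta/2}\cdot 2^{\Oh(1)}
  = 2^{\nvars(1-\delta/2)}\cdot 2^{\Oh(1)} .
\end{equation*}
Hence \SAT would be solvable in time $\poly(\nvars+\nclss)\cdot(2-\eps')^{\nvars}$ for some $\eps' = \eps'(\eps) > 0$, contradicting \CNFSETH. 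I do not anticipate a genuine obstacle here: the substance of the lower bound lives in the path-gadget analysis (\cref{thm:cvc_modtw_path_gadget_lb} through \cref{thm:cvc_modtw_path_transition}) and the two correctness lemmas, which are already in hand; what remains is only (i) checking that every numeric quantity in the construction is polynomially bounded so that the $\Oh^*$ overhead stays polynomial, and (ii) fixing $\vgrpsize$ as a function of $\eps$ up front so that the $\nvars/\vgrpsize$ slack in the exponent is swallowed by the gain $\delta$ — both purely bookkeeping.
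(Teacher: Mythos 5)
Your proposal is correct and follows essentially the same route as the paper's proof: fix $\vgrpsize$ as a function of $\eps$ so the $\nvars/\vgrpsize$ slack from rounding $\grpsize$ is absorbed, build $G(\formula,\vgrpsize)$ with the path decomposition from \cref{thm:cvc_modtw_bound}, invoke \cref{thm:cvc_modtw_sat_to_sol} and \cref{thm:cvc_modtw_sol_to_sat} for equivalence, and do the same exponent bookkeeping (your $1-\delta$ and $\delta/2$ play the roles of the paper's $\delta_1$ and $\delta_2$). One trivial slip: $\budget$ is polynomially bounded in $\nvars+\nclss$, not constant-bounded, but this does not affect the argument since polynomial boundedness is all that is needed.
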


\begin{proof}
  Suppose there is an algorithm $\algo$ that solves \CVC in time $\Oh^*((5 - \eps)^k)$ for some $\eps > 0$ given a path decomposition of $G^q = G / \tcpartition(G)$ of width $k$. Given $\vgrpsize$, we define $\delta_1 < 1$ such that $(5 - \eps)^{\log_5(2)} = 2^{\delta_1}$ and $\delta_2$ such that $(5 - \eps)^{1 / \vgrpsize} = 2^{\delta_2}$. By picking $\vgrpsize$ large enough, we can ensure that $\delta = \delta_1 + \delta_2 < 1$. We show how to solve \SAT  using $\algo$ in time $\Oh^*(2^{\delta \nvars})$, where $\nvars$ is the number of variables, thus contradicting \CNFSETH.
  
  Given a \SAT instance $\formula$, construct $G = G(\formula, \vgrpsize)$ and the path decomposition from \cref{thm:cvc_modtw_bound} in polynomial time, as we have $\vgrpsize = \Oh(1)$ and hence $\grpsize = \Oh(1)$. We run $\algo$ on $G$ and return its answer. This is correct by \cref{thm:cvc_modtw_sat_to_sol} and \cref{thm:cvc_modtw_sol_to_sat}. Due to \cref{thm:cvc_modtw_bound}, the running time is
  \begin{alignat*}{6}
    \phantom{\leq} \,\, & \Oh^*\left( (5 - \eps)^{\ngrps \grpsize + 3 \cdot 5^\grpsize + \Oh(1)} \right)
    & \,\,\leq\,\, & \Oh^*\left( (5 - \eps)^{\ngrps \grpsize} \right) 
    & \,\,\leq\,\, & \Oh^*\left( (5 - \eps)^{\lceil \frac{\nvars}{\vgrpsize} \rceil \grpsize} \right) \\
    \leq \,\, & \Oh^*\left( (5 - \eps)^{\frac{\nvars}{\vgrpsize} \grpsize} \right) 
    & \,\,\leq\,\, & \Oh^*\left( (5 - \eps)^{\frac{\nvars}{\vgrpsize} \lceil \log_5(2^\vgrpsize) \rceil} \right) 
    & \,\,\leq\,\, & \Oh^*\left( (5 - \eps)^{\frac{\nvars}{\vgrpsize} \log_5(2^\vgrpsize)} (5 - \eps)^{\frac{\nvars}{\vgrpsize}} \right)\\
    \leq \,\, & \Oh^*\left( 2^{\delta_1 \vgrpsize \frac{\nvars}{\vgrpsize}} 2^{\delta_2 \nvars} \right)
    & \,\,\leq\,\, & \Oh^*\left( 2^{(\delta_1 + \delta_2) \nvars} \right)
    & \,\,\leq\,\, & \Oh^*\left( 2^{\delta \nvars} \right),
    \end{alignat*}
  hence completing the proof.  \qed
\end{proof}

\subsection{Feedback Vertex Set}
\label{sec:modtw_fvs_lb}

This subsection is devoted to proving that \FVS parameterized by twinclass-pathwidth cannot be solved in time $\Oh^*((5-\eps)^{\tcpw(G)})$ for some $\eps > 0$ unless the \SETH fails. The main challenge is the design of the path gadget. The decoding gadgets are adapted from the lower bound constructions for \OCT by Hegerfeld and Kratsch~\cite{HegerfeldK22} which rely on \emph{arrows} that are adapted from Lokshtanov et al.~\cite{LokshtanovMS18}. We remark that our construction will rely on false twinclasses and not true twinclasses, because in the algorithm for \FVS it can already be seen that true twinclasses only admit four distinct states instead of the desired five.

\renewcommand{\nregions}{{4\ngrps\grpsize + 1}}
\renewcommand{\ncolumns}{{\nclss(\nregions)}}
\renewcommand{\sequence}{\mathbf{h}}
\newcommand{\arrow}{A}

\subsubsection*{Triangle edges.}
Given two vertices $u$ and $v$, by \emph{adding a triangle edge between $u$ and $v$} we mean that we add a new vertex $w_{\{u,v\}}$ and the edges $\{u, v\}$, $\{u, w_{\{u,v\}}\}$, $\{w_{\{u,v\}}, v\}$, so that the three vertices $u$, $v$, $w_{\{u,v\}}$ induce a triangle. The vertex $w_{\{u,v\}}$ will not receive any further neighbors in the construction. Any feedback vertex set $X$ has to intersect $\{u, v, w_{\{u,v\}}\}$ and since $w_{\{u,v\}}$ has only degree 2, we can always assume that $w_{\{u,v\}} \notin X$. In this way, a triangle edge naturally implements a logical or between $u$ and $v$. 

\subsubsection*{Arrows.} 
Given two vertices $u$ and $v$, by \emph{adding an arrow from $u$ to $v$} we mean that we add three vertices $x_{uv}$, $y_{uv}$, $z_{uv}$ and the edges $\{u, x_{uv}\}$, $\{u, y_{uv}\}$, $\{x_{uv}, y_{uv}\}$, $\{y_{uv}, z_{uv}\}$, $\{y_{uv}, v\}$, $\{z_{uv}, v\}$, i.e., we are essentially adding two consecutive triangle edges between $u$ and $v$. The resulting graph is denoted by $\arrow(u,v)$ and $u$ is the \emph{tail} and $v$ the \emph{head} of the arrow. None of the vertices in $V(\arrow(u,v)) \setminus \{u,v\}$ will receive any further neighbors in the construction. The construction of an arrow is symmetric, but the direction will be relevant for constructing a cycle packing that witnesses a lower bound on the size of a feedback vertex set. 

We use arrows to propagate deletions throughout the graph. Let $X$ be a feedback vertex set. If $u \notin X$, then we can resolve both triangles simultaneously by putting $y_{uv}$ into $X$. If $u \in X$, then the first triangle is already resolved and we can safely put $v$ into $X$, hence propagating the deletion from $u$ to $v$. The former solution is called the \emph{passive} solution of the arrow and the latter is the \emph{active} solution. Using simple exchange arguments, we see that it is sufficient to only consider feedback vertex sets that on each arrow either use the passive solution or the active solution.

\subsubsection*{Setup.} 
Assume that \FVS can be solved in time $\Oh^*((5 - \eps)^{\tcpw(G)})$ for some $\eps > 0$. Given a $\clss$-\SAT-instance $\formula$ with $\nvars$ variables and $\nclss$ clauses, we construct an equivalent \FVS instance with twinclass-pathwidth approximately $\nvars \log_5(2)$ so that the existence of such an algorithm for \FVS would imply that \SETH is false. 

We pick an integer $\vgrpsize$ only depending on $\eps$; the precise choice of $\vgrpsize$ will be discussed at a later point. The variables of $\formula$ are partitioned into groups of size at most $\vgrpsize$, resulting in $\ngrps = \lceil \nvars / \vgrpsize \rceil$ groups. Furthermore, we pick the smallest integer $\grpsize$ that satisfies $5^\grpsize \geq 2^\vgrpsize$. We now begin with the construction of the FVS instance $(G = G(\formula, \vgrpsize), \budget)$.

\subsubsection*{Root.} We create a distinguished vertex $\rvertex$ called the \emph{root} which will be connected to several vertices throughout the construction. Given a vertex subset $Y \subseteq V(G)$ with $\rvertex \in Y$, we say that a vertex $v \in Y$ is \emph{root-connected} in $Y$ if there is a $v,\rvertex$-path in $G[Y]$. We will just say \emph{root-connected} if $Y$ is clear from the context.  The construction and choice of budget will ensure that the root vertex $\rvertex$ cannot be deleted by the desired feedback vertex sets. 

\begin{figure}
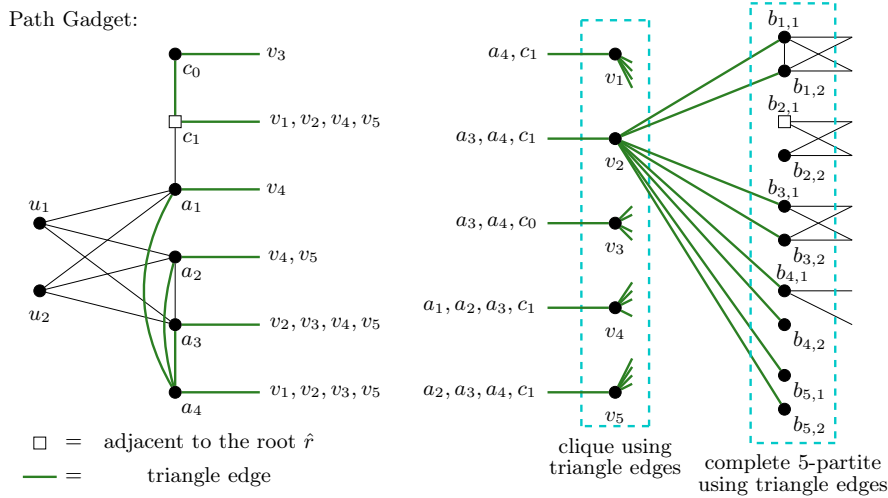

  \centering
  \scalebox{0.9}{\tikzfig{pictures/fvs_modtw_path}}
  \caption{The superscripts in vertex names are omitted and the edges between the auxiliary vertices, connectivity vertices and clique vertices are not drawn directly for visual clarity. All vertices that are depicted with a rectangle are adjacent to the root vertex $\rvertex$. The thick green edges denote triangle edges. The vertices inside the dashed rectangle induce a 5-clique or a complete 5-partite graph using triangle edges. The edges from the output vertices to the next pair of input vertices are hinted at.}
  \label{fig:fvs_modtw_path}
  \vspace*{-0.5cm}
\end{figure}

\subsubsection*{Path gadgets.}
For every $i \in [\ngrps]$, $j \in [\grpsize]$, $\ell \in [\ncolumns]$, we create a path gadget $P^{i, j, \ell}$ that consists of two \emph{input} vertices $u^{i, j, \ell}_1$, $u^{i, j, \ell}_2$ forming a false twinclass; four \emph{auxiliary} vertices $a^{i, j, \ell}_1$, $\ldots$, $a^{i, j, \ell}_4$; two \emph{connectivity} vertices $c^{i, j, \ell}_0$, $c^{i, j, \ell}_1$; five \emph{clique} vertices $v^{i, j, \ell}_1$, \ldots, $v^{i, j, \ell}_5$; and ten \emph{output} vertices in pairs of two $b^{i, j, \ell}_{1,1}$, $b^{i, j, \ell}_{1,2}$, $b^{i, j, \ell}_{2,1}$, $b^{i, j, \ell}_{2,2}$, \ldots, $b^{i, j, \ell}_{5,2}$. We add a join between the input vertices $u^{i, j, \ell}_1$, $u^{i, j, \ell}_2$ and the first three auxiliary vertices $a^{i, j, \ell}_1$, $a^{i, j, \ell}_2$, $a^{i, j, \ell}_3$, furthermore we add the edges $\{a^{i, j, \ell}_2, a^{i, j, \ell}_3\}$, $\{a^{i, j, \ell}_1, c^{i, j, \ell}_1\}$, and $\{b^{i, j, \ell}_{1,1}, b^{i, j, \ell}_{1,2}\}$. The vertices $c^{i, j, \ell}_1$ and $b^{i, j, \ell}_{2,1}$ are made adjacent to the root $\rvertex$. We add triangle edges between $a^{i, j, \ell}_4$ and the other auxiliary vertices $a^{i, j, \ell}_1$, $a^{i, j, \ell}_2$, $a^{i, j, \ell}_3$ and we add a triangle edge between $c^{i, j, \ell}_0$ and $c^{i, j, \ell}_1$. We add a triangle edge between every pair of distinct clique vertices $v^{i, j, \ell}_\varphi$, $\varphi \in [5]$, and every pair of output vertices $b^{i, j, \ell}_{\varphi, \gamma}$ and $b^{i, j, \ell}_{\varphi', \gamma'}$ with $\varphi \neq \varphi' \in [5]$ and $\gamma, \gamma' \in \{1,2\}$. For all $\varphi \in [5]$, we add a triangle edge between $v^{i, j, \ell}_\varphi$ and every $b^{i, j, \ell}_{\psi, \gamma}$ for $\psi \in [5] \setminus \{\varphi\}$ and $\gamma \in \{1,2\}$. 
We finish the construction of $P^{i, j, \ell}$ by describing how to connect the clique vertices $v^{i, j, \ell}_\varphi$, $\varphi \in [5]$, to the left side of $P^{i, j, \ell}$. For each $\varphi \in [5]$, we add triangle edges between $v^{i, j, \ell}_\varphi$ and one or several \emph{target} vertices on the left side of $P^{i, j, \ell}$. The target vertices, depending on $\varphi \in [5]$, are
\begin{itemize}
  \item for $\varphi = 1$: $a^{i, j, \ell}_4$ and $c^{i, j, \ell}_1$;
  \item for $\varphi = 2$: $a^{i, j, \ell}_3$, $a^{i, j, \ell}_4$, and $c^{i, j, \ell}_1$;
  \item for $\varphi = 3$: $a^{i, j, \ell}_3$, $a^{i, j, \ell}_4$, and $c^{i, j, \ell}_0$;
  \item for $\varphi = 4$: $a^{i, j, \ell}_1$, $a^{i, j, \ell}_2$, $a^{i, j, \ell}_3$, and $c^{i, j, \ell}_1$;
  \item for $\varphi = 5$: $a^{i, j, \ell}_2$, $a^{i, j, \ell}_3$, $a^{i, j, \ell}_4$, and $c^{i, j, \ell}_1$.
\end{itemize}
Finally, for $\ell \in [\ncolumns - 1]$, we connect $P^{i, j, \ell}$ to $P^{i, j, \ell + 1}$ by adding a join between the output pair $b^{i, j, \ell}_{\varphi, 1}$, $b^{i, j, \ell}_{\varphi, 2}$ and the next input vertices $u^{i, j, \ell + 1}_1$, $u^{i, j, \ell + 1}_2$ for every $\varphi \in \{1, 2, 3\}$ and we join the vertex $b^{i, j, \ell}_{4, 1}$ to $u^{i, j, \ell + 1}_1$ and $u^{i, j, \ell + 1}_2$. This concludes the description of the path gadgets, cf.~\cref{fig:fvs_modtw_path}.

\begin{figure}
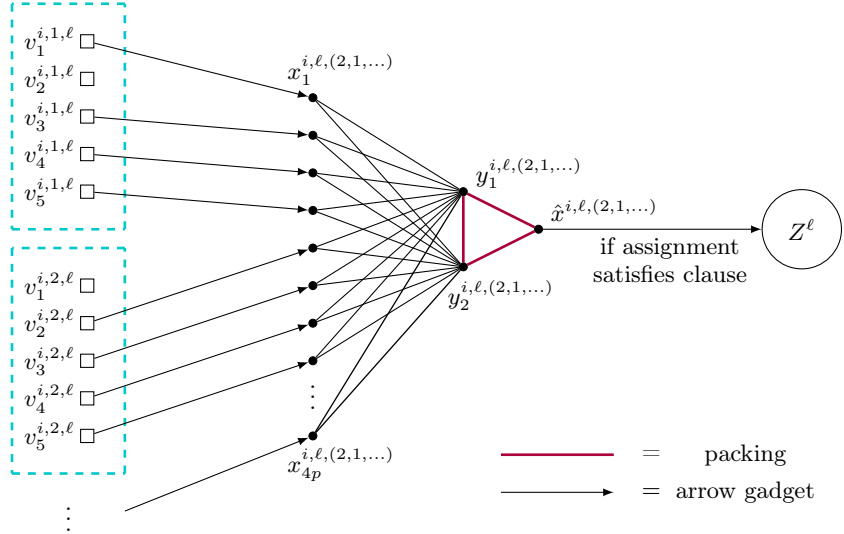

  \centering
  \tikzfig{pictures/fvs_modtw_decoding}
  \caption{A depiction of the decoding $D^{i,\ell,\sequence}$ and clause gadget $Z^\ell$ with $\sequence = (2,1,\ldots)$. The red triangle is part of the packing $\packing$. The arrows point in the direction of the deletion propagation.}
  \label{fig:fvs_modtw_decoding}
\end{figure}

\subsubsection*{Decoding gadgets.}
For every group $i \in [\ngrps]$, column $\ell \in [\ncolumns]$, and state sequence $\sequence = (h_1, \ldots, h_\grpsize) \in [5]^\grpsize$, we create a decoding gadget $D^{i, \ell, \sequence}$ consisting of $4\grpsize$ vertices $x^{i, \ell, \sequence}_1$, $\ldots$, $x^{i, \ell, \sequence}_{4\grpsize}$; a distinguished vertex $\hat{x}^{i, \ell, \sequence}$; and two vertices $y^{i, \ell, \sequence}_1$ and $y^{i, \ell, \sequence}_2$. We add the edges $\{y^{i, \ell, \sequence}_1, y^{i, \ell, \sequence}_2\}$, $\{y^{i, \ell, \sequence}_1, \hat{x}^{i, \ell, \sequence}\}$, $\{y^{i, \ell, \sequence}_2, \hat{x}^{i, \ell, \sequence}\}$ and for every $\gamma \in [4\grpsize]$, the edges $\{y^{i, \ell, \sequence}_1, x^{i, \ell, \sequence}_\gamma\}$ and $\{y^{i, \ell, \sequence}_2, x^{i, \ell, \sequence}_\gamma\}$, hence $\{y^{i, \ell, \sequence}_1, y^{i, \ell, \sequence}_2, x^{i, \ell, \sequence}_\gamma\}$ induces a triangle for every $\gamma \in [4\grpsize]$. The path gadgets $P^{i, j, \ell}$ with $j \in [\grpsize]$ are connected to $D^{i, \ell, \sequence}$ as follows. For every clique vertex $v^{i, j, \ell}_\varphi$ with $\varphi \in [5] \setminus \{h_j\}$, we pick a private vertex $x^{i, \ell, \sequence}_\gamma$, $\gamma \in [4\grpsize]$, and add an arrow from $v^{i, j, \ell}_\varphi$ to $x^{i, \ell, \sequence}_\gamma$. Since there are precisely $4\grpsize$ such $v^{i, j, \ell}_\varphi$ for fixed $i$, $\ell$, and $\sequence$, this construction works out. For every $i \in [\ngrps]$, $\ell \in [\ncolumns]$, the \emph{block} $B^{i,\ell}$ consists of the path gadgets $P^{i,j,\ell}$, $j \in [\grpsize]$, and the decoding gadgets $D^{i,\ell,\sequence}$, $\sequence \in [5]^\grpsize$. See \cref{fig:fvs_modtw_decoding} for a depiction of the decoding gadget.

\subsubsection*{Mapping truth assignments to state sequences.}
Every variable group $i \in [\ngrps]$ has at most $2^\vgrpsize$ possible truth assignments. By choice of $\grpsize$, we have that $5^\grpsize \geq 2^\vgrpsize$, hence we can fix an injective mapping $\embedding \colon \{0,1\}^\vgrpsize \rightarrow [5]^\grpsize$ that maps truth assignments $\tassign \in \{0,1\}^\vgrpsize$ to state sequences $\sequence \in [5]^\grpsize$.

\subsubsection*{Clause cycles.}
We number the clauses of $\formula$ by $C_0, \ldots, C_{\nclss - 1}$. For every column $\ell \in [\ncolumns]$, we create a cycle $Z^\ell$ consisting of $\clss 5^\grpsize$ vertices $z^\ell_\gamma$, $\gamma \in [\clss 5^\grpsize]$. Let $\ell'$ be the remainder of $\ell - 1$ modulo $\nclss$. For every group $i \in [\ngrps]$ and state sequence $\sequence \in [5]^\grpsize$, we add an arrow from $\hat{x}^{i, \ell, \sequence}$ to a private $z^\ell_\gamma$ if $\sequence \in \embedding(\{0,1\}^\vgrpsize)$ and $\embedding^{-1}(\sequence)$ is a truth assignment for variable group $i$ that satisfies clause $C_{\ell'}$. Since $\formula$ is a $\clss$-\SAT instance, every clause intersects at most $\clss$ variable groups. Every variable group has at most $2^\vgrpsize \leq 5^\grpsize$ possible truth assignments, hence $\clss 5^\grpsize$ is a sufficient number of vertices for this construction to work out. See \cref{fig:fvs_modtw_schematic} for a depiction of the high-level structure.

\begin{figure}
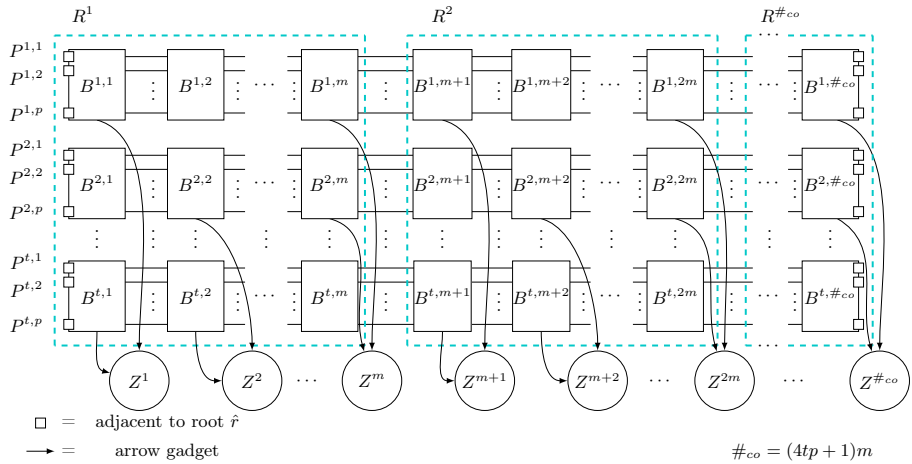

  \centering
  \scalebox{0.75}{\tikzfig{pictures/fvs_modtw_schematic}}
  \caption{The matrix structure of the constructed graph. Every $\nclss$ columns form a region.}
  \label{fig:fvs_modtw_schematic}
  \vspace*{-1cm}
\end{figure}

\subsubsection*{Packing.}
We construct a vertex-disjoint packing $\packing$ that will witness a lower bound on the size of any feedback vertex set in the constructed graph $G$. The packing $\packing$ consists of the following subgraphs:
\begin{itemize}
  \item the triangle edge between $c^{i, j, \ell}_0$ and $c^{i, j, \ell}_1$ for all $i \in [\ngrps]$, $j \in [\grpsize]$, $\ell \in [\ncolumns]$,
  \item the graph induced by the clique vertices $v^{i, j, \ell}_\varphi$, $\varphi \in [5]$, and the triangle edges between them for all $i \in [\ngrps]$, $j \in [\grpsize]$, $\ell \in [\ncolumns]$,
  \item the graph induced by the output vertices $b^{i, j, \ell}_{\varphi, \gamma}$, $\varphi \in [5]$, $\gamma \in \{1,2\}$, and the triangle edges between them for all $i \in [\ngrps]$, $j \in [\grpsize]$, $\ell \in [\ncolumns]$,
  \item the graph induced by the input vertices $u^{i, j, \ell}_1$, $u^{i, j, \ell}_2$ and the auxiliary vertices $a^{i, j, \ell}_1$, $\ldots$, $a^{i, j, \ell}_4$ and the triangle edges between them for all $i \in [\ngrps]$, $j \in [\grpsize]$, $\ell \in [\ncolumns]$,
  \item the triangle induced by $\hat{x}^{i, \ell, \sequence}, y^{i, \ell, \sequence}_1, y^{i, \ell, \sequence}_2$ for all $i \in [\ngrps]$, $\ell \in [\ncolumns]$, $\sequence \in [5]^\grpsize$,
  \item the second triangle in every arrow $\arrow(u,v)$, i.e., the triangle containing the head $v$ if the arrow was constructed from $u$ to $v$.
\end{itemize}
Observe that in the construction of $G$ at most the tail of an arrow is incident with any of the other subgraphs in $\packing$, hence the subgraphs in $\packing$ are indeed vertex-disjoint.
Let $n_\arrow$ be the number of arrows in $G$, we define 
\begin{equation*}
  \cost_\packing = (1 + 4 + 8 + 3)\ngrps \grpsize \ncolumns + \ngrps \ncolumns 5^\grpsize + n_\arrow 
\end{equation*}

\begin{lem} \label{thm:fvs_modtw_packing}
 Let $X$ be a feedback vertex set of $G$, then $|X| \geq \cost_\packing$.
\end{lem}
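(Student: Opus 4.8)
The statement is a standard packing lower bound: the packing $\packing$ consists of vertex-disjoint subgraphs, each of which is not a forest, so any feedback vertex set must delete at least one vertex from each. Hence $|X| \geq |\packing|$, and the plan is simply to verify that $\cost_\packing$ counts the subgraphs in $\packing$ correctly.

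First I would recall the elementary fact (which can be taken as given, as it underlies the whole ``triangle edge'' and ``arrow'' machinery described in the setup) that if $\mathcal{H}$ is a collection of vertex-disjoint subgraphs of $G$, none of which is a forest, then every feedback vertex set of $G$ contains at least one vertex of each $H \in \mathcal{H}$, so $|X| \geq |\mathcal{H}|$. Then I would check two things about $\packing$. (i) \emph{Each member of $\packing$ contains a cycle.} The triangle edge between $c^{i,j,\ell}_0$ and $c^{i,j,\ell}_1$ is a triangle; the triangle $\hat{x}^{i,\ell,\sequence}, y^{i,\ell,\sequence}_1, y^{i,\ell,\sequence}_2$ is a triangle; the second triangle of each arrow is a triangle. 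The graph on the clique vertices $v^{i,j,\ell}_1,\dots,v^{i,j,\ell}_5$ together with the triangle edges between every pair forms a graph containing (many) triangles; similarly for the ten output vertices $b^{i,j,\ell}_{\varphi,\gamma}$ with pairwise triangle edges, and for the six vertices $u^{i,j,\ell}_1,u^{i,j,\ell}_2,a^{i,j,\ell}_1,\dots,a^{i,j,\ell}_4$ together with the join edges $\{u^{i,j,\ell}_s, a^{i,j,\ell}_r\}$ for $s\in\{1,2\}$, $r\in\{1,2,3\}$, the edge $\{a^{i,j,\ell}_2,a^{i,j,\ell}_3\}$, and the triangle edges incident to $a^{i,j,\ell}_4$ — each of these contains a triangle. (ii) \emph{The members of $\packing$ are pairwise vertex-disjoint.} Here I would argue as already sketched: the ``fresh'' vertices introduced by a triangle edge or an arrow (the vertex $w_{\{u,v\}}$, resp.\ the vertices $x_{uv}, y_{uv}, z_{uv}$) receive no further neighbors, so no such vertex is shared; within a fixed path gadget $P^{i,j,\ell}$ the four listed ``clique/output/input-auxiliary/connectivity'' blocks use pairwise disjoint vertex sets by construction, and across different $(i,j,\ell)$ the gadgets are vertex-disjoint; the decoding-gadget triangles use only vertices $\hat{x}^{i,\ell,\sequence}, y^{i,\ell,\sequence}_1, y^{i,\ell,\sequence}_2$, which are disjoint from the path gadgets and from the arrow interiors; and finally, by the remark in the construction, in every arrow $\arrow(u,v)$ only the tail $u$ can be incident with another subgraph of the construction, so the \emph{second} triangle of the arrow (which avoids $u$) is disjoint from everything else in $\packing$.

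Having established (i) and (ii), the bound $|X| \geq |\packing|$ follows, and it remains to count $|\packing|$. For each triple $(i,j,\ell)$ with $i\in[\ngrps]$, $j\in[\grpsize]$, $\ell\in[\ncolumns]$ we have contributed: one subgraph (the $c_0$–$c_1$ triangle edge), plus the clique-vertex subgraph, plus the output-vertex subgraph, plus the input-auxiliary subgraph — but the count $(1+4+8+3)$ in $\cost_\packing$ indicates that for these four subgraphs we are actually counting \emph{how many cycles/how many deletions} are forced rather than how many subgraphs: $1$ for the single triangle edge, $4$ for the clique-vertex gadget, $8$ for the output-vertex gadget, $3$ for the input-auxiliary gadget, where in each case a cycle packing \emph{inside} that block is used (e.g.\ among $5$ mutually triangle-connected clique vertices one finds $4$ vertex-disjoint cycles after adding the fresh triangle-edge apex vertices; among $10$ mutually triangle-connected output vertices one finds $8$; among the $6$ input/auxiliary vertices with their join one finds $3$). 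So I would make this precise by exhibiting, once and for all, inside the complete graph on $5$ vertices realized with triangle-edge apexes a family of $4$ vertex-disjoint triangles, inside the analogous structure on $10$ vertices a family of $8$, and inside the input-auxiliary block a family of $3$; multiplying by $\ngrps\grpsize\ncolumns$ gives the first term. The term $\ngrps\ncolumns 5^\grpsize$ counts the decoding-gadget triangles $\{\hat{x}^{i,\ell,\sequence},y^{i,\ell,\sequence}_1,y^{i,\ell,\sequence}_2\}$, one for each $(i,\ell,\sequence)$, and the term $n_\arrow$ counts the second triangle of each of the $n_\arrow$ arrows. Summing, $|\packing| = \cost_\packing$, which completes the proof.

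\textbf{Main obstacle.} The only genuinely delicate point is (ii), the vertex-disjointness — and more specifically making sure that the \emph{internal} cycle packings inside the clique-vertex, output-vertex, and input-auxiliary blocks (the $4$, $8$, $3$ in the cost formula) are themselves vertex-disjoint and really avoid each other and the arrow tails. The arrow-tail issue is handled by the construction's own remark (only the tail of an arrow meets the rest), and the internal packings are confined to a single path gadget, so the bookkeeping is routine once stated carefully; the risk is purely one of being careful with indices rather than of any real difficulty.
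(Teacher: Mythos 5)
Your overall framework (vertex-disjoint subgraphs, per-subgraph lower bounds that add up) matches the paper's, and your disjointness discussion is fine. But the key step justifying the coefficients $4$, $8$, $3$ is based on a false claim, and this is a genuine gap. You propose to exhibit $4$ vertex-disjoint cycles inside the clique-vertex block, $8$ inside the output block, and $3$ inside the input-auxiliary block. This is impossible: in the clique-vertex block every apex vertex $w_{\{v_\varphi,v_\psi\}}$ has degree $2$ with both neighbors among the five clique vertices, so every cycle in that block must contain at least two of the five clique vertices, and hence at most $\lfloor 5/2\rfloor = 2$ vertex-disjoint cycles exist there — not $4$. The same counting kills the output block (at most $5$ disjoint cycles among $10$ output vertices, not $8$) and the input-auxiliary block (at most $2$, not $3$). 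These gadgets are deliberately built so that their feedback vertex number strictly exceeds their maximum vertex-disjoint cycle packing, so no packing-of-cycles argument can certify the coefficients.

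The paper instead lower-bounds the \emph{minimum feedback vertex set} of each block directly. It first applies the standard exchange arguments once to the whole solution, obtaining $X'$ with $|X'|\le|X|$ that contains no degree-$2$ apex vertex of any triangle edge. Then every triangle edge forces $X'$ to contain one of its two real endpoints; since the five clique vertices are pairwise joined by triangle edges, at most one of them can survive, giving the $4$; the ten output vertices form a complete $5$-partite structure under triangle edges, giving the $8$; and a short case analysis on $|X'\cap\{u^{i,j,\ell}_1,u^{i,j,\ell}_2\}|$ (using the $K_{2,3}$ between the input vertices and $a^{i,j,\ell}_1,a^{i,j,\ell}_2,a^{i,j,\ell}_3$, the edge $\{a^{i,j,\ell}_2,a^{i,j,\ell}_3\}$, and the triangle edges to $a^{i,j,\ell}_4$) gives the $3$. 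These per-block bounds then sum because the blocks are vertex-disjoint. To repair your proof you would need to replace the internal cycle packings by exactly this kind of minimum-FVS analysis of each block; the remaining terms ($\ngrps\ncolumns 5^\grpsize$ for the decoding triangles and $n_\arrow$ for the second triangles of the arrows) are correctly handled by your one-vertex-per-triangle argument.
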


\begin{proof}
  We first apply the standard exchange arguments for triangle edges and arrows to $X$, obtaining a feedback vertex set $X'$ of $G$ with $|X'| \leq |X|$ that never contains the degree-2 vertex in a triangle edge and always uses the passive or active solution on any arrow.
  
  For every triangle in $\packing$, the feedback vertex set $X'$ must clearly contain at least one vertex of that triangle. Fix $i \in [\ngrps]$, $j \in [\grpsize]$, $\ell \in [\ncolumns]$ for the rest of the proof. Consider the graph induced by the clique vertices $v^{i, j, \ell}_\varphi$, $\varphi \in [5]$, and suppose that there are $\varphi \neq \psi \in [5]$ such that $v^{i, j, \ell}_\varphi, v^{i, j, \ell}_\psi \notin X'$, then the triangle edge between these two vertices is not resolved by assumption on $X'$. Hence, $X'$ contains at least four of the vertices $v^{i, j, \ell}_\varphi$, $\varphi \in [5]$. Similarly, consider the graph induced by the output vertices $b^{i, j, \ell}_{\varphi, \gamma}$, $\varphi \in [5]$, $\gamma \in \{1,2\}$, and suppose that there are $\varphi \neq \psi \in [5]$, $\gamma, \gamma' \in \{1,2\}$ such that $b^{i, j, \ell}_{\varphi, \gamma}, b^{i, j, \ell}_{\psi, \gamma'} \notin X'$, then the triangle edge between these two vertices is not resolved by assumption on $X'$. Hence, $X'$ contains at least eight of these vertices, in particular four out of five pairs $b^{i, j, \ell}_{\varphi, 1}$, $b^{i, j, \ell}_{\varphi, 2}$, $\varphi \in [5]$, must be completely contained in $X'$.
  
  It remains to show that $X'$ contains at least three vertices in the subgraph induced by the input vertices $u^{i, j, \ell}_1$, $u^{i, j, \ell}_2$ and the auxiliary vertices $a^{i, j, \ell}_1$, $\ldots$, $a^{i, j, \ell}_4$. First, observe that $X'$ has to contain all of the first three auxiliary vertices $a^{i, j, \ell}_1$, $a^{i, j, \ell}_2$, $a^{i, j, \ell}_3$ or the last auxiliary vertex $a^{i, j, \ell}_4$, otherwise there is an unresolved triangle edge incident to the last auxiliary vertex $a^{i, j, \ell}_4$. We distinguish three cases based on $\alpha = |X' \cap \{u^{i, j, \ell}_1, u^{i, j, \ell}_2\}|$. If $\alpha = 2$, we are done by the first observation. If $\alpha = 1$, there is a triangle induced by $a^{i, j, \ell}_2$, $a^{i, j, \ell}_3$, and the remaining input vertex which needs to be resolved. Hence, $a^{i, j, \ell}_2 \in X'$ or $a^{i, j, \ell}_3 \in X'$ and due to the first observation $X'$ has to contain at least one further vertex. Finally, if $\alpha = 0$, note that the graph induced by the input vertices and the first three auxiliary vertices contains a $K_{2,3}$, so $X'$ has to contain at least two of the first three auxiliary vertices and due to the first observation $X'$ has to contain at least one further vertex, hence we are done. \qed
\end{proof}

\begin{lem}\label{thm:fvs_modtw_sat_to_sol}
 If $\formula$ is satisfiable, then there is a feedback vertex set $X$ of $G$ with $|X| \leq \cost_\packing$.
\end{lem}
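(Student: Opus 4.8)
The plan is to take a satisfying assignment $\tassign$ of $\formula$, split it into the group-restrictions $\tassign^i$ with images $\sequence^i = \embedding(\tassign^i) = (h^i_1,\dots,h^i_\grpsize) \in [5]^\grpsize$, and build an explicit feedback vertex set $X$ that attains the packing bound $\cost_\packing$ from \cref{thm:fvs_modtw_packing}. The root $\rvertex$ will be kept out of $X$. For each path gadget $P^{i,j,\ell}$ we will use a ``state-$h^i_j$'' local solution analogous to the sets $X_P^\ell$ used in the \CVC construction: it keeps the clique vertex $v^{i,j,\ell}_{h^i_j}$ out of $X$, deletes the other four clique vertices $v^{i,j,\ell}_\varphi$ ($\varphi \neq h^i_j$), deletes the four output pairs $b^{i,j,\ell}_{\varphi,1},b^{i,j,\ell}_{\varphi,2}$ with $\varphi \neq h^i_j$ (eight vertices), deletes $c^{i,j,\ell}_1$ (resolving the $c$-triangle-edge), and deletes three of the input/auxiliary vertices chosen according to $h^i_j$ so that all triangle edges among $\{u_1,u_2,a_1,\dots,a_4\}$ and from the $v_\varphi$'s to their target vertices on the left side are resolved; concretely the choice mirrors the state table ($\varphi=1$ uses the ``no input'' solution, $\varphi \in \{2,3\}$ a ``one input'' solution, $\varphi \in \{4,5\}$ the ``both inputs'' solution). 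This contributes exactly $1+4+8+3 = 16$ vertices per $(i,j,\ell)$, i.e.\ $(1+4+8+3)\ngrps\grpsize\ncolumns$ total.

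Next I would handle the decoding gadgets and arrows. In $D^{i,\ell,\sequence}$ I delete $\hat{x}^{i,\ell,\sequence}$ if $\sequence \neq \sequence^i$, and I delete $y^{i,\ell,\sequence}_1$ (say) if $\sequence = \sequence^i$; in either case this is one vertex per $(i,\ell,\sequence)$, giving $\ngrps\ncolumns 5^\grpsize$ vertices and resolving the triangle $\{\hat x, y_1, y_2\}$. For the arrows: an arrow $\arrow(v^{i,j,\ell}_\varphi, x^{i,\ell,\sequence}_\gamma)$ has its tail $v^{i,j,\ell}_\varphi$ deleted precisely when $\varphi \neq h^i_j$, in which case I use the \emph{active} solution (delete the head $x^{i,\ell,\sequence}_\gamma$); when $\varphi = h^i_j$ the tail is not deleted and I use the \emph{passive} solution (delete the middle vertex $y_{uv}$). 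Similarly, an arrow $\arrow(\hat x^{i,\ell,\sequence}, z^\ell_\gamma)$ exists only when $\sequence \in \embedding(\{0,1\}^\vgrpsize)$ and $\embedding^{-1}(\sequence)$ satisfies clause $C_{\ell'}$; its tail $\hat x^{i,\ell,\sequence}$ is deleted iff $\sequence \neq \sequence^i$, so I use the active solution there, and the passive solution when $\sequence = \sequence^i$. Every arrow contributes exactly one vertex, giving $n_\arrow$ total, so $|X| = \cost_\packing$ as required.

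Then I would verify $G-X$ is acyclic. The triangle-edge exchange arguments and arrow exchange arguments already tell us these local gadgets are resolved (no surviving triangle through the auxiliary pendant vertex $w_{\{u,v\}}$ or through the arrow internals), and every pairwise triangle edge within a clique-vertex block, output block, or between them has at least one endpoint deleted by the design above. The delicate cycles are the long ones: cycles passing between consecutive path gadgets $P^{i,j,\ell}$ and $P^{i,j,\ell+1}$ via the joins $b^{i,j,\ell}_{\varphi,\cdot} \to u^{i,j,\ell+1}_\cdot$, cycles through a decoding gadget via the arrows from the surviving clique vertices, and the clause cycles $Z^\ell$. For the path-to-path junction I claim the state-$h$ solution leaves only the clique vertex $v^{i,j,\ell}_{h^i_j}$ and a bounded number of $a$-vertices undeleted on that gadget, and by construction the surviving output vertices $b^{i,j,\ell}_{h^i_j,1},b^{i,j,\ell}_{h^i_j,2}$ either fail to close a cycle or are root-connected in a way that forbids a cycle back into $P^{i,j,\ell}$; this is the analogue of the second part of \cref{thm:cvc_modtw_path_transition} and the analogous statement should be packaged as a local lemma (which I expect precedes this lemma in the full paper — the excerpt cuts off before it, so I may invoke it). For the decoding gadget, a surviving arrow has active solution exactly when its tail clique vertex is deleted, so no arrow both survives \emph{and} connects a surviving $v^{i,j,\ell}_\varphi$ to a surviving $x^{i,\ell,\sequence}_\gamma$; thus any cycle through $D^{i,\ell,\sequence}$ would have to live inside $\{x^{i,\ell,\sequence}_\gamma\}_\gamma \cup \{y_1,y_2,\hat x\}$ minus the deleted vertex, which is a star-like graph, hence acyclic. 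For the clause cycle $Z^\ell$ (which is a single cycle through $z^\ell_1,\dots,z^\ell_{\clss 5^\grpsize}$), note $X$ contains no $z^\ell_\gamma$, so $Z^\ell$ itself must be broken by the arrows hanging off it: but since $\tassign$ is satisfying, for the clause $C_{\ell'}$ there is a group $i$ whose assignment $\tassign^i$ satisfies $C_{\ell'}$, so $\sequence^i \in \embedding(\{0,1\}^\vgrpsize)$ and an arrow $\arrow(\hat x^{i,\ell,\sequence^i}, z^\ell_{\gamma_0})$ exists — wait, that arrow uses the passive solution (its tail survives), so it does not delete $z^\ell_{\gamma_0}$. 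I need to be careful here: the clause cycle must be broken by \emph{some} $z^\ell_\gamma$ lying on an arrow whose tail $\hat x^{i,\ell,\sequence}$ \emph{is} deleted, i.e.\ $\sequence \neq \sequence^i$. Since each satisfying sequence $\sequence$ has its own arrow and there is at least one surviving-tail arrow but also (generically) many deleted-tail ones feeding into $Z^\ell$, the cycle $Z^\ell$ will contain at least one vertex that is the head of a deleted-tail arrow and hence is itself deleted by the active solution; I would state this precisely by arguing that $Z^\ell$ has more than one arrow attached (true whenever the clause is satisfied by more than one group-assignment, or one uses that the ``all tails survive'' scenario is impossible because $\sequence^i$ is a single sequence while $C_{\ell'}$ is satisfied by potentially many sequences of other groups) — the cleanest route is to additionally delete one extra $z^\ell_\gamma$-vertex only if needed, but since we must hit the exact budget $\cost_\packing$, the construction must already break $Z^\ell$ for free, so I will pin down that at least one attached arrow has a deleted tail and use its active solution.

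\textbf{Main obstacle.} The hard part is exactly the last point: ensuring the clause cycles $Z^\ell$ are broken \emph{without spending any extra budget}, purely through the active solutions of arrows whose tails are deleted. This requires showing that for every column $\ell$ there is an arrow into $Z^\ell$ from some $\hat x^{i,\ell,\sequence}$ with $\sequence \neq \sequence^i$ — i.e.\ that the satisfying-assignment sequence $\sequence^i$ is never the \emph{only} sequence feeding an arrow into $Z^\ell$. This should follow from the construction (the clause $C_{\ell'}$ is satisfied by at least one assignment of some group, and the decoding gadgets for \emph{all} satisfying sequences $\sequence \in \embedding(\{0,1\}^\vgrpsize)$ of that group with $\embedding^{-1}(\sequence) \models C_{\ell'}$ each get an arrow into $Z^\ell$, and there is more than one such sequence, or more than one such group), but making this airtight may require a small adjustment to the construction parameters or to the counting in $\cost_\packing$; the second-largest obstacle is the rigorous path-to-path no-cycle argument, which I expect to delegate to a preceding path-transition lemma analogous to \cref{thm:cvc_modtw_path_transition}.
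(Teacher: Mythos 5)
Your deletion pattern on the decoding gadgets is exactly reversed, and this reversal is the source of the ``main obstacle'' you could not resolve. The paper's solution deletes $\hat{x}^{i,\ell,\sequence^i}$ for the \emph{chosen} sequence $\sequence^i = \embedding(\tassign^i)$ and deletes $y^{i,\ell,\sequence}_1$ for every \emph{other} sequence $\sequence \neq \sequence^i$. This works because for $\sequence = \sequence^i$ all $4\grpsize$ arrows into $D^{i,\ell,\sequence^i}$ have deleted tails (every $v^{i,j,\ell}_\varphi$ with $\varphi \neq h^i_j$ is in $X$), so their active solutions kill all the $x^{i,\ell,\sequence^i}_\gamma$ and only the triangle $\{\hat{x}, y_1, y_2\}$ remains, resolved by deleting $\hat{x}^{i,\ell,\sequence^i}$; and precisely because $\hat{x}^{i,\ell,\sequence^i} \in X$, the arrow from $\hat{x}^{i,\ell,\sequence^i}$ into $Z^\ell$ (which exists exactly when $\tassign^i$ satisfies $C_{\ell'}$) is active and deletes a $z^\ell_\gamma$, breaking the clause cycle for free. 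Your reversed assignment fails on both counts. First, for $\sequence \neq \sequence^i$ there is some $j$ with $h_j \neq h^i_j$, hence an arrow from the \emph{surviving} vertex $v^{i,j,\ell}_{h^i_j}$ into some $x^{i,\ell,\sequence}_\gamma$; that arrow is passive, so $x^{i,\ell,\sequence}_\gamma$ survives and the triangle $\{x^{i,\ell,\sequence}_\gamma, y^{i,\ell,\sequence}_1, y^{i,\ell,\sequence}_2\}$ survives in $G - X$ (deleting only $\hat{x}^{i,\ell,\sequence}$ does not touch it). Second, as you noticed yourself, under your convention the one arrow into $Z^\ell$ that is guaranteed to exist (from $\hat{x}^{i,\ell,\sequence^i}$) is passive, and there is no guarantee that any other arrow into $Z^\ell$ has a deleted tail, so $Z^\ell$ may survive. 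No adjustment of the parameters or of $\cost_\packing$ is needed; swapping the two cases makes both problems disappear simultaneously.

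Two smaller points. Your uniform choice of deleting $c^{i,j,\ell}_1$ is wrong for state $3$: the surviving clique vertex $v^{i,j,\ell}_3$ has a triangle edge to $c^{i,j,\ell}_0$, so in that case you must delete $c^{i,j,\ell}_0$ instead (the paper's $X^{i,j,\ell}_3$ does so). And there is no preceding path-transition lemma for \FVS to delegate to: the acyclicity check across consecutive path gadgets, and in particular the argument that no cycle passes through the root $\rvertex$ (by showing every vertex has at most one path to $\rvertex$ in $G-X$), must be carried out inline, as the paper does over the last three paragraphs of its proof.
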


\begin{proof}
  Let $\tassign$ be a satisfying truth assignment of $\formula$ and let $\tassign^i$ be the induced truth assignment for variable group $i \in [\ngrps]$. Each truth assignment $\tassign^i$ corresponds to a state sequence $\embedding(\tassign^i) = \sequence^i = (h^i_1, \ldots, h^i_\grpsize)$ which we will use to construct the feedback vertex set $X$. On every path gadget $P^{i, j, \ell}$, $i \in [\ngrps]$, $j \in [\grpsize]$, $\ell \in [\ncolumns]$, we consider five different types of solutions $X^{i,j,\ell}_\varphi$, $\varphi \in [5]$, which we will define now:
  \begin{itemize}
    \item $X^{i,j,\ell}_1 = \{v^{i,j,\ell}_\varphi, b^{i,j,\ell}_{\varphi, 1}, b^{i,j,\ell}_{\varphi, 2} \sep \varphi \in [5] \setminus\{1\}\} \cup \{c^{i,j,\ell}_1\} \cup \{a^{i,j,\ell}_4\} \cup \{u^{i,j,\ell}_1, u^{i,j,\ell}_2\}$
    \item $X^{i,j,\ell}_2 = \{v^{i,j,\ell}_\varphi, b^{i,j,\ell}_{\varphi, 1}, b^{i,j,\ell}_{\varphi, 2} \sep \varphi \in [5] \setminus\{2\}\} \cup \{c^{i,j,\ell}_1\} \cup \{a^{i,j,\ell}_3, a^{i,j,\ell}_4\} \cup \{u^{i,j,\ell}_1\}$
    \item $X^{i,j,\ell}_3 = \{v^{i,j,\ell}_\varphi, b^{i,j,\ell}_{\varphi, 1}, b^{i,j,\ell}_{\varphi, 2} \sep \varphi \in [5] \setminus\{3\}\} \cup \{c^{i,j,\ell}_0\} \cup \{a^{i,j,\ell}_3, a^{i,j,\ell}_4\} \cup \{u^{i,j,\ell}_1\}$
    \item $X^{i,j,\ell}_4 = \{v^{i,j,\ell}_\varphi, b^{i,j,\ell}_{\varphi, 1}, b^{i,j,\ell}_{\varphi, 2} \sep \varphi \in [5] \setminus\{4\}\} \cup \{c^{i,j,\ell}_1\} \cup \{a^{i,j,\ell}_1, a^{i,j,\ell}_2, a^{i,j,\ell}_3\} \cup \emptyset$
    \item $X^{i,j,\ell}_5 = \{v^{i,j,\ell}_\varphi, b^{i,j,\ell}_{\varphi, 1}, b^{i,j,\ell}_{\varphi, 2} \sep \varphi \in [5] \setminus\{5\}\} \cup \{c^{i,j,\ell}_1\} \cup \{a^{i,j,\ell}_2, a^{i,j,\ell}_3, a^{i,j,\ell}_4\} \cup \emptyset$
  \end{itemize}
  The feedback vertex set on the path gadgets $P^{i,j,\ell}$ is given by 
 \begin{equation*}
   X_P = \bigcup_{i \in [\ngrps]} \bigcup_{j \in [\grpsize]} \bigcup_{\ell \in [\ncolumns]} X^{i, j, \ell}_{h^i_j}.
 \end{equation*}
 On the decoding gadgets $D^{i, \ell, \sequence}$, we define 
 \begin{equation*}
   X_D = \bigcup_{i \in [\ngrps]} \bigcup_{\ell \in [\ncolumns]} \left(\{\hat{x}^{i, \ell, \sequence^i}\} \cup \{y^{i, \ell, \sequence}_1 \sep \sequence \in [5]^\grpsize \setminus \{\sequence^i\}\} \right).
 \end{equation*}
 We obtain the desired feedback vertex set $X$ by starting with $X_P \cup X_D$ and propagating the deletions throughout $G$ using the arrows, i.e., if the tail $u$ of an arrow $\arrow(u,v)$ is in $X$, then we choose the active solution on this arrow and otherwise we choose the passive solution. Since $|X^{i, j, \ell}_\varphi| = 16$ for all $i \in [\ngrps]$, $j \in [\grpsize]$, $\ell \in [\ncolumns]$, $\varphi \in [5]$, we compute that $|X_P| = 16 \ngrps \grpsize \ncolumns$ and for $X_D$, we see that $|X_D| = \ngrps \ncolumns 5^\grpsize$ and hence $|X| = \cost_\packing$ as desired, since we perform one additional deletion per arrow.
 
 It remains to show that $X$ is a feedback vertex set of $G$, i.e., that $G - X$ is a forest. First, notice that the passive solution of an arrow $\arrow(u,v)$ disconnects $u$ from $v$ inside $\arrow(u,v)$ and that the remainder of $\arrow(u,v) - \{u,v\}$ cannot partake in any cycles. The active solution of an arrow $\arrow(u,v)$ deletes $u$ and $v$, so that the three remaining vertices of the arrow form a single connected component. Since the path gadgets $P^{i, j, \ell}$ are connected to the decoding gadgets $D^{i, \ell, \sequence}$ only via arrows and also the decoding gadgets are only connected to the clause cycles $Z^\ell$ via arrows, $X$ disconnects these three types of gadgets from each other and we can handle each type separately.
 
 We begin with the decoding gadgets $D^{i, \ell, \sequence}$, $i \in [\ngrps]$, $\ell \in [\ncolumns]$, $\sequence \in [5]^\grpsize$. Every $D^{i, \ell, \sequence}$ is in its own connected component in $G - X$, since one can only enter or leave $D^{i, \ell, \sequence}$ via an arrow. Every cycle in $D^{i, \ell, \sequence}$ intersects $y^{i, \ell, \sequence}_1$ which is in $X$ if $\sequence \neq \sequence^i$. Hence, it remains to consider the case $\sequence = \sequence^i$. In this case, $X$ contains $\hat{x}^{i, \ell, \sequence^i}$ by definition of $X_D$ and we claim that $x^{i, \ell, \sequence^i}_\gamma \in X$ for all $\gamma \in [4\grpsize]$ due to propagation via arrows. By construction of $G$, every $x^{i, \ell, \sequence^i}_\gamma$, $\gamma \in [4\grpsize]$, is the head of an arrow $\arrow(v^{i, j, \ell}_\varphi, x^{i, \ell, \sequence^i}_\gamma)$ for some $j \in [\grpsize]$ and $\varphi \in [5] \setminus \{h^i_p\}$, but every such $v^{i, j, \ell}_\varphi$ is in $X$ by definition of $X_P$. Hence, these deletions are propagated to the $x^{i, \ell, \sequence^i}_\gamma$, $\gamma \in [4\grpsize]$ and the only remaining vertices of $D^{i, \ell, \sequence^i}$ are $y^{i, \ell, \sequence^i}_1$ and $y^{i, \ell, \sequence^i}_2$ which clearly induce an acyclic graph.
 
 We continue with the clause cycles $Z^\ell$, $\ell \in [\ncolumns]$. Again, each clause cycle $Z^\ell$ is in its own connected component in $G - X$ and $Z^\ell$ consists of a single large cycle with vertices $z^\ell_\gamma$, $\gamma \in [\clss 5^\grpsize]$. We claim that $X$ propagates a deletion to at least one of these $z^\ell_\gamma$. Let $\ell'$ be the remainder of $\ell - 1$ modulo $\nclss$. Since $\tassign$ satisfies $\formula$ and in particular clause $C
 _{\ell'}$, there is some variable group $i \in [\ngrps]$ such that already $\tassign^i$ satisfies clause $C_{\ell'}$. By construction of $G$, there is an arrow $\arrow(\hat{x}^{i, \ell, \sequence^i}, z^\ell_\gamma)$ for some $\gamma \in [\clss 5^\grpsize]$ because $\embedding(\tassign^i) = \sequence^i$. By definition of $X_D$, we have that $\hat{x}^{i, \ell, \sequence^i} \in X$ and a deletion is indeed propagated to $z^\ell_\gamma$, thus resolving the clause cycle $Z^\ell$.
 
 It remains to show that there is no cycle in $G - X$ intersecting a path gadget $P^{i,j,\ell}$, $i \in [\ngrps]$, $j \in [\grpsize]$, $\ell \in [\ncolumns]$. All path gadgets are connected to each other via the root vertex $\rvertex$ and furthermore consecutive path gadgets $P^{i,j,\ell}$ and $P^{i,j,\ell + 1}$ are connected via the joins between them. We first show that there is no cycle in $G - X$ that is completely contained in a single path gadget $P^{i, j, \ell}$. It is easy to see that each $X \cap P^{i,j,\ell} = X^{i,j,\ell}_{h^i_j}$ contains at least one vertex per triangle edge in $P^{i, j, \ell}$. Any further cycle that could remain in $P^{i,j,\ell}$ can only involve the vertices $u^{i,j,\ell}_1$, $u^{i,j,\ell}_2$, $a^{i,j,\ell}_1$, $a^{i,j,\ell}_2$, and $a^{i,j,\ell}_3$. These vertices induce a $K_{2,3}$ plus the edge $\{a^{i,j,\ell}_2, a^{i,j,\ell}_3\}$ in $G$. In each $X^{i,j,\ell}_\varphi$, $\varphi \in [5]$, one side of the biclique $K_{2,3}$ is contained completely with the exception of at most one vertex and $a^{i,j,\ell}_2$ and $a^{i,j,\ell}_3$ only remain together if the other side is contained completely. Hence, no cycle remains there either. 
 
 Observe that $P^{i,j,\ell}$ is separated from any $P^{i,j,\ell'}$ with $\ell' \notin \{\ell - 1, \ell, \ell + 1\}$ in $G - (X \cup \{\rvertex\})$, because $X$ contains at least one endpoint of each triangle edge between the clique vertices $v^{i, j, \ell}_\varphi$, $\varphi \in [5]$, and the output vertices $b^{i, j, \ell}_{\varphi, \gamma}$, $\varphi \in [5]$, $\gamma \in \{1,2\}$. Hence, any cycle in $G - (X \cup \{\rvertex\})$ would have to involve two consecutive path gadgets. Furthermore, $\{u^{i,j,\ell + 1}_1, u^{i,j,\ell + 1}_2\}$ is a separator of size two between $P^{i,j,\ell}$ and $P^{i,j,\ell + 1}$ in $G - (X \cup \{\rvertex\})$, so any cycle involving both path gadgets has to contain $u^{i,j,\ell + 1}_1$ and $u^{i,j,\ell + 1}_2$. Therefore, we only have to consider the partial solutions $X^{i,j,\ell}_4 \cup X^{i,j,\ell+1}_4$ and $X^{i,j,\ell}_5 \cup X^{i,j,\ell+1}_5$ as otherwise at least one of $u^{i,j,\ell + 1}_1$ and $u^{i,j,\ell + 1}_2$ will be deleted. In both cases, the connected component of $G-X$ containing $u^{i,j,\ell + 1}_1$ and $u^{i,j,\ell + 1}_2$ induces a path on three vertices plus some pendant edges from the triangle edges. Hence, there is no cycle in $G - (X \cup \{\rvertex\})$.
 
 We are left with showing that $G - X$ contains no cycle containing the root vertex $\rvertex$. We do so by arguing that each vertex in $G - X$ has at most one path to $\rvertex$ in $G - X$. The neighbors of $\rvertex$ are the vertices $b^{i,j,\ell}_{2,1}$ and $c^{i,j,\ell}_1$ for all $i \in [\ngrps]$, $j \in [\grpsize]$, $\ell \in [\ncolumns]$. It is sufficient to show that there is no path between any of these neighbors in $G - (X \cup \{\rvertex\})$. By the same argument as in the previous paragraph, we only have to consider consecutive path gadgets $P^{i,j,\ell}$ and $P^{i,j,\ell + 1}$. By resolving the triangle edges between the clique vertices $v^{i,j,\ell}_\varphi$, $\varphi \in [5]$, and the output vertices $b^{i,j,\ell}_{\varphi, \gamma}$, $\varphi \in [5]$, $\gamma \in \{1,2\}$, all paths in $G - \rvertex$ between $b^{i,j,\ell}_{2,1}$ and $c^{i,j,\ell}_1$ are intersected by $X$. Similarly for paths in $G - \rvertex$ between $c^{i,j,\ell}_1$ and one of the vertices $c^{i,j,\ell + 1}_1$ or $b^{i,j,\ell + 1}_{2,1}$ and paths between $b^{i,j,\ell}_{2,1}$ and $b^{i,j,\ell + 1}_{2,1}$.
 
 It remains to consider paths in $G - (X \cup \{\rvertex\})$ between $b^{i,j,\ell}_{2,1}$ and $c^{i,j,\ell+1}_1$. We distinguish based on the chosen partial solution $X^{i,j,\ell}_\varphi \cup X^{i,j,\ell + 1}_\varphi$, $\varphi \in [5]$. For $\varphi \neq 3$, we see that $c^{i,j,\ell}_1 \in X$. For $\varphi = 3$, we see that $b^{i,j,\ell}_{2,1} \in X$. Hence, no such path can exist and $X$ has to be a feedback vertex set. \qed
\end{proof}

 We say that a vertex subset $X \subseteq V(G)$ is \emph{canonical} with respect to the twinclass $\{u^{i,j,\ell}_1, u^{i,j,\ell}_2\}$ if $u^{i,j,\ell}_2 \in X$ implies $u^{i,j,\ell}_1 \in X$. Since $\{u^{i,j,\ell}_1, u^{i,j,\ell}_2\}$ is a twinclass, we can always assume that we are working with a canonical subset.
 
 Given a vertex subset $X \subseteq V(G) \setminus \{\rvertex\}$ that is canonical with respect to each twinclass $\{u^{i,j,\ell}_1, u^{i,j,\ell}_2\}$, we define $\statemap_X \colon [\ngrps] \times [\grpsize] \times [\ncolumns] \rightarrow \{\two, \one_0, \one_1, \zero_0, \zero_1\}$ by
 \begin{equation*}
   \statemap_X(i,j,\ell) = 
   \begin{cases}
     \two,    & \text{if $|X \cap \{u^{i,j,\ell}_1, u^{i,j,\ell}_2\}| = 2$}, \\
     \one_0,  & \text{if $X \cap \{u^{i,j,\ell}_1, u^{i,j,\ell}_2\} = \{u^{i,j,\ell}_1\}$ and} \\
              & \text{   $u^{i,j,\ell}_2$ is not root-connected in $(P^{i,j,\ell} + \rvertex) - X$}, \\
     \one_1,  & \text{if $X \cap \{u^{i,j,\ell}_1, u^{i,j,\ell}_2\} = \{u^{i,j,\ell}_1\}$ and} \\
              & \text{   $u^{i,j,\ell}_2$ is root-connected in $(P^{i,j,\ell} + \rvertex) - X$}, \\
     \zero_0, & \text{if $X \cap \{u^{i,j,\ell}_1, u^{i,j,\ell}_2\} = \emptyset$ and} \\
              & \text{   $u^{i,j,\ell}_1$ and $u^{i,j,\ell}_2$ are not connected in $(P^{i,j,\ell} + \rvertex) - X$}, \\
     \zero_1, & \text{if $X \cap \{u^{i,j,\ell}_1, u^{i,j,\ell}_2\} = \emptyset$ and} \\
              & \text{   $u^{i,j,\ell}_1$ and $u^{i,j,\ell}_2$ are connected in $(P^{i,j,\ell} + \rvertex) - X$}.
   \end{cases}
 \end{equation*}
 Due to the assumption that $X$ is canonical, we see that $\statemap_X$ is well-defined. We remark that the meaning of the subscript is slightly different when one or no vertex of the twinclass is in $X$. We also introduce the notation $\state^1 = \two$, $\state^2 = \one_0$, $\state^3 = \one_1$, $\state^4 = \zero_0$, and $\state^5 = \zero_1$.

\begin{lem}\label{thm:fvs_modtw_sol_to_sat}
  If there is a feedback vertex set $X$ of $G$ of size $|X| \leq \cost_\packing$, then $\formula$ is satisfiable.
\end{lem}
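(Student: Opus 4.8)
The plan is to run the argument of \cref{thm:cvc_modtw_sol_to_sat} in the packing-driven FVS setting. First I would \emph{normalize} $X$: by the false-twin property we may assume $X$ is canonical with respect to every input twinclass $\{u^{i,j,\ell}_1,u^{i,j,\ell}_2\}$, and by the standard exchange arguments for triangle edges (never take the degree-$2$ apex) and for arrows (use only the passive or the active solution) we may assume $X$ has these normal forms without increasing $|X|$. Then $|X|\le\cost_\packing$ together with \cref{thm:fvs_modtw_packing} forces $|X|=\cost_\packing$, and since the subgraphs of $\packing$ are vertex-disjoint and each requires a fixed number of deletions, $X$ meets every one of them in exactly that number and satisfies $X\subseteq\bigcup\packing$; in particular $\rvertex\notin X$ and, on each path gadget $P^{i,j,\ell}$, $X$ deletes exactly four of the five clique vertices, exactly four of the five output pairs, exactly one of $c^{i,j,\ell}_0,c^{i,j,\ell}_1$, and exactly three of the input/auxiliary vertices. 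The first substantial step is a \emph{path-gadget analysis} (the FVS analogue of \cref{thm:cvc_modtw_path_gadget_tight}): tracing the clique--output triangle edges shows the surviving output pair carries the same index as the surviving clique vertex, tracing the target triangle edges pins down which left-side vertices are deleted, and a further exchange argument aligns the input state with this index, so that $X\cap(P^{i,j,\ell}+\rvertex)$ equals one of the five partial solutions $X^{i,j,\ell}_{\varphi}$ from the proof of \cref{thm:fvs_modtw_sat_to_sol}, for a unique index $\varphi=\varphi(i,j,\ell)\in[5]$ with $\statemap_X(i,j,\ell)=\state^{\varphi}$. Analogously, in each decoding gadget $D^{i,\ell,\sequence}$, $X$ deletes exactly one of $\hat{x}^{i,\ell,\sequence},y^{i,\ell,\sequence}_1,y^{i,\ell,\sequence}_2$.

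The core of the proof is the \emph{transition lemma}: for all $i,j$ and $\ell<\ncolumns$ we have $\varphi(i,j,\ell+1)\le\varphi(i,j,\ell)$. I would prove this by a $5\times5$ case distinction at the join between $P^{i,j,\ell}$ and $P^{i,j,\ell+1}$, in the spirit of \cref{thm:cvc_modtw_path_transition}: for state $\varphi$ exactly the output vertices $b^{i,j,\ell}_{\varphi,1}$, and for $\varphi\le3$ also $b^{i,j,\ell}_{\varphi,2}$, survive and are joined forward to $u^{i,j,\ell+1}_1,u^{i,j,\ell+1}_2$; together with the surviving inputs of $P^{i,j,\ell+1}$ and with the internal connectivity of $(P^{i,j,\ell+1}+\rvertex)-X$ recorded by the subscript of $\state^{\varphi(i,j,\ell+1)}$ (and, when $\varphi=2$, by the edge $\{b^{i,j,\ell}_{2,1},\rvertex\}$), any transition with $\varphi(i,j,\ell+1)>\varphi(i,j,\ell)$ closes a cycle, possibly through $\rvertex$, whereas tightness of the budget rules out the index jumping from $\{1,2,3\}$ up to $\{4,5\}$; the numbering $\state^1=\two$, $\state^2=\one_0$, $\state^3=\one_1$, $\state^4=\zero_0$, $\state^5=\zero_1$ is chosen precisely so that forward transitions can only keep or decrease the index. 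Calling $\ell$ a \emph{cheat} on path $(i,j)$ when $\varphi(i,j,\ell+1)<\varphi(i,j,\ell)$, each of the $\ngrps\grpsize$ paths admits at most $4$ cheats, hence at most $4\ngrps\grpsize$ cheats overall, and since there are $\nregions$ regions, some region $R^{\gamma}$ contains no cheat on any path.

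Finally I would read off a satisfying assignment from such a cheat-free region $R^{\gamma}$: for every $i,j$ the index $\varphi(i,j,\ell)$ is constant over the columns of $R^{\gamma}$; call it $h^i_j$, put $\sequence^i=(h^i_1,\dots,h^i_\grpsize)$, set $\tassign^i=\embedding^{-1}(\sequence^i)$ if $\sequence^i\in\embedding(\{0,1\}^{\vgrpsize})$ and arbitrary otherwise, and $\tassign=\bigcup_i\tassign^i$. For a clause $C_{\ell'}$ take $\ell=\gamma\nclss+\ell'+1$, which lies in $R^{\gamma}$; the cycle $Z^{\ell}$ must be broken in $G-X$, and by the tight budget the only way a vertex $z^{\ell}_{\delta}$ can lie in $X$ is through the active solution of its unique incident arrow, whose tail is some $\hat{x}^{i,\ell,\sequence}$; hence $\hat{x}^{i,\ell,\sequence}\in X$. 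Since $X$ deletes exactly one of $\hat{x}^{i,\ell,\sequence},y^{i,\ell,\sequence}_1,y^{i,\ell,\sequence}_2$, deleting $\hat{x}^{i,\ell,\sequence}$ forces all $x^{i,\ell,\sequence}_{\delta}\in X$ (to break the triangles $\{y^{i,\ell,\sequence}_1,y^{i,\ell,\sequence}_2,x^{i,\ell,\sequence}_{\delta}\}$), hence every arrow into $D^{i,\ell,\sequence}$ is active, hence all of its tails $v^{i,j,\ell}_{\psi}$ with $\psi\neq h_j$ (where $\sequence=(h_1,\dots,h_\grpsize)$) lie in $X$; but $X$ deletes exactly the four clique vertices $v^{i,j,\ell}_{\psi}$ with $\psi\neq h^i_j$, so $\sequence=\sequence^i$. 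By construction such an arrow exists only if $\sequence^i$ is an encoded truth assignment satisfying $C_{\ell'}$, so $\tassign^i$, and therefore $\tassign$, satisfies $C_{\ell'}$; as $\ell'$ was arbitrary, $\formula$ is satisfiable. The main obstacle is the package consisting of the path-gadget analysis and the transition lemma: both rest on a careful combined cycle-and-root-connectivity case analysis, which is more delicate than in the pathwidth lower bound because the false twinclass $\{u^{i,j,\ell}_1,u^{i,j,\ell}_2\}$ prevents attaching distinguishing gadgetry to the two input vertices separately and hence restricts which state transitions one can realize.
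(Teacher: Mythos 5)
Your overall architecture matches the paper's proof: normalization (canonical twinclasses, triangle-edge and arrow exchange), tightness of the packing bound forcing the packing equations and $\rvertex\notin X$, a per-gadget state, a monotone-transition lemma bounding cheats by $4$ per path, pigeonhole over the $\nregions$ regions, and decoding through the arrows into $D^{i,\ell,\sequence}$ and $Z^\ell$ (your forward derivation that $\hat{x}^{i,\ell,\sequence}\in X$ forces all $x^{i,\ell,\sequence}_\gamma\in X$, hence active arrows, hence $\sequence=\sequence^i$, is a valid contrapositive-free version of the paper's claim).

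The genuine gap is in your path-gadget analysis. You assert that $X\cap(P^{i,j,\ell}+\rvertex)$ \emph{equals} one of the five canonical solutions $X^{i,j,\ell}_{\varphi}$, with $\statemap_X(i,j,\ell)=\state^{\varphi}$ for the same index $\varphi$ as the surviving clique vertex. This exact correspondence is false for the given $X$: the packing equations and triangle edges only force the targets of the surviving $v^{i,j,\ell}_\varphi$ into $X$ plus a total of three deletions among $\{u_1,u_2,a_1,\dots,a_4\}$, which leaves slack. For example, $v^{i,j,\ell}_1$ surviving forces $a_4,c_1\in X$ but is consistent with $X\cap\{u_1,u_2\}=\{u_1\}$ and one further auxiliary deletion, i.e.\ with input state $\state^2$ or $\state^3$ rather than $\state^1$. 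The paper therefore never aligns the two indices; it proves only the one-sided inequality that $v^{i,j,\ell}_\varphi\notin X$ implies $\statemap_X(i,j,\ell)=\state^{\varphi'}$ with $\varphi'\ge\varphi$, and separately, across the join, that the surviving output index $\varphi_1$ of $P^{i,j,\ell}$ satisfies $\varphi_1\ge\varphi'$ for the input state $\varphi'$ of $P^{i,j,\ell+1}$; chaining the two gives monotonicity of the clique index without ever normalizing the gadget solutions. Your proposed fix --- ``a further exchange argument aligns the input state with this index'' --- is not obviously sound: rewriting $X$ on one gadget to the canonical $X^{i,j,\ell}_\varphi$ un-deletes auxiliary or input vertices and must be checked against cycles through the join with \emph{both} neighboring gadgets and through $\rvertex$, simultaneously for all gadgets; and since your $\varphi(i,j,\ell)$ and your transition lemma are defined via this exact correspondence, the argument as written does not go through. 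Replacing the exact-equality claim by the paper's pair of inequalities (and restating the transition lemma for the surviving-clique-vertex index) repairs the proof. A minor further imprecision: the exclusion of transitions from $\{1,2,3\}$ to $\{4,5\}$ is a cycle argument (the surviving output pair together with the two surviving inputs spans a $C_4$ across the join), not a budget argument.
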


\begin{proof}
  Due to \cref{thm:fvs_modtw_packing}, we immediately see that $|X| = \cost_\packing$ and $X \cap V(H)$ has to be a minimum feedback vertex set of $H$ for any $H \in \packing$. So, $X$ contains precisely one vertex of each triangle in $\packing$ and satisfies the \emph{packing equations} for all $i \in [\ngrps]$, $j \in [\grpsize]$, $\ell \in [\ncolumns]$:
  \begin{itemize}
  	\item $|X \cap \{v^{i,j,\ell}_\varphi \sep \varphi \in [5]\}| = 4$,
  	\item $|X \cap \{b^{i,j,\ell}_{\varphi, 1}, b^{i,j,\ell}_{\varphi,2} \sep \varphi \in [5]\}| = 8$,
  	\item $|X \cap \{u^{i,j,\ell}_1, u^{i,j,\ell}_2, a^{i,j,\ell}_1, a^{i,j,\ell}_2, a^{i,j,\ell}_3, a^{i,j,\ell}_4\}| = 3$.
	\end{itemize}
  In particular, this also implies that $X$ cannot contain the root vertex $\rvertex$.	
	
	 Furthermore, due to the standard exchange arguments for triangle edges and arrows, we can assume for any triangle edge between $u$ and $v$ that $X$ contains $u$ or $v$ and for any arrow $\arrow(u,v)$ that $X$ uses the passive solution or the active solution on $\arrow(u,v)$. Finally, we can assume that $X$ is canonical with respect to each twinclass $\{u^{i,j,\ell}_1, u^{i,j,\ell}_2\}$, i.e., $u^{i,j,\ell}_2 \in X$ implies that $u^{i,j,\ell}_1 \in X$.
  
  We begin by studying the structure of $X \cap P^{i,j,\ell}$ for any $i \in [\ngrps]$, $j \in [\grpsize]$, $\ell \in [\ncolumns]$. For fixed $i,j,\ell$, there is a unique $\varphi \in [5]$ such that $v^{i,j,\ell}_\varphi \notin X$ due to the packing equations. Hence, we must have $X \cap \{b^{i,j,\ell}_{\psi, 1}, b^{i,j,\ell}_{\psi,2} \sep \psi \in [5]\} = \{b^{i,j,\ell}_{\psi, 1}, b^{i,j,\ell}_{\psi,2} \sep \psi \in [5] \setminus \{\varphi\}\}$ due to the packing equations and the triangle edges between $v^{i,j,\ell}_\varphi$ and the output vertices $\{b^{i,j,\ell}_{\psi, 1}, b^{i,j,\ell}_{\psi,2} \sep \psi \in [5] \setminus \{\varphi\}\}$.
  
  For the left side of a path gadget $P^{i,j,\ell}$, we claim that $v^{i,j,\ell}_\varphi \notin X$ implies that $\statemap_X(i,j,\ell) = \state^{\varphi'}$ with $\varphi' \geq \varphi$. For $\varphi = 1$ there is nothing to show. One can see that $(\varphi', \varphi) \notin ([3] \times \{4,5\}) \cup (\{1\} \times \{2,3\})$ by considering the size of $X \cap \{u^{i,j,\ell}_1, u^{i,j,\ell}_2, a^{i,j,\ell}_1, a^{i,j,\ell}_2, a^{i,j,\ell}_3, a^{i,j,\ell}_4\}$ in those cases: Due to the triangle edges between the clique vertices $v^{i,j,\ell}_\psi$, $\psi \in [5]$ and auxiliary vertices $a^{i,j,\ell}_\gamma$, $\gamma \in [4]$, we see that $X$ contains at least two auxiliary vertices if $\varphi \geq 2$ and at least three if $\varphi \geq 4$. Using the packing equations, we see that this implies $|X \cap \{u^{i,j,\ell}_1, u^{i,j,\ell}_2\}| \leq 1$ if $\varphi \geq 2$ and $X \cap \{u^{i,j,\ell}_1, u^{i,j,\ell}_2\} = \emptyset$ if $\varphi \geq 4$, but the listed cases contradict this. It remains to handle the two cases $(\varphi', \varphi) = (2,3)$ and $(\varphi', \varphi) = (4,5)$. In the first case, the triangle edges between the vertex $v^{i,j,\ell}_3$ and the vertices $a^{i,j,\ell}_3$, $a^{i,j,\ell}_4$, $c^{i,j,\ell}_0$ together with the packing equations imply that $u^{i,j,\ell}_2$, $a^{i,j,\ell}_1$, $c^{i,j,\ell}_1 \notin X$, but then $\statemap_X(i,j,\ell) = \one_1 = \state^3 \neq \state^{\varphi'}$ because $u^{i,j,\ell}_2$, $a^{i,j,\ell}_1$, $c^{i,j,\ell}_1$, $\rvertex$ is a path in $(P^{i,j,\ell} + \rvertex) - X$. In the second case, the triangle edges between $v^{i,j,\ell}_5$ and the auxiliary vertices $a^{i,j,\ell}_2$, $a^{i,j,\ell}_3$, $a^{i,j,\ell}_4$ together with the packing equations imply that $u^{i,j,\ell}_1$, $u^{i,j,\ell}_2$, $a^{i,j,\ell}_1 \notin X$ and hence $\statemap_X(i,j,\ell) = \zero_1 = \state^5 \neq \state^{\varphi'}$. This proves the claim.
  
  Next, we claim that for any $i \in [\ngrps]$, $j \in [\grpsize]$, and $\ell_1, \ell_2 \in [\ncolumns]$ with $\ell_1 < \ell_2$, that the unique $\varphi_1 \in [5]$ and $\varphi_2 \in [5]$ such that $v^{i,j,\ell_1}_{\varphi_1} \notin X$ and $v^{i,j,\ell_2}_{\varphi_2} \notin X$ satisfy $\varphi_1 \geq \varphi_2$. We can assume without loss of generality that $\ell_2 = \ell_1 + 1$. By the previous arguments, we know that $b^{i,j,\ell_1}_{\varphi_1, 1}, b^{i,j,\ell_1}_{\varphi_1,2} \notin X$ and $\statemap_X(i,j,\ell_2) = \state^{\varphi'}$ with $\varphi' \geq \varphi_2$, so we are done if we can show that $\varphi_1 \geq \varphi'$. We do so by arguing that $G - X$ contains a cycle in all other cases, thus contradicting that $X$ is a feedback vertex set. If $\varphi_1 < \varphi'$ and $(\varphi_1, \varphi') \notin \{(2,3),(4,5)\}$, then $G[\{b^{i,j,\ell_1}_{\varphi_1,1}, b^{i,j,\ell_1}_{\varphi_1,2}, u^{i,j,\ell_1 + 1}_1, u^{i,j,\ell_1 + 1}_2\} \setminus X]$ simply contains a cycle. If $(\varphi_1, \varphi') = (2,3)$, then there is a cycle passing through the root $\rvertex$ in $G - X$ visiting $\rvertex$, $b^{i,j,\ell_1}_{2,1}$, $u^{i,j,\ell_1 + 1}_2$, and then uses the path to $\rvertex$ inside $(P^{i,j,\ell_1 + 1} + \rvertex) - X$ which exists due to $\statemap_X(i,j,\ell_1 + 1) = \state^{\varphi'} = \state^3 = \one_1$. If $(\varphi_1, \varphi') = (4,5)$, then there is a cycle in $G - X$ visiting $u^{i,j,\ell_1 + 1}_1$, $b^{i,j,\ell_1}_{4,1}$, $u^{i,j,\ell_1 + 1}_2$, and then uses the path between $u^{i,j,\ell_1 + 1}_2$ and $u^{i,j,\ell_1 + 1}_1$ in $(P^{i,j,\ell_1 + 1} + \rvertex) - X$ which exists due to $\statemap_X(i,j,\ell_1 + 1) = \state^{\varphi'} = \state^5 = \zero_1$. This shows the claim.
  
  We say that $X$ \emph{cheats} from $P^{i,j,\ell}$ to $P^{i,j,\ell + 1}$ if $v^{i,j,\ell}_{\varphi_1}, v^{i,j,\ell + 1}_{\varphi_2} \notin X$ with $\varphi_1 > \varphi_2$. By the previous claim, there can be at most four cheats for fixed $i$ and $j$. For $\gamma \in [\nregions]$, we define the $\gamma$-th \emph{column region} $R^\gamma = [(\gamma - 1)\nclss + 1, \gamma \nclss]$. Since there are $\ngrps \grpsize$ paths, there is a column region $R^\gamma$ that contains no cheats by the pigeonhole principle, i.e., for all $i \in [\ngrps]$, $j \in [\grpsize]$, $\ell_1, \ell_2 \in R^\gamma$, $\varphi \in [5]$, we have $v^{i,j,\ell_1}_\varphi \notin X$ if and only if $v^{i,j,\ell_2}_\varphi \notin X$. Fix this $\gamma$ for the remainder of the proof. 
  
  We obtain sequences $\sequence^i = (h^i_1, \ldots, h^i_\grpsize) \in [5]^\grpsize$, $i \in [\ngrps]$, by defining $h^i_j \in [5]$ as the unique number satisfying $v^{i,j,\gamma \nclss}_{h^i_j} \notin X$. Since $R^\gamma$ contains no cheats, note that we would obtain the same sequences if we use any column $\ell \in R^\gamma \setminus \{\gamma \nclss\}$ instead of column $\gamma \nclss$ in the definition. We obtain a truth assignment $\tassign^i$ for variable group $i$ by setting $\tassign^i = \embedding^{-1}(\sequence^i)$ if $\sequence^i \in \embedding(\{0,1\}^\vgrpsize)$ and otherwise picking an arbitrary truth assignment. 
  
  We claim that $\tassign = \tassign^1 \cup \cdots \cup \tassign^\ngrps$ is a satisfying assignment of $\formula$. To prove this claim, we begin by showing for all $i \in [\ngrps]$, $\ell \in R^\gamma$, $\sequence \in [5]^\grpsize$, that $\hat{x}^{i, \ell, \sequence} \in X$ implies $\sequence = \sequence^i$. Suppose that $\sequence = (h_1, \ldots, h_\grpsize) \neq \sequence^i$, then there is some $j \in [\grpsize]$ with $h_j \neq h^i_j$. There is an arrow from $v^{i,j,\ell}_{h^i_j} \notin X$ to some $x^{i, \ell, \sequence}_\gamma$, $\gamma \in [4\grpsize]$, but $X$ uses the passive solution on this arrow and hence $x^{i, \ell, \sequence}_\gamma \notin X$ as well, otherwise the packing equation for the second triangle in the arrow would be violated. To resolve the triangle in $D^{i, \ell, \sequence}$ induced by $\{x^{i, \ell, \sequence}_\gamma, y^{i, \ell, \sequence}_1, y^{i, \ell, \sequence}_2\}$, we must have $y^{i, \ell, \sequence}_1 \in X$ or $y^{i, \ell, \sequence}_2 \in X$. Hence, we must have $\hat{x}^{i, \ell, \sequence} \notin X$ in either case, as otherwise the packing equation for the triangle induced by $\{\hat{x}^{i, \ell, \sequence}, y^{i, \ell, \sequence}_1, y^{i, \ell, \sequence}_2\}$ would be violated. This proves the subclaim.
  
  Consider clause $C_{\ell'}$, $\ell' \in [0, \nclss - 1]$, we will argue now that $\tassign$ satisfies clause $C_{\ell'}$. The clause cycle $Z^\ell$ with $\ell = (\gamma - 1)\nclss + \ell' + 1 \in R^\gamma$ corresponds to clause $C_{\ell'}$ and since $X$ is a feedback vertex set, there exists some $z^\ell_\eta \in X \cap Z^\ell$, $\eta \in [\clss 5^\grpsize]$. By construction of $G$, there is at most one arrow incident to $z^\ell_\eta$. If there is no incident arrow, then $z^\ell_\eta$ is not contained in any of the subgraphs in the packing $\packing$ and hence $z^\ell_\eta \in X$ contradicts $|X| = \cost_\packing$. So, there is exactly one arrow incident to $z^\ell_\eta$ and by construction of $G$, this arrow comes from some $\hat{x}^{i, \ell, \sequence}$. We must have $\hat{x}^{i, \ell, \sequence} \in X$ as well, because $X$ uses the active solution on this arrow. The previous claim implies that $\sequence = \sequence^i$. Finally, such an arrow only exists, by construction, if $\embedding^{-1}(\sequence) = \embedding^{-1}(\sequence^i) = \tassign^i$ satisfies clause $C_{\ell'}$, so $\tassign$ must satisfy $C_{\ell'}$ as well. In this step we use that the definition of $\sequence^i$ is independent of the considered column in region $R^\gamma$. Since the choice of $C_{\ell'}$ was arbitrary, this shows that $\formula$ is satisfiable. \qed
\end{proof}

\begin{lem}\label{thm:fvs_modtw_bound}
  The graph $G = G(\formula, \vgrpsize)$ has $\tcpw(G) \leq \ngrps \grpsize + (4 \grpsize + 3 + \clss) 5^\grpsize + \Oh(1)$ and a path decomposition of $G^q = G / \tcpartition(G)$ of this width can be constructed in polynomial time. 
\end{lem}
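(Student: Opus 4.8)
The plan is to follow the blueprint of \cref{thm:cvc_modtw_bound}: contract the twinclasses, exhibit an explicit mixed-search strategy on the resulting graph, and finish with \cref{thm:mixed_search}. First I would observe that, exactly as in the construction for \cref{thm:cvc_modtw_bound}, the pairs $\{u^{i,j,\ell}_1, u^{i,j,\ell}_2\}$, $i \in [\ngrps]$, $j \in [\grpsize]$, $\ell \in [\ncolumns]$, are the only non-trivial twinclasses of $G$, so that contracting each of them into a single vertex $u^{i,j,\ell}$ yields a graph $G'$ with $G^q = G / \tcpartition(G)$ a subgraph of $G'$. Hence $\tcpw(G) = \pw(G^q) \le \pw(G') \le \ms(G')$, and it suffices to describe a winning mixed-search strategy on $G'$ that uses at most $\ngrps \grpsize + (4\grpsize + 3 + \clss)5^\grpsize + \Oh(1)$ searchers; since the strategy will consist of polynomially many steps, the induced path decomposition of $G^q$ is read off and built in polynomial time.

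The strategy I would use sweeps the block matrix column by column, within a column block by block ($B^{1,\ell}, \dots, B^{\ngrps,\ell}$), and within a block path gadget by path gadget. One searcher permanently occupies the root $\rvertex$. While column $\ell$ is being processed, searchers occupy all of the clause cycle $Z^\ell$ ($\clss 5^\grpsize$ vertices) and a frontier of $\ngrps \grpsize$ contracted input vertices, one per path $(i,j)$, equal to $u^{i,j,\ell}$ if $P^{i,j,\ell}$ has not yet been cleaned and to $u^{i,j,\ell+1}$ afterwards. While block $B^{i,\ell}$ is processed, searchers additionally occupy all $5^\grpsize$ decoding gadgets $D^{i,\ell,\sequence}$ (each of size $4\grpsize + 3$). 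To clean one path gadget $P^{i,j,\ell}$ I place searchers on its $\Oh(1)$ remaining vertices, then place a searcher on $u^{i,j,\ell+1}$ and remove the one from $u^{i,j,\ell}$ (advancing the frontier only after the output-to-input join edges have been cleared), then clean every triangle edge and every arrow incident to $P^{i,j,\ell}$ one at a time — both endpoints of each such triangle edge or arrow are simultaneously occupied, so $\Oh(1)$ extra searchers suffice each time — and finally release the searchers on $P^{i,j,\ell}$. After block $B^{i,\ell}$ is finished I release the searchers on $D^{i,\ell,\cdot}$, and after column $\ell$ is finished I release the searchers on $Z^\ell$.

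Correctness then splits into two routine verifications. First, every edge of $G'$ is cleared at some moment: the path-gadget-internal edges (including the triangle edges inside a path gadget) while $P^{i,j,\ell}$ is occupied; the path-to-path join edges during the frontier-advance step; the path-to-decoding arrows while $P^{i,j,\ell}$ and $D^{i,\ell,\sequence}$ are simultaneously occupied; the decoding-internal edges while $D^{i,\ell,\sequence}$ is occupied; the $\hat{x}^{i,\ell,\sequence}$-to-$Z^\ell$ arrows while $D^{i,\ell,\sequence}$ and $Z^\ell$ are simultaneously occupied; and the cycle edges of $Z^\ell$ while all of $Z^\ell$ is occupied. Second, no cleared edge is recontaminated, since after each release the still-occupied set separates the cleared region from the rest: a block $B^{i,\ell}$ communicates with the outside only through $\rvertex$, the frontier vertices $u^{i,j,\ell}$/$u^{i,j,\ell+1}$, and $Z^\ell$ (occupied at the relevant times), and columns $\le \ell - 1$ are separated from columns $\ge \ell$ by $\{\rvertex\}$ together with the frontier. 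The peak occupation is $1 + \ngrps\grpsize + \clss 5^\grpsize + (4\grpsize+3)5^\grpsize + \Oh(1)$, yielding the claimed bound.

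The part that needs the most care is the exact ordering of the release operations and the accompanying recontamination argument. In particular, one is forced to keep the whole cycle $Z^\ell$ occupied for the entire duration of column $\ell$ — not merely the arrow-heads of the block currently being processed — because the cycle $Z^\ell$ links arrow-heads belonging to different blocks $B^{i,\ell}$, and one must clean the $\hat{x}^{i,\ell,\sequence}$-to-$Z^\ell$ arrows before releasing the decoding gadgets of $B^{i,\ell}$. These two constraints are precisely what produces the $\clss 5^\grpsize$ and $(4\grpsize+3)5^\grpsize$ terms, so the bound is the natural one for this strategy and matches the statement.
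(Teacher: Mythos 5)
Your proposal is correct and follows essentially the same route as the paper: contract the twinclasses $\{u^{i,j,\ell}_1, u^{i,j,\ell}_2\}$, give a column-by-column, block-by-block mixed-search strategy with a permanent searcher on $\rvertex$, a frontier of $\ngrps\grpsize$ contracted input vertices, the current clause cycle and the current block's decoding gadgets fully occupied, arrows handled one at a time with $\Oh(1)$ extra searchers, and conclude via \cref{thm:mixed_search}. The peak occupation you compute matches the paper's bound exactly.
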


\begin{proof}
  By construction, all sets $\{u_1^{i,j,\ell}, u_2^{i,j,\ell}\}$, $i \in [\ngrps]$, $j \in [\grpsize]$, $\ell \in [\ncolumns]$, are twinclasses. Let $G'$ be the graph obtained by contracting each of these twinclasses, denoting the resulting vertex by $u^{i,j,\ell}$, then $G^q$ is a subgraph of $G'$. We will show that $\tcpw(G) = \pw(G^q) \leq \pw(G') \leq \ngrps \grpsize + (4 \grpsize + 3 + \clss) 5^\grpsize + \Oh(1)$ by giving an appropriate strategy for the mixed-search-game on $G'$ and applying \cref{thm:mixed_search}.

  \begin{algorithm}
    Handling of arrows: whenever a searcher is placed on the tail $u$ of an arrow $\arrow(u,v)$, we place searchers on all vertices of $\arrow(u,v)$ and immediately afterwards remove the searchers from $\arrow(u,v) - \{u,v\}$ again\;
    Place searcher on $\rvertex$\;
    Place searchers on $u^{i,j,1}$ for all $i \in [\ngrps]$, $j \in [\grpsize]$\;
    \For{$\ell \in [\ncolumns]$}
    {
      Place searchers on all vertices of the clause cycle $Z^\ell$\;
      \For{$i \in [\ngrps]$}
      {
        \For{$\sequence \in [5]^\grpsize$}
        {
          Place searchers on all vertices of the decoding gadget $D^{i, \ell, \sequence}$\;
        }
        \For{$j \in [\grpsize]$}
        {
          Place searchers on all vertices of $P^{i,j,\ell} - \{u_1^{i,j,\ell}, u_2^{i,j,\ell}\}$\;
          Remove searcher from $u^{i,j,\ell}$ and place it on $u^{i,j,\ell + 1}$\;
          Remove searchers on $P^{i,j,\ell} - \{u_1^{i,j,\ell}, u_2^{i,j,\ell}\}$\;
        }
        \For{$\sequence \in [5]^\grpsize$}
        {
          Remove searchers on $D^{i, \ell, \sequence}$\;
        }
      }
      Remove searchers on $Z^\ell$\;
    }
    \caption{Mixed-search-strategy for $G'$}
    \label{algo:fvs_modtw_search_game}
  \end{algorithm}
  The mixed-search-strategy for $G'$ is described in \cref{algo:fvs_modtw_search_game} and the central idea is to proceed column by column and group by group in each column. The maximum number of placed searchers occurs on line $10$ and is divided into one searcher for $\rvertex$; one searcher for each $(i,j) \in [\ngrps] \times [\grpsize]$; $\clss 5^\grpsize$ searchers for the current $Z^\ell$; $(4\grpsize + 3) 5^\grpsize$ searchers for all $D^{i, \ell, \sequence}$ with the current $i$ and $\ell$; $\Oh(1)$ searchers for the current $P^{i,j,\ell}$; and $\Oh(1)$ searchers to handle an arrow $\arrow(u,v)$. Note that arrows can be handled sequentially, i.e., there will be at any point in the search-strategy at most one arrow $\arrow(u,v)$ with searchers on $\arrow(u,v) - \{u,v\}$. Furthermore, note that whenever we place a searcher on the tail $u$ of an arrow $\arrow(u,v)$, we have already placed a searcher on the head $v$ of the arrow. \qed
\end{proof}

\begin{thm}
 There is no algorithm that solves \FVS, given a path decomposition of $G^q = G/\tcpartition(G)$ of width $k$, in time $\Oh^*((5 - \eps)^k)$ for some $\eps > 0$, unless \SETH fails.
\end{thm}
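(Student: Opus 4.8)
The plan is to mirror the proof of the analogous \CVC lower bound, now feeding in the three structural lemmas established for the \FVS construction together with the twinclass-pathwidth estimate. Suppose for contradiction that some algorithm $\algo$ solves \FVS in time $\Oh^*((5-\eps)^k)$ for a fixed $\eps > 0$ whenever it is given a path decomposition of $G^q = G/\tcpartition(G)$ of width $k$. Given a $\clss$-\SAT instance $\formula$ with $\nvars$ variables, I would fix a constant group size $\vgrpsize$ depending only on $\eps$ (chosen below), build $G = G(\formula, \vgrpsize)$ and the path decomposition of $G^q$ promised by \cref{thm:fvs_modtw_bound} in polynomial time (note that $\vgrpsize = \Oh(1)$ forces $\grpsize = \Oh(1)$), run $\algo$ with budget $\cost_\packing$, and report its answer.

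Correctness is immediate: \cref{thm:fvs_modtw_sat_to_sol} produces a feedback vertex set of size at most $\cost_\packing$ when $\formula$ is satisfiable, and \cref{thm:fvs_modtw_sol_to_sat} extracts a satisfying assignment from any feedback vertex set of size at most $\cost_\packing$. For the running time I would invoke the bound $\tcpw(G) \leq \ngrps\grpsize + (4\grpsize + 3 + \clss)5^\grpsize + \Oh(1)$ of \cref{thm:fvs_modtw_bound}; since $\grpsize$ and $\clss$ are both constants (the first by the choice of $\vgrpsize$, the second by the fixed choice of $\clss$), the summand $(4\grpsize + 3 + \clss)5^\grpsize$ is absorbed into the $\Oh(1)$ term and the width is $\ngrps\grpsize + \Oh(1)$. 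From here the very same chain of inequalities as in the \CVC case applies verbatim: with $\ngrps = \lceil \nvars/\vgrpsize\rceil$ and $\grpsize = \lceil \log_5(2^\vgrpsize)\rceil$, the running time of $\algo$ becomes $\Oh^*(2^{\delta\nvars})$ with $\delta = \delta_1 + \delta_2$, where $\delta_1$ and $\delta_2$ are defined by $(5-\eps)^{\log_5(2)} = 2^{\delta_1}$ and $(5-\eps)^{1/\vgrpsize} = 2^{\delta_2}$; choosing $\vgrpsize$ large enough makes $\delta < 1$.

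Since all the genuinely new ingredients — the packing lower bound, the two reduction directions, and the mixed-search strategy witnessing the twinclass-pathwidth bound — are already proved in \cref{thm:fvs_modtw_packing} through \cref{thm:fvs_modtw_bound}, the remaining work is purely bookkeeping. The one point I would spell out explicitly is the quantifier structure: here we must contradict \SETH rather than merely \CNFSETH, so it is essential that $\vgrpsize$ (and hence $\grpsize$) is chosen as a function of $\eps$ alone, independent of $\clss$; consequently, for every fixed $\clss$ the reduction decides $\clss$-\SAT in time $\Oh(2^{\delta\nvars})$ with one and the same $\delta < 1$, which is exactly what is needed to refute \SETH. I do not expect a real obstacle in this assembly — the delicate parts, in particular that only $\Oh(1)$ cheats can occur along each path and that the promotion-style exchange arguments preserve both acyclicity and root-connectivity, are entirely confined to the already-completed lemmas.
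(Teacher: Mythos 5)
Your proposal is correct and follows essentially the same route as the paper's own proof: fix $\vgrpsize$ as a function of $\eps$ alone, build $G(\formula,\vgrpsize)$ and the path decomposition from \cref{thm:fvs_modtw_bound}, invoke \cref{thm:fvs_modtw_sat_to_sol} and \cref{thm:fvs_modtw_sol_to_sat} for correctness, and run the same chain of inequalities to get $\Oh^*(2^{\delta\nvars})$ with $\delta=\delta_1+\delta_2<1$ uniformly in $\clss$. The explicit remark on the quantifier structure (that $\delta$ is independent of $\clss$, which is what turns a per-$\clss$ speedup into a refutation of \SETH) is exactly the point the paper handles with its ``for all $\clss$'' phrasing.
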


\begin{proof}
  Assume that there exists an algorithm $\algo$ that solves \FVS in time $\Oh^*((5 - \eps)^k)$ for some $\eps > 0$ given a path decomposition of $G^q = G / \tcpartition(G)$ of width $k$. Given $\vgrpsize$, we define $\delta_1 < 1$ such that $(5 - \eps)^{\log_5(2)} = 2^{\delta_1}$ and $\delta_2$ such that $(5 - \eps)^{1 / \vgrpsize} = 2^{\delta_2}$. By picking $\vgrpsize$ large enough, we can ensure that $\delta = \delta_1 + \delta_2 < 1$. We will show how to solve $\clss$-\SAT  using $\algo$ in time $\Oh^*(2^{\delta \nvars})$, where $\nvars$ is the number of variables, for all $\clss$, thus contradicting \SETH.
  
  Given a $\clss$-\SAT instance $\formula$, we construct $G = G(\formula, \vgrpsize)$ and the path decomposition from \cref{thm:fvs_modtw_bound} in polynomial time, note that we have $\clss = \Oh(1)$, $\vgrpsize = \Oh(1)$ and hence $\grpsize = \Oh(1)$. We then run $\algo$ on $G$ and return its answer. This is correct by \cref{thm:fvs_modtw_sat_to_sol} and \cref{thm:fvs_modtw_sol_to_sat}. Due to \cref{thm:fvs_modtw_bound}, we have that $\tcpw(G) \leq \ngrps \grpsize + f(\clss, \grpsize)$ for some function $f(\clss, \grpsize)$ and hence we can bound the running time by
  \begin{alignat*}{6}
    \phantom{\leq} \,\, & \Oh^*\left( (5 - \eps)^{\ngrps \grpsize + f(\clss, \grpsize)} \right)
    & \,\,\leq\,\, & \Oh^*\left( (5 - \eps)^{\ngrps \grpsize} \right) 
    & \,\,\leq\,\, & \Oh^*\left( (5 - \eps)^{\lceil \frac{\nvars}{\vgrpsize} \rceil \grpsize} \right) \\
    \leq \,\, & \Oh^*\left( (5 - \eps)^{\frac{\nvars}{\vgrpsize} \grpsize} \right) 
    & \,\,\leq\,\, & \Oh^*\left( (5 - \eps)^{\frac{\nvars}{\vgrpsize} \lceil \log_5(2^\vgrpsize) \rceil} \right) 
    & \,\,\leq\,\, & \Oh^*\left( (5 - \eps)^{\frac{\nvars}{\vgrpsize} \log_5(2^\vgrpsize)} (5 - \eps)^{\frac{\nvars}{\vgrpsize}} \right)\\
    \leq \,\, & \Oh^*\left( 2^{\delta_1 \vgrpsize \frac{\nvars}{\vgrpsize}} 2^{\delta_2 \nvars} \right)
    & \,\,\leq\,\, & \Oh^*\left( 2^{(\delta_1 + \delta_2) \nvars} \right)
    & \,\,\leq\,\, & \Oh^*\left( 2^{\delta \nvars} \right),
  \end{alignat*}
  hence completing the proof.  \qed
\end{proof}

\bibliographystyle{splncs04}
\bibliography{main}

\newpage

\appendix

\section{Independent Set Parameterized by Modular-Treewidth}
\label{sec:modtw_vc_algo}

Let $G = (V,E)$ be a graph with a cost function $\cfct \colon V \rightarrow \NN \setminus \{0\}$. We show how to compute for every $\module \in \modtree(G)$ an independent set $X_\module \subseteq \module$ of $G[\module]$ of maximum cost in time $\Oh^*(2^{\modtw(G)})$ given an optimal tree decomposition of every prime node in the modular decomposition of $G$. 

\begin{lem}\label{thm:is_modular_structure}
	If $X$ is an independent set of $G$, then for every module $\module \in \modpartition(G)$ either $X \cap \module = \emptyset$ or $X \cap \module$ is a non-empty independent set of $G[\module]$. Furthermore, $X^q := \modprojection_V(X)$ is an independent set of $G^q := G^q_V = G / \modpartition(G)$.
\end{lem}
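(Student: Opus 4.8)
The plan is to dispatch both parts with direct arguments; neither requires any machinery beyond the definitions. For the first claim, observe that any subset of an independent set is again an independent set: if $X$ is independent in $G$, then $G[X \cap \module]$ has no edges, so $X \cap \module$ is an independent set of $G[\module]$. The only dichotomy to state is whether this independent set is empty or not, which is exactly the two cases in the statement. So there is essentially nothing to prove here beyond unwinding what ``independent set'' means.

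For the second claim, I would argue by contradiction. Recall that $X^q = \modprojection_V(X) = \{v^q_\module \sep \module \in \modpartition(G), X \cap \module \neq \emptyset\}$ and that an edge $\{v^q_{\module_1}, v^q_{\module_2}\}$ of $G^q$ exists precisely when $\module_1$ and $\module_2$ are adjacent modules, i.e., $\{\{v,w\} \sep v \in \module_1, w \in \module_2\} \subseteq E(G)$. Suppose $X^q$ is not independent, so there is such an edge with $v^q_{\module_1}, v^q_{\module_2} \in X^q$. By definition of $X^q$ there are vertices $u_1 \in X \cap \module_1$ and $u_2 \in X \cap \module_2$, and since $\module_1$ and $\module_2$ are adjacent we have $\{u_1, u_2\} \in E(G)$. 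But $u_1, u_2 \in X$, contradicting that $X$ is an independent set of $G$. Hence $X^q$ is an independent set of $G^q$.

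Since both steps are immediate from the definitions, I do not anticipate any genuine obstacle; the only thing to be a little careful about is using the correct characterization of adjacency in the quotient graph (every pair of vertices across the two modules is an edge, not merely some pair), which is precisely the module property recalled in the preliminaries.

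\begin{proof}
  If $X$ is an independent set of $G$, then $G[X \cap \module]$ is a subgraph of $G[X]$ and hence edgeless, so $X \cap \module$ is an independent set of $G[\module]$; it is either empty or a non-empty independent set of $G[\module]$, giving the first claim.

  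For the second claim, write $X^q = \modprojection_V(X)$ and suppose for contradiction that $X^q$ is not an independent set of $G^q$. Then there is an edge $\{v^q_{\module_1}, v^q_{\module_2}\} \in E(G^q)$ with $v^q_{\module_1}, v^q_{\module_2} \in X^q$, which means that $\module_1$ and $\module_2$ are adjacent modules, i.e., $\{u,w\} \in E(G)$ for all $u \in \module_1$ and $w \in \module_2$. By definition of $X^q$, the membership $v^q_{\module_i} \in X^q$ yields some $u_i \in X \cap \module_i$ for $i \in \{1,2\}$, and adjacency of $\module_1$ and $\module_2$ gives $\{u_1, u_2\} \in E(G)$ with $u_1, u_2 \in X$, contradicting that $X$ is an independent set of $G$. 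Thus $X^q$ is an independent set of $G^q$. \qed
\end{proof}
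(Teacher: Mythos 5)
Your proof is correct and follows essentially the same route as the paper's: the first claim is just that subsets of independent sets are independent, and the second uses the adjacency-of-modules characterization of quotient edges to pull an edge of $G^q[X^q]$ back to an edge of $G[X]$ (the paper phrases it as a contrapositive, you as a contradiction, which is the same argument).
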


\begin{proof}
	If $G[X \cap \module]$ contains an edge, then so does $G[X]$, hence the first part is trivially true.	If $G^q[X^q]$ contains an edge $\{\qvertex, v^q_{\module'}\}$, then $\module$ and $\module'$ are adjacent modules and $X \cap \module \neq \emptyset$ and $X \cap \module' \neq \emptyset$, so $G[X]$ cannot be an independent set. \qed
\end{proof}

Proceeding bottom-up along the modular decomposition tree of $G$, we make use of \cref{thm:is_modular_structure} to compute $X_\module$ for all $\module \in \modtree(G)$. As the base case, we consider singleton modules, i.e., $\module = \{v\}$ for some $v \in V$. Clearly, $X_{\{v\}} = \{v\}$ is an independent set of maximum cost of $G[\{v\}]$ in this case. Otherwise, inductively assume that we have computed an independent set $X_\module$ of maximum cost of $G[\module]$ for all $\module \in \modpartition(G)$ and we want to compute an independent set $X_V$ of maximum cost of $G$.

\subsubsection*{Parallel and series nodes.} If $G^q$ is a parallel or series node in the modular decomposition tree, i.e., $G^q$ is an independent set or clique respectively, then we give a special algorithm to compute $X_V$ that does not use a tree decomposition. If $G^q$ is a parallel node, then we simply set $X_V = \bigcup_{\module \in \modpartition(G)} X_\module$. If $G^q$ is a series node, then any independent set may intersect at most one module $\module \in \modpartition(G)$, else the set would immediately induce an edge. Thus, we set in this case $X_V = \arg \max_{X_\module} \cfct(X_\module)$, where the maximum ranges over all $X_\module$ with $\module \in \modpartition(G)$.

\subsubsection*{Prime nodes.} If $G^q = (V^q, E^q)$ is a prime node, then we are given a tree decomposition $(\TT^q, (\bag^q_t)_{t \in V(\TT^q)})$ of $G^q$ of width at most $k$, which we can assume to be \emph{very nice} by \cref{thm:very_nice_tree_decomposition}. We perform dynamic programming along this tree decomposition. By \cref{thm:is_modular_structure}, it is natural that every module in the currently considered bag has two possible states; it can be empty (state $\zero$), or non-empty (state $\one$) and we take an independent set of maximum cost inside. Given that we have already computed the maximum independent sets $X_\module$ for each $\module \in \modpartition(G)$, we define the partial solutions of the dynamic programming as follows. 

For each node $t \in V(\TT^q)$ of the tree decomposition, we define $\dpsols_t$ as the family consisting of all $X \subseteq V_t = \modprojection_V^{-1}(V^q_t)$ such that the following properties hold for all $\module \in \modpartition(G)$:
\begin{itemize}
 \item $X \cap \module \in \{\emptyset, X_\module\}$, 
 \item if $X \cap \module \neq \emptyset$, then $X \cap \module' = \emptyset$ for all $\{\qvertex, v^q_{\module'}\} \in E(G^q_t)$.
\end{itemize}
Given a \emph{$t$-signature} $f \colon \bag^q_t \rightarrow \states := \{\zero, \one\}$, we define the subfamily $\dpsols_t[f] \subseteq \dpsols_t$ consisting of all $X \in \dpsols_t$ such that the following properties hold for all $\qvertex \in \bag^q_t$:
\begin{itemize}
 \item $f(\qvertex) = \zero$ implies that $X \cap \module = \emptyset$,
 \item $f(\qvertex) = \one$ implies that $X \cap \module = X_\module$.
\end{itemize}
For each $t \in V(\TT^q)$ and $t$-signature $f \colon \bag^q_t \rightarrow \states$, we compute $\dppoly_t[f] := \max_{X \in \dpsols_t[f]} \cfct(X)$ by dynamic programming along the tree decomposition using the following recurrences depending on the bag type of node $t$.

\subsubsection*{Leaf bag.}
The base case, where $\bag_t = \bag^q_t = \emptyset$ and $t$ is a leaf node of the tree decomposition, i.e., $t$ has no children. Here, we simply have $\dpsols_t = \emptyset$ and hence $\dppoly_t[\emptyset] = 0$.

\subsubsection*{Introduce vertex bag.}
We have that $\bag^q_t = \bag^q_s \cup \{\qvertex\}$ and $\qvertex \notin \bag^q_s$, where $s$ is the only child node of $t$. We extend every $s$-signature by one of the two possible states for $\qvertex$ and update the cost if necessary. Note that no edges incident to $\qvertex$ are introduced yet. Hence, the recurrence is given by
\begin{align*}
 \dppoly_t[f[\qvertex \mapsto \zero]] & = \dppoly_s[f], \\
 \dppoly_t[f[\qvertex \mapsto \one]]  & = \dppoly_s[f] + \cfct(X_\module),
\end{align*}
where $f$ is an $s$-signature.

\subsubsection*{Introduce edge bag.}
Let the introduced edge be denoted $\{\qvertex, v^q_{\module'}\}$. We have that $\{\qvertex, v^q_{\module'}\} \subseteq \bag^q_t = \bag^q_s$, where $s$ is the only child node of $t$. The recurrence only needs to filter all partial solutions $X$ that intersect both $\module$ and $\module'$, since these cannot be independent sets. Hence, the recurrence is given by
\begin{equation*}
 \dppoly_t[f] = [f(\qvertex) = \zero \vee f(v^q_{\module'}) = \zero]\dppoly_s[f],
\end{equation*}
where $f$ is a $t$-signature.

\subsubsection*{Forget vertex bag.}
We have that $\bag^q_t = \bag^q_s \setminus \{\qvertex\}$ and $\qvertex \in \bag^q_s$, where $s$ is the only child node of $t$. We simply try both states for the forgotten module $\module$ and take the maximum, so the recurrence is given by
\begin{equation*}
 \dppoly_t[f] = \max(\dppoly_s[f[\qvertex \mapsto \zero]], \dppoly_s[f[\qvertex \mapsto \one]]),
\end{equation*}
where $f$ is a $t$-signature.

\subsubsection*{Join bag.}
We have that $\bag^q_t = \bag^q_{s_1} = \bag^q_{s_2}$, where $s_1$ and $s_2$ are the two children of $t$. For each $t$-signature $f$, we can simply combine a best partial solution compatible with $f$ at $s_1$ with one at $s_2$, but we do have to account for overcounting in the cost. We have that $V^q_{s_1} \cap V^q_{s_2} = \bag^q_t$, so these partial solutions can only overlap in the current bag. Hence, the recurrence is given by
\begin{equation*}
 \dppoly_t[f] = \dppoly_{s_1}[f] + \dppoly_{s_2}[f] - \cfct(\modprojection_V^{-1}(f^{-1}(\one))),
\end{equation*}
where $f$ is a $t$-signature.

\subsubsection*{Lexicographic maximum independent set.}
When using this algorithm as a subroutine, we want to find an independent set $X$ that lexicographically maximizes $(\tilde{\cfct}(X), \tilde{\wfct}(X))$, where $\tilde{\cfct}\colon V \rightarrow [1, N_c]$ and $\tilde{\wfct}\colon V \rightarrow [1, N_w]$ are some given cost and weight function with maximum value $N_c$ and $N_w$ respectively. Setting $\cfct(v) = (|V|+1) N_w \tilde{\cfct}(v) + \tilde{\wfct}(v)$ for all $v \in V$, we can simulate this setting with a single cost function $\cfct$ and recover $\tilde{\wfct}(X) = \cfct(X) \mod (|V|+1)N_w$ and $\tilde{\cfct}(X) = (\cfct(X) - \tilde{\wfct}(X)) / ((|V|+1)N_w)$. Alternatively, we may augment the dynamic programming to remember which arguments in the recurrences lead to the maximum to construct the independent set $X$ and simply compute the values $\tilde{\cfct}(X)$ and $\tilde{\wfct}(X)$ directly.

\begin{thm}
	\label{thm:is_mtw_algo}
	Let $G = (V, E)$ be a graph, $\tilde{\cfct}\colon V \rightarrow [1, N_c]$ be a cost function, and $\tilde{\wfct}\colon V \rightarrow [1, N_w]$ be a weight function.
	If $N_c, N_w \leq |V|^{\Oh(1)}$, then there exists an algorithm that given a tree decomposition of width $k$ for every prime quotient graph in the modular decomposition tree of $G$, computes an independent set $X$ of $G$ lexicographically maximizing $(\tilde{\cfct}(X), \tilde{\wfct}(X))$ in time $\Oh^*(2^k)$.
\end{thm}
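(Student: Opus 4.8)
The plan is to combine the $\Oh(n+m)$-time computation of the modular decomposition tree with a single pass of the tree-decomposition dynamic program described above, processing the nodes of $\modtree(G)$ bottom-up. First I would reduce the two-objective lexicographic problem to a single-objective one: as noted in the paragraph on the \emph{lexicographic maximum independent set}, setting $\cfct(v) = (|V|+1)N_w\tilde{\cfct}(v) + \tilde{\wfct}(v)$ turns a cost function maximizing $\tilde{\cfct}(X)$ lexicographically before $\tilde{\wfct}(X)$ into an ordinary positive, polynomially-bounded cost function $\cfct$, from which both $\tilde{\cfct}(X)$ and $\tilde{\wfct}(X)$ are recoverable by the displayed modular arithmetic; since $N_c, N_w \leq |V|^{\Oh(1)}$, the values $\cfct(v)$ remain polynomially bounded, which is what the dynamic program needs. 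So it suffices to solve: given positive polynomially-bounded costs $\cfct$, compute a maximum-cost independent set $X_\module$ of $G[\module]$ for every $\module \in \modtree(G)$, in total time $\Oh^*(2^k)$.

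Next I would establish correctness of the recurrences for the prime-node dynamic program. The key structural fact is \cref{thm:is_modular_structure}: any independent set of $G$ meets each module $\module \in \modpartition(G)$ either not at all or in a non-empty independent set of $G[\module]$, and its projection $X^q = \modprojection_V(X)$ is an independent set of $G^q$. Hence, having inductively computed maximum-cost independent sets $X_\module$ for the children $\module \in \modpartition(G)$, a maximum-cost independent set of $G$ with projection $X^q$ has cost exactly $\sum_{\qvertex \in X^q}\cfct(X_\module)$, because inside each chosen module the best choice is $X_\module$ and there is no interaction between distinct modules beyond the independence constraint encoded by $G^q[X^q]$ being edgeless. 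This is exactly what $\dpsols_t$ and $\dpsols_t[f]$ track, with the two states $\zero,\one$ per bag vertex recording whether the module is skipped or taken (as $X_\module$). I would then verify each recurrence by the standard arguments: the leaf bag has only the empty partial solution; the introduce-vertex bag adds a fresh module not yet incident to any introduced edge, so both states are admissible and only the cost tracker changes; the introduce-edge bag filters signatures assigning $\one$ to both endpoints, as such partial solutions violate independence in $G^q_t$; the forget bag takes the better of the two states; and the join bag combines compatible partial solutions with identical signature, subtracting $\cfct(\modprojection_V^{-1}(f^{-1}(\one)))$ to correct the double-counting on $\bag^q_t = V^q_{s_1}\cap V^q_{s_2}$. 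The value $\dppoly_{\rvertex}[\emptyset]$ at the root of $\TT^q$ then equals $\cfct(X_V)$ by $\bag^q_{\rvertex}=\emptyset$ and the third property of \cref{thm:very_nice_tree_decomposition}; an actual maximizing set is recovered either by backtracking through the recurrences or, for the lexicographic variant, by reading off $\tilde\cfct$ and $\tilde\wfct$ via the modular decoding.

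For the parallel and series nodes I would invoke the special cases handled above without tree decompositions: for a parallel node $X_V = \bigcup_{\module}X_\module$ is optimal since the child graphs are non-adjacent, and for a series node an independent set meets at most one child module, so $X_V = \arg\max_\module \cfct(X_\module)$; both take polynomial time. Finally I would bound the running time. Computing $\modtree(G)$ takes $\Oh(n+m)$ by Tedder et al.~\cite{TedderCHP08}. For each prime quotient graph $H = \qgraph{\module} \in \primefamily(G)$ we convert the given width-$k$ tree decomposition to a very nice one in polynomial time (\cref{thm:very_nice_tree_decomposition}), which has $\poly(|V(H)|)$ nodes; at each node there are at most $2^{k+1}$ signatures and each recurrence is evaluated in polynomial time, so the prime node $\module$ costs $2^k|V(H)|^{\Oh(1)}$. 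Summing over $\primefamily(G)$ and using $\sum_{H\in\primefamily(G)}|V(H)| \leq 2n$ exactly as in the proof of \cref{thm:compute_modtw_transfer}, the total is $\Oh^*(2^k)$; the parallel and series nodes contribute only polynomial overhead, and $|\modtree(G)| \leq 2n$ keeps the overall bottom-up pass within $\Oh^*(2^k)$. The main obstacle I anticipate is not any single step but the careful bookkeeping in the join recurrence — making sure the overcounting correction $\cfct(\modprojection_V^{-1}(f^{-1}(\one)))$ exactly matches the intersection $V^q_{s_1}\cap V^q_{s_2}=\bag^q_t$ and that the two partial solutions genuinely agree on all of $\bag_t$, not merely on the signature — together with confirming that the lexicographic encoding keeps costs polynomial so that the $\Oh^*$ notation is honest.
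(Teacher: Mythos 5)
Your proposal is correct and follows essentially the same route as the paper: encode the lexicographic objective into a single polynomially bounded cost function, then run the bottom-up pass over the modular decomposition with polynomial-time handling of parallel and series nodes and the $\Oh^*(2^k)$ tree-decomposition DP at prime nodes, summing sizes of prime quotient graphs to bound the total time. The paper's own proof is just a terser version of the same argument, so no further comparison is needed.
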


\begin{proof}
	We first transform $\tilde{\cfct}$ and $\tilde{\wfct}$ into a single cost function $\cfct$ as described and then run the algorithm described in this section. Note that $\cfct$ is also polynomially bounded by $|V|$. The modular decomposition tree of $G$ contains at most $2|V|$ nodes. The base case, parallel nodes, and series nodes are handled in polynomial time. For every prime node, we perform the dynamic programming along the given tree decomposition in time $\Oh^*(2^k)$. Hence, the theorem follows. \qed
\end{proof}

\newpage
\input{tex_arxiv/problemdefs}
\end{document}